\documentclass{article}

\PassOptionsToPackage{final}{showkeys} 

\usepackage[utf8]{inputenc}
\usepackage{anyfontsize}
\usepackage[english]{babel}
\usepackage{csquotes} 
\usepackage[T1]{fontenc}
\usepackage[theoremfont,largesc,p,osf]{newpxtext}

\usepackage{amsmath}
\usepackage{amsfonts}
\usepackage{amssymb}
\usepackage{amsthm}

\usepackage{dsfont} 
\usepackage{newpxmath}
\usepackage[cal=cm]{mathalfa}

\usepackage{bm} 

\usepackage{mathtools}
\mathtoolsset{centercolon}

\usepackage{xcolor}
\usepackage{centernot}
\usepackage{chngcntr}

\definecolor{darkblue} {rgb}{0,0,0.5}
\definecolor{darkgreen}{rgb}{0,0.5,0}
\usepackage{tikz}
\usetikzlibrary{cd}
\usetikzlibrary{arrows}
\usepackage{pst-node}
\usepackage{ytableau}
\usepackage{tikz-network}

\SetVertexStyle[Shape=circle,InnerSep=0pt,MinSize=1.5pt,FillColor=black]
\SetEdgeStyle[LineWidth=0.75pt]

\usepackage[useregional]{datetime2}
\usepackage[shortlabels]{enumitem}
\usepackage{titlesec}

\usepackage{fullpage}
\usepackage[font=small,labelfont=bf]{caption}
\usepackage{float}

\let\q\quad
\let\qq\quad

\usepackage[notref,notcite]{showkeys}

\usepackage[pdfusetitle]{hyperref}
\hypersetup{
  colorlinks = true,
  urlcolor  = blue,         
  linkcolor = darkblue,     
  citecolor = darkgreen,    
  filecolor = darkred,       
  anchorcolor = green
}
\usepackage{cleveref}
\usepackage{nameref}
\usepackage[
  backend=biber,
  style=alphabetic,
  doi=false,
  isbn=false,
  url=false,
  maxbibnames=6,
  maxcitenames=2,
  giveninits=true
]{biblatex}
\renewbibmacro{in:}{}

\allowdisplaybreaks

\newtheorem{theorem}{Theorem}\crefname{theorem}{Theorem}{Theorems}
\newtheorem{proposition}[theorem]{Proposition}\crefname{proposition}{Proposition}{Propositions}
\newtheorem{lemma}[theorem]{Lemma}\crefname{lemma}{Lemma}{Lemmas}
\newtheorem{corollary}[theorem]{Corollary}\crefname{corollary}{Corollary}{Corollaries}

\crefname{fact}{Fact}{Facts}

\newtheorem*{theorem*}{Theorem}
\newtheorem*{proposition*}{Proposition}
\newtheorem*{lemma*}{Lemma}
\newtheorem*{corollary*}{Corollary}
\newtheorem*{conjecture*}{Conjecture}

\theoremstyle{definition}
\newtheorem{definition}{Definition}\crefname{definition}{Definition}{Definitions}
\newtheorem*{definition*}{Definition}
\newtheorem{remark}{Remark}\crefname{remark}{Remark}{Remarks}
\newtheorem*{remark*}{Remark}
\newtheorem{example}{Example}\crefname{example}{Example}{Examples}
\newtheorem*{example*}{Example}
\newtheorem*{fact*}{Fact}
\newtheorem*{note*}{Note}
\newtheorem*{notation*}{Notation}
\newtheorem*{problem*}{Problem}

\renewcommand{\mid}{\;\middle\vert\;}

\newcommand{\mc}{\mathcal}

\newcommand{\id}{\textrm{id}}

\newcommand{\iso}{\cong}
\usepackage{xifthen}
\newcommand{\1}[1][]{
  \ifthenelse{\isempty{#1}}
  {\mathds{1}}
  {\mathds{1}_{\{#1\}}}
}

\makeatletter
\AtBeginDocument {%
          \def\resetMathstrut@{%
           \setbox\z@\hbox{\the\textfont\symoperators\char40}%
           \ht\Mathstrutbox@\ht\z@ \dp\Mathstrutbox@\dp\z@}%
}%
\makeatother
\newcommand*\autoop{\left(}
\newcommand*\autocp{\right)}
\newcommand*\autoob{\left[}
\newcommand*\autocb{\right]}
\DeclareRobustCommand*\{{\ifmmode \left\lbrace \else \textbraceleft \fi }
\DeclareRobustCommand*\}{\ifmmode \right\rbrace \else \textbraceright \fi }

\newif\ifinsidebracedgroup
\AtBeginDocument {%
   \let\originalbardelimiter\|
   \def\myleftbar{\ifinsidebracedgroup
                        \right\originalbardelimiter
                        \global\insidebracedgroupfalse
                  \else
                        \left\originalbardelimiter
                  \fi
                  \global\let\|\myrightbar}%
   \def\myrightbar{\ifnum\currentgrouptype=9
                        \left\originalbardelimiter
                        \global\insidebracedgrouptrue
                  \else
                        \right\originalbardelimiter
                  \fi
                 \global\let\|\myleftbar}%
   \let\|\myleftbar
   \mathcode`( 32768
   \mathcode`) 32768
   \mathcode`[ 32768
   \mathcode`] 32768
   \begingroup
       \lccode`\~`(
       \lowercase{%
   \endgroup
       \let~\autoop
   }\begingroup
       \lccode`\~`)
       \lowercase{%
   \endgroup
       \let~\autocp
   }\begingroup
       \lccode`\~`[
       \lowercase{%
   \endgroup
       \let~\autoob
   }\begingroup
       \lccode`\~`]
       \lowercase{%
   \endgroup
       \let~\autocb
   }}

\DeclarePairedDelimiter{\ceil}{\lceil}{\rceil}
\DeclarePairedDelimiter{\floor}{\lfloor}{\rfloor}
\DeclarePairedDelimiter{\abs}{\lvert}{\rvert}%
\DeclarePairedDelimiter{\norm}{\lVert}{\rVert}%
\newcommand{\cbnorm}[1]{{\left\vert\kern-0.25ex\left\vert\kern-0.25ex\left\vert #1 
    \right\vert\kern-0.25ex\right\vert\kern-0.25ex\right\vert}}

\makeatletter
\let\oldabs\abs
\def\abs{\@ifstar{\oldabs}{\oldabs*}}
\let\oldnorm\norm
\def\norm{\@ifstar{\oldnorm}{\oldnorm*}}
\let\oldceil\ceil
\def\ceil{\@ifstar{\oldceil}{\oldceil*}}
\let\oldfloor\floor
\def\floor{\@ifstar{\oldfloor}{\oldfloor*}}
\makeatother

\let\bar\overline

\newcommand{\bC}{\mathbb{C}}

\newcommand{\bM}{\mathbb{M}}
\newcommand{\bN}{\mathbb{N}}
\newcommand{\bR}{\mathbb{R}}
\newcommand{\bQ}{\mathbb{Q}}
\newcommand{\bZ}{\mathbb{Z}}

\newcommand{\cA}{\mathcal{A}}
\newcommand{\cB}{\mathcal{B}}
\newcommand{\cC}{\mathcal{C}}
\newcommand{\cD}{\mathcal{D}}
\newcommand{\cE}{\mathcal{E}}
\newcommand{\cF}{\mathcal{F}}

\newcommand{\cM}{\mathcal{M}}
\newcommand{\cN}{\mathcal{N}}

\newcommand{\cP}{\mathcal{P}}

\newcommand{\cR}{\mathcal{R}}
\newcommand{\cS}{\mathcal{S}}
\newcommand{\cT}{\mathcal{T}}
\newcommand{\cU}{\mathcal{U}}

\let\oldexists\exists
\renewcommand{\exists}{\oldexists\,}

\DeclareMathOperator{\conv}{conv}

\DeclareMathOperator{\rank}{rank}

\DeclareMathOperator{\supp}{supp}

\DeclareMathOperator{\Tr}{Tr}
\newcommand{\Span}{\mathrm{span}}

\newcommand{\embeds}{\hookrightarrow}

\usepackage{braket} 
\newcommand{\ketbra}[2]{|#1\rangle\langle #2|}
\renewcommand{\braket}[2]{\langle #1 | #2 \rangle}
\newcommand{\ip}[2]{\left\langle #1,\;#2\right\rangle} 
\newcommand{\pure}[1]{|{#1}\rangle\hspace{-1pt}\langle{#1}|} 

\numberwithin{equation}{section}
\numberwithin{theorem}{section}   
\makeatletter
\let\c@definition\c@theorem  
\let\c@remark\c@theorem          
\let\c@fact\c@theorem              
\makeatother
\numberwithin{example}{section}

\usepackage{fullpage}
\usepackage{parskip}
\usepackage{titlesec}
\usepackage{centernot}

\usepackage[normalem]{ulem}
\usepackage{cancel}
\usepackage{bm}

\newcommand{\cric}[1]{\stackrel{\circ}{#1}}

\newcommand{\SN}{\mathrm{SN}}
\newcommand{\SR}{\mathrm{SR}}

\newcommand{\cb}{\mathrm{cb}}
\newcommand{\cp}{\mathrm{cp}}
\newcommand{\op}{\mathrm{op}}
\newcommand{\sa}{\mathrm{sa}}
\newcommand{\HS}{\mathrm{HS}}

\renewcommand{\Re}{\mathrm{Re}}

\usepackage{subfiles} 

\definecolor{cool_green}{rgb}{0.0, 0.5, 0.0}

\definecolor{fuchsia}{HTML}{FF00FF}

\newcommand{\struck}{\bgroup\markoverwith{\textcolor{blue}{\rule[0.5ex]{2pt}{0.7pt}}}\ULon}  
\titleformat*{\paragraph}{\normalfont\normalsize\bfseries\boldmath}  
\definecolor{deepgreen}{rgb}{0.0, 0.35, 0.12}

\newcommand{\uprod}[1]{{(#1)}_{\cU}}

\newcommand{\kket}[1]{\ket{#1}\hspace{-0.1em}\rangle}
\newcommand{\bbra}[1]{\langle\hspace{-0.1em}\bra{#1}}

\newcommand{\POVM}{\mathrm{POVM}}
\newcommand{\stoc}{\mathrm{stoc}}
\newcommand{\minten}{\otimes_{\min}}

\newcommand{\CB}{\mathrm{CB}}

\DeclareMathOperator*{\esssup}{ess\,sup}

\usepackage{longtable,booktabs,array}

\newcommand{\defn}[1]{\textbf{#1}}        
\newcolumntype{S}{>{$}l<{$}}      

\usepackage{authblk}

\begin{document}

\title{Compatibility of quantum instruments}
\author[1]{Chloe Kim\thanks{\href{mailto:kim705@illinois.edu}{kim705@illinois.edu}}}
\author[4,5]{Yujie Zhang}
\author[1,3]{Eric Chitambar}
\author[2,3]{Marius Junge}
\author[2,3]{Felix Leditzky}
\affil[1]{Department of Electrical and Computer Engineering, University of Illinois Urbana-Champaign}
\affil[2]{Department of Mathematics, University of Illinois Urbana-Champaign}
\affil[3]{Illinois Quantum Information Science and Technology Center (IQUIST), University of Illinois Urbana-Champaign}
\affil[4]{Institute for Quantum Computing and Department of Physics \& Astronomy, University of Waterloo}
\affil[5]{Perimeter Institute for Theoreticla Physics}

\maketitle

\begin{abstract}
    Measurement compatibility asks whether a family of quantum measurements can be simulated by one parent measurement followed by classical post-processing. 
    We extend this measurement compatibility to the notion of \emph{$d$-compatibility} for measurements, channels, and instruments whose parent passes a $d$-dimensional quantum system and unlimited classical information to the post-processing. 
    We formulate $d$-compatibility as a common linear factorization of operator spaces and operator systems and show that $d$-compatible families form a closed set, even with infinitely many settings and a separable infinite-dimensional input.  
    To certify $d$-incompatibility, we define the compatibility functional as the smallest product of completely bounded norms over common factorizations through a $d$-dimensional register.  
    Using factorization theory of operator spaces, we show that measure-and-prepare channels defined in terms of mutually unbiased bases and pinching channels defined in terms of Heisenberg-Weyl operators are $d$-incompatible in certain regimes of $d$.
    We also link $d$-compatibility of multi-instruments to the preparability of Choi assemblages using channel-state duality, and use this connection to show that for bipartite states the (in)compatibility of instruments can be used to certify Schmidt numbers and the implementability of the instrument using one-way LOCC protocols.
    Finally, our compatibility functional also bounds the generalized robustness of compatibility, which we interpret through a memory-bounded nontransient preparation game. 
\end{abstract}

\tableofcontents

\section{Introduction}\label{sec:intro}

The question of which quantum measurements can be implemented simultaneously lies at the heart of quantum theory.
Already in the early works of Heisenberg and Bohr \cite{Heisenberg_1927,Bohr_1928}, it was recognized that certain pairs of observables cannot be jointly measured without disturbance.
In classical physics, by contrast, any two observables can in principle be measured simultaneously.
This feature is now understood through the notion of measurement compatibility, or joint measurability, which asks whether several positive-operator valued measures (POVMs) can be simulated by a single measurement followed by classical post-processing~\cite{Busch1986, Heinosaari_2016}.
Compatibility and simulability also have operational roles in quantum information.
An incompatible family can demonstrate Einstein--Podolsky--Rosen steering, and the correspondence between steering problems and joint-measurability problems is one-to-one \cite{Quintino2014, Uola2014, Uola_2018}.
Incompatibility likewise provides advantages in state discrimination \cite{Carmeli2019,SSC19,UKSYG19, Buscemi2020}, higher-order quantum processes \cite{sudarsanan2024higher}, is necessary for Bell nonlocality \cite{WolfPerezGarciaFernandez2009, Quintino2015, Hirsch2018, Plavala2025} and is closely related to generalized contextuality~\cite{TavakoliUola2020,Selby2023,ZhangSchmidYingSpekkens2026,Zhang2026quantifiers}.
More generally, measurement simulability characterizes the resources needed to reproduce a family of measurements, from classical processing of prescribed measurement devices \cite{Guerini2017,BLN25} to the use of quantum systems of bounded dimension \cite{Ioannou2022,Jones2023}.

The starting point of this paper is a reformulation of the notion of incompatible measurements into operator algebraic language.
Measurement compatibility is often formalized as the following.
Let
\begin{equation}
\cM = \{\{M_{a|x}\}_{a\in\mathsf{A}}\}_{x\in\mathsf{X}}
\end{equation}
be a family of POVMs acting on a Hilbert space $H$, where $x \in \mathsf{X}$ labels the measurement setting and $a \in \mathsf{A}$ labels the outcome.
Following the terminology introduced in~\cite{GourHeinosaariSpekkens2018} and used in~\cite{BLN25}, we call \defn{multi-meter} for a family a collection of measurements on a common input and ouput systems.
The multi-meter $\cM$ is said to be \defn{compatible} or \defn{jointly measurable} if all its POVMs can be reproduced by a single ``parent'' POVM through classical post-processing.
More precisely, $\cM$ is compatible if there exists a POVM $\{\Pi_\omega\}_{\omega\in\Omega}$ and conditional probability distributions $p(a|x,\omega)$ such that
\begin{equation}
    M_{a|x} = \sum_{\omega\in\Omega} p(a|x,\omega)\, \Pi_\omega. 
    \label{eq:intro-parent-sum}
\end{equation}
We can view each measurement as a completely positive trace-preserving linear map $M_x: S_1(H) \to \ell_1^{|\mathsf{A}|}$, given by $M_x(\rho)=(\Tr[M_{a|x}\rho])_{a\in\mathsf{A}}$, that maps density operators to probability distributions~\cite[Thm.~2.37]{Watrous2018}.
$S_1(H)$ is the space of trace class operators whose norm is given by $\Tr|X|$ and $\ell_1^{|\mathsf{A}|}$ is a vector space $\bC^{|\mathsf{A}|}$ with the absolute sum as its norm.
We identify each $p\in\ell_1^{|\mathsf{A}|}$ with the diagonal matrix $\sum_a p_a\ketbra{a}{a}$ in $S_1^{|\mathsf{A}|}$, so the trace of $p$ is $\sum_a p_a$.
Trace preservation of $M_x$ is then the normalization $\sum_a M_{a|x}=\1$.
On positive elements the trace coincides with the norm, so a positive map is trace-preserving if and only if it preserves the norm of positive elements.
$S_1(H)$ is the space of trace class operators whose norm is given by $\Tr|X|$ and $\ell_1^{|\mathsf{A}|}$ is a vector space $\bC^{|\mathsf{A}|}$ with the absolute sum as its norm.
Then the multi-meter $\{M_x\}_{x\in\mathsf{X}}$ is compatible if and only if there exists a single measurement map $\Pi : S_1(H) \to \ell_1$ and stochastic post-processing maps $u_x : \ell_1 \to \ell_1^{|\mathsf{A}|}$ such that
\begin{equation}
    M_x = u_x \circ \Pi, \qquad \forall x \in \mathsf{X}.
\end{equation}
The quantum side of the measurement is identical for every setting $x$, with differences arising solely from classical post-processing.
In other words, compatibility is the existence of a \emph{common linear factorization through a commutative space}.
For simplicity, the parent POVM above has countably many outcomes.
A general parent POVM has a measurable outcome space $\Omega$.
Then the sum in \eqref{eq:intro-parent-sum} becomes an integral against the POVM, and $\ell_1$ is replaced by $L_1(\Omega)$ as in \Cref{sec:compatible-inst}.
A continuous outcome space already appears in the example of \Cref{sec:motiv-closure}.

This factorization viewpoint leads to two closely related directions of study.
The first is to extend measurement compatibility to families of quantum channels, or \defn{multi-channels}, and families of instruments, or \defn{multi-instruments}, with a common input and output system.
In each case, compatibility is expressed through a common parent device followed by quantum post-processing.
We introduce a dimension parameter $d$ by allowing the common parent device to output a $d$-dimensional quantum system together with an unrestricted classical register.
For a single map, we define it \defn{$d$-factorizable} that factors through $d$-dimensional quantum system together with unrestricted classical side information, and call it \defn{$d$-compatible} for a family of maps that has a common factorization map into $d$-dimensional quantum systems.
For example, $1$-compatibility of a multi-meter corresponds to joint measurability.
\Cref{fig:dcompat-circuit} shows a $d$-compatible multi-instrument as a circuit.

The second direction is to develop a functional-analytic approach to these notions.
Spaces of quantum states and observables are Banach spaces, and operator spaces and operator systems add the matrix structure that encodes complete positivity and complete boundedness \cite{Effros2000,Pisier_2003,Paulsen_2003}.
Equipping the spaces in a factorization with these structures lets one study compatible families with the tools of Banach and operator space theory, without relying on an explicit physical realization.
Introductions to this interplay aimed at quantum information are \cite{AubrunSzarek2017,GuptaMandayamSunder2015}.
Such methods have been applied to Bell nonlocality and steering \cite{JPPVW2010prl,Yin2015,Loulidi2022} and to measurement incompatibility \cite{BluhmNechita2018,Jencova2018,Bluhm2022}, and \Cref{sec:motiv-op-spaces} surveys this line of work.

\begin{figure}[tb]
\centering
\begin{tikzpicture}[
  >=stealth,
  qwire/.style={thick},
  cwire/.style={thick, double, double distance=1.4pt},
  box/.style={draw, thick, fill=white, rounded corners=2pt, inner sep=3pt},
  lab/.style={font=\small},
]
  \draw[dashed, rounded corners=4pt] (1.2,-1.0) rectangle (5.7,1.3);
  \node[lab, anchor=north west] at (1.25,1.27) {$\Phi_x$};

  \node[box, minimum width=1.1cm, minimum height=1.5cm] (P) at (2.0,0) {$\theta$};

  \node[box, minimum width=1.1cm, minimum height=1.5cm] (N) at (4.9,0) {$\Psi_x$};

  \draw[qwire] (0.35,0) -- (P.west);
  \node[lab, left] at (0.35,0) {$\rho$};
  \node[lab, above] at (0.78,0) {$\bC^n$};

  \draw[qwire] (P.east |- 0,0.3) -- (N.west |- 0,0.3);
  \node[lab, above] at (3.45,0.3) {$\bC^d$};
  \draw[cwire] (P.east |- 0,-0.3) -- (N.west |- 0,-0.3);
  \node[lab, below] at (3.45,-0.3) {$\omega$};

  \fill (4.9,1.85) circle (2.2pt);
  \draw[cwire, ->] (4.9,1.85) -- (N.north);
  \node[lab, right] at (5.0,1.55) {$x$};

  \draw[qwire] (N.east |- 0,0.3) -- (6.6,0.3);
  \node[lab, right] at (6.6,0.3) {$\bC^m$};
  \draw[cwire] (N.east |- 0,-0.3) -- (6.6,-0.3);
  \node[lab, right] at (6.6,-0.3) {$a$};
\end{tikzpicture}
\caption[Circuit of a $d$-compatible multi-instrument]{Circuit of a $d$-compatible multi-instrument $\Phi_x = \Psi_x \circ \theta$.
Single lines carry quantum systems and double lines carry classical variables.}
\label{fig:dcompat-circuit}
\end{figure}

The main contributions of this paper are as follows.
\begin{itemize}
  \item We formulate $d$-compatibility for measurements, channels, and instruments as a linear factorization of Banach spaces, operator spaces, and operator systems, with a concrete example of genuine $d$-incompatible multi-channel whose element is $d$-factorizable.
  \item Using operator space-theoretic tools, we prove that the compatible multi-meters and the $d$-compatible multi-instruments $\{\Phi_x:S_1(H)\to S_1(K)\otimes^\wedge\ell_1^m\}_{x\in\mathsf X}$ form closed sets in operator norm (\Cref{thm:closurePOVM,thm:closure-instruments}). Here $H$ is separable and may be infinite-dimensional, the output $K$ and the outcome set are finite, the setting set $\mathsf X$ is arbitrary, and the classical register of the parent may be infinite.
  \item We derive $d$-compatibility criteria characterized by a compatibility functional $\gamma_{d,\cp}$ (\Cref{thm:cb-criterion,thm:cp-criterion}).
  \item We construct explicit incompatible multi-channels using the clock and shift operators, finite groups, and mutually unbiased bases.
  \item We establish a one-to-one correspondence between $d$-compatibility and a steering scenario with $d$-preparable Choi assemblage (\Cref{thm:choi-assemblage}), and present a one-way LOCC implementation of the assemblage.
  \item We identify the generalized robustness of $d$-compatibility and prove an upper bound in terms of $\gamma_{d,\cp}$ (\Cref{prop:gamma-bounds-Rg}), characterized operationally through a nontransient preparation game with quantum memory of at most dimension $d$.
\end{itemize}

\section{Motivation \& overview of the results}
\label{sec:motivation}

\subsection{Operator spaces as a tool for quantum information}
\label{sec:motiv-op-spaces}

One of the main objectives of this paper is to show how tools from the theory of operator spaces and operator systems can be used to characterize families of quantum operations and to study their properties.
The spaces of states and observables are Banach spaces, and operator spaces and operator systems add the matrix structure needed to study complete boundedness and complete positivity \cite{Effros2000,Pisier_2003,Paulsen_2003}.
These structures have already proved useful in quantum information theory.
Tensor norms of operator spaces yield unbounded violations of Bell inequalities \cite{JPPVW2010prl,JPPVW2010cmp}, and the approach extends to steering inequalities \cite{Yin2015}.
Compatibility, steering and Bell nonlocality have been described through free spectrahedra and tensor norms \cite{BluhmNechita2018,BluhmNechita2020,Bluhm2022,BluhmNechita2022cube,BluhmNechita2022tensor,Loulidi2022}, and through base-norm and order-unit spaces in general probabilistic theories \cite{Jencova2018,Jencova2022}.
The completely bounded norms of unital $k$-positive maps between operator systems are studied in \cite{AubrunDavidsonMullerHermesPaulsenRahaman2024}.
Entanglement annihilation, monogamy of entanglement, and factorization through Lorentz cones have been studied using only the cone structure, with general convex cones in place of the positive semidefinite cones \cite{AubrunMullerHermes2023,AubrunMullerHermesPlavala2025,AubrunLaPianaMullerHermes2026}.

In this paper, we view compatibility as a problem of linear factorization between Banach spaces, as in \Cref{sec:intro}.
We then equip these spaces with their operator space and operator system structures, so that results from the factorization theory of functional analysis apply.
The factorization viewpoint has a precedent in the work of Achenbach et al.\ \cite{Achenbach_2025}, who describe measurement compatibility, together with simulability, steering and Bell nonlocality, as a factorization of the multi-meter map through an intermediate system.
They formulate the factorization within general probabilistic theories and develop its theory in that setting.
We instead work with factorizations of linear maps between Banach spaces and operator spaces, for which an extensive theory is already available \cite{pisier1986factorization,JungeHab,Pisier1998}.

Our approach combines ultraproducts of operator spaces, Pisier's factorization results for $L_p$-spaces \cite[Sec.~8.b]{pisier1986factorization} and for noncommutative $L_p$-spaces \cite{Pisier1998}, and operator-space duality to study compatibility through its common factorization.
In the Schr\"odinger picture, measurements, channels and instruments are completely positive trace-preserving maps between spaces of states.
In the Heisenberg picture, their adjoints are unital completely positive maps between operator systems.
The factorization can be expressed in either picture.
In the Schr\"odinger picture, the classical register is represented by an $L_1$-space, the quantum register by $S_1^d$, and the combined register by their operator space projective tensor product $S_1^d\otimes^{\wedge}L_1(\Omega)$.
In the Heisenberg picture, the combined register is its dual, the von Neumann algebra $L_\infty(\Omega;\bM_d)$.
The choice of tensor product fixes the matrix norms used in the factorization, just as the tensor product for the joint output is part of the formulation of channel compatibility in \cite{Kuramochi2018}.
This description allows us to study compatible families through norms of linear maps and their dual characterizations, which lead to the closure theorems and the incompatibility criteria described in \Cref{sec:motiv-closure,sec:motiv-criteria}.

Operator space theory is not yet a standard tool in quantum information theory.
We have therefore tried to make the operator space approach as accessible as possible to a broad audience in the quantum information community.
\Cref{sec:prelim} is an expository section written for readers who are not familiar with Banach space and operator space theory.
It introduces Banach spaces, $C^*$-algebras, operator spaces and operator systems, and it describes measurements and instruments in this language in both the Schr\"odinger and the Heisenberg picture.
The tensor products used in the paper are collected in \Cref{app:tensor}.

\subsection{From measurements to instruments: $d$-compatible instruments}\label{sec:overview-compat}

A quantum instrument is the most general description of a quantum operation that produces both a classical outcome and a quantum output system.
Measurements are the instruments without a quantum output, and channels are the instruments without a classical outcome.
For a discrete outcome set, an instrument is a family of completely positive maps $\{\Phi_a:S_1(A)\to S_1(B)\}_a$ whose sum is trace preserving, where $S_1(A)$ is the space of trace-class operators on the input Hilbert space $A$.

The factorization formulation of measurement compatibility extends naturally to quantum channels and instruments.
We write $\Phi_x$ for the map that records both outputs of the $x$-th instrument.
A multi-instrument $\{\Phi_x\}_{x\in\mathsf X}$ is \defn{$d$-compatible} if, for every $x\in\mathsf X$,
\begin{equation}\label{eq:factorization}
  \Phi_x=\Psi_x\circ\theta,
  \qquad
  \theta:S_1(A)\longrightarrow S_1^d\otimes^{\wedge}L_1(\Omega),
\end{equation}
where $S_1^d\otimes^{\wedge}L_1(\Omega)$ describes the intermediate classical--quantum register.
The parent $\theta$ is a single instrument, and each $\Psi_x$ is a completely positive trace-preserving map from the register to the outputs of $\Phi_x$.
The precise definition is given in \Cref{sec:compatible-inst}.
Note that this definition of compatibility of multi-channel and multi-instruments differs from that of \cite{HeinosaariMiyadera2017,Haapasalo2021,Hsieh2022,Girard2021,Buscemi2023,Leppajarvi2024,Mitra2022}.
\Cref{sec:overview-prior} compares these notions in detail.

For discrete outcomes, the definition takes a more familiar form.
A multi-instrument $\{\Phi_{a|x}\}_{a,x}$ is $d$-compatible if there exist a single parent instrument $\{\theta_j:S_1(A)\to S_1^d\}_j$ and, for each setting $x$ and parent outcome $j$, an instrument $\{\Psi_{a|x,j}:S_1^d\to S_1(B)\}_a$ such that
\begin{equation}\label{eq:discrete-factorization}
  \Phi_{a|x}=\sum_j\Psi_{a|x,j}\circ\theta_j.
\end{equation}
For a single channel, the existence of such a factorization is equivalent to $d$-partially entanglement breaking \cite{Chruscinski2006}, as shown in \Cref{sec:kPEB}.

An operational motivation for $d$-compatibility comes from programmable quantum devices \cite{BuscemiChitambarZhou2020,JiChitambar2024}.
Suppose that the quantum input is received before the classical setting $x\in\mathsf X$ specifying which member $\Phi_x$ is to be implemented.
If the setting is not immediately available, the device must retain sufficient information about the quantum input until $x$ is supplied.
For incompatible measurements, this requires quantum memory.
For compatible measurements, a parent POVM can instead be performed immediately and only its classical outcome needs to be stored.

The factorization \eqref{eq:factorization} gives a quantitative version of this picture.
The parent instrument $\theta$ can be applied before $x$ is known.
Its output consists of a $d$-dimensional quantum register together with unrestricted classical side information.
Once $x$ is revealed, the corresponding map $\Psi_x$ completes the simulation.
Thus, $d$-compatibility means that the family can be programmed while retaining at most a $d$-dimensional quantum memory.
The smallest such $d$ therefore quantifies the quantum-memory cost of the family.

When $d=1$, only classical memory is available, and for measurements the definition reduces to joint measurability.
At the opposite extreme, if $d\geq\dim A$, the entire input system can be retained and the memory restriction becomes trivial.
The regime $1<d<\dim A$ therefore interpolates between classical-memory simulation and unrestricted quantum memory.

This quantum-memory cost should be distinguished from the complexity of a compatible measurement family, defined through the minimum number of outcomes of a parent POVM \cite{SkrzypczykHobanSainzLinden2020,Zhang2025costofsimulating}.
The latter quantifies the complexity of the classical simulation when no quantum memory is used.
In the present setting, the classical register is unrestricted, while $d$ quantifies how much quantum information must survive until the setting is supplied.
The quantum dimension needed to preserve the statistics of a prescribed set of measurements, with unrestricted classical side information, was studied in \cite{BluhmRauberWolf2018}, and $d$ plays the analogous role for reproducing the operations themselves, including their quantum outputs.

\subsection{Closure of $d$-compatibility}\label{sec:motiv-closure}

In a resource theory, the objects that can be obtained without the resource are called free.
In the resource theory of $d$-incompatibility, the free objects are the $d$-compatible families of quantum operations.
Convexity, closedness and compactness of the free set underlie witnesses, robustness measures and their dual characterizations in resource theories \cite{ChitambarGour2019,TR19}.
The set of $d$-compatible families is convex, because a classical random choice between implementations can be stored in the classical register.
Whether it is closed is the question answered by \Cref{thm:closurePOVM,thm:closure-instruments}, and in finite dimensions with finitely many settings and outcomes closedness makes it compact.

Both properties are used later in the paper.
In \Cref{sec:criterion}, the ultraproduct argument of the closure proof shows that the infimum defining each compatibility functional is attained.
Duality then turns this into the incompatibility criteria of \Cref{thm:cb-criterion,thm:cp-criterion}.
In \Cref{sec:robustness}, compactness and convexity make the generalized robustness vanish exactly on the $d$-compatible families and give its dual description through a preparation game (\Cref{prop:robustness-dual}).

Closedness also has a direct physical meaning.
If a family can be approximated arbitrarily well by $d$-compatible families, closedness guarantees that it can be implemented exactly with a quantum memory of dimension $d$.
The following example shows why this is not obvious.
For a qubit system, a noisy spin measurement along a direction $\hat n \in S^2$ with sharpness $r \in [0,1]$ is described by the dichotomic POVM
$\cM^r_{\hat n} = \{M^r_{a|\hat n}\}_{a=0,1}$, where
\begin{equation}\label{eq:noisy-spin-povm}
    M^r_{0|\hat n} = \tfrac{1}{2}(\1 + r\,\hat n\!\cdot\!\vec{\sigma}), \qquad
    M^r_{1|\hat n} = \tfrac{1}{2}(\1 - r\,\hat n\!\cdot\!\vec{\sigma}).
\end{equation}
Let $\cS_r = \{\cM^r_{\hat n} \}_{\hat n \in S^2}$ denote the multi-meter of all such noisy spin measurements.  
It is known that $\cS_r$ is compatible exactly when $r \le \tfrac{1}{2}$ \cite{Uola2014}.
For $r < \tfrac{1}{2}$ one can construct a \emph{finite-outcome} parent POVM achieving the simulation \cite[Thm.~2]{Zhang2025costofsimulating}.
Concretely, one may write
\begin{equation}\label{eq:S_r-fact}
    M^r_{a|\hat n} = \sum_{\omega \in \Omega} p(a|\hat n,\omega)\, F_\omega,
\end{equation}
for some POVM $\{F_\omega\}_{\omega \in \Omega}$, whose number of outcomes must grow at least on the order of $(\tfrac{1}{2}-r)^{-2/5}$ \cite{Zhang2025costofsimulating}.
Thus, as $r \to \tfrac{1}{2}$, simulating $\cS_r$ requires increasingly complex parent POVMs, and no finite-outcome parent POVM suffices at the critical value $r=\tfrac{1}{2}$.
However, compatibility does persist in the limit.  
The multi-meter $\cS_{1/2}$ can be simulated by a \emph{continuous-outcome} parent POVM, using the identity
\begin{align}\label{eq:compatibility-half}
    M^{1/2}_{a|\hat n}
    = \int_{S^2}\! d\hat\omega\;
      \Theta((-1)^a\, \hat n \!\cdot\! \hat\omega)\, \Pi_{\hat\omega},
\end{align}
where $\Theta$ denotes the Heaviside function and $\Pi_{S^2} \coloneqq \{ \Pi_{\hat\omega}\}_{\hat\omega \in S^2}$ with
\begin{equation}\label{eq:qubit-bloch-sphere-mother}
    \Pi_{\hat\omega}
        = \frac{1}{4\pi}(\1 + \hat\omega\!\cdot\!\vec{\sigma}).
\end{equation}
Moreover, the convergence $\cS_r \to \cS_{1/2}$ is uniform in the trace norm:
\begin{equation}
\norm{M^r_{a|\hat n} - M^{1/2}_{a|\hat n}}_1
    = \frac{1}{2} - r,
    \qquad \text{for all } a\in\{0,1\} \text{ and } \hat n \in S^2.
\end{equation}
All norms on the finite-dimensional space $\bM_2$ are equivalent up to constants, so the convergence is also uniform in the operator norm of the maps $M^r_{\hat n}\colon S_1(\bC^2)\to\ell_1^2$, $\rho\mapsto(\Tr[M^r_{a|\hat n}\rho])_a$, which is the norm used throughout.

This example shows that a sequence of compatible multi-meters can converge to another compatible multi-meter, even though the limiting simulation requires a continuous outcome space.
It raises a natural question.
\emph{If a sequence of compatible multi-meters converges in operator norm, is the limit again compatible?}
The same question arises for $d$-compatible multi-instruments, whose parents also carry a quantum register of dimension $d$.
In general no explicit parent is available for the limit, and the parents of the approximating families may become more and more complex.

The answer is yes for $d$-compatible multi-instruments, and hence for multi-meters and multi-channels.

\begin{theorem*}[Closure of $d$-compatibility, informal version of \Cref{thm:closurePOVM,thm:closure-instruments}]
Let $H$ and $K$ be separable Hilbert spaces with $\dim K < \infty$, let $\mathsf X$ be an arbitrary set of settings, and let $m$ be finite.
If a sequence of $d$-compatible multi-instruments $\{\Phi^{(n)}_x:S_1(H)\to S_1(K)\otimes^\wedge\ell_1^m\}_{x\in\mathsf X}$ converges in operator norm for every $x$, then the limit is $d$-compatible.
For $d=1$ and $K=\bC$, operator-norm limits of compatible multi-meters are compatible.
\end{theorem*}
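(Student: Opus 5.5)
The natural route is the Heisenberg picture combined with an ultraproduct or weak$^*$-compactness argument. Dualizing \eqref{eq:factorization}, each $d$-compatible family gives unital completely positive maps
\[
\theta_n^*:L_\infty(\Omega_n;\bM_d)\to B(H),\qquad
\Psi^{(n)*}_x:B(K)\otimes\ell_\infty^m\to L_\infty(\Omega_n;\bM_d),
\]
with $\Phi^{(n)*}_x=\theta_n^*\circ\Psi^{(n)*}_x$. The difficulty is that the classical registers $\Omega_n$ change with $n$ and may become increasingly complex, as in the noisy-spin example. We therefore need one register that absorbs all of them in the limit.

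\textbf{Step 1: build a common limiting register.} Fix a free ultrafilter $\cU$ on $\bN$. Form the ultraproduct of the von Neumann algebras $\prod_\cU L_\infty(\Omega_n;\bM_d)$, taken as the $C^*$-ultraproduct. This is a $C^*$-algebra of the form $C(\Omega)\otimes \bM_d$, since an ultraproduct of commutative algebras is commutative and $\bM_d$ is finite-dimensional. Because $\dim K<\infty$ and $m<\infty$, each source space $B(K)\otimes\ell_\infty^m$ is finite-dimensional. Hence for every $x$ the ultraproduct map $\Psi_x^*\coloneqq(\Psi^{(n)*}_x)_\cU$ is a well-defined unital completely positive map into $C(\Omega)\otimes\bM_d$. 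This works for arbitrary $\mathsf X$ and requires no diagonal argument.

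\textbf{Step 2: construct the parent.} Define $\theta^*:C(\Omega)\otimes\bM_d\to B(H)$ by $\theta^*((y_n)_\cU)=\mathrm{weak}^*\text{-}\lim_{\cU}\theta_n^*(y_n)$. This is well defined because bounded sets of $B(H)$ are weak$^*$ compact, and the limit kills null sequences since $\norm{\theta_n^*}\le1$. It is unital and completely positive. By operator-norm convergence of $\Phi^{(n)}_x$ to $\Phi_x$, we get $\theta^*\circ\Psi_x^*=\Phi_x^*$.

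\textbf{Step 3: return to the normal Schr\"odinger picture.} The map $\theta^*$ need not be normal, and $C(\Omega)$ is not yet an $L_\infty$ space with a predual $L_1(\Omega)$. I expect this to be the main obstacle. The plan is as follows.

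\begin{enumerate}
\item Since $H$ is separable, restrict to the separable unital $C^*$-subalgebra $\cA\otimes\bM_d$. Here $\cA\subseteq C(\Omega)$ is generated by the countably many relevant elements: the ranges of $\Psi^*_x$ on a countable dense set of settings, together with their products. By norm-continuity, a countable dense subfamily of settings suffices. Then $\cA\cong C(\Omega')$ with $\Omega'$ compact metrizable.
\item A unital completely positive map $C(\Omega')\otimes\bM_d\to B(H)$ corresponds to an $\bM_d$-matrix-valued positive operator-valued measure on $\Omega'$. Pairing this measure with normal states gives a Radon-type integral representation, and hence a normal extension to $L_\infty(\Omega',\mu;\bM_d)$ for a suitable control measure $\mu$.
\item Non-normality of $\theta^*$ is a further concern. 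Its predual action on $S_1(H)$ is still well defined, since composing with states yields bounded functionals on $C(\Omega')\otimes\bM_d$, that is, $\bM_d$-valued measures. The Heisenberg-picture maps $\Psi_x^*$ take values in continuous functions, so only the Radon-measure duality is needed.
\item Choose $\mu$ to dominate the measures $\Tr[\rho\,\theta^*(\cdot)]$ for a countable dense set of $\rho$. Taking Radon--Nikodym derivatives then gives $\theta:S_1(H)\to S_1^d\otimes^\wedge L_1(\Omega',\mu)$, completely positive and trace preserving.
\end{enumerate}

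This yields $\Phi_x=\Psi_x\circ\theta$, which is the required factorization. The multi-meter case follows by taking $d=1$ and $K=\bC$.
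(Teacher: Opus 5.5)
Your Steps 1 and 2 are sound. The $C^*$-ultraproduct of the registers is $C(\Omega)\otimes\bM_d$. The ultraproducts $\Psi_x^*$ are u.c.p.\ because their common domain $\cB(K)\otimes\ell_\infty^m$ is finite-dimensional. The weak$^*$ ultralimit $\theta^*$ is a well-defined u.c.p.\ map with $\theta^*\circ\Psi_x^*=\Phi_x^*$ for every $x$.

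\textbf{The gap is Step~3(1).} The theorem allows an arbitrary set $\mathsf X$, and $\mathsf X$ carries no topology. The maps $\Psi_x^*$ produced by the ultraproduct are not determined by $\Phi_x$ and need not depend on $x$ in any continuous way. So ``by norm-continuity a countable dense subfamily of settings suffices'' has nothing to rest on. For uncountable $\mathsf X$, the ranges of all the $\Psi_x^*$ need not lie in any separable subalgebra of $C(\Omega)\otimes\bM_d$. Once you restrict $\theta^*$ to a separable $\cA\otimes\bM_d$ built from countably many settings, $\theta^*\circ\Psi_x^*$ is no longer defined for the remaining $x$. The factorization is therefore lost exactly in the generality the statement claims. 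For countable $\mathsf X$ the step is fine as written, with no density argument needed.

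The underlying point is that separability must be drawn from the input side, $S_1(H)$, not from the settings. The paper restricts to the $\sigma$-subalgebra generated by the separable range $\widetilde\theta(S_1(H))$. It then pulls \emph{all} post-processings back through the positive integral-preserving left inverse $\cE$ of \Cref{prop:L1-maharam}. In your Heisenberg language, reducing to a separable subalgebra would require a conditional expectation onto it that leaves $\theta^*$ invariant, not a plain restriction.

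\textbf{Repair.} The simplest fix is to drop Step~3(1), since metrizability of the spectrum is never needed.
\begin{itemize}
\item Because $H$ is separable, pick a faithful density operator $\rho_0$. Let $\mu$ be the finite Radon measure $f\mapsto\Tr[\rho_0\,\theta^*(f\otimes\1_d)]$ on the full spectrum $\Omega$.
\item If $\mu(B)=0$, the effect $E(B)$ of the POVM associated with $f\mapsto\theta^*(f\otimes\1_d)$ vanishes. Hence every measure $\Tr[\rho\,\theta^*(\cdot)]$ is absolutely continuous with respect to $\mu$. Your choice of $\mu$ dominating a countable dense set of states also works, since absolute continuity is closed in total variation.
\item Radon--Nikodym gives a c.p.\ trace-preserving $\theta:S_1(H)\to L_1(\Omega,\mu;S_1^d)$. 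Its adjoint $\bar\theta^*$ satisfies $\bar\theta^*\circ q=\theta^*$, where $q:C(\Omega)\otimes\bM_d\to L_\infty(\Omega,\mu;\bM_d)$ is the canonical $*$-homomorphism.
\item The maps $q\circ\Psi_x^*$ are u.c.p.\ on a finite-dimensional domain, so they automatically have c.p.\ trace-preserving preduals $\Psi_x$. Then $\Phi_x=\Psi_x\circ\theta$ for every $x\in\mathsf X$.
\end{itemize}

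With this fix your route is a genuine alternative to the paper's. Weak$^*$ compactness of the unit ball of $\cB(H)$ and the Gelfand--Riesz representation replace three ingredients of the paper: the Kakutani identification of $\uprod{L_1(\Omega_n)}$, the pull-out \Cref{lem:pullout-finite}, and the Maharam embedding. The price is that you land in a non-standard finite measure space (the spectrum of an ultraproduct) rather than the paper's $L_1[0,1]$.
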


We first present, as a warm-up, the compactness argument for finite-dimensional measurements with finitely many settings and outcomes (\Cref{subsec:finite-case}).
There the parents have a bounded number of outcomes and lie in a fixed compact set.
In general the parents act on different outcome spaces with no uniform bound, so a compactness argument needs a common space for them.
For measurements, post-processing equivalence classes provide such a space, which is compact in the weak topology defined by state discrimination probabilities \cite{Kuramochi2020}.
Our proofs instead use ultraproducts and operator space theory from functional analysis.
The ultraproduct combines the varying outcome spaces of the parents into one space.
The parents define a limit map on this space, and the quantum register of dimension $d$ is preserved.

\subsection{Compatibility criteria and incompatible MUB measure-and-prepare channels}\label{sec:motiv-criteria}

The factorization formulation also gives quantitative criteria for $d$-compatibility.
We state the criteria for multi-channels.
They extend to multi-instruments by regarding an instrument as a channel with block-diagonal output, which leaves the functionals and $d$-compatibility unchanged (\Cref{sec:criterion}).
For channels on finite-dimensional systems, we work with the dual maps $\Phi_x^*:\bM_m\to\bM_n$.
A factorization then reads $\Phi_x^*=\theta\circ\Psi_x$ with $\theta:L_\infty(\Omega;\bM_d)\to\bM_n$ and $\Psi_x:\bM_m\to L_\infty(\Omega;\bM_d)$.
We first consider completely bounded factorizations.
We define the c.b.\ \defn{compatibility functional}
\begin{equation}\label{eq:intro-gamma-cb}
  \gamma_{d,\cb}(\Phi^*) := \inf \norm{\theta}_{\cb} \sup_x \norm{\Psi_x}_{\cb},
\end{equation}
where the infimum is over all completely bounded factorizations $\Phi_x^* = \theta \circ \Psi_x$.
This follows the standard use of factorization norms in Banach space and operator space theory \cite{pisier1986factorization,JungeHab}\cite[Sec.~7.2]{Pisier1998}.
We similarly define the c.p.\ compatibility functional $\gamma_{d,\cp}(\Phi^*)$ through completely positive factorizations, so that $\gamma_{d,\cb}\le\gamma_{d,\cp}$.
The family $\Phi^*$ of completely bounded maps factors through $L_\infty(\Omega;\bM_d)$ by complete contractions if and only if $\gamma_{d,\cb}(\Phi^*)\le1$ (\Cref{prop:cb-unit-ball}).
A family $\Phi$ of channels is $d$-compatible precisely when $\gamma_{d,\cp}(\Phi^*)\le1$ (\Cref{prop:cp-gamma}).
Since each $\Phi_x^*$ is unital, $\gamma_{d,\cp}(\Phi^*)\ge\gamma_{d,\cb}(\Phi^*)\ge\norm{\Phi_x^*}_{\cb}=1$, so this condition is the same as $\gamma_{d,\cp}(\Phi^*)=1$.

These functionals are defined through all possible factorizations, so they are difficult to evaluate directly.
Operator-space duality expresses them as suprema over witnesses $T$ (\Cref{thm:cb-criterion,thm:cp-criterion}).
For the c.b.\ functional, a witness is a family $T=(T_x)_x$ of linear maps $T_x:S_1^m\to\bM_n$, viewed as one map $T:\ell_\infty(\mathsf X,S_1^m)\to\bM_n$.
It is normalized by its norm at matrix level $d$,
\begin{equation}
  \norm{T}_{(d)} := \norm{\id_{\bM_d}\otimes T : \bM_d(\ell_\infty(\mathsf X,S_1^m)) \to \bM_d(\bM_n)}.
\end{equation}

\begin{theorem*}[c.b.\ criterion, \Cref{thm:cb-criterion}]
Let $\Phi_x^*:\bM_m\to\bM_n$, $x\in\mathsf X$, be linear maps, with $\mathsf X$ finite.
Then
\begin{equation}
  \gamma_{d,\cb}(\Phi^*) = \sup_{\norm{T}_{(d)}\le1} \norm{\sum_x (T_x\otimes\Phi_x^*)(\chi_m)}_{\bM_n\minten\bM_n},
\end{equation}
where $\chi_m=\sum_{s,t=1}^m\ketbra{s}{t}\otimes\ketbra{s}{t}$.
In particular, a family of channels is not $d$-compatible if $\norm{\sum_x(T_x\otimes\Phi_x^*)(\chi_m)}>\norm{T}_{(d)}$ for some witness $T$.
\end{theorem*}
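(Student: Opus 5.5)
The plan is to realize $\gamma_{d,\cb}$ as a factorization norm and then apply a Hahn--Banach/trace-duality argument, in the style of Pisier's factorization theorems. First I will rewrite the family as a single map. Put $E:=\ell_1(\mathsf X,\bM_m)$ (finite $\ell_1$-direct sum) and let $u:E\to\bM_n$ be $u((y_x)_x)=\sum_x\Phi_x^*(y_x)$. A common factorization $\Phi_x^*=\theta\circ\Psi_x$ through $L_\infty(\Omega;\bM_d)$ is then the same as a factorization $u=\theta\circ\Psi$, where $\Psi(y)=\sum_x\Psi_x(y_x)$ and $\norm{\Psi}_{\cb}=\sup_x\norm{\Psi_x}_{\cb}$ by the universal property of the $\ell_1$-sum. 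The operator-space dual of $E$ is $\ell_\infty(\mathsf X,S_1^m)$, which is the domain of the witnesses.

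\textbf{Weak inequality ($\ge$).} Fix a factorization and a witness $T$ with $\norm{T}_{(d)}\le1$. I will write
\[
\sum_x(T_x\otimes\Phi_x^*)(\chi_m)=(\id\otimes\theta)\bigl[(T\otimes\Psi)(\chi)\bigr],
\]
where $\chi$ is the canonical tensor of $E^*\otimes E$. The key point is that the middle element $(T\otimes\Psi)(\chi)$ lies in $\bM_n\otimes L_\infty(\Omega;\bM_d)$. Because the register has only $d\times d$ matrix blocks, its min-norm is controlled pointwise in $\omega$ by $\norm{\id_{\bM_d}\otimes T}$ times $\norm{\Psi}_{\cb}$. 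Here I will use the fact that a map into $L_\infty(\Omega;\bM_d)$ satisfies $\norm{\cdot}_{\cb}=\norm{\cdot}_{(d)}$ (Smith's lemma). Applying $\id\otimes\theta$ then costs at most $\norm{\theta}_{\cb}$. Taking the infimum over factorizations gives the inequality.

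\textbf{Reverse inequality ($\le$).} Suppose the supremum on the right is at most $1$; I need to produce a factorization with cost $\le1+\varepsilon$, or exactly $\le 1$ using attainment from the ultraproduct argument of \Cref{thm:closure-instruments}. My first attempt will be a separation argument. The set of maps $u$ admitting factorizations with $\norm{\theta}_{\cb}\norm{\Psi}_{\cb}\le1$ is convex, since direct sums of registers $L_\infty(\Omega_1\sqcup\Omega_2;\bM_d)$ let me combine factorizations. It is closed by the ultraproduct closure already established. Therefore $\gamma_{d,\cb}$ is a norm on $\CB(E,\bM_n)$, and its dual norm is computed on $E^*\otimes S_1^n$ via trace pairing. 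I would identify that dual ball with the witnesses, via $\bM_n\minten\bM_n$ and $S_1^n$ duality. For this I will use a Pietsch-type decomposition: a functional bounded on all $d$-factorable maps is bounded by states on $\bM_d(\ell_\infty(\mathsf X,S_1^m))$-type matrix levels. That is exactly the normalization $\norm{T}_{(d)}\le1$, and the pairing of $T$ with $u$ produces $\sum_x(T_x\otimes\Phi_x^*)(\chi_m)$.

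\textbf{Main obstacle.} I expect the hard step to be this last identification, namely that the dual of the $d$-factorization norm is precisely the $\norm{\cdot}_{(d)}$-normalized witnesses paired through the min-norm rather than a trace. To handle it, I will try to establish that extreme factorizations can be taken through a single $\bM_d$ block (points $\omega$), so that the norm is determined by maps $\bM_d\to\bM_n$ and $E\to\bM_d$. I will then invoke the standard Haagerup–Paulsen–Wittstock-type duality at fixed matrix level $d$. The \emph{in particular} clause is immediate: for channels, $d$-compatibility forces $\gamma_{d,\cb}\le\gamma_{d,\cp}=1$ by \Cref{prop:cp-gamma}, and homogeneity in $T$ then gives the strict-inequality witness condition.
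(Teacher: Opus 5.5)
Your inequality $\gamma_{d,\cb}(\Phi^*)\ge\sup_{\norm{T}_{(d)}\le1}\norm{\sum_x(T_x\otimes\Phi_x^*)(\chi_m)}$ is correct. It is also more direct than the paper, which obtains this direction only by reading its identification of the polar backwards. In your argument, the middle element $(T\otimes\Psi)(\chi)$ is the min-norm representative of $\Psi\circ T_*:S_1^n\to L_\infty(\Omega;\bM_d)$. Its c.b.\ norm equals its $d$-th amplification norm and is therefore at most $\norm{\Psi}_{\cb}\norm{T}_{(d)}$. This direction is all the ``in particular'' clause needs. The reverse inequality, however, is the actual content of the theorem, and there your proposal has a genuine gap.

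\textbf{The single-block reduction is false.} The unit ball $\cF$ of $\gamma_{d,\cb}$ is not the closed convex hull of single-block factorizations $E\to\bM_d\to\bM_n$. The reason is that the c.b.\ norm of a parent $\theta((f_j)_j)=\sum_j\theta_j(f_j)$ on $\ell_\infty^N(\bM_d)$ can be far below $\sum_j\norm{\theta_j}_{\cb}$.
\begin{itemize}
\item Take $d=1$, $\abs{\mathsf X}=1$, $m=n\ge2$ and the diagonal pinching $P$. It factors through $\ell_\infty^n$ by u.c.p.\ maps, so $\gamma_{1,\cb}(P)\le1$.
\item By your own weak inequality, every functional $u\mapsto\bra{\eta}(T\otimes u)(\chi_n)\ket{\xi}$ with $\norm{T}_{(1)}\le1$ and unit vectors $\xi,\eta$ has modulus at most $1$ at $P$.
\item Single-block factorizations are the rank-one maps $\psi(\cdot)A$ with $\norm{\psi}\norm{A}\le1$. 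On these, the trace functional $u\mapsto\Tr u$ equals $\psi(A)$, so it is bounded by $1$. Yet $\Tr P=n$.
\end{itemize}
So this functional lies in the polar of the single-block set but is not of witness form. Information about single blocks therefore cannot identify $\cF^\circ$, and by Krein--Milman the extreme points of $\cF$ are not single-block.

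\textbf{What is missing is the coupling between blocks.} Consider a parent on $\ell_\infty^N(\bM_d)$ with Choi blocks $z_j\in\bM_n\otimes\bM_d$. The optimization over $\Psi$ is exact by trace duality, while $\norm{\theta}_{\cb}$ is a supremum over $a,b$ in the unit ball of $S_2^n$ (\Cref{lem:theta-norm}). The polar condition therefore reads
\begin{equation*}
\sum_{x,j}\norm{(R_x\otimes\id_{\bM_d})(z_j)}_{S_1^d(\bM_m)}\le\sup_{a,b}\sum_j\norm{(a\otimes\1)z_j(b\otimes\1)}_{S_1^{nd}},
\end{equation*}
with the supremum outside the sum over blocks. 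The missing steps are then:
\begin{enumerate}
\item A Hahn--Banach/minimax step, namely Pisier's noncommutative Pietsch theorem (\Cref{thm:pisier-5.1}) run at level $d$, turns this into domination by a single pair $(a,b)$.
\item After perturbing $a,b$ to be invertible, set $S(w):=(R_x(a^{-1}wb^{-1}))_x$. It satisfies $\sum_x\norm{(S_x\otimes\id_{\bM_d})(w)}_{S_1^d(\bM_m)}\le\norm{w}_{S_1^{nd}}$.
\item Dualizing gives $T:=S^*$ with $\norm{T}_{(d)}\le1$.
\item Finally $\Re\ip{R}{\Phi}=\Re\bra{\eta}\sum_x(T_x\otimes\Phi_x)(\chi_m)\ket{\xi}$ with $\xi=(a\otimes\1)\ket{\chi_n}$ and $\eta=(b\otimes\1)\ket{\chi_n}$.
\end{enumerate}

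Thus the Pietsch states live on the output side $\bM_n$ and become the vectors realizing the min-norm. They do not live on $\bM_d(\ell_\infty(\mathsf X,S_1^m))$ as you suggest. Likewise, the dual ball consists of maps $R:\bM_n\to E$, i.e.\ elements of $E\otimes S_1^n$ rather than $E^*\otimes S_1^n$, of the form $y\mapsto T_*(ayb)$; it is not the set of witnesses themselves. A fixed-level Haagerup--Paulsen--Wittstock argument does not produce this uniform pair $(a,b)$.
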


The value of a witness is the operator norm of a single matrix, and the dimension $d$ enters only through the normalization $\norm{T}_{(d)}$.
The c.p.\ criterion of \Cref{thm:cp-criterion} has the same form, with $*$-preserving witnesses, the operator norm $\norm{A_+}=\max\{\lambda_{\max}(A),0\}$ of the positive part in place of the operator norm, and a positive version of the normalization, characterizing $d$-compatibility exactly.
Although the c.b.\ criterion gives only a necessary condition for $d$-compatibility, since $\gamma_{d,\cb}\le\gamma_{d,\cp}$, a single witness with a violation is enough to certify incompatibility.

Deciding $d$-compatibility numerically is hard in general.
For multi-meters with $d=1$, compatibility of finitely many measurements with finitely many outcomes is decided by a single semidefinite program \cite{WolfPerezGarciaFernandez2009,CavalcantiSkrzypczyk2016,Skrzypczyk2020}.
For $d\ge2$, or for channels and instruments, no {single SDP} formulation is available.
By \Cref{thm:choi-assemblage}, the problem asks for a common source state of Schmidt number at most $d$.
Sets of states with bounded Schmidt number are not semidefinite representable in general, since already the separable states have no finite semidefinite description \cite{Fawzi2021}.
Deciding whether a single channel is entanglement breaking, which is the case $d=1$, is NP-hard \cite{Gharibian2010}.
Existing methods therefore rely on hierarchies and relaxations.
For multi-meters, $d$-compatibility is equivalent to $d$-preparability of a steering assemblage \cite{Jones2023}, for which a complete semidefinite hierarchy \cite{Gois2023} and further semidefinite relaxations \cite{Alessandro2025} are available.
For channels, see-saw methods give inner bounds and the polarization hierarchy \cite{PLAVALA202515} gives outer bounds, as in \Cref{ex:Cn}.
These methods give numerical bounds in fixed dimensions, whereas the witnesses above give analytic bounds for families of arbitrary dimension.

For multi-meters and $d=1$, the operator-space duality of \Cref{thm:cb-criterion,thm:cp-criterion} is related to the convex separation behind the incompatibility witnesses of \cite{Carmeli2019}.
For $d=1$ and dichotomic measurements, a similar functional is the compatibility norm of \cite[Sec.~3]{BluhmNechita2022tensor}, which is extended to general probabilistic theories in \cite[Sec.~9]{Bluhm2022}.
Here the intermediate space $L_\infty(\Omega;\bM_d)$ is noncommutative, so completely bounded norms replace Banach space norms.
The witnesses for channels in \cite{CarmeliJMP2019} concern compatibility through a joint channel, which is a different notion (\Cref{sec:overview-prior}).

Families of pinching channels onto mutually unbiased bases (MUBs) provide a concrete test of these criteria.
The pinching channel onto an orthonormal basis $\{\ket{e_r}\}_{r=1}^n$ of $\bC^n$ is $\Phi(\rho)=\sum_{r=1}^n\bra{e_r}\rho\ket{e_r}\,\ketbra{e_r}{e_r}$, which removes the off-diagonal entries of $\rho$ in this basis.
Two orthonormal bases $\{\ket{e_r}\}$ and $\{\ket{f_s}\}$ are mutually unbiased if $\abs{\braket{e_r}{f_s}}^2=1/n$ for all $r$ and $s$.
Pairs of mutually unbiased bases are a standard example of strongly incompatible measurements.
For qubit spin measurements subject to white noise, orthogonal directions need the most noise before they become jointly measurable \cite{Busch1986}.
For a pair of Fourier-conjugate bases in any finite dimension, the exact white-noise threshold for joint measurability is known \cite{Carmeli2012}.
Each channel can be implemented by measuring in its basis and preparing the corresponding basis state, so it is entanglement breaking and requires only classical memory when implemented separately \cite{HSR2003}.
A common implementation must retain enough information to reproduce any of these channels after the choice of basis is supplied, potentially requiring much larger quantum memory than individual channel requires.

For the pinching channels $\Phi=(\Phi_x)_{x=1}^k$ onto $k$ mutually unbiased bases of $\mathbb C^n$, we obtain
\begin{equation}\label{eq:intro-qip-mub-criterion}
  \gamma_{d,\mathrm{cb}}(\Phi^*)
  \geq \sqrt{\frac{k}{d}}\,(1-\frac1n).
\end{equation}
The family is consequently not $d$-compatible whenever $d<k(1-1/n)^2$ (\Cref{thm:dtwo}).
In particular, whenever a complete family of $n+1$ mutually unbiased bases exists, its pinching channels are not $d$-compatible for any $d\leq n-2$.
This gives an explicit lower bound on the memory required by the entire family.

For $k=2$ we take the computational and Fourier bases, which are the eigenbases of the clock and shift operators.
For this pair, an algebraic argument gives the stronger conclusion that every common implementation requires $d\geq n$ (\Cref{prop:mult-domain}).
Thus, even this pair of entanglement-breaking channels requires retaining the full input dimension, despite unrestricted classical information.
The argument extends to conditional expectations associated with an arbitrary finite group (\Cref{thm:finite-gp}).

\subsection{Choi state characterization, LOCC and steering}\label{sec:motiv-choi}

Channel-state duality represents a channel by its Choi state, the output of the channel applied to one half of a maximally entangled state \cite{Choi1975,Jamiolkowski1972}.
Questions about operations then become questions about bipartite states, to which the tools of entanglement theory apply.
Compatibility problems have such dual descriptions.
A multi-meter is incompatible exactly when it can demonstrate steering \cite{Quintino2014,Uola2014}, and every steering problem with an invertible marginal is equivalent to a joint-measurability problem \cite{Uola_2018}, with an extension to continuous-variable systems through channel--state duality \cite{Kiukas2017}.
Operationally, while $d$-compatibile resource asks for a common implementation of a family of maps, the dual question asks how much entanglement is needed to prepare the states that record the action of these maps on an entangled input.
The requirement of a common parent becomes a requirement of a common source state.
This allows us to relate a bound on quantum memory to a bound on entanglement dimension.

In finite dimensions, let $\phi_{RA}^+$ be the normalized maximally entangled state on a reference copy $R$ and the input system $A$.
The Choi state assemblage associated with a multi-instrument is given by
\begin{equation}\label{eq:intro-qip-choi}
  \sigma_{a|x}=(\id_{R}\otimes\Phi_{a|x})(\phi_{RA}^+).
\end{equation}
The assemblage is said to be \defn{$d$-preparable} if it can be obtained by applying local instruments to one part of a common bipartite source state of Schmidt number at most $d$.
Here the Schmidt number is the minimum, over pure-state decompositions of the source, of the largest Schmidt rank in the decomposition.
The source is identical for every setting $x$, with differences arising from the choice of local instrument.

We prove that a finite multi-instrument on finite-dimensional systems is $d$-compatible if and only if its Choi state assemblage is $d$-preparable (\Cref{thm:choi-assemblage}).
For a single channel, the characterization recovers the equivalence between $d$-partial entanglement breaking, Schmidt number \cite{Terhal2000} at most $d$ of the Choi state, and the existence of a Kraus decomposition whose operators all have rank at most $d$ \cite{Chruscinski2006}.
\Cref{cor:k-PEB} gives the corresponding statement for a single instrument.

The entanglement characterization is also useful when the input and output belong to different laboratories.
Suppose that Alice holds the quantum input, Bob receives the setting and must produce the output, and they share a bipartite state.
Then the family is $d$-compatible if and only if they can implement it using a shared state of Schmidt number at most $d$ and local operations with one-way classical communication from Alice to Bob, with Alice's operation independent of the setting (\Cref{sec:one-way-LOCC}).
Thus, the minimum quantum-memory dimension required for a common implementation equals the minimum Schmidt number required for this one-way LOCC realization.

The same characterization gives a way to certify entanglement dimension through steering with both classical and quantum outputs, extending the relation between measurement simulation and high-dimensional steering \cite{Designolle2021,Jones2023}.
Certifying the Schmidt number of a shared state is a basic task in the study of high-dimensional entanglement \cite{Friis2019}, and steering allows it with only one trusted party \cite{Designolle2021}.
Suppose that an untrusted party receives a setting $x$ and applies an instrument to its share of a bipartite source.
The party returns an outcome $a$ and a quantum system $B$ to a trusted verifier, who characterizes the resulting assemblage on the retained reference and the returned system.
For every fixed invertible reference marginal, we establish a one-to-one correspondence between multi-instruments and state assemblages (\Cref{prop:steering-equivalent}).
This correspondence identifies $d$-compatible multi-instruments with $d$-preparable assemblages.
For measurements, it recovers the steering--joint-measurability correspondence \cite{Uola_2018} and its extension to simulation with a bounded quantum dimension \cite{Jones2023}.
For channels and instruments, the correspondence provides a dimension parameter for steering with a quantum output.
This scenario differs from channel steering \cite{Piani2015,Zjawin2023}, in which the untrusted party returns only a classical outcome.
This parameter bounds the intermediate quantum register of a common implementation, whereas dimension-bounded steering \cite{Moroder2016} bounds the Hilbert space of the trusted party, and prepare-and-measure methods certify the dimension of a single channel \cite{Engineer2025}.

\subsection{Generalized robustness and preparation games}\label{sec:motiv-robustness}

Beyond deciding the amount of quantum memory required for compatibility, another way of quantifying incompatibility is through robustness \cite{Busch2013,Haapasalo15,Heinosaari2015}.
In the resource theory of $d$-incompatibility, whose free objects are the $d$-compatible families (\Cref{sec:motiv-closure}), a robustness measure records the least amount of noise, drawn from a chosen noise set, that must be mixed into a family before it becomes free \cite{ChitambarGour2019,Regula18}.
For measurements, the usual noise models are white noise and arbitrary noise, and mixing with jointly measurable families has also been considered \cite{DFK19}.
The resulting measures can be computed by semidefinite programming \cite{Uola2020,CavalcantiSkrzypczyk2016} or bounded by linear-programming relaxations \cite{Porto2026lp}.
Optimization-free nonlinear witnesses give another way to certify and quantify incompatibility \cite{Lee2026}.
Analogous noise thresholds have been studied for other simulation resources, for example the noise that a measurement tolerates before it becomes simulable by projective measurements \cite{Oszmaniec2017}.
In general probabilistic theories, simulation by restricted sets of observables also leads to quantifiers such as the minimal number of simulating observables \cite{Filippov2018}.
However, different noise models can order the same families differently and even single out different families as the most incompatible \cite{DFK19}.
For channels and instruments the noise must itself be a family of channels or instruments.
Several choices of such noise have been considered for channels \cite{Haapasalo15,Haapasalo2021,Hsieh2022} and for instruments \cite{MitraFarkas2023}.
Robustness-based quantifiers have been studied for channel incompatibility \cite{Minagawa2026} and, through thermalization, for instrument incompatibility \cite{Hsieh2024thermo}.

We therefore use the generalized robustness $\mathcal R_d$, which allows arbitrary noise.
It is the least $s\geq0$ such that $(\Phi+s\Xi)/(1+s)$ is $d$-compatible for some family $\Xi$ of channels, so it depends only on the set of $d$-compatible families and bounds the robustness against every specific noise model from below.
It is convex, vanishes exactly on the $d$-compatible families, and cannot increase under common channel pre-processing, setting-dependent channel post-processing, or classical randomization of the settings.
In convex resource theories, one plus the generalized robustness is moreover the largest advantage of a resource over all free objects in a discrimination task \cite{TR19,TRBLA19}.
For measurements this gives the state-discrimination characterization of incompatibility \cite{SSC19,UKSYG19}.
Its qualitative form was also obtained from incompatibility witnesses \cite{Carmeli2019}, and the analogous identity for single measurements is proved in \cite{Oszmaniec2019}.
A similar characterization holds for discrimination and exclusion games in which a state and a set of measurements are used together as resources \cite{Ducuara2025}.
Through the steering correspondence, the generalized robustness also relates to the robustness of steering \cite{PianiWatrous2015,CavalcantiSkrzypczyk2016}.

We extend this operational meaning to the quantum-memory constraint through a \emph{memory-bounded nontransient preparation game} (\Cref{subsec:robustness-duality}).
The referee sends half of a maximally entangled state to the player, announces the setting $x$ after a storage delay, and tests the returned quantum system jointly with the retained half.
Strategies that keep a $d$-dimensional quantum register and unrestricted classical information during the delay realize exactly the $d$-compatible families.
As in the nontransient guessing game of \cite{Ji2024}, the setting arrives only after a delay.
There the player returns its quantum system before the delay, so there is no quantum memory available and gives only a classical guess after it.
Here the player returns the quantum system after the setting is announced.
Writing $w(\Phi;M)$ for the winning probability in a game $M$ and $\mathcal F_d$ for the set of $d$-compatible families, we prove in \Cref{prop:robustness-dual} that
\begin{equation}\label{eq:intro-qip-game}
  1+\mathcal R_d(\Phi)
  =\sup_M
  \frac{w(\Phi;M)}{\displaystyle\sup_{\Sigma\in\mathcal F_d}w(\Sigma;M)},
\end{equation}
so one plus the generalized robustness is the largest advantage over all players whose quantum memory has dimension at most $d$.

For a family with input dimension $n$, we obtain the sharp bound $\mathcal R_d(\Phi)\leq n/d-1$ for $1\leq d\leq n$, attained by the identity channel (\Cref{cor:universal-bound}), and we compute $\mathcal R_d$ for partially depolarizing channels.
For the MUB pinching families of \Cref{sec:motiv-criteria}, the witness behind \eqref{eq:intro-qip-mub-criterion} is a game in which the referee sends a basis state and reveals its basis only after the delay.
The pinching family always wins, while every $d$-compatible strategy wins with probability at most $\sqrt{d/k}+1/n$, which gives $\mathcal R_d(\Phi)\geq(\sqrt k(1-\frac1n)-\sqrt d)/(\sqrt d+\sqrt k/n)$, complemented by the upper bound $\min\{k-1,\,n/d-1\}$ (\Cref{prop:mub-robustness}).
These bounds concern the channel family, whereas the white-noise robustness of the basis measurements is studied in \cite{DesignolleSkrzypczykFrowisBrunner2019}, and mutually unbiased measurements are not optimal for every steering task \cite{Bavaresco2017}.

Finally, the compatibility functional bounds the robustness from above, $1+\mathcal R_d(\Phi)\leq\gamma_{d,\cp}(\Phi^*)$ (\Cref{prop:gamma-bounds-Rg}), and it is monotone under the free operations of \Cref{lem:gamma-monotone}.
Its dual description is a signed version of the preparation game with positive and negative payoffs, and restricting to positive payoffs recovers $1+\mathcal R_d$ \eqref{eq:gap-positivity}, so the two quantities have related game interpretations but need not agree.

\subsection{Different notions of incompatibility in the literature}\label{sec:overview-prior}

Compatibility has been extended from measurements to channels and instruments in several inequivalent ways, which differ in what the common parent produces and in which post-processings are allowed.
\Cref{tab:notions} writes these notions, together with ours, as factorizations through a common parent, and the paragraphs below compare them with $d$-compatibility of this paper.
Comprehensive accounts of measurement incompatibility can be found in \cite{Heinosaari_2016,Guhne2023}.

\begin{table}[htbp]
\centering
\footnotesize
\renewcommand{\arraystretch}{1.15}
\begin{tabular}{|>{\raggedright\arraybackslash}p{0.30\linewidth}|p{0.65\linewidth}|}
\hline
\defn{Notion} & \defn{Factorization} \\
\hline
Compatible multi-meter \cite{BLN25}, joint measurability \cite{Heinosaari_2016,Guhne2023,Buscemi2020}
 & multi-meter $M_x:S_1(A)\to\ell_1^n$ with $M_x=u_x\circ\Pi$, common measurement $\Pi:S_1(A)\to L_1(\Omega)$, classical post-processing $u_x:L_1(\Omega)\to\ell_1^n$. \\
\hline
$n$-simulable multi-meter \cite{Ioannou2022,Jones2023}, $n$-preparable state assemblage \cite{Jones2023,Designolle2021}
 & $M_x=\Psi_x\circ\theta$, common instrument $\theta:S_1(A)\to S_1^n\otimes^\wedge\ell_1^k$ compressing the input into $n$ dimensions and recording an outcome, measurements $\Psi_x:S_1^n\otimes^\wedge\ell_1^k\to\ell_1^{|\mathsf A|}$.
 For an assemblage, this is the factorization of its steering-equivalent observables \cite{Jones2023}.
 Genuine $(n+1)$-dimensional steering \cite{Designolle2021} is the failure of $n$-preparability.\\
\hline
Programmable instrument device \cite{Ji2024}
 & multi-instrument $\Phi_x:S_1(A)\to S_1(B)\otimes^\wedge\ell_1^n$ with $\Phi_x=(\id_B\otimes S_x)\circ\theta$, common channel $\theta:S_1(A)\to S_1(B)\otimes^\wedge S_1(E)$ that releases the quantum output before the program arrives, multi-meter $S_x:S_1(E)\to\ell_1^n$ on the memory $E$ \cite[Prop.~1]{Ji2024}.
 The memory $E$ is unrestricted. \\
\hline
Classically compatible multi-instrument \cite{Heinosaari2014strong,Buscemi2023}, traditional compatibility \cite{Mitra2022}
 & multi-instrument $\Phi_x:S_1(A)\to S_1(B)\otimes^\wedge\ell_1^{n_x}$ with $\Phi_x=(\id_B\otimes u_x)\circ\theta$, common instrument $\theta:S_1(A)\to S_1(B)\otimes^\wedge\ell_1^k$, classical post-processing $u_x:\ell_1^k\to\ell_1^{n_x}$ of its outcome.
 All members share the quantum output $B$, which is not post-processed, so they all induce the same channel.\\
\hline
Simple (free) programmable instrument device \cite{Ji2024}
 & $\Phi_x=(\id_B\otimes u_x)\circ\theta$, common instrument $\theta:S_1(A)\to S_1(B)\otimes^\wedge\ell_1^k$, classical post-processing $u_x:\ell_1^k\to\ell_1^n$, equivalently a compatible memory multi-meter $S$ above \cite[Thm.~2]{Ji2024}. \\
\hline
Compatible multi-channel \cite{HeinosaariMiyadera2017,Haapasalo2021,Girard2021,Hsieh2022}, parallel compatibility of channels \cite{Buscemi2023}, broadcasting of states \cite{Barnum1996}
 & multi-channel $\Phi_x:S_1(A)\to S_1(B_x)$ with $\Phi_x=\Tr_{\hat x}\circ\theta$, common channel $\theta:S_1(A)\to S_1(\bigotimes_yB_y)$, post-processing the partial trace over the other outputs.
 Register $\bigotimes_yB_y$, all outputs produced in one run.
 Broadcasting of states is the case $\Phi_x=\id$, with the marginal conditions imposed only on the states to be broadcast. \\
\hline
Parallel compatible multi-instrument \cite{Heinosaari_2016,Mitra2022,Buscemi2023}, joint instrument \cite{Leppajarvi2024}
 & multi-instrument $\Phi_x:S_1(A)\to S_1(B_x)\otimes^\wedge\ell_1^{n_x}$ with $\Phi_x=(\Tr_{\hat x}\otimes m_x)\circ\theta$, common instrument $\theta:S_1(A)\to S_1(\bigotimes_yB_y)\otimes^\wedge\ell_1^{\prod_yn_y}$, $m_x$ the marginalization onto the $x$-th outcome.
 All outputs are produced in one run, as in broadcasting, and the members may have different quantum outputs $B_x$. \\
\hline
q-compatible multi-instrument \cite{Buscemi2023}
 & multi-instrument $\Phi_x:S_1(A)\to S_1(B_x)\otimes^\wedge\ell_1^{n_x}$ with $\Phi_x=\Psi_x\circ\theta$, common instrument $\theta:S_1(A)\to S_1^d\otimes^\wedge\ell_1^k$ for some $d$, and post-processings $\Psi_x:S_1^d\otimes^\wedge\ell_1^k\to S_1(B_x)\otimes^\wedge\ell_1^{n_x}$ of the form $\Psi_x(\rho\otimes e_\omega)=\sum_a\mu(a|\omega,x)\,\cD_{a,\omega,x}(\rho)\otimes e_a$ with channels $\cD_{a,\omega,x}:S_1^d\to S_1(B_x)$.
 Thus the outcome $a$ is a classical post-processing of $\omega$, and the register only undergoes a channel chosen by $a$, $\omega$ and $x$. \\
\hline
Projective simulation of an instrument \cite{Khandelwal2025}
 & single instrument $\Phi=\Psi\circ\theta$, parent $\theta(\rho)=\sum_{\lambda,a'}q_\lambda\,E_{a'|\lambda}\rho E_{a'|\lambda}\otimes e_{(a',\lambda)}$ a randomly chosen projective instrument on $A$, post-processing $\Psi$ an outcome-conditioned channel followed by classical relabeling. \\
\hline
$G$-joint measurability \cite{Lobo2026}, partial joint measurability \cite{Masini2024}
 & $M_x=\Psi_x\circ\theta$, common instrument $\theta:S_1(A)\to S_1(A')\otimes^\wedge\ell_1^k$ with an unrestricted quantum register $A'$, measurements $\Psi_x$ whose outcomes in a prescribed set $G_x$ are functions of the classical outcome alone.
 In partial joint measurability, each $G_x$ is either the set of all outcomes or empty. \\
\hline
$d$-factorizable channel (this paper), $d$-partially entanglement breaking channel \cite{Chruscinski2006}, channel of Schmidt number at most $d$ \cite{Huang2006}, $d$-compressible channel \cite{Sekatski2026}
 & single channel $\Phi=\Psi\circ\theta$, instrument $\theta:S_1(A)\to S_1^d\otimes^\wedge L_1(\Omega)$, channel $\Psi:S_1^d\otimes^\wedge L_1(\Omega)\to S_1(B)$, equivalently, $\Phi$ has a Kraus decomposition whose operators all have rank at most $d$, or its Choi state has Schmidt number at most $d$ (\Cref{cor:k-PEB}).
 Entanglement breaking channels \cite{HSR2003} are the case $d=1$. \\
\hline
$d$-compatible multi-channel and multi-instrument (this paper), $d$-simulable set of instruments \cite{Sekatski2026}
 & $\Phi_x=\Psi_x\circ\theta$, common instrument $\theta:S_1(A)\to S_1^d\otimes^\wedge L_1(\Omega)$, instruments $\Psi_x:S_1^d\otimes^\wedge L_1(\Omega)\to S_1(B)\otimes^\wedge L_1(\mu)$, channels for a multi-channel.
 Register $S_1^d$ with an unrestricted classical outcome and arbitrary post-processing depending on the classical label. \\
\hline
\end{tabular}
\caption{Compatibility and simulation notions from the literature, and the notions of this paper, written as factorizations through a common parent. Throughout, $A$ is the input system, $B$ the quantum output, $x$ the classical label of the member.}
\label{tab:notions}
\end{table}

\emph{Measurements.}
For measurements, $d$-compatibility with $d=1$ is joint measurability.
For general $d$ it coincides with the $n$-simulability of \cite{Ioannou2022,Jones2023} when $d=n$, where the compressing instrument serves as the parent $\theta$ and the measurements on the compressed system serve as the post-processings $\Psi_x$.
By \cite{Jones2023}, a multi-meter is $n$-simulable exactly when the assemblage it steers from the purification of a full-rank state, for example a maximally entangled state, is $n$-preparable \cite{Designolle2021}.
\Cref{prop:steering-equivalent} extends this correspondence to the $d$-preparable assemblages of \Cref{def:d-preparable}, which carry a quantum output.
Our definition thus extends this dimension parameter from measurements to channels and instruments.
Other notions measure the size of a compatible part rather than the memory of a common implementation.
The compatibility dimension of \cite{Loulidi2021} is the largest dimension of a subspace on which a multi-meter becomes compatible, while partial and $G$-joint measurability \cite{Masini2024,Lobo2026} require a classical simulation only for prescribed inputs or outcomes and leave the quantum register of the parent unrestricted.

\emph{Channels.}
For channels, compatibility is usually defined through a broadcasting parent \cite{HeinosaariMiyadera2017}.
Two channels are compatible when they are marginals of a common channel into the joint output $\bigotimes_yB_y$, so that all outputs are produced in one run.
Sufficiently noisy pairs are compatible, while the identity channel is not compatible with itself by the no-broadcasting theorem \cite{Barnum1996}.
Through the Choi isomorphism, this compatibility problem becomes a quantum marginal problem that can be treated by semidefinite programming \cite{Haapasalo2021,Hsieh2022}, joint channels can also be constructed algebraically \cite{Girard2021}, and for general outcome algebras the compatibility relation depends on the tensor product used for the joint output \cite{Kuramochi2018}.
In $d$-compatibility, by contrast, only one of the post-processings $\Psi_x$ is applied in a given run.
A broadcasting realization of $n$ channels with output space $B$ gives a $d$-compatible realization with $d=(\dim B)^n$, taking the broadcasting channel as the parent and the partial traces as post-processings.
The restriction in our definition therefore concerns the dimension of the intermediate quantum system rather than the simultaneous production of the outputs.

\emph{Instruments.}
For instruments, compatibility has to account for the post-measurement state as well as the outcome statistics, and several notions are in use.
Classical, or traditional, compatibility \cite{Heinosaari2014strong,Buscemi2023,Mitra2022} asks for a common instrument whose outcome is classically post-processed while its quantum output is passed on unchanged, so that all members induce the same channel.
This is $d$-compatibility with $d=\dim B$ in which the register is forwarded unchanged and only the classical label is post-processed, and it is also the factorization of the simple programmable instrument devices of \cite{Ji2024}.
The programmable instrument devices of \cite{Ji2024}, like the programmable measurement devices of \cite{Buscemi2020}, ask whether a family can be programmed without quantum memory, whereas $d$ quantifies how much quantum memory is needed.

Parallel compatibility \cite{Heinosaari_2016,Mitra2022,Buscemi2023} extends the broadcasting formulation to instruments through a joint instrument and the resulting incompatibility is quantified with robustness monotones in \cite{MitraFarkas2023}.
When all outputs coincide, a parallel realization is $d$-compatible with $d=(\dim B)^{|\mathsf X|}$, as in the channel case.
These notions are unified in \cite{Buscemi2023} through q-compatibility, which contains classical and parallel compatibility as special cases, makes every multi-channel compatible and reduces to joint measurability for multi-meters.
Classical compatibility, q-compatibility and the weaker notion of no-exclusivity are organized there into a strict hierarchy of resource theories of communication, and parallel compatibility is not part of this hierarchy.
In q-compatibility the post-processing may apply a channel to the register but chooses its outcome from the classical label alone, whereas $d$-compatibility fixes the dimension of the register and allows general instruments $\Psi_x$ that can also measure it.
A q-compatible realization with a $d$-dimensional register is therefore $d$-compatible, but not conversely, since incompatible measurements are $\dim A$-compatible without being q-compatible.
Finally, the simulation of a single instrument by a randomly chosen projective instrument followed by quantum post-processing is studied in \cite{Khandelwal2025}, which gives a necessary Schmidt-number condition on Choi operators that is also sufficient for qubit inputs.

During the preparation of this paper, we became aware of the related work \cite{Sekatski2026} on the bottleneck dimension of quantum operations.
In finite dimensions, its $d$-simulability of a set of instruments \cite[Def.~3]{Sekatski2026} is our $d$-compatibility, and its $d$-compressibility \cite[Def.~2]{Sekatski2026} of a single channel or instrument is our $d$-factorizability.

\subsection{Organization of the paper}\label{sec:overview-org}

The detailed development begins in \Cref{sec:prelim} with the operator space/operator system description of measurements, channels, and instruments.
\Cref{sec:compatible-inst} defines $d$-compatibility of channels and instruments, develops their formulation through a common factorization and presents examples of compatible and incompatible families.
The closure theorems are established in \Cref{sec:closure}, followed by the completely bounded and completely positive criteria and their applications to mutually unbiased bases in \Cref{sec:criterion}.
\Cref{sec:kPEB} establishes the Choi-state equivalence of the $d$-compatibility and its connection to steering and one-way LOCC characterizations, and the connection to partially entanglement breaking characterization of $d$-factorizablity.
Finally, \Cref{sec:robustness} develops generalized robustness, its preparation-game interpretation, and the resulting quantitative bounds.
The functional-analytic background and technical proofs are collected in the appendices.

\section{Notations}
\begingroup

Font conventions: \emph{regular} letters $H,K,A,B,\dots$ denote Hilbert spaces; \emph{sans-serif} $\mathsf X,\mathsf A$ denote classical spaces and index sets, $x\in\mathsf X$ a setting (the classical label of a member of a family) and $a\in\mathsf A$ an outcome, so that $\Phi_{a|x}$ is the component with outcome $a$ of the member $\Phi_x$; a sequence is indexed by a superscript, $\Phi^{(n)}_x$; \emph{calligraphic} $\cM,\cN,\cS,\cU$ denote operator algebras, operator spaces, instruments, and sets; and $\cB(\cdot)$ denotes the bounded operators.

\footnotesize
\renewcommand{\arraystretch}{1.3}
\begin{longtable}{S p{0.74\linewidth}}
\toprule
\multicolumn{1}{l}{\textbf{Symbol}} & \textbf{Meaning}\\
\midrule
\endhead

\cB(H) & bounded operators on $H$\\
V_{\sa} & self-adjoint part of a space with an involution \\
\bM_n & $n\times n$ complex matrices, $\bM_n=\cB(\bC^n)$\\
\CB(V,W) & completely bounded maps $V\to W$\\
J(\Phi) & Choi operator of a channel or instrument $\Phi$\\
\ket{\chi_A},\ \ket{\chi_n} & unnormalized maximally entangled vector $\sum_s\ket{s}\otimes\ket{s}$ on $H_A\otimes H_A$, on $\bC^n\otimes\bC^n$\\
\ket{\chi_\rho} & $(\rho^{1/2}\otimes\1)\ket{\chi_A}$, a purification of the state $\rho$ on $H_A$\\
\chi_n=\pure{\chi_n}\in\bM_{n^2} & unnormalized maximally entangled state on $\bC^n\otimes\bC^n$ and the Choi matrix of $\id_{\bM_n}$\\
\phi^{+}_n=\chi_n/n\in\bM_{n^2} & normalized maximally entangled state on $\bC^n\otimes\bC^n$\\
\kket{K},\ \bbra{K} & vectorization (double-ket) of an operator $K$ and its dual\\
S_1(H) & trace-class operators on $H$ (predual of $\cB(H)$)\\
S_p^d & Schatten-$p$ class on $\bC^d$ ($S_1^d$ trace class, $S_\infty^d$ compact/operator)\\
\ell_\infty(\mathsf A),\ \ell_1^m,\ \ell_2^{\mathsf A} & sequence spaces over a (finite) index set\\
\ell_1(\mathsf X,V),\ \ell_\infty(\mathsf X,V) & $\ell_1$- and $\ell_\infty$-direct sums of copies of an operator space $V$ indexed by $\mathsf X$\\
L_1(\Omega),\ L_\infty(\Omega) & classical function spaces on $(\Omega,\Sigma,\mu)$\\
L_\infty(\Omega;\bM_d) & $\bM_d$-valued essentially bounded functions, $L_\infty(\Omega)\otimes_{\min} \bM_d$; the classical--quantum register\\
L_1(\Omega;S_1) & $S_1$-valued integrable functions, $L_1(\Omega)\otimes^{\wedge}S_1$; predual of $L_\infty(\Omega;\cB)$\\
C(K) & continuous functions on a compact set $K$\\
\uprod{X_i} & Banach-space ultraproduct of $\{X_i\}$ along a non-principal ultrafilter $\cU$ ($\cU$-limit)\\
\norm{\cdot},\ \norm{\cdot}_\op & operator norm\\
\norm{\cdot}_2,\ \ip{\cdot}{\cdot}_{\HS} & Hilbert--Schmidt (Frobenius) norm $\norm{\alpha}_2=(\Tr\,\alpha^{*}\alpha)^{1/2}$ and inner product\\
\norm{\cdot}_{\cb} & completely bounded norm\\
\norm{\cdot}_{(k)}=\norm{\id_{\bM_k}\otimes\,\cdot\,} & $k$-th amplification norm (see \eqref{eq:d-norm}); for an operator $a$, $\norm{a}_p$ is the Schatten-$p$ norm\\
\SN,\ \SR & Schmidt number; Schmidt rank\\
\bottomrule
\end{longtable}
\renewcommand{\arraystretch}{1}
\endgroup

\section{Mathematical preliminaries}\label{sec:prelim}

In this section, we collect the basic mathematical notions used throughout the paper.
Our goal is to maintain a self-contained exposition for readers that are not familiar with Banach space and operator space theory.
We recall the necessary background on quantum measurements and quantum instruments, and how compatible measurements can be phrased as a linear factorization of Banach spaces.
Furthermore, we introduce operator system and operator space formulations of quantum instruments.
We refer interested readers to standard texts such as \cite{Effros2000}, \cite{Pisier_2003} or \cite{Paulsen_2003} for a more thorough treatment of these topics.
Readers familiar with the operator space-theoretic construction of quantum instruments may skip to \Cref{sec:compatible-inst}.

\subsection{Banach spaces, $C^*$-algebras, von Neumann algebras, operator systems, and operator spaces in quantum information theory}

We briefly review several classes of normed structures that play a central role in quantum information theory: Banach spaces, $C^*$-algebras, operator systems, and operator spaces.
Each of them arises naturally in the mathematical formulation of quantum states, measurements, channels, and instruments.

A \defn{Banach space} is a complex vector space $V$ equipped with a norm $\|\cdot\|$ such that $(V,\|\cdot\|)$ is complete.
For example, the space of trace-class operators $S_1(H)$ on a Hilbert space $H$ is a Banach space with respect to the trace norm, and it serves as the natural state space for quantum mechanics in the Schr\"odinger picture.
Similarly, classical probability distributions form Banach spaces such as $\ell_1(\mathsf A)$ or $L_1(\Omega)$.

We record the classical (commutative) sequence and function spaces used throughout, in dual pairs.
For a set $\mathsf A$, the \defn{$\ell_1$-space}
\begin{equation}
    \ell_1(\mathsf A)=\{\, x:\mathsf A\to\bC \ \mid \norm{x}_1:=\sum_{a\in\mathsf A}\abs{x(a)} < \infty \,\}
\end{equation}
of absolutely summable families has the (discrete) classical probability distributions on $\mathsf A$ as its positive, unit-norm elements.
Its dual is the \defn{$\ell_\infty$-space}
\begin{equation}
    \ell_\infty(\mathsf A)=\{\, x:\mathsf A\to\bC \ \mid \norm{x}_\infty := \sup_{a\in\mathsf A}\abs{x(a)}<\infty \,\}
\end{equation}
of bounded families, the commutative $C^*$-algebra of classical observables on $\mathsf A$.
For $\mathsf A=\{1,\dots,n\}$  we denote $\ell_1^n$ and $\ell_\infty^n$.
The continuous-outcome analogues live over a $\sigma$-finite measure space $(\Omega,\Sigma,\mu)$: the \defn{$L_1$-space}
\begin{equation}
    L_1(\Omega)=\{\, f:\Omega\to\bC \text{ measurable} \ \mid \norm{f}_1:=\int_\Omega\abs{f}\,d\mu<\infty \,\}
\end{equation}
of integrable functions (identified up to $\mu$-null sets), whose positive, unit-norm elements are the classical probability densities on $\Omega$, and its dual the \defn{$L_\infty$-space}
\begin{equation}
    L_\infty(\Omega)=\{\, f:\Omega\to\bC \text{ measurable} \ \mid \norm{f}_\infty:=\esssup_{\omega\in\Omega}\abs{f(\omega)}<\infty \,\}
\end{equation}
of essentially bounded functions, the commutative von Neumann algebra of classical observables.
The sequence spaces are the special case of counting measure on $\mathsf A$, $\ell_p(\mathsf A)=L_p(\mathsf A,\text{counting})$.
In each pair the $\ell_1/L_1$ side is a Schr\"odinger (state) space and the $\ell_\infty/L_\infty$ side its Heisenberg (observable) dual.

In infinite dimensions, passing between the Schr\"odinger and Heisenberg pictures requires a continuity condition, which we recall here \cite{Takesaki_1979,BuschQM2016}.
For a Banach space $X$, the \defn{weak$^*$ topology} on the dual $X^*$ is the weakest topology in which $f\mapsto f(x)$ is continuous for every $x\in X$.
For $\cB(H)=S_1(H)^*$ with the pairing $\langle a,\rho\rangle=\Tr(a\rho)$, a net of observables $a_\alpha$ converges weak$^*$ to $a$ exactly when $\Tr(a_\alpha\rho)\to\Tr(a\rho)$ for every density operator $\rho$, that is, when all expectation values converge.
A \defn{von Neumann algebra} is a unital $*$-subalgebra $\cM\subseteq\cB(H)$ that is weak$^*$ closed.
It is the dual of its \defn{predual} $\cM_*$, the space of weak$^*$ continuous linear functionals on $\cM$.
For example, $\cB(H)_*=S_1(H)$, $L_\infty(\Omega)_*=L_1(\Omega)$ and $L_\infty(\Omega;\bM_d)_*=L_1(\Omega;S_1^d)$.
A \defn{state} on $\cM$ is a positive linear functional $\varphi$ with $\varphi(\1)=1$; it assigns the expectation value $\varphi(a)$ to an observable $a$ and the probability $\varphi(E)$ to an effect $0\le E\le\1$.
A positive functional on a von Neumann algebra is \defn{normal} if it is weak$^*$ continuous, equivalently if it preserves suprema of bounded increasing nets of positive elements.
For a state $\varphi$ this is countable additivity of probabilities: if $E_1,E_2,\dots$ are the effects of mutually exclusive outcomes, then the event that one of them occurs has probability $\varphi(\sum_kE_k)=\sum_k\varphi(E_k)$.
Likewise, a positive map between von Neumann algebras is normal if it is weak$^*$ continuous, equivalently if composing it with any normal state gives a normal state, so that in the Schr\"odinger picture it sends physical states to physical states.
The normal states on $\cB(H)$ are exactly the functionals $a\mapsto\Tr(\rho a)$ given by density operators $\rho$, the positive trace-class operators with $\Tr\rho=1$ (density matrices when $\dim H<\infty$).
A unital completely positive map $\cB(K)\to\cB(H)$ is the adjoint of a channel $S_1(H)\to S_1(K)$ exactly when it is normal, and a POVM $L_\infty(\Omega)\to\cB(H)$ is normal exactly when it is weakly $\sigma$-additive.
In finite dimensions every state and every linear map is normal, so normality only matters for infinite-dimensional systems and continuous outcome spaces.

A \defn{$C^*$-algebra} is a Banach space $\cA$ equipped with a bilinear multiplication $\cA\times \cA \to \cA$ such that $\|ab\| \le \|a\|\|b\|$ and an involution $^*\colon \cA\to \cA$ satisfying the ``$C^*$-identity'' $\|a^* a\|=\|a\|^2$.
The algebraic structure allows one to model compositions of physical operations.
The fundamental example of a $C^*$-algebra in quantum information theory is the algebra $\cB(H)$ of bounded operators on a Hilbert space, as well as its commutative subalgebras that represent classical observables.
$C^*$-algebras provide the adequate abstract setting for quantum observables, with positive linear functionals representing states and completely positive maps representing physical operations.
Quantum channels in the Heisenberg picture correspond to unital completely positive maps between $C^*$-algebras.

The framework of Banach spaces and $C^*$-algebras alone is not sufficient for describing quantum channels and multipartite systems.
This leads to the notion of \emph{operator spaces}.

\begin{definition}[Operator space]
An \defn{(abstract) operator space} is a complex vector space $V$ equipped with a sequence of (matrix) norms $\{\|\cdot\|_{\bM_n(V)}\}_{n\ge 1}$ on the matrix spaces $\bM_n(V)$ satisfying \defn{Ruan's axioms}:
\begin{itemize}
\item [(R1)]  $\|x\oplus y\|_{\bM_{n+m}(V)} = \max\{\|x\|_{\bM_n(V)}, \|y\|_{\bM_m(V)}\},$
\item [(R2)] $\|\alpha x \beta\|_{\bM_m(V)} \le \|\alpha\|\,\|x\|_{\bM_n(V)}\,\|\beta\|$
\end{itemize}
for all $x\in \bM_n(V)$, $y\in \bM_m(V)$, and all matrices
$\alpha\in M_{m,n}(\mathbb{C})$, $\beta\in M_{n,m}(\mathbb{C})$ (\defn{scalar matrices}, i.e.\ with entries in $\mathbb{C}$).
\end{definition}

Ruan's representation theorem states that a vector space equipped with such a sequence of matrix norms is an operator space if and only if it admits a completely isometric embedding into $\cB(H)$ for some Hilbert space $H$~\cite[Theorem~2.3.5]{Effros2000}.
In particular, any closed subspace of $\cB(H)$ inherits a canonical operator space structure, which is called \defn{(concrete) operator space}.
The trace class $S_1(H)$ is normed by the trace norm rather than the operator norm of $\cB(H)$, so it does not inherit a structure in this way.
It carries the predual operator-space structure, as a subspace of the dual $\cB(H)^*$ with the matrix norms of $\bM_n(\cB(H)^*)=\CB(\cB(H),\bM_n)$, where completely bounded maps are defined below \cite{Effros2000,Pisier_2003}.
With this structure a bounded linear map $\Phi:S_1(H)\to S_1(K)$ and its adjoint $\Phi^*:\cB(K)\to\cB(H)$ have the same completely bounded norm, which is also known as the diamond norm of $\Phi$.

The appropriate morphisms between two operator spaces are the \defn{completely bounded} (CB) maps.
Given operator spaces $V$, $W$ and a linear map $T:V\to W$, its CB norm is defined by
\begin{equation}
\|T\|_{cb} := \sup_{n\ge 1} \|\id_n\otimes T : \bM_n(V)\to \bM_n(W)\|,
\end{equation}
and $T$ is called completely bounded if $\|T\|_{cb}<\infty$.
For a fixed $d \ge 1$ the $d$-th term of this supremum is the \defn{$d$-th amplification norm}
\begin{equation}\label{eq:d-norm}
\norm{T}_{(d)} := \norm{\id_d \otimes T : \bM_d(V) \to \bM_d(W)},
\end{equation}
so that $\norm{T} \le \norm{T}_{(d)} \le \norm{T}_{\cb}$ and $\norm{T}_{\cb} = \sup_d \norm{T}_{(d)}$.
It is the amplification norm of a witness, rather than its full c.b. norm, that enters the completely bounded criterion of \Cref{thm:cb-criterion}.
Note that the index of an amplification norm is written in parentheses, $\norm{T}_{(d)}$, while a bare numerical subscript on an operator, $\norm{a}_p$ for a matrix $a \in \bM_n$ or an element of $S_p(H)$, denotes the Schatten-$p$ norm $\norm{a}_p = (\Tr \abs{a}^p)^{1/p}$.
Quantum channels, viewed as linear maps between $S_1(H)$ spaces, are always completely bounded, and the CB norm is also referred to as the diamond norm in quantum information theory \cite{Kitaev1997,Watrous2018}.
Thus, operator spaces provide an adequate framework for the quantitative analysis of quantum channels.

In order to introduce a notion of positivity, one defines an \defn{abstract operator system} to be an abstract operator space $\cS$ equipped with a family of matrix-positive cones $\{\bM_n(\cS)_+\}_{n\ge 1}$ and a distinguished order unit, called \defn{Archimedean order unit} with certain axioms.
These axioms ensure a complete order isomorphism between an abstract operator system to a concrete operator system \cite{ChoiEffros1977,Paulsen_2003}, defined as follows.

\begin{definition}[Operator system]
A \defn{(concrete) operator system} is a unital self-adjoint subspace $\cS\subset \cB(H)$, i.e., a linear subspace satisfying:
\begin{enumerate}[(i)]
\item $x\in\cS \Rightarrow x^*\in\cS$,
\item $\1_{H}\in\cS$.
\end{enumerate}
\end{definition}

Unlike $C^*$-algebras, operator systems need not be closed under products, but they retain enough order structure to support a meaningful notion of positivity at all matrix levels.

In quantum information theory, operator systems naturally model spaces of observables, ranges of quantum measurements, and domains/codomains of quantum channels.
The Schr\"odinger (predual) and Heisenberg (dual) pictures are connected by Banach-space duality, but they sit in two different functional-analytic frameworks: the Schr\"odinger picture uses Banach spaces and operator spaces (e.g.\ $S_1(H)$, $\ell_1^n$, $L_1(\Omega)$), while the Heisenberg picture uses $C^*$-algebras and operator systems (e.g.\ $\cB(H)$, $\ell_\infty^n$, $L_\infty(\Omega)$).
Here $\ell_\infty^n=(\mathbb{C}^n,\norm{\cdot}_\infty)$ and $\ell_1^n=(\mathbb{C}^n,\norm{\cdot}_1)$ denote $\mathbb{C}^n$ with the supremum and the $\ell_1$ norm, respectively; they are mutually dual, $\ell_1^n=(\ell_\infty^n)^{*}$, and more generally $\ell_\infty(\mathsf A)$ (bounded functions) and $\ell_1(\mathsf A)$ (summable functions) on a set $\mathsf A$ form the analogous dual pair.
In the Heisenberg picture a quantum measurement, for example, is the unital completely positive (UCP) map
\begin{equation}
    M^* : \ell_\infty(\mathsf{A})\to \cB(H),
\end{equation}
where the commutative $C^*$-algebra $\ell_\infty(\mathsf{A})$ is seen as an operator system.
Channels and instruments are morphisms that are completely bounded in the operator-space sense (Schr\"odinger picture) and completely positive in the operator-system sense (Heisenberg picture).
\Cref{tab:schrodinger-vs-heisenberg} summarizes the parallel description of the basic objects used throughout this paper.

Throughout this paper, we use Banach space language to formulate measurement compatibility as a factorization problem of positive maps and operator system/operator space language to capture complete positivity and stability under tensoring with ancillas.

\begin{table}
\centering
\small
\renewcommand{\arraystretch}{1.4}
\begin{tabular}{|p{0.18\linewidth}|p{0.36\linewidth}|p{0.36\linewidth}|}
\hline
 & \defn{Schr\"odinger (predual) picture} & \defn{Heisenberg (dual) picture} \\
\hline
Carrier objects
 & quantum states (density operators) $\rho \in S_1(H)$, $\rho\ge0$, $\Tr\rho=1$; \newline classical distributions $p\in \ell_1^n$
 & observables / POVM elements in $\cB(H)$, with states as normal states $a\mapsto\Tr(\rho a)$; \newline classical outcomes in $\ell_\infty^n$ \\
\hline
Mathematical framework
 & Banach space / operator space \newline $(S_1(H),\ \ell_1^n,\ L_1(\Omega))$ \newline predual $\cM_*$ of a von Neumann algebra
 & $C^*$-algebra / operator system \newline $(\cB(H),\ \ell_\infty^n,\ L_\infty(\Omega))$ \newline von Neumann algebra $\cM=(\cM_*)^*$ \\
\hline
Quantum channel
 & $\Phi : S_1(H) \to S_1(K)$, \newline completely positive and trace-preserving
 & $\Phi^{*} : \cB(K) \to \cB(H)$, \newline unital completely positive \\
\hline
Measurement \mbox{(POVM)}
 & $M : S_1(H) \to \ell_1^n$, \newline positive and trace-preserving
 & $M^{*} : \ell_\infty^n \to \cB(H)$, UCP; \newline POVM elements $M_a := M^{*}(e_a)$ with $M_a\ge 0$, $\sum_a M_a = \1$ \\
\hline
Continuous-outcome POVM
 & $\Pi : S_1(H) \to L_1(\Omega)$, \newline positive and trace-preserving
 & $\Pi^{*} : L_\infty(\Omega) \to \cB(H)$, \newline weakly $\sigma$-additive (that is, normal) UCP \\
\hline
Quantum instrument
 & $\Phi : S_1(H) \to S_1(K)\otimes^{\wedge}\ell_1^m$, \newline completely positive and trace-preserving
 & $\Phi^{*} : \cB(K)\otimes_{\min}\ell_\infty^m \to \cB(H)$, \newline unital completely positive \\
\hline
Morphism property
 & completely bounded \newline (operator-space sense)
 & completely positive \newline (operator-system sense) \newline and normal (weak$^*$ continuous) \\
\hline
\end{tabular}
\caption{Parallel formulation of the basic objects of this paper in the Schr\"odinger (predual) and Heisenberg (dual) pictures.}
\label{tab:schrodinger-vs-heisenberg}
\end{table}

\subsection{Measurements in the Schr\"odinger and Heisenberg Pictures}

We recall two equivalent descriptions of quantum measurements and fix notation that will be used throughout the paper.  
Let $S_1(H)$ denote the Banach space of trace-class operators on a Hilbert space $H$, equipped with the trace norm $\|\cdot\|_1$.
In the Schr\"odinger (predual) picture, a measurement with $n$ outcomes is a linear map
\begin{equation}
    M : S_1(H) \to \ell_1^n
\end{equation}
that is positive and trace-preserving.  
A density operator $\rho \in S_1(H)$ satisfies $\rho \ge 0$ and $\|\rho\|_1=1$, while a classical probability distribution $p = (p_1,\dots,p_n)$ is a positive vector in $\ell_1^n$ with $\|p\|_1 = 1$.
Indeed, a measurement is a positive linear mapping from quantum states to classical probability distributions.

More commonly, measurements are described in the Heisenberg (dual) picture via \defn{positive operator-valued measures} (POVMs).
Given a measurement map $M:S_1(H)\to\ell_1^n$, its Banach space dual $M^*:\ell_\infty^n\to \cB(H)$ satisfies
\begin{equation}
    M_i \equiv M^*(e_i),
\end{equation}
where $\{e_i\}_{i=1}^n$ is the canonical basis of $\ell_\infty^n$.
The operators $\{M_i\}_{i=1}^n$ form a POVM, i.e. $M_i \ge 0$ for all $i$, and $\sum_i M_i = \1_{H}$.
Note that $\ell_\infty$ and $\cB(H)$ spaces are $C^*$-algebras with units $\1_{\ell_\infty} = (1,\cdots,1)$ and $\1_{\cB(H)} = I_{|H|}$.
Then, the dual map $M^*$ of a measurement is a unital completely positive map, and conversely every unital positive map defines a POVM via $M_a = M^*(e_a)$.
We will interchangeably write a POVM either as a collection of operators or as a unital completely positive map.
More generally, for a $\sigma$-finite measurable space $(\Omega,\Sigma,\mu)$, a
POVM is a weakly $\sigma$-additive unital completely positive map
\begin{equation}
    \Pi: L_\infty(\Sigma) \to \cB(H).
\end{equation}
In the discrete case, $L_\infty(\Sigma)$ is identified with
$\ell_\infty(\mathsf{A})$ and $\mu$ with the counting measure.

\subsection{Quantum instruments}\label{subsec:instruments}

A \defn{(discrete) quantum instrument} is a finite collection of completely positive trace-nonincreasing maps
\begin{equation}
    \Phi = \{ \Phi_a : S_1(A) \to S_1(B) \}_{a\in\mathsf{A}}
\end{equation}
such that their sum $\sum_{a\in\mathsf{A}} \Phi_a$ is trace-preserving; in other words, it is a quantum channel.  
Each $\Phi_a$ represents the post-measurement state update conditioned on the measurement outcome $a$.  

It is convenient to combine an instrument into a single completely positive map with classical side information.  
Let $\{\ketbra{a}{a}\}_{a\in\mathsf{A}}$ be the canonical basis of $\ell_1^{\mathsf{A}} \embeds S_1(\ell_2^{\mathsf{A}})$.
Define a CPTP map
\begin{equation}
    \Phi : S_1(A) \to S_1(B) \otimes_\pi \ell_1^{\mathsf{A}},
\end{equation}
by
\begin{equation}\label{eq:instrument_discrete}
    \Phi(\rho) = \sum_{a\in\mathsf{A}} \Phi_a(\rho) \otimes \ketbra{a}{a}.
\end{equation}
Given two Banach spaces $V$ and $W$, the Banach space projective tensor product $V\otimes_\pi W$ is the completion of the algebraic tensor product $V \otimes W$ with respect to the projective tensor norm
\begin{equation}
    \norm{u}_{\pi} := \inf \{ \sum_{i\leq n} \abs{\lambda_i}\norm{v_i}_V \norm{w_i}_W \mid u = \sum_{i\leq n} \lambda_i v_i\otimes w_i, \lambda_i \in \bC, n<\infty \}.
\end{equation}
Note that for simple tensor $v \otimes w$ we have $\norm{v \otimes w}_\pi = \norm{v}_V \cdot \norm{w}_W$.
This representation identifies discrete instruments with completely positive trace-preserving maps into $S_1(B)\otimes_\pi \ell_1^{\mathsf{A}}$, the classical--quantum Banach tensor product.

Let $(\Omega,\Sigma,\mu)$ be a $\sigma$-finite measure space.  
A \defn{continuous} quantum instrument~\cite{DaviesLewis1970,Ozawa_1984,Pellonpaa2013} is a family of completely positive maps
\begin{equation}
    \Phi = \{ \Phi_\omega : S_1(H) \to S_1(K) \}_{\omega\in\Omega}
\end{equation}
such that the integral
\begin{equation}\label{eq:instrument_integral}
    \Phi_E(\rho) := \int_E \Phi_\omega(\rho) \otimes \ketbra{\omega}{\omega} \, d\mu(\omega)
\end{equation}
is well-defined for every measurable set $E \in \Sigma$, and the induced map
\begin{equation}
    \Phi_\Omega : S_1(H) \to S_1(K) \otimes_\pi L_1(\Omega)
\end{equation}
is trace-preserving.\footnote{Note that only the induced map $\Phi_\Omega$ is required to be trace-preserving.
The maps $\Phi_\omega$ are densities with respect to $\mu$, and they need not be trace-nonincreasing individually.
For example, $\Phi_\omega(\rho)=2\omega\rho$ on $\Omega=[0,1]$ with Lebesgue measure defines an instrument.
This is the density form of the standard notion of an instrument \cite{DaviesLewis1970,Ozawa_1984}, which assigns to each $E\in\Sigma$ the completely positive map $\rho\mapsto\int_E\Phi_\omega(\rho)\,d\mu(\omega)$.
These maps are trace-nonincreasing, since for $\rho\geq0$ their traces are bounded by the trace of $\Phi_\Omega(\rho)$, which equals $\Tr\rho$.
The representation of general instruments by pointwise Kraus operators with respect to a scalar measure is given in \cite{Pellonpaa2013}, see also \cite{BuschQM2016}.}
Here $L_1(\Omega)$ is the classical $L_1$-space, and the codomain again carries the projective tensor norm.
We call this channel $\Phi_\Omega$ the channel induced by the instrument $\Phi$.
We can think of the set of quantum channels $N:S_1(H) \to S_1(K)$ as a subset of instruments with $\Omega$ being a singleton.
We define the \defn{dual instrument} to be the dual map $\Phi^* \equiv \Phi_{\Omega}^*:\cB(K) \otimes_{\min} L_\infty(\Omega) \to \cB(H)$, obtained by dualizing the induced channel $\Phi_\Omega:S_1(H)\to S_1(K)\otimes_\pi L_1(\Omega)$.
Its domain $\cB(K) \otimes_{\min} L_\infty(\Omega)$ is the observable-side (Heisenberg-picture) counterpart of the state space $S_1(K)\otimes_\pi L_1(\Omega)$: the completion of the algebraic tensor product $\cB(K) \otimes L_\infty(\Omega)$ in the minimal $C^*$-tensor norm
\begin{equation}
    \norm{u}_{\min} = \esssup_{\omega\in \Omega} \norm{u(\omega)}_{\cB(K)},
\end{equation}
realized concretely as the essentially bounded $\cB(K)$-valued functions $L_\infty(\Omega;\cB(K))$.
Here the output space $K$ is finite-dimensional.
Then the algebraic tensor product $\cB(K)\otimes L_\infty(\Omega)$ is already complete, it equals $L_\infty(\Omega;\cB(K))$, and it is the dual of $S_1(K)\otimes_\pi L_1(\Omega)$.
For infinite-dimensional $K$, the dual is instead the von Neumann tensor product $\cB(K)\bar\otimes L_\infty(\Omega)$, the weak$^*$ closure of $\cB(K)\otimes_{\min}L_\infty(\Omega)$, and the two differ when $L_\infty(\Omega)$ is also infinite-dimensional.

\subsection{Operator-space perspective on instruments}\label{subsec:os-instruments}

Quantum instruments may also be viewed within the framework of operator spaces.

Before introducing the operator-space description, we fix the roles of the tensor products in play, as several appear and it is easy to lose track of them.
They are organized along two axes: the \emph{picture}---Schr\"odinger (states, in the trace class $S_1$) versus Heisenberg (observables, in the bounded operators $\cB$)---and the \emph{category}---Banach spaces versus operator spaces, the latter additionally recording matrix norms.
\begin{itemize}
  \item $\otimes_\pi$ (Banach projective): the classical-quantum state space at the Banach level, $S_1(K)\otimes_\pi L_1(\Omega)$, in which an instrument is a trace-preserving completely positive map.
  \item $\otimes^\wedge$ (operator-space projective): its operator-space refinement $S_1(K)\otimes^\wedge L_1(\Omega)$, carrying the matrix norms.
  \item $\otimes_{\min}$ (minimal): the observable space $\cB(K)\otimes_{\min}L_\infty(\Omega)=L_\infty(\Omega;\cB(K))$ on which the dual instrument acts, dual to the state space above.
  On the $C^*$-algebras $\cB(K)$ and $L_\infty(\Omega)$ the minimal $C^*$-tensor product and the operator-space minimal tensor product (defined below) coincide, so we write $\otimes_{\min}$ for both.
\end{itemize}

The set $S_1(H)$ is a canonical operator space, and the c-q space $S_1(K)\otimes_\pi L_1(\Omega)$ embeds naturally into the operator-space projective tensor product
\begin{equation}
    S_1(K)\otimes^\wedge L_1(\Omega),
\end{equation}
where $L_1(\Omega)$ is identified with the predual $L_\infty(\Omega)_*$, a subspace of the dual $L_\infty(\Omega)^*$, and equipped with its induced operator-space structure.
The projective operator space tensor norm is defined by
\begin{equation}
    \|x\|_\wedge := \inf \{ \|\alpha\|_2 \,\|v\|_{\bM_n(V)} \,\|w\|_{\bM_n(W)} \,\|\beta\|_2 \mid x = \alpha (v\otimes w) \beta \},
\end{equation}
with the infimum over all $n$ and factorizations $x = \alpha (v\otimes w)\beta$ into scalar matrices $\alpha,\beta$, i.e. matrices with entries in $\bC$.
Here, $\norm{\cdot}_2$ is the Hilbert--Schmidt (Frobenius) norm $\norm{\alpha}_2 := (\Tr\,\alpha^{*}\alpha)^{1/2} = (\sum_{i,j}|\alpha_{ij}|^{2})^{1/2}$.
This formula gives the norm on $V\otimes W$, the first matrix level.
There $\alpha\in M_{1,n^2}(\bC)$ is a row and $\beta\in M_{n^2,1}(\bC)$ is a column, so their Hilbert--Schmidt and operator norms agree.
The matrix norm on $\bM_k(V\otimes W)$ is the same infimum over factorizations with $\alpha\in M_{k,n^2}(\bC)$ and $\beta\in M_{n^2,k}(\bC)$, with the operator norms $\norm{\alpha}\,\norm{\beta}$ in place of $\norm{\alpha}_2\,\norm{\beta}_2$, see \Cref{app:tensor}.

When the classical factor is the finite-dimensional $\ell_1^d$, the tensor product is the $\ell_1$-direct sum $\ell_1^d(S_1(K))=\bigoplus_{i=1}^{d}S_1(K)$, on which the Banach projective and operator-space projective tensor products coincide for \emph{any} $S_1(K)$, including an infinite-dimensional one.
The same holds whenever the quantum factor is finite-dimensional, where $S_1^k\otimes^\wedge L_1(\Omega)=L_1(\Omega;S_1^k)$ is a Bochner space, for any measure space $\Omega$.
For a Banach space $X$ and a measure space $(\Omega,\Sigma,\mu)$, the \defn{Bochner space} $L_1(\Omega;X)$ is the $X$-valued analogue of $L_1(\Omega)$: the ($\mu$-equivalence classes of) Bochner-integrable functions $f:\Omega\to X$, normed by $\norm{f}_{L_1(\Omega;X)}=\int_\Omega\norm{f(\omega)}_X\,d\mu(\omega)$ and reducing to $L_1(\Omega)$ when $X=\bC$.
More generally, the two projective norms coincide on $S_1(K)\otimes L_1(\Omega)$ for every Hilbert space $K$ and every $\sigma$-finite measure space $\Omega$.
The reason is that every bounded map $T:L_1(\Omega)\to\cB(K)$ is completely bounded with $\norm{T}_{\cb}=\norm{T}$.
Indeed, $T$ is the restriction to $L_1(\Omega)\subset L_\infty(\Omega)^*$ of the adjoint of the bounded map $S:S_1(K)\to L_\infty(\Omega)$ given by $S(\sigma)(f)=\Tr(T(f)\sigma)$.
A bounded map into the commutative $C^*$-algebra $L_\infty(\Omega)$ is completely bounded with the same norm, so $\norm{T}_{\cb}\le\norm{S^*}_{\cb}=\norm{S}_{\cb}=\norm{S}=\norm{T}$.
Hence $\CB(L_1(\Omega),\cB(K))=\cB(L_1(\Omega),\cB(K))$ isometrically, and the dualities of \Cref{app:tensor} give $(S_1(K)\otimes^\wedge L_1(\Omega))^*=(S_1(K)\otimes_\pi L_1(\Omega))^*$ isometrically.
Therefore the two norms agree on $S_1(K)\otimes L_1(\Omega)$, and both completions are the Bochner space $L_1(\Omega;S_1(K))$.
Thus $\otimes_\pi$ and $\otimes^\wedge$ give the same Banach space on the classical--quantum state space, and $\otimes^\wedge$ only adds the matrix norms on $\bM_n(S_1(K)\otimes^\wedge L_1(\Omega))$.
The operator-space structure is nevertheless needed.
Its matrix norms define complete boundedness of maps into and out of $S_1(K)\otimes^\wedge L_1(\Omega)$, and they enter the operator-space ultraproducts of \Cref{sec:closure}.

If $\rho = \int \rho_\omega \otimes \ketbra{\omega}{\omega}\, d\mu(\omega)$ is a c--q state, then $\rho \in S_1(K)\otimes^\wedge L_1(\Omega)$ and $\|\rho\|_\wedge \le 1$.
A quantum instrument is therefore a completely positive, trace-preserving map
\begin{equation}
    \Phi : S_1(H) \to S_1(K)\otimes^\wedge L_1(\Omega),
\end{equation}
and the dual instrument is a UCP map
\begin{equation}
    \Phi^* : \cB(K)\otimes_{\min} L_\infty(\Omega) \longrightarrow \cB(H).
\end{equation}

Here $\otimes_{\min}$ denotes the \defn{operator space minimal tensor product}.
Given operator spaces $V\subset \cB(H)$ and $W\subset \cB(K)$, their minimal tensor product $V\otimes_{\min} W$ is defined by representing the algebraic tensor product $V\otimes W$ inside $\cB(H)\otimes \cB(K)$ and completing it with respect to the norm
\begin{equation}
    \|u\|_{\min} := \sup \{ \|(\pi_V\otimes \pi_W)(u)\| \; \mid \; \pi_V:V\to \cB(H'),\ \pi_W:W\to \cB(K')\text{ are complete contractions} \}, \label{eq:min-norm}
\end{equation}
where $\pi_V$ and $\pi_W$ range over all completely contractive representations.
A linear map $\phi$ between operator spaces is \defn{completely contractive} if $\norm{\phi}_{\cb}\le1$, equivalently if $\id_n\otimes\phi$ is contractive for every $n$.
Equivalently, when $V$ and $W$ are concretely realized as operator subspaces of $\cB(H)$ and $\cB(K)$, one may identify
\begin{equation}
    V\otimes_{\min} W \subset \cB(H\otimes K)
\end{equation}
via the tensor product of operators and take the operator norm of the resulting operator on $H\otimes K$.
This yields the same norm as \eqref{eq:min-norm}.

The two occurrences of $\otimes_{\min}$ agree: for $C^*$-algebras such as $\cB(K)$ and $L_\infty(\Omega)$, the minimal $C^*$-tensor product is the operator-space minimal tensor product, so the symbol is unambiguous.
This space is also the operator-space dual of the state space when $K$ is finite-dimensional, $(S_1(K)\otimes^\wedge L_1(\Omega))^{*}=\cB(K)\otimes_{\min}L_\infty(\Omega)=L_\infty(\Omega;\cB(K))$, which is why the dual instrument is valued in the minimal tensor product.
The minimal tensor product is the canonical operator-space tensor product for the Heisenberg picture, and the dual instrument $\Phi^*$ naturally takes values in $\cB(H)$ when defined on the operator system $\cB(K)\otimes_{\min}L_\infty(\Omega)$.

\section{Measurement, channel, and instrument compatibility as a linear factorization of Banach spaces} \label{sec:compatible-inst}

This section develops a unified language for compatibility: measurement, channel, and instrument compatibility are each expressed as the factorization of a family of positive maps through a common intermediate Banach (operator) space.
We treat the three settings in turn, separating the channel case from the general instrument case.

\subsection{Measurement compatibility as a Banach-space factorization}

Let $\cM = \{M_{a|x}\}_{a\in\mathsf{A},\,x\in\mathsf{X}}$ be a family of POVMs acting on a Hilbert space $H$, and let
\begin{equation}
    M_x^* \colon \ell_\infty(\mathsf{A}) \longrightarrow \cB(H)
\end{equation}
denote the associated unital positive maps in the Heisenberg picture.  
The family $\cM$ is said to be \defn{compatible} (or \defn{jointly measurable}) if there exists a measurable space $(\Omega,\Sigma)$, a POVM  $\Pi^* : L_\infty(\Omega) \longrightarrow \cB(H)$, and stochastic maps $v_x \colon \ell_\infty(\mathsf{A}) \to L_\infty(\Omega)$ such that
\begin{equation}\label{eq:factorization_heisenberg}
    M_x^* = \Pi^* \circ v_x,
    \qquad \forall\, x \in \mathsf{X} .
\end{equation}
Thus, every measurement $M_x$ is obtained from a common parent measurement $\Pi$ by classical postprocessing.

Equivalently, in the Schr\"odinger (predual) picture, the associated measurement maps $M_x : S_1(H) \longrightarrow \ell_1(A)$ factor through a common channel $\Pi : S_1(H) \longrightarrow L_1(\Omega)$ as
\begin{equation}\label{eq:factorization_schrodinger}
    M_x = u_x \circ \Pi,
\end{equation}
where $u_x : L_1(\Omega) \to \ell_1(A)$ are stochastic postprocessing maps.
In other words, the diagram
\begin{equation}
  \begin{tikzcd}[column sep=tiny,row sep=large]
     S_1(H) \ar[rr,"M_x"] \ar[dr,swap,"\Pi"]
      & & \ell_1(A) \\
    & L_1(\Omega) \ar[ru,swap,"u_x"] &
  \end{tikzcd}
\end{equation}
commutes for every $x\in\mathsf{X}$.
Hence, measurement compatibility is precisely a \emph{factorization property of positive Banach-space maps}.

\subsection{Channel compatibility}

A quantum channel is the special case of an instrument with a trivial classical register, obtained by discarding the $L_1(\mu)$ factor.
Specializing the factorization picture of the previous subsection to this case yields the notion of channel compatibility, which we record before treating general instruments.

\begin{definition}[$d$-factorizable channel]
A quantum channel $\Phi : S_1(A) \to S_1(B)$ is \defn{$d$-factorizable} if there exist an instrument $\Psi : S_1(A) \to S_1^d \otimes^{\wedge} L_1(\Omega)$ and a completely positive, trace-preserving map $Q : S_1^d \otimes^{\wedge} L_1(\Omega) \to S_1(B)$ for some measure space $(\Omega, \Sigma)$ such that $\Phi = Q \circ \Psi$.
\end{definition}

\begin{definition}[jointly $d$-factorizable / $d$-compatible channels]
A family of quantum channels $\{\Phi^{(i)} : S_1(A) \to S_1(B)\}_{i\in \mathsf{I}}$ is \defn{jointly $d$-factorizable}, or \defn{$d$-compatible}, if there exists a single channel $\Psi : S_1(A) \to S_1^d \otimes^{\wedge} L_1(\Omega)$ such that for each $i \in \mathsf{I}$, $\Phi^{(i)} = Q^{(i)} \circ \Psi$ for some completely positive, trace-preserving map $Q^{(i)} : S_1^d\otimes^{\wedge}L_1(\Omega) \to S_1(B)$.
Equivalently, the following diagram commutes for all $i \in \mathsf{I}$.
\begin{equation}
  \begin{tikzcd}[column sep=tiny,row sep=large]
    S_1(A) \ar[rr,"\Phi^{(i)}"] \ar[dr,swap,"\Psi"]
      & & S_1(B) \\
    & S_1^d \otimes^{\wedge} L_1(\Omega) \ar[ru,swap,"Q^{(i)}"] &
  \end{tikzcd}
\end{equation}
\end{definition}

\begin{example}[$d$-factorizability of partial depolarizing channels]\label{ex:depolarizing}
Let $\dim A=\dim B=D$, and consider the depolarizing channel
\begin{equation}
    \Phi_p(\rho)\coloneqq p\rho+(1-p)\Tr(\rho)\frac{\1_D}{D},\qquad0\le p\le1.
\end{equation}
Its normalized Choi state is the isotropic state
\begin{equation}
    \rho_{\Phi_p}\coloneqq \frac{1}{D}J(\Phi_p)= p\pure{\phi_D^+}+(1-p)\frac{\1_{D^2}}{D^2},
\end{equation}
where $\ket{\phi_D^+}\coloneqq \frac{1}{\sqrt D}\sum_{i=1}^D\ket{i}\otimes\ket{i}$. 

\Cref{cor:k-PEB} shows that a channel $\Phi$ is $d$-factorizable if and only if the Schmidt number of its Choi operator is at most $d$, namely,
\begin{equation}
    \Phi\text{ is $d$-factorizable}\quad\Longleftrightarrow\quad \SN(J(\Phi))\le d.
\end{equation}
Combining with the fact in Ref.~\cite{Terhal2000} that Isotropic states satisfy
\begin{equation}
    \SN(\rho_{\Phi_p})\le d \quad\Longleftrightarrow\quad\bra{\phi_D^+}\rho_{\Phi_p}\ket{\phi_D^+}\le\frac{d}{D},
\end{equation}
we therefore obtain the sharp threshold
\begin{equation}
    \Phi_p\text{ is $d$-factorizable}
    \quad\Longleftrightarrow\quad
    p\le\frac{dD-1}{D^2-1}.
    \label{eq:partial-depol}
\end{equation}
For $d=1$, \eqref{eq:partial-depol} recovers the usual entanglement-breaking threshold.
\end{example}

It is immediate that a multi-channel is $d$-incompatible if any of the channels is not $d$-factorizable. Such incompatibility is inherited from an individual channel and is not genuinely collective. We now give a more nontrivial example in which every channel is individually $d$-factorizable, yet the family is $d$-incompatible.

\begin{example}[A genuine $d$-incompatible multi-channel with $d$-factorizable channels]\label{ex:Cn}
Let $H=\bC^n$ with basis $\{\ket{i}\}_{i=1}^n$, and let $\mathsf E_n\coloneqq \{\{i,j\}:1\le i<j\le n\}$ and $P_{\{i,j\}}\coloneqq \ketbra{i}{i}+\ketbra{j}{j}$. For $e\in\mathsf E_n$ and $0\le p\le1$, define
\begin{equation}
    \Phi_e^{(p)}(\rho)\coloneqq pP_e\rho P_e+\Tr[(\1_n-pP_e)\rho]\frac{P_e}{2},
    \label{eq:2d-perserving channel}
\end{equation}
and the family $\cC_n^{(p)}\coloneqq \{\Phi_e^{(p)}\}_{e\in\mathsf E_n}$. Since the output of $\Phi_e^{(p)}$ is supported on the two-dimensional space $P_eH$, every individual channel is $2$-factorizable.

At $p=1$, however, the exact compatibility dimension is
\begin{equation}
    \cC_n^{(1)}\text{ is $d$-compatible}
    \quad\Longleftrightarrow\quad
    d\ge n.
    \label{eq:exact-boundary}
\end{equation}
Indeed, if $d\ge n$, the multi-channel factorizes through the common $n$-dimensional Hilbert space $H$.

Conversely, suppose that $\cC_n^{(1)}$ has a common $d$-dimensional parent $\Psi$, with post-processings $Q_e$ satisfying
$\Phi_e^{(1)}=Q_e\circ\Psi$. On the subspace $P_eH$, one has
$\Phi_e^{(1)}(\rho)=\rho$, so $Q_e$ recovers $\Psi$ on every two-dimensional coordinate subspace. If $\{K_\alpha\}_\alpha$ are Kraus operators of $\Psi$, the Knill--Laflamme conditions give
\begin{equation}
    P_eK_\alpha^*K_\beta P_e = c_{\alpha\beta}^{(e)}P_e\quad\forall e\in\mathsf E_n\Longrightarrow   K_\alpha^*K_\beta=c_{\alpha\beta}\1_n\qquad\forall \alpha,\beta,
\end{equation}
since varying $e$ shows that every $K_\alpha^*K_\beta$ has vanishing off-diagonal entries and equal diagonal entries.

Thus $\Psi$ is reversible on the entire input space, making
$\id_{n}$ $d$-factorizable. By \Cref{cor:k-PEB}, this would imply
\begin{equation}
    n=\SN(J(\id_{n}))\le d.
\end{equation}
This proves \eqref{eq:exact-boundary}.

For the noisy family, define
\begin{equation}
    \cD_p^{(n)}(\rho)
    \coloneqq 
    p\rho+(1-p)\Tr(\rho)\frac{\1_n}{n},
    \qquad
    \Theta_e(\rho)
    \coloneqq 
    P_e\rho P_e+\Tr[(\1_n-P_e)\rho]\frac{P_e}{2}.
\end{equation}
A direct calculation gives $\Phi_e^{(p)}=\Theta_e\circ\cD_p^{(n)}$. Since
\begin{equation}
    \cD_p^{(n)}\text{ is $d$-factorizable}
    \quad\Longleftrightarrow\quad
    p\le q_{d,n}\coloneqq \frac{dn-1}{n^2-1},
\end{equation}
the family $\cC_n^{(p)}$ is $d$-compatible throughout this range.

Defining its compatibility threshold by
\begin{equation}
    p_{d,n}^{\star}
    \coloneqq 
    \sup\{p\in[0,1]:\cC_n^{(p)}
    \text{ is $d$-compatible}\},
\end{equation}
we therefore have $1>p_{d,n}^{\star}\ge q_{d,n}$. 

This bound can be further tightened. For $(n,d)=(3,2)$, the analytic bound is only $q_{2,3}=5/8$. A Choi-matrix see-saw SDP gives a compatible realization at $p\approx0.8854$, while a polarization SDP outer relaxation inspired by Ref.~\cite{PLAVALA202515} gives the upper bound $p\approx 0.8938$
\begin{align}
0.8854 \lesssim p_{2,3}^{\star}  \lesssim 0.8938 .
 \label{eq:numerical-interval}
\end{align}
\end{example}
The lower bound is obtained by a Choi-matrix seesaw search using a parent instrument with $24$ classical outcomes. Increasing this number yielded no improved feasible model within the tested initialization and iteration time;
The upper bound is obtained from a finite,
polarization-inspired SDP relaxation, together with PPT
constraints with respect to the relevant Choi matrix~\cite{Kim-Zhang2026}.
Neither bound is known to be tight. Thus, further optimization may improve both the lower and upper bounds.

\subsection{Instrument compatibility}

Just as measurement compatibility admits a Banach-space factorization formulation, instrument factorizability and compatibility naturally appear as factorization problems in operator spaces; we begin with the factorizability of a single instrument.

Let $(\Omega,\Sigma,\mu)$ be a measure space.  
Recall that a quantum instrument can be represented as a completely positive, trace-preserving map
\begin{equation}
    T : S_1(A) \longrightarrow S_1(B) \otimes^\wedge L_1(\mu),
\end{equation}
where $\otimes^\wedge$ denotes the operator-space projective tensor product.

\begin{definition} A quantum instrument  $T : S_1(A) \longrightarrow S_1(B) \otimes^\wedge L_1(\mu)$ is called \defn{$d$-factorizable} if there exist a quantum instrument  $N : S_1(A) \to S_1^d \otimes^\wedge L_1(\nu)$ for some measure space $(\Omega',\Sigma',\nu)$, and a completely positive, trace-preserving map $P : S_1^d \otimes^\wedge L_1(\nu) \to S_1(B) \otimes^\wedge L_1(\mu)$
such that $T = P \circ N$.
\end{definition}

Now we define compatible instruments.

\begin{definition} A family of quantum instruments $\{ T^{(i)} : S_1(A) \to S_1(B) \otimes^\wedge L_1(\mu) \}_{i\in \mathsf{I}}$ is called \defn{jointly $d$-factorizable}, or \defn{$d$-compatible}, if there exist a single quantum instrument $\Psi : S_1(A) \to S_1^d \otimes^\wedge L_1(\nu)$ such that for each $i \in \mathsf{I}$, $T^{(i)} = Q^{(i)} \circ \Psi$ for some completely positive, trace-preserving map $Q^{(i)} : S_1^d \otimes^\wedge L_1(\nu) \to S_1(B) \otimes^\wedge L_1(\mu)$.
Equivalently, the following diagram commutes for all $i \in \mathsf{I}$.
\begin{equation}
  \begin{tikzcd}[column sep=tiny,row sep=large]
    S_1(A) \ar[rr,"T^{(i)}"] \ar[dr,swap,"\Psi"]
      & & S_1(B) \otimes^\wedge L_1(\mu) \\
    & S_1^d \otimes^\wedge L_1(\nu) \ar[ru,swap,"Q^{(i)}"] &
  \end{tikzcd}
\end{equation}
\end{definition}

If $d \ge \dim A$, one may choose $\Psi$ as an isometric embedding, making the definition trivial.
Therefore, the regime of interest is when $d < \dim A$. When $\dim B = d = 1$, it recovers standard measurement compatibility, and when $d=1$, this construction reduces to the a notion of instrument compatibility.



There are two immediate ways to construct incompatible multi-instruments.

\begin{itemize}
    \item Let $\mc M=\{\{M_{a|x}\}_a\}_x$ be an incompatible multi-meter and define measure-and-prepare instruments
\begin{equation*}
    T_{a|x}(\rho) = \Tr(M_{a|x}\rho)\tau_{a|x}.
\end{equation*}
Each instrument $\{T_{a|x}\}_a$ is individually $1$-factorizable, whereas the multi-instrument is $1$-incompatible because its induced multi-meter is $\mc M$.

    \item Any family of individually $d$-factorizable but collectively $d$-incompatible channels can be regarded as a multi-instrument with a single classical outcome.
\end{itemize}

In both constructions, however, the incompatibility is already present in one of the two marginals: the classical-outcome marginal in the first construction and the quantum-output marginal in the second. We now give a multi-instrument whose $d$-incompatibility is genuinely instrument-level. Neither its induced multi-meter nor its family of outcome-forgotten channels is incompatible; the incompatibility resides entirely in the correlations between the classical outcome and the quantum output.

\begin{example}[Incompatible instruments from incompatible measurements]
Let $\mathsf A=\mathbb Z_m$, and let
$\cM=\{\{M_{a|x}\}_{a\in\mathsf A}\}_{x\in\mathsf X}$ be a family of POVMs. On $K=\bC^m$, define
\begin{equation}
    T_{a|x}(\rho) \coloneqq  \frac{1}{m}\sum_{a'\in\mathsf A}\Tr(M_{a'|x}\rho)\ketbra{a+a'}{a+a'}.
\end{equation}
Each instrument $T^{(x)}=\{T_{a|x}\}_a$ is $1$-factorizable. Moreover,
\begin{equation}
    T_{a|x}^*(\1_K)=\frac{\1_H}{m}, \qquad \sum_aT_{a|x}(\rho)=\Tr(\rho)\frac{\1_K}{m},
\end{equation}
so the measurement incompatibility is stored entirely in the correlation
between the outcome and the output state. In fact,
\begin{equation}
    \cT\coloneqq \{T^{(x)}\}_x\text{ is $1$-compatible}
    \quad\Longleftrightarrow\quad
    \cM\text{ is jointly measurable}.
\end{equation}
The forward construction follows immediately from any parent POVM of $\cM$.
Conversely, if
\begin{equation}
    T_{a|x}(\rho) =\sum_\lambda\Tr(G_\lambda\rho)\sigma_{a|x,\lambda},
\end{equation}
define $q(a'|x,\lambda)\coloneqq \sum_a\bra{a+a'}\sigma_{a|x,\lambda}\ket{a+a'}$, and $M_{a'|x}=\sum_\lambda q(a'|x,\lambda)G_\lambda$. Thus, every incompatible $\cM$ produces individually $1$-factorizable but
collectively $1$-incompatible instruments.

\end{example}

\section{Closure of compatible quantum instruments}\label{sec:closure}

A natural operational question for any joint-measurability property is whether it is preserved under approximation: \emph{if a sequence of compatible multi-meters (or $d$-compatible multi-instruments) converges in operator norm, does the limit inherit the same joint structure?}
Physically, a positive answer means that joint measurability is \emph{robust}: an experimenter who can realize parent measurements to arbitrary precision can also realize a parent measurement in the limit, with no exotic limiting behaviour.
In this section we prove that this is indeed the case, both for compatible POVMs and for $d$-compatible quantum instruments on separable Hilbert spaces with finite-dimensional output.

The main obstruction to a direct argument is the \emph{moving intermediate space}.
A compatible multi-meter $\cM^{(n)}=\{M^{(n)}_x\}_{x\in\mathsf{X}}$ factors as
\begin{equation}\label{eq:closure-section-factorization}
   M^{(n)}_x \;=\; u^{(n)}_x \circ \Pi^{(n)},
   \qquad \Pi^{(n)} : S_1(H) \to L_1(\Omega_n), \qquad
   u^{(n)}_x : L_1(\Omega_n) \to \ell_1^m,
\end{equation}
where $(\Omega_n,\Sigma_n,\mu_n)$ is the outcome space of a parent measurement that may depend on $n$ in arbitrary ways.
Even if all $M^{(n)}_x$ converge in operator norm, their factorizations \emph{a priori} live in different Banach spaces, so the parent measurements $\Pi^{(n)}$ have no canonical limit.
When the outcome spaces are uniformly finite, a routine compactness argument suffices (\Cref{lem:closurePOVM-finite}).
Without that uniformity, we replace compactness by the \emph{ultraproduct} of Banach (and operator) spaces.
The ultraproduct yields a well-defined limit Banach space even when the $L_1(\Omega_n)$ are incomparable, and a classical result identifies the limit as an $L_1$-space again, see \cite[Theorem~8.6]{pisier1986factorization} for the statement and its history.
The image of $S_1(H)$ under the limit parent is separable.
So it lies in the $L_1$-space of a countably generated $\sigma$-subalgebra, and the classification of abelian von Neumann algebras on a separable Hilbert space embeds this $L_1$-space order-isometrically into $L_1[0,1]$ (\Cref{prop:L1-maharam}).
This restores a single intermediate space through which the limit multi-meter factors.
The embedding has a positive integral-preserving left inverse, so the post-processings extend to all of $L_1[0,1]$.
The instrument theorem follows the same blueprint with the operator-space projective tensor product $S_1^d\otimes^\wedge(\cdot)$ ``pulled out'' of the ultraproduct, an operation that is valid precisely because $d$ is finite.

To keep the main flow of the argument self-contained, the ultraproduct construction, the operator-space lemmas and the Maharam-based embedding result are collected in \Cref{app:background}, and the technical proofs of the closure theorems in \Cref{app:closure}.

\subsection{Main results}\label{subsec:closure-main}

The two main statements of this section are the closure of compatible POVMs and of $d$-compatible instruments under norm convergence.
We state both together so the operator-space generalization is in plain view.

\newcommand{\maintheoremPOVM}{
    Let $H$ be a separable Hilbert space (not necessarily finite dimensional), $m$ be a finite number, and $(\cM^{(n)})_{n\in\bN}$ be a sequence of compatible multi-meters of
    measurements of the form
    \begin{equation}
        \cM^{(n)} := \{ M^{(n)}_x : S_1(H) \to \ell_1^m \}_{x\in\mathsf{X}}.
    \end{equation}
    Suppose that the sequence converges in operator norm, in the sense that each component map has a limit
    \begin{equation}
        M_x := \lim_{n\to\infty} M^{(n)}_x \qquad (x\in\mathsf{X}),
    \end{equation}
    and denote the limiting multi-meter by
    \begin{equation}
        \cM := \{ M_x : S_1(H) \to \ell_1^m \}_{x\in\mathsf{X}}.
    \end{equation}
    Then the limiting multi-meter $\cM$ is also compatible.
}
\begin{theorem}\label{thm:closurePOVM}
\maintheoremPOVM
\end{theorem}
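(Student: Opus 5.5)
The plan is to follow the ultraproduct blueprint sketched at the start of this section. Fix a non-principal ultrafilter $\cU$ on $\bN$. For each $n$ choose a factorization $M^{(n)}_x=u^{(n)}_x\circ\Pi^{(n)}$ as in \eqref{eq:closure-section-factorization}. Here $\Pi^{(n)}:S_1(H)\to L_1(\Omega_n)$ is positive and trace preserving, and each $u^{(n)}_x:L_1(\Omega_n)\to\ell_1^m$ is stochastic. All of these maps are contractions, being positive and trace preserving on $L_1$-type spaces.

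\textbf{Step 1 (limit objects).} Form the Banach ultraproduct $\uprod{L_1(\Omega_n)}$. By the classical result cited above \cite[Theorem~8.6]{pisier1986factorization}, it is a Banach lattice isometric to some $L_1(\tilde\Omega,\tilde\mu)$. Its order and trace functional $\lim_\cU\int\cdot\,d\mu_n$ are compatible with the identification, so positive elements of the ultraproduct correspond to nonnegative functions. Define $\tilde\Pi:S_1(H)\to\uprod{L_1(\Omega_n)}$ by $\tilde\Pi(\rho)=(\Pi^{(n)}(\rho))_\cU$. Define $\tilde u_x$ on the ultraproduct by $\tilde u_x((f_n)_\cU)=\lim_\cU u^{(n)}_x(f_n)$. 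This limit exists because $\ell_1^m$ is finite dimensional and the maps are uniformly bounded, and $\tilde u_x$ is well defined on $\cU$-null sequences. Positivity and trace preservation pass to the limit for both $\tilde\Pi$ and $\tilde u_x$. For every $\rho$ and $x$,
\begin{equation*}
\tilde u_x\tilde\Pi(\rho)=\lim_\cU M^{(n)}_x(\rho)=M_x(\rho),
\end{equation*}
since operator-norm convergence gives the same limit along $\cU$. This is the place where an arbitrary setting set $\mathsf X$ causes no trouble: each $x$ is handled separately and no uniformity in $x$ is needed.

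\textbf{Step 2 (back to a genuine POVM).} The abstract $L_1(\tilde\Omega)$ may be nonseparable, and its measure may fail to be $\sigma$-finite. Since $H$ is separable, $S_1(H)$ is separable, so $\tilde\Pi(S_1(H))$ lies in a separable closed sublattice. That sublattice sits inside $L_1(\tilde\Omega,\Sigma_0,\tilde\mu)$ for a countably generated $\sigma$-subalgebra $\Sigma_0$ on which the measure is $\sigma$-finite. Apply \Cref{prop:L1-maharam} to embed this space order-isometrically into $L_1[0,1]$ via a map $J$. The same proposition supplies a positive, integral-preserving left inverse $E$, and one can also compose with the conditional expectation onto $\Sigma_0$.

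Set $\Pi=J\circ\tilde\Pi:S_1(H)\to L_1[0,1]$ and $u_x=\tilde u_x\circ E$. Then $u_x\circ\Pi=\tilde u_x\tilde\Pi=M_x$. The map $\Pi$ is positive and trace preserving, so its adjoint is a normal UCP map $L_\infty[0,1]\to\cB(H)$, that is, a POVM. Each $u_x$ is positive and integral preserving, hence stochastic. This exhibits $\cM$ as compatible.

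\textbf{Main obstacle.} The delicate step is Step 2, together with the lattice part of Step 1. One must check that the ultraproduct identification with an $L_1$-space respects both positivity and the trace functional. One must also make sure the restriction to a separable piece produces a $\sigma$-finite, standard measure space, so that the adjoint of $\Pi$ is a genuine normal POVM rather than a merely finitely additive one. I would first verify that the trace on the ultraproduct is the integral with respect to the representing measure, using the band decomposition of abstract $L$-spaces (Kakutani). Then the separable reduction and the Maharam-type embedding of \Cref{prop:L1-maharam} carry the rest.
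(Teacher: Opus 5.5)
Your proposal is correct and is essentially the paper's proof. You identify the ultraproduct $\uprod{L_1(\Omega_n)}$ with an abstract $L_1$-space, take the parent $\uprod{\Pi^{(n)}}\circ\Delta_H$ and the post-processings $\Delta_m^{-1}\circ\uprod{u^{(n)}_x}$ (your $\lim_\cU$ definition), reduce to the separable image, and apply the order-isometric embedding with a positive integral-preserving left inverse from \Cref{prop:L1-maharam}. Your extra remark on $\sigma$-finiteness is a point the paper's proof of \Cref{prop:L1-maharam} uses without comment, and it is settled by restricting to the $\sigma$-finite set outside of which all elements of a countable dense subset of the image vanish.
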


Recall that compatibility of $\cM^{(n)}$ in \Cref{thm:closurePOVM} means each $M^{(n)}_x$ admits a factorization of the form \eqref{eq:closure-section-factorization} for a parent measurement $\Pi^{(n)}$ with classical outcome space $(\Omega_n,\Sigma_n)$ and stochastic post-processings $u^{(n)}_x$.

\begin{theorem}[Closure of $d$-compatible instruments]\label{thm:closure-instruments}
Let $H$ be a separable Hilbert space, let $K$ be a finite-dimensional Hilbert space, and let $m,d\in\mathbb{N}$.
Fix an index set $\mathsf{X}$.  For each $n\in\mathbb{N}$ and each $x\in\mathsf{X}$, let
\begin{equation}
\Phi^{(n)}_x:S_1(H)\longrightarrow S_1(K)\otimes^\wedge \ell_1^m
\end{equation}
be a completely positive, trace-preserving map.
Assume that for every $n$ the multi-instrument $\{\Phi^{(n)}_x\}_{x\in\mathsf{X}}$ is \emph{$d$-compatible}, and moreover that for each $x\in\mathsf{X}$ the sequence $(\Phi^{(n)}_x)_{n\in\mathbb{N}}$ converges in operator norm to a limit map
\begin{equation}
\Phi_x:=\lim_{n\to\infty} \Phi^{(n)}_x:S_1(H)\to S_1(K)\otimes^\wedge \ell_1^m.
\end{equation}
Then the limiting multi-instrument $\{\Phi_x\}_{x\in\mathsf{X}}$ is also $d$-compatible.
\end{theorem}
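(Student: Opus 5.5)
The plan is to follow the blueprint described at the start of this section. Replace the moving intermediate spaces by an ultraproduct, identify it again as a classical--quantum $L_1$-space with a $d$-dimensional quantum factor, and then use separability of $H$ to bring the limit parent back into a standard register. The tools are those of \Cref{app:background}.

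\textbf{Step 1 (ultraproduct of the parents).} For each $n$, fix a parent instrument $\Psi_n:S_1(H)\to S_1^d\otimes^\wedge L_1(\nu_n)=L_1(\nu_n;S_1^d)$ and CPTP post-processings $Q^{(n)}_x$ with $\Phi^{(n)}_x=Q^{(n)}_x\circ\Psi_n$. All these maps are complete contractions, since CPTP maps between such preduals have $\norm{\cdot}_{\cb}=1$. Fix a non-principal ultrafilter $\cU$ on $\bN$. Define $\Psi:S_1(H)\to\uprod{L_1(\nu_n;S_1^d)}$ by $\rho\mapsto(\Psi_n(\rho))_\cU$. Define $Q_x:=\uprod{Q^{(n)}_x}$ on the ultraproduct, taking values in $\uprod{S_1(K)\otimes^\wedge\ell_1^m}$.

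Because $K$ and $m$ are finite, the last ultraproduct is canonically (completely) isometric to $S_1(K)\otimes^\wedge\ell_1^m$ itself, via the $\cU$-limit in a finite-dimensional space. Then for every $\rho$,
$$Q_x\Psi(\rho)=\lim_\cU\Phi^{(n)}_x(\rho)=\Phi_x(\rho),$$
where the last equality uses operator-norm convergence. This holds for arbitrary $\mathsf X$, since no diagonal argument over $x$ is needed.

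\textbf{Step 2 (identifying the register).} This is the step I expect to be the main obstacle. I need a complete isometry and complete order isomorphism
$$\uprod{S_1^d\otimes^\wedge L_1(\nu_n)}\cong S_1^d\otimes^\wedge\uprod{L_1(\nu_n)}.$$
This is where $d<\infty$ is used. The idea is to write $S_1^d\otimes^\wedge X$ as a $d\times d$ matrix-level structure over $X$ whose norms are controlled uniformly in $n$ by constants depending only on $d$. Then the ultraproduct commutes with the finite-dimensional functor $S_1^d\otimes^\wedge(\cdot)$, including at every matrix level and on the positive cones.

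Next, $\uprod{L_1(\nu_n)}$ is an abstract $L$-space, so it is $L_1(\nu)$ for some measure $\nu$ by the classical result cited from \cite[Theorem~8.6]{pisier1986factorization}. Its positive cone is the ultraproduct of the positive cones, and its integral is the $\cU$-limit of the integrals.

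With this identification, complete positivity of $\Psi$ and of $Q_x$ follows levelwise. Trace preservation of $\Psi$ follows because the trace functional on the ultraproduct is the $\cU$-limit of the traces, each equal to $\Tr\rho$. Trace preservation of $Q_x$ follows because it holds on each level and the target is finite-dimensional. Thus $\Psi$ is an instrument into $S_1^d\otimes^\wedge L_1(\nu)$, with a possibly non-$\sigma$-finite $\nu$.

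\textbf{Step 3 (returning to a standard space).} The space $S_1(H)$ is separable, so $\Psi(S_1(H))$ is separable. It therefore lies in $S_1^d\otimes^\wedge L_1(\Sigma_0)$ for a countably generated $\sigma$-subalgebra $\Sigma_0$ on a $\sigma$-finite piece of $\nu$.

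\Cref{prop:L1-maharam} gives an order-isometric, integral-preserving embedding $J:L_1(\Sigma_0)\to L_1[0,1]$. It also gives a positive, integral-preserving left inverse $E$, essentially a conditional expectation.

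Set $\Psi':=(\id_{S_1^d}\otimes J)\circ\Psi$ and $Q'_x:=Q_x\circ(\id_{S_1^d}\otimes E)$, restricting $Q_x$ to the $\Sigma_0$-part. Tensoring positive maps on the commutative factor with $\id_{S_1^d}$ preserves complete positivity, since the commutative factor makes positive maps completely positive. Integral preservation gives trace preservation.

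Then $\Phi_x=Q'_x\circ\Psi'$, with $\Psi':S_1(H)\to S_1^d\otimes^\wedge L_1[0,1]$ a common parent instrument. This proves that $\{\Phi_x\}_{x\in\mathsf X}$ is $d$-compatible.
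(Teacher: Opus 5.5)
Your proposal is correct and follows essentially the same route as the paper. Both arguments take the ultraproduct of the parents and post-processings, pull the finite-dimensional factor $S_1^d$ out of the ultraproduct (\Cref{lem:pullout-finite}) together with Kakutani's representation of $\uprod{L_1(\Omega_n)}$, and then use separability of $S_1(H)$ and \Cref{prop:L1-maharam} to move the parent into $S_1^d\otimes^\wedge L_1[0,1]$, with the post-processings precomposed by $\id_{S_1^d}\otimes\cE$.

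One point you assert rather than argue is that an almost-everywhere positive element of $S_1^{kd}\otimes^\wedge L_1(\nu)$ is represented by a sequence of positive elements. Complete positivity of $\uprod{Q^{(n)}_x}$ needs exactly this. It is the content of the paper's \Cref{L:up-L1-S1-order}, proved by approximating with sums $P\otimes g$ and using that the classes with positive representatives form a closed set.
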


The output systems in this paper are finite-dimensional, which is why $\dim K < \infty$ is assumed.

Operationally, \Cref{thm:closurePOVM} says that the set of compatible multi-meters on a fixed input space is closed in the operator-norm topology (componentwise), and \Cref{thm:closure-instruments} says the same about the set of $d$-compatible multi-instruments.

The two theorems are stated for sequences.
The proofs in \Cref{app:closure} work verbatim for nets, with $\cU$ an ultrafilter on the directed index set that contains all tails.
The ultraproduct maps are well defined for an arbitrary index set, since every map involved is completely positive and trace preserving, hence a complete contraction, and so the family is uniformly bounded.
Hence the compatible multi-meters and the $d$-compatible multi-instruments form closed sets in the product of the operator-norm topologies, for an arbitrary set $\mathsf{X}$ of settings.
For countable $\mathsf{X}$ this product topology is metrizable, and the sequential statements already give closedness.

In particular, \emph{incompatibility} of a family is detected by a finite operator-norm gap to the compatible set, which is a useful starting point for robustness and certification statements.

\subsection{Warm-up: finite outcomes via compactness}\label{subsec:finite-case}

When the parent POVMs are uniformly bounded in their number of outcomes and the underlying Hilbert space is finite-dimensional, all factorization maps in \eqref{eq:closure-section-factorization} live in a fixed compact set, and the closure property follows by a direct compactness argument.
We record this case first because the proof is short and isolates exactly what fails in the general setting.

We begin by describing the two parameter sets that encode the factorization maps, along with the topology they carry, as their compactness is what drives the proof.

\begin{definition}[Parameter sets and their topology]\label{def:finite-param-sets}
Fix finite $d,m,k$.
\begin{itemize}
    \item The set of \defn{$k$-outcome POVMs} on $\bM_d(\bC)$ is
    \begin{equation}
        K_{\POVM} := \{ (P_i)_{i=1}^{k} \in (\bM_d(\bC)_{\mathrm{sa}})^{k} \;:\; P_i \geq 0,\ \textstyle\sum_{i=1}^{k} P_i = \1 \},
    \end{equation}
    regarded as a subset of the finite-dimensional real vector space $(\bM_d(\bC)_{\mathrm{sa}})^{k}\cong\bR^{kd^2}$.
    \item The set of \defn{post-processings} is the set of $m\times k$ column-stochastic matrices
    \begin{equation}
        K_{\stoc} := \{ u\in\bR_{\geq 0}^{m\times k} \;:\; \textstyle\sum_{a=1}^{m} u_{ab} = 1 \ \text{for all } b \},
    \end{equation}
    regarded as a subset of $\bR^{mk}$.
\end{itemize}
Each set carries the norm topology inherited from its ambient finite-dimensional space; since all norms on a finite-dimensional space are equivalent, this topology is canonical and coincides with entrywise convergence.
\end{definition}

\begin{lemma}[Compactness of the parameter sets]\label{lem:finite-param-compact}
$K_{\POVM}$ and $K_{\stoc}$ are compact, and hence so is $K_{\POVM}\times\prod_{x\in\mathsf{X}}K_{\stoc}$.
\end{lemma}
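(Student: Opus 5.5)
The plan is to apply Heine--Borel in each finite-dimensional ambient space: show that $K_{\POVM}$ is closed and bounded in $(\bM_d(\bC)_{\mathrm{sa}})^k\cong\bR^{kd^2}$, and that $K_{\stoc}$ is closed and bounded in $\bR^{mk}$. Compactness of the product then follows from Tychonoff's theorem. When $\mathsf X$ is finite this is just a finite product of compact metric spaces. For arbitrary $\mathsf X$ we use the product topology, and Tychonoff still applies.

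For closedness of $K_{\POVM}$, write it as an intersection of preimages of closed sets under continuous maps. For each $i$ and each unit vector $v\in\bC^d$, the map $(P_j)_j\mapsto\langle v,P_iv\rangle$ is continuous and real-valued on self-adjoint tuples. Hence $\{P_i\ge0\}=\bigcap_v\{\langle v,P_iv\rangle\ge0\}$ is closed. The normalization constraint is the preimage of the point $\1$ under the continuous linear map $(P_j)_j\mapsto\sum_jP_j$, so it is also closed. For boundedness, $P_i\ge0$ and $\sum_jP_j=\1$ give $0\le P_i\le\1$, hence $\norm{P_i}\le1$ for all $i$. So $K_{\POVM}$ lies in a bounded ball.

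The set $K_{\stoc}$ is handled the same way. It is cut out by finitely many closed conditions, namely $u_{ab}\ge0$ and the linear equalities $\sum_au_{ab}=1$. Its entries satisfy $0\le u_{ab}\le1$, so it sits inside the cube $[0,1]^{mk}$.

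None of these steps is a real obstacle. The only point needing care is the product over a possibly infinite $\mathsf X$, where one must use the product topology and Tychonoff rather than a metric argument. If only sequential compactness is needed later, a countable $\mathsf X$ also allows a diagonal subsequence argument.
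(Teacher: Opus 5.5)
Your proof is correct and follows the paper's argument. Both show closedness from the non-strict positivity and affine normalization constraints, boundedness from $0\le P_i\le\1$ and $u_{ab}\in[0,1]$, and then apply Heine--Borel in the finite-dimensional ambient spaces and Tychonoff for the product over $\mathsf X$. Writing the positivity cone as an intersection of closed half-spaces $\{\langle v,P_iv\rangle\ge0\}$ is a slightly more careful version of the paper's remark that the sets are cut out by ``finitely many'' closed constraints.
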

\begin{proof}
Both sets are defined inside a finite-dimensional space by finitely many non-strict (hence closed) constraints---positivity and an affine normalization---so each is closed.
The constraints also force boundedness: $0\le P_i\le\1$ gives $\norm{P_i}\le 1$, and $u_{ab}\in[0,1]$.
By Heine--Borel each set is compact, and compactness of the product follows from Tychonoff's theorem (a subnet replacing a subsequence when $\mathsf{X}$ is infinite).
\end{proof}

It is worth being explicit about why this compactness is the right tool, and why it is compatible with the operator-norm convergence that defines our problem.
The hypothesis fixes the data $M^{(n)}_x$ to converge in operator norm; what we must produce is a single \emph{parent} measurement $\Pi$ and post-processings $u_x$ realizing the limit multi-meter.
Compactness of $K_{\POVM}\times\prod_{x\in \mathsf{X}} K_{\stoc}$ supplies these witnesses by extracting a convergent subnet of the parents $(\Pi^{(n)},u^{(n)}_x)$, but a priori the topology in which they converge need not be the operator-norm topology in which the data converge.
The finite-dimensional, bounded-outcome assumptions remove this gap on two counts.

First, by \Cref{def:finite-param-sets} each factor $K_{\POVM}$ and $K_{\stoc}$ of the parameter set sits in a finite-dimensional space, so by \Cref{lem:finite-param-compact} each factor is norm-compact and the product over $x\in\mathsf X$ is compact in the product topology, also for infinite $\mathsf X$, and on a finite-dimensional space all norms are equivalent, with constants depending only on the dimension, so ``norm convergence'' of the parents is unambiguous and coincides with entrywise convergence in each factor.

Second, because the intermediate space $\ell_1^k$ has fixed finite dimension, the maps $\Pi^{(n)}$ and $u^{(n)}_x$ are uniformly bounded in operator norm, and composition $(u,\Pi)\mapsto u\circ\Pi$ is jointly continuous for the operator norm with an explicit Lipschitz estimate (the bilinear bound $\norm{u\circ\Pi-u'\circ\Pi'}\le\norm{u}\,\norm{\Pi-\Pi'}+\norm{u-u'}\,\norm{\Pi'}$ used below).
Joint continuity is what makes the norm topology \emph{valid} for this argument: the limit extracted by compactness is automatically a limit of the compositions $u^{(n)}_x\circ\Pi^{(n)}$ in operator norm, so it must agree with the independently given operator-norm limit $M_x$.
What fails in the general setting is precisely this coincidence: once the outcome number $k$ or the dimension is unbounded, the parameter set is no longer norm-compact, and a weaker (e.g.\ weak-$*$) topology must be used, in which the joint continuity of composition---and hence the validity of passing the norm limit through the factorization---is no longer automatic.

\begin{example}[Non-uniqueness of the parent POVM]\label{ex:nonunique-parent}
Generally, the parent POVM witnessing compatibility is not unique, which is exactly why the lemma below extracts a convergent subnet of the parents rather than asserting that the parents themselves converge.
Here, we demonstrate a convergent sequence of compatible multi-meters with two distinct converging subsequences of parent POVMs.
Take $d=2$ and two \emph{noisy spin measurements} along the orthogonal axes $\hat z$ and $\hat x$, with a constant $\eta\in(0,1/\sqrt2)$:
\begin{equation}
M_{a|1}=\frac{1}{2}(\1+a\,\eta\,\sigma_z),\qquad M_{b|2}=\frac{1}{2}(\1+b\,\eta\,\sigma_x),\qquad a,b\in\{+1,-1\}.
\end{equation}
These are noisy versions of $\sigma_z$ and $\sigma_x$ measurements; although $\sigma_z$ and $\sigma_x$ are maximally incompatible, the smeared pair is jointly measurable precisely when $\eta\le1/\sqrt2$, so our choice $\eta<1/\sqrt2$ sits strictly inside the compatible region.
Define the deterministic post-processings $u_1:(a,b)\mapsto a$ and $u_2:(a,b)\mapsto b$.
Then, a four-outcome POVM $(G_{ab})_{a,b\in\{+1,-1\}}$ is a parent POVM when $\sum_b G_{ab}=M_{a|1}$ and $\sum_a G_{ab}=M_{b|2}$.
Consider the following $G_{ab}$, parametrized by $t$:
\begin{equation}
G_{ab}(t)=\frac{1}{4}(\1+a\,\eta\,\sigma_z+b\,\eta\,\sigma_x+ab\,t\,\sigma_y).
\end{equation}

The marginal constraints hold for \emph{every} real $t$, while positivity of all four effects holds iff
\begin{equation}
2\eta^2+t^2\le1,\qquad\text{i.e.}\qquad t\in[-\sqrt{1-2\eta^2},\,\sqrt{1-2\eta^2}].
\end{equation}
With the post-processings held fixed, the parents fill a nondegenerate compact segment $\{\Pi(t):|t|\le\sqrt{1-2\eta^2}\}$ $\subset K_{\POVM}$, which collapses to a single point only at the threshold $\eta=1/\sqrt2$.
Now let the compatible multi-meter be constant, $\cM^{(n)}=\cM$ for all $n$, but choose the parents $\Pi^{(n)}=\Pi(t_n)$ with $t_n=\sqrt{1-2\eta^2}\,(-1)^n$.
Then $M^{(n)}_x\to M_x$ in operator norm while the parent sequence $(\Pi^{(n)})$ does not converge: its cluster points are the two extremal joint POVMs $\Pi(\pm\sqrt{1-2\eta^2})$.

What the proof uses instead is compactness of $K_{\POVM}\times\prod_x K_{\stoc}$ (\Cref{lem:finite-param-compact}): any choice of parents admits a convergent subsequence $\Pi^{(n_j)}\to\Pi(t_\infty)$ for some limit parameter $t_\infty\in[-\sqrt{1-2\eta^2},\,\sqrt{1-2\eta^2}]$, and the bilinear estimate \eqref{eq:bilinear-bound} in the proof below forces $u_x\circ\Pi(t_\infty)=M_x$, so the limit is again a parent of $\cM$---though which value $t_\infty$ takes depends on the subsequence.
\end{example}

\begin{lemma}\label{lem:closurePOVM-finite}
    Let $d,m,k$ be finite numbers, $H$ a Hilbert space of dimension $d$, $\mathsf{X}$ a set, not necessarily finite, and $(\cM^{(n)})_{n\in\bN}$ a sequence of compatible multi-meters of
    measurements of the form
    \begin{equation}
        \cM^{(n)} := \{ M^{(n)}_x : S_1(H) \to \ell_1^m \}_{x\in\mathsf{X}},
    \end{equation}
    where each parent POVM has $k$ outcomes.

    Suppose that the sequence converges in operator norm, in the sense that each component map has a limit
    \begin{equation}
        M_x := \lim_{n\to\infty} M^{(n)}_x \qquad (x\in\mathsf{X}),
    \end{equation}
    and denote the limiting multi-meter by
    \begin{equation}
        \cM := \{ M_x : S_1(H) \to \ell_1^m \}_{x\in\mathsf{X}}.
    \end{equation}
    Then the limiting multi-meter $\cM$ is also compatible via $k$-outcome POVM.
\end{lemma}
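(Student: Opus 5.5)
The proof is a compactness argument in the parameter space $K_{\POVM}\times\prod_{x\in\mathsf X}K_{\stoc}$ of \Cref{def:finite-param-sets}, with $d=\dim H$. For each $n$, compatibility of $\cM^{(n)}$ through a $k$-outcome parent gives a POVM $\Pi^{(n)}=(P^{(n)}_i)_{i=1}^k\in K_{\POVM}$ and column-stochastic matrices $u^{(n)}_x\in K_{\stoc}$ with $M^{(n)}_x=u^{(n)}_x\circ\Pi^{(n)}$ for every $x$. Here $\Pi^{(n)}:S_1(H)\to\ell_1^k$ is $\rho\mapsto(\Tr P^{(n)}_i\rho)_i$, and $u^{(n)}_x:\ell_1^k\to\ell_1^m$ acts as matrix multiplication. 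By \Cref{lem:finite-param-compact} the product set is compact, so the sequence $(\Pi^{(n)},(u^{(n)}_x)_x)$ has a convergent subnet with limit $(\Pi,(u_x)_x)$. When $\mathsf X$ is countable the product is metrizable and a subsequence suffices. Since $K_{\POVM}$ and $K_{\stoc}$ are closed, $\Pi$ is again a $k$-outcome POVM and each $u_x$ is column-stochastic. As \Cref{ex:nonunique-parent} shows, we cannot expect the original parents to converge, so passing to a subnet is essential.

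The main step is to identify $u_x\circ\Pi$ with $M_x$. The topology of convergence is only the product topology. Each coordinate $x$, however, lives in a fixed finite-dimensional space, so convergence of a subnet in the product topology means that for each fixed $x$ we have $\Pi^{(n_\alpha)}\to\Pi$ and $u^{(n_\alpha)}_x\to u_x$ in operator norm, all norms being equivalent there. We then use the bilinear estimate
\begin{equation}\label{eq:bilinear-bound}
\norm{u\circ\Pi-u'\circ\Pi'}\le\norm{u}\,\norm{\Pi-\Pi'}+\norm{u-u'}\,\norm{\Pi'}.
\end{equation}
Here $\norm{u^{(n)}_x}\le1$, because a column-stochastic matrix is a contraction on $\ell_1$. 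Also $\norm{\Pi'}\le1$, because a positive trace-preserving map $S_1(H)\to\ell_1^k$ is contractive: decompose a self-adjoint input into positive and negative parts, and then handle general inputs by the $2$-constant or directly, since the bound only needs to be uniform. The estimate gives $u^{(n_\alpha)}_x\circ\Pi^{(n_\alpha)}\to u_x\circ\Pi$ in operator norm.

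By hypothesis the whole sequence $M^{(n)}_x=u^{(n)}_x\circ\Pi^{(n)}$ converges to $M_x$. The subnet therefore converges to $M_x$ as well, and uniqueness of limits in the Hausdorff operator-norm topology gives $M_x=u_x\circ\Pi$ for every $x\in\mathsf X$. So $\cM$ is compatible via the $k$-outcome parent $\Pi$.

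The only delicate point is the passage from subnet convergence in the product topology to coordinatewise norm convergence when $\mathsf X$ is infinite. This is handled by evaluating coordinate projections, which are continuous, one $x$ at a time. No uniformity in $x$ is needed, because the identity $M_x=u_x\circ\Pi$ is checked separately for each $x$ with the same parent $\Pi$, and $\Pi$ is a single limit common to all coordinates.
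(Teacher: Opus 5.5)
Your proposal is correct and follows the paper's proof essentially step for step. Compactness of $K_{\POVM}\times\prod_{x\in\mathsf X}K_{\stoc}$ (\Cref{lem:finite-param-compact}) yields a convergent subnet of parents and post-processings, and the bilinear estimate then identifies $u_x\circ\Pi$ with $M_x$ one coordinate at a time. Your aside on $\norm{\Pi'}$ can be sharpened: $(\Pi')^*:\ell_\infty^k\to\bM_d$ is unital and positive, hence completely positive since its domain is commutative, so $\norm{\Pi'}=\norm{(\Pi')^*}=\norm{(\Pi')^*(\1)}=1$. As you note, though, a uniform bound is all the argument needs.
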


\begin{proof}
    Assume that each $M^{(n)}_x$ factors through $\ell_1^k$ via maps $u^{(n)}_x:\ell_1^k \to \ell_1^m$ and $\Pi^{(n)}:S_1(H) \to \ell_1^k$.
    Let $(P^{(n)}_i)_i$ be the dual element of $\Pi^{(n)}$ expressed as the $k$-tuple POVM, i.e.
    \begin{equation}
       \Pi^{(n)}(\rho) \mapsto (\Tr[P^{(n)}_i\rho])_i \q \sum_i^k P^{(n)}_i = \1, \; P^{(n)}_i \geq 0,
    \end{equation}
    and express each $u^{(n)}_x$ as an $m\times k$ stochastic matrix.
    Via these equivalences (\Cref{def:finite-param-sets}), the tuple $((P^{(n)}_i)_i,(u^{(n)}_x)_x)$ is an element of $K_{\POVM} \times \prod_{x\in \mathsf{X}} K_{\stoc}$.
    By \Cref{lem:finite-param-compact} this set is compact, so there exists a convergent subnet\renewcommand{\thefootnote}{\arabic{footnote})}\footnote{A \emph{net} is the generalization of a sequence in which the index set is an arbitrary directed set rather than $\bN$; a \emph{subnet} is the corresponding generalization of a subsequence. Nets are needed here only because $\mathsf{X}$ may be uncountable, so that the product $\prod_{x\in\mathsf{X}} K_{\stoc}$ is compact but need not be sequentially compact. When $\mathsf{X}$ is countable the reader may simply read ``subsequence'' throughout.}\renewcommand{\thefootnote}{\arabic{footnote}} (a subsequence when $\mathsf{X}$ is finite) $((P^{(n_j)}_i)_i, (u^{(n_j)}_x)_{x})_{j} \to ((P_i)_i, (u_x)_x)$
    in $K_{\POVM} \times \prod_{x\in\mathsf{X}} K_{\stoc}$.
    Let $\Pi$ be the predual element corresponding $(P_i)$.
    Then
    \begin{align}
        \norm{u_x \circ \Pi - u^{(n_j)}_x \circ \Pi^{(n_j)}}
        &=  \norm{u_x \circ (\Pi - \Pi^{(n_j)}) + (u_x - u^{(n_j)}_x) \circ \Pi^{(n_j)}}\\
        &\leq \norm{u_x}\cdot \norm{\Pi - \Pi^{(n_j)}} + \norm{u_x - u^{(n_j)}_x}\cdot \norm{\Pi^{(n_j)}} \to 0 \label{eq:bilinear-bound}
    \end{align}
    Therefore, $u_x \circ \Pi = M_x$ for all $x \in \mathsf{X}$.
\end{proof}

\begin{remark} \label{rem:finite-suffices}
The same conclusion holds when $\mathsf{X}$ is finite and the parent POVM is allowed infinitely many ($k=\infty$) outcomes.
Indeed, suppose $\{M_x\}_{x\in\mathsf{X}}$ is compatible through a parent POVM $\Pi$ on an outcome space $\Omega$ (so $\Pi\geq0$, $\Pi(\Omega)=\1$) and post-processing kernels $u_x(a\mid\omega)$, $\sum_a u_x(a\mid\omega)=1$, with $M_{a|x}=\int_\Omega u_x(a\mid\omega)\,d\Pi(\omega)$.
Since $\mathsf{X}$ and $m$ are finite, there are finitely many outcome assignments $g=(g_x)_{x\in\mathsf{X}}\in[m]^{\mathsf{X}}$ and
\begin{equation}
  G_g:=\int_\Omega (\prod_{x\in\mathsf{X}} u_x(g_x\mid\omega))\,d\Pi(\omega)
\end{equation}
defines a joint POVM: $G_g\geq0$ and $\sum_g G_g=\int_\Omega 1\,d\Pi=\1$.
For each setting $y\in\mathsf{X}$ and outcome $a\in[m]$ the marginal recovers $M_y$, namely $\sum_{g:\,g_y=a}G_g=M_{a|y}$, where the sum runs over the assignments $g$ with $g_y=a$.
Indeed,
\begin{align}
  \sum_{g:g_y = a} G_g &= \sum_{g:g_y = a} \int_{\Omega} (\prod_{x\in \mathsf{X}} u_x (g_x|\omega)) \; d\Pi(\omega) \\
  &= \int_{\Omega} \sum_{g:g_y = a} (\prod_{x\neq y} u_x(g_x|\omega)) \cdot u_y(a|\omega) \;d\Pi(\omega) \\
  &= \int_{\Omega} u_y(a|\omega) \; d\Pi(\omega) = M_{a|y}.
\end{align}
Thus the infinite-outcome parent may be folded into a finite one with at most $m^{|\mathsf{X}|}$ outcomes, reducing the statement to the finite-$k$ case proved above.
Moreover, the tuples $w_g=(\delta_{a,g_x}G_g)_{x,a}$ lie in a real vector space of dimension $|\mathsf X|(m-1)d^2+d^2$, so the conic Carath\'eodory theorem lets one retain at most $k_0=|\mathsf{X}|(m-1)d^2+d^2$ of the $G_g$, rescaled by nonnegative scalars, and discard the rest while preserving every $M_{a|x}$.

\end{remark}

The two ingredients used above---compactness of the set of $k$-outcome POVMs and a fixed finite dimension on the intermediate space $\ell_1^k$---are exactly what is missing when the parent measurement is allowed to take values in $L_1(\Omega_n)$ with $\Omega_n$ unbounded.
The next subsection outlines how an ultraproduct construction recovers them.

\subsection{Strategy of the general proofs}\label{subsec:proof-strategy}

The full proofs of \Cref{thm:closurePOVM,thm:closure-instruments} are given in \Cref{app:closure}.
Here we describe the four-step plan that underlies both arguments so the reader can follow the appendix with the physical picture in mind.

\paragraph{Step 1: pass each factorization to an ultraproduct.}
Fix a free ultrafilter $\cU$ on $\mathbb{N}$ and consider the Banach-space ultraproduct $\uprod{L_1(\Omega_n)}$, which is well defined even though the underlying measure spaces $\Omega_n$ vary with $n$.
The ultraproduct of the parent measurements $\Pi^{(n)}$ and of the post-processings $u^{(n)}_x$ produces a factorization of $\uprod{M^{(n)}_x}$ through $\uprod{L_1(\Omega_n)}$.
For the instrument case, the same step is carried out for $\theta^{(n)}:S_1(H)\to S_1^d\otimes^\wedge L_1(\Omega_n)$ and the channels $\Psi^{(n)}_x$; the operator-space ultraproduct of completely positive trace-preserving maps is again completely positive and trace-preserving (\Cref{lem:cp-ultra-os}).

\paragraph{Step 2: identify the ultraproduct as an $L_1$-space.}
By a classical result, $\uprod{L_1(\Omega_n)}$ is itself an abstract $L_1$-space \cite[Theorem~8.6]{pisier1986factorization} and hence order-isometric to $L_1(\Omega,\Sigma,\mu)$ for some measure space $(\Omega,\Sigma,\mu)$, in general not $\sigma$-finite.
For the instrument case, the finite-dimensional pull-out \Cref{lem:pullout-finite} extends this identification through the operator-space tensor product:
\begin{equation*}
\uprod{S_1^d\otimes^\wedge L_1(\Omega_n)} \;\cong\; S_1^d\otimes^\wedge \uprod{L_1(\Omega_n)} \;\cong\; S_1^d\otimes^\wedge L_1(\Omega)
\end{equation*}
completely isometrically.

\paragraph{Step 3: bring the diagonal copy of $S_1(H)$ back to $L_1[0,1]$.}
The diagonal embedding $\Delta_H:S_1(H)\to \uprod{S_1(H)}$ is isometric, so the image of $\widetilde \Pi:=\uprod{\Pi^{(n)}}\circ \Delta_H$ inside the abstract $L_1(\Omega)$ is separable.
By \Cref{prop:L1-maharam} the image $\widetilde\Pi(S_1(H))$ lies in $L_1(\Omega,\Sigma_X)$ for a countably generated $\sigma$-subalgebra $\Sigma_X$.
There are an order-isometric embedding $j:L_1(\Omega,\Sigma_X)\hookrightarrow L_1[0,1]$ and a positive integral-preserving map $\cE:L_1[0,1]\to L_1(\Omega,\Sigma_X)$ with $\cE\circ j=\id$.

Pushing $\widetilde\Pi$ forward through $j$ yields a candidate parent measurement $\Pi:=j\circ\widetilde\Pi:S_1(H)\to L_1[0,1]$.
For instruments, the corresponding step uses \Cref{lem:tensor-preserve-embed}, which promotes the embedding $j$ to a complete isometry $\id_{S_1^d}\otimes j$ on the operator-space projective tensor product.

\paragraph{Step 4: descend the post-processings to a fixed Banach space.}
The diagonal embedding $\Delta_m:\ell_1^m\to\uprod{\ell_1^m}$ is also isometric.
Since $\ell_1^m$ is finite-dimensional, $\Delta_m$ is onto, and the ultraproduct post-processings $\uprod{u^{(n)}_x}$ give maps $u_x:=\Delta_m^{-1}\circ\uprod{u^{(n)}_x}\circ\cE:L_1[0,1]\to\ell_1^m$, where $L_1(\Omega,\Sigma_X)$ is regarded as a subspace of $L_1(\Omega)$.
Combining this with the identity $\uprod{M^{(n)}_x}\circ \Delta_H = \Delta_m\circ M_x$ (\Cref{lem:diag-ultra-limit}, a direct consequence of operator-norm convergence) one verifies that $M_x = u_x\circ \Pi$, which is the desired factorization of the limit multi-meter through a single parent measurement.
For instruments, an analogous argument with $\uprod{S_1(K)\otimes^\wedge\ell_1^m}$ in place of $\uprod{\ell_1^m}$ recovers post-processing channels $\Psi_x:S_1^d\otimes^\wedge L_1[0,1]\to S_1(K)\otimes^\wedge \ell_1^m$ with $\Phi_x = \Psi_x\circ \theta$.

The supporting definitions and lemmas (ultrafilters, ultralimits, the operator-space ultraproduct, Pisier's factorization theorem, the operator-space tensor-product lemmas, and the Maharam-based embedding) are collected in \Cref{app:background}, and the full proofs of \Cref{thm:closurePOVM,thm:closure-instruments} are given in \Cref{app:closure}.

\section{$d$-compatibility criterion for quantum channels}\label{sec:criterion}
\subsection{Criterion via completely bounded maps}

We give a quantitative criterion for joint $d$-factorizability in terms of a completely bounded factorization functional, certified by linear \emph{witnesses} via Hahn--Banach duality.

Let $\{\Phi_x : \bM_m\to \bM_n \}_{x\in\mathsf{X}}$ be a family of c.b. maps. We say the family is \defn{jointly $d$-factorizable via c.b. maps} if there exist a $\sigma$-finite measure space $(\Omega,\Sigma,\mu)$ and contractive c.b. maps $\theta : L_\infty(\Omega;\bM_d) \to \bM_n$ and $\Psi_x : \bM_m \to L_\infty(\Omega;\bM_d)$ with
\begin{equation}
  \Phi_x = \theta \circ \Psi_x, \q \forall\, x \in \mathsf{X}.
\end{equation}

Throughout this section the set $\mathsf X$ of settings is finite, and the map $\theta$ in every factorization is normal, that is weak$^*$ continuous.

It is convenient to assemble the family into a single map $\Phi:\ell_1(\mathsf{X}, \bM_m) \to \bM_n$,
\begin{equation}
  \Phi((z_x)_x) = \sum_x \Phi_x(z_x).
\end{equation}
Define the functional $\gamma_{d,\cb}$
\begin{equation} \label{eq:gamma-cb}
  \gamma_{d,\cb}(\Phi):=\inf\{\,\norm{\theta}_{\cb}\cdot\sup_x\norm{\Psi_x}_{\cb}
   \mid \Phi_x=\theta \circ \Psi_x \; \forall x \in \mathsf{X}\},
\end{equation}
where the infimum is over all \emph{completely bounded} factorizations $\Phi_x=\theta\circ\Psi_x$ through $L_\infty(\Omega;\bM_d)$ for some $(\Omega,\Sigma,\mu)$, and $\gamma_{d,\cb}(\Phi):=+\infty$ if no such factorization exists.

The functional $\gamma_{d,\cb}$ measures the best available factorization constant.
The following proposition confirms that it detects joint $d$-factorizability exactly, the family being factorizable precisely when $\gamma_{d,\cb}\le1$.

\begin{proposition}\label{prop:cb-unit-ball}
  $\{\Phi_x\}_x$ is jointly $d$-factorizable via c.b. maps if and only if $\gamma_{d,\cb}(\Phi) \leq 1$.
\end{proposition}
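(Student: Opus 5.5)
The forward implication is immediate from the definition. A joint $d$-factorization via contractive c.b.\ maps has $\norm{\theta}_{\cb}\sup_x\norm{\Psi_x}_{\cb}\le 1$, so $\gamma_{d,\cb}(\Phi)\le1$. The content is the converse, where $\gamma_{d,\cb}(\Phi)\le1$ only says the infimum in \eqref{eq:gamma-cb} is at most one. My plan is to show that the infimum is attained, using the ultraproduct machinery behind \Cref{thm:closure-instruments} with complete contractivity in place of complete positivity.

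\emph{Step 1 (normalization).} For each $k$, pick a factorization $\Phi_x=\theta_k\circ\Psi^{(k)}_x$ through $L_\infty(\Omega_k;\bM_d)$ with $\norm{\theta_k}_{\cb}\sup_x\norm{\Psi^{(k)}_x}_{\cb}\le 1+1/k$. Replacing $(\theta_k,\Psi^{(k)}_x)$ by $(t\theta_k,t^{-1}\Psi^{(k)}_x)$ balances the two norms, so both are at most $(1+1/k)^{1/2}$. Dividing each map by $(1+1/k)^{1/2}$ gives complete contractions $\tilde\theta_k$ and $\tilde\Psi^{(k)}_x$ with $\tilde\theta_k\circ\tilde\Psi^{(k)}_x=\Phi_x/(1+1/k)$. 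These families converge to $\Phi_x$ in c.b.\ norm. It therefore suffices to prove a closure statement: a c.b.-norm limit of families that factor through some $L_\infty(\Omega_k;\bM_d)$ by complete contractions, with $\theta_k$ normal, again factors this way.

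\emph{Step 2 (predual ultraproduct).} Since $\theta_k$ is normal, pass to preduals. This gives complete contractions $(\theta_k)_*:S_1^n\to S_1^d\otimes^\wedge L_1(\Omega_k)$ and $(\Psi^{(k)}_x)_*:S_1^d\otimes^\wedge L_1(\Omega_k)\to S_1^m$, or the transposes of the $\Psi_x$ restricted appropriately. When $\Psi^{(k)}_x$ is not normal, I would instead work with the restriction of $(\Psi^{(k)}_x)^*$ to the predual; this is still a complete contraction. Take the operator-space ultraproduct along a free ultrafilter $\cU$, exactly as in Steps 1--2 of \Cref{subsec:proof-strategy}. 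The pull-out \Cref{lem:pullout-finite} identifies $\uprod{S_1^d\otimes^\wedge L_1(\Omega_k)}$ with $S_1^d\otimes^\wedge L_1(\Omega)$, and $L_1(\Omega)$ is an abstract $L_1$-space by \cite[Theorem~8.6]{pisier1986factorization}. Because $S_1^n$ and $S_1^m$ are finite-dimensional, the diagonal maps $\Delta_n,\Delta_m$ are surjective complete isometries. Combined with \Cref{lem:diag-ultra-limit}, this yields $\Phi_{x*}=\Delta_m^{-1}\circ\uprod{(\Psi^{(k)}_x)_*}\circ\uprod{(\theta_k)_*}\circ\Delta_n$.

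\emph{Step 3 (return to a $\sigma$-finite space, and dualize).} The image of $S_1^n$ is finite-dimensional, hence separable. By \Cref{prop:L1-maharam} it lies in $S_1^d\otimes^\wedge L_1(\Omega,\Sigma_X)$, which embeds into $S_1^d\otimes^\wedge L_1[0,1]$ via $\id\otimes j$ (\Cref{lem:tensor-preserve-embed}). There is also a left inverse $\id\otimes\cE$. Dualizing gives complete contractions $\theta:L_\infty([0,1];\bM_d)\to\bM_n$, which is normal as an adjoint, and $\Psi_x:\bM_m\to L_\infty([0,1];\bM_d)$ with $\Phi_x=\theta\circ\Psi_x$. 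This is the desired factorization.

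\emph{Main obstacle.} I expect the hardest point to be checking that $\id_{S_1^d}\otimes\cE$ is a complete contraction when only contractivity, not positivity, is being tracked. I would argue this as follows. The map $\cE$ is positive and integral preserving, hence a contraction on $L_1$. Its adjoint maps into the commutative algebra $L_\infty$, so the adjoint's c.b.\ norm equals its norm, as in the argument of \Cref{subsec:os-instruments}. Hence $\norm{\cE}_{\cb}\le1$, and the projective tensor product is functorial for complete contractions. The second point to check is that the pull-out lemma holds completely isometrically, and not merely for positive maps. Its finite-$d$ proof is purely normed, so I expect this to carry over.
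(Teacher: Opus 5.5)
Your proposal is correct and follows essentially the same route as the paper. The paper's proof of the converse invokes the ultraproduct argument of \Cref{thm:closure-instruments} with complete positivity and trace preservation replaced by complete contractivity, passes between the Heisenberg and Schr\"odinger pictures by adjoints, concludes that the infimum in \eqref{eq:gamma-cb} is attained, and rescales; your Steps 1--3 are that argument written out in more detail, and your caveat about non-normal $\Psi_x$ is unnecessary since every linear map out of the finite-dimensional $\bM_m$ is automatically weak$^*$ continuous.
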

\begin{proof}
  ($\Rightarrow$) A contractive factorization has $\norm{\theta}_{\cb}, \sup_x \norm{\Psi_x}_{\cb} \leq 1$, so the product in \cref{eq:gamma-cb} is $\leq 1$ and the infimum satisfies $\gamma_{d,\cb} \leq 1$.

  ($\Leftarrow$) Suppose $\gamma_{d,\cb} \leq 1$.
  If $\Phi = 0$, then $\theta = \Psi = 0$ satisfies the condition.
  Assume $\Phi \neq 0$.
  The ultraproduct argument of \Cref{thm:closure-instruments} applies with complete positivity and trace preservation replaced by complete boundedness, the completely bounded unit ball being closed under operator-space ultraproducts; one passes between the Heisenberg and Schr\"{o}dinger pictures by taking adjoints, which preserves the c.b. norm, so the factorizing maps may be taken normal.
  Therefore, the set of families that admit a contractive factorization is closed and the infimum in \cref{eq:gamma-cb} is attained.

  If $\Phi = \theta \circ \Psi$, then we can rescale $\theta' = \norm{\theta}_{\cb}^{-1}\cdot \theta$ and $\Psi_x' = \norm{\theta}_{\cb} \Psi_x$.
  Then $\norm{\theta'}_{\cb} = 1$ and $\sup_x \norm{\Psi_x'}_{\cb} = \norm{\theta}_{\cb} \sup_x \norm{\Psi_x}_{\cb} = \gamma_{d,\cb}(\Phi) \leq 1$.
\end{proof}

The main theorem is our completely bounded criterion for joint $d$-factorizability. It expresses $\gamma_{d,\cb}$ as a supremum over linear \emph{witnesses} $T$, so a family fails to be $d$-factorizable exactly when some witness $T$ drives $\norm{T\otimes\Phi}$ above $1$.

\newcommand{\CBCriterion}{
  Let $\{\Phi_x:\bM_m \to \bM_n\}_{x\in \mathsf{X}}$ be linear maps.
  Then
  \begin{equation} \label{eq:cb-criterion}
    \gamma_{d,\cb}(\Phi) = \sup \norm{T \otimes \Phi: \ell_\infty(\mathsf{X},S_1^m) \otimes_{\min} \ell_1(\mathsf{X}, \bM_m) \to \bM_n \otimes_{\min} \bM_n},
  \end{equation}
  where $T\otimes \Phi$ is defined as
  \begin{equation}\label{eq:T}
    T \otimes \Phi = \sum_{x}T_x \otimes \Phi_x
  \end{equation}
  the supremum is over all linear $T$ with
  \begin{equation}
    \norm{T:\ell_\infty(\mathsf{X}, S_1^m)\to \bM_n}_{(d)} \leq 1.
  \end{equation}
}
\begin{theorem}\label{thm:cb-criterion}
  \CBCriterion
\end{theorem}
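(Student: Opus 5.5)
We prove the two inequalities separately. The easy direction, the supremum being at most $\gamma_{d,\cb}(\Phi)$, goes through a factorization. The reverse direction is a Hahn--Banach separation argument in the style of Pisier's $\gamma$-norm duality \cite[Sec.~7.2]{Pisier1998}. Throughout we identify $T\otimes\Phi$ with the tensor $\sum_x (T_x\otimes\Phi_x)$ acting on $\ell_\infty(\mathsf X,S_1^m)\otimes_{\min}\ell_1(\mathsf X,\bM_m)$. By trace duality we also identify $\ell_\infty(\mathsf X,S_1^m)$ with the operator-space dual of $\ell_1(\mathsf X,\bM_m)$.

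\emph{Upper bound.} Let $\Phi_x=\theta\circ\Psi_x$ be a factorization through $L_\infty(\Omega;\bM_d)$. Write $\Psi:\ell_1(\mathsf X,\bM_m)\to L_\infty(\Omega;\bM_d)$ for the sum map $(z_x)\mapsto\sum_x\Psi_x(z_x)$. Its c.b.\ norm is $\sup_x\norm{\Psi_x}_{\cb}$, since the $\ell_1$-direct sum is the operator-space coproduct. Then
\[
T\otimes\Phi=(\id\otimes\theta)\circ(T\otimes\Psi).
\]
The key point is that for any operator space $E$ and any linear $T:E\to\bM_n$, the map
\[
T\otimes\id:E\otimes_{\min}L_\infty(\Omega;\bM_d)\to\bM_n\otimes_{\min}L_\infty(\Omega;\bM_d)
\]
has norm at most $\norm{T}_{(d)}$. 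This holds because $E\otimes_{\min}L_\infty(\Omega;\bM_d)=L_\infty(\Omega;\bM_d(E))$ isometrically, and $T$ then acts pointwise on $\bM_d(E)$. Applying this to the pair $(T,\Psi)$, after moving $\Psi$ to the second leg, gives
\[
\norm{T\otimes\Phi}\le\norm{T}_{(d)}\,\norm{\theta}_{\cb}\,\sup_x\norm{\Psi_x}_{\cb}.
\]
Taking the infimum over factorizations yields the supremum $\le\gamma_{d,\cb}(\Phi)$.

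\emph{Lower bound.} Assume $\gamma_{d,\cb}(\Phi)>1$ after normalization. We must produce a witness $T$ with $\norm{T}_{(d)}\le1$ and $\norm{T\otimes\Phi}>1$. Let $\mathcal C\subset\CB(\ell_1(\mathsf X,\bM_m),\bM_n)$ be the set of families admitting a contractive factorization through some $L_\infty(\Omega;\bM_d)$. This is a finite-dimensional space because $\mathsf X$ is finite. We check three properties of $\mathcal C$:
\begin{itemize}
\item $\mathcal C$ is convex, using direct sums of measure spaces with a weighting in $\theta$;
\item $\mathcal C$ is balanced;
\item $\mathcal C$ is closed, by the ultraproduct argument already invoked in \Cref{prop:cb-unit-ball}.
\end{itemize}
Since $\Phi\notin\mathcal C$, Hahn--Banach gives a linear functional $\Lambda$ on $\CB(\ell_1(\mathsf X,\bM_m),\bM_n)$ with $\Re\Lambda(\Phi)>1\ge\sup_{\Sigma\in\mathcal C}\Re\Lambda(\Sigma)$. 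Because the space is finite-dimensional, $\Lambda$ is given by a tensor
\[
w\in\ell_\infty(\mathsf X,S_1^m)\otimes S_1^n=\sum_x\sum_i t_{x,i}\otimes s_{x,i},
\]
and $\Lambda(\Sigma)=\langle w,\Sigma\rangle$.

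The main obstacle is rewriting this pairing in the form $\norm{T\otimes\Sigma}_{\bM_n\otimes_{\min}\bM_n}$ with $\norm{T}_{(d)}\le1$. Our plan is to factor the functional $w$ as a vector state composed with $T\otimes\id$. Concretely, we write $\langle w,\Sigma\rangle=\langle\xi,(T\otimes\Sigma)(\cdot)\eta\rangle$ for unit vectors $\xi,\eta\in\bC^n\otimes\bC^n$ and some $T:\ell_\infty(\mathsf X,S_1^m)\to\bM_n$. This uses the polar decomposition of $S_1^n$ elements, absorbing the Schatten-1 factors into the two vectors and putting the remainder into $T$. Any such choice of $T$ gives $\norm{T\otimes\Phi}\ge\abs{\Lambda(\Phi)}>1$.

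The normalization $\norm{T}_{(d)}\le1$ must then be extracted from the condition $\sup_{\mathcal C}\Re\Lambda\le1$. We test this condition on the extreme points of $\mathcal C$. These are the factorizations in which $\theta$ is a compression $\bM_d\ni a\mapsto V^*aW$ by contractions $V,W:\bC^n\to\bC^d$, and $\Psi$ is an arbitrary complete contraction into $\bM_d$. Complete contractions into $\bM_d$ are determined by their $d$-th amplification. So the supremum over these extreme points equals the norm of $\id_{\bM_d}\otimes T$ evaluated between the vectors $\xi,\eta$, which yields the normalization. If the direct construction of $T$ from $w$ does not match this computation, we fall back on an alternative: identify the dual unit ball of the gauge $\gamma_{d,\cb}$ with the witnesses $\{T:\norm{T}_{(d)}\le1\}$ via the duality $\bM_d(E)^*\cong S_1^d[E^*]$. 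We would then conclude the lower bound by the bipolar theorem.
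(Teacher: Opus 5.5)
Your upper bound is correct, and it is more direct than the paper's, which gets this direction only by reading its polar computation backwards. The identity $T\otimes\Phi=(\id\otimes\theta)\circ(T\otimes\Psi)$ together with $E\minten L_\infty(\Omega;\bM_d)=L_\infty(\Omega;\bM_d(E))$ gives $\norm{T\otimes\Phi}\le\norm{T}_{(d)}\norm{\theta}_{\cb}\sup_x\norm{\Psi_x}_{\cb}$ immediately.

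The lower bound has a genuine gap, exactly at the step you flag as the main obstacle. The extreme points of $\mathcal C$ are \emph{not} the single compressions $a\mapsto V^*aW$ (with contractions $V,W:\bC^n\to\bC^d$) composed with complete contractions into $\bM_d$. Take $d=1$, one setting, $m=n\ge2$, and the diagonal pinching $\Phi(z)=\sum_iE_{ii}zE_{ii}$. It lies in $\mathcal C$ via $\Psi(z)=(z_{ii})_i\in\ell_\infty^n$ and $\theta(f)=\sum_if(i)E_{ii}$. A single-compression family has the form $z\mapsto\psi(z)\ketbra{v}{w}$ with $\norm{\psi}\le1$ and $\norm{v},\norm{w}\le1$. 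It therefore sends $\1$ to an operator of trace norm at most $1$. This property survives convex combinations and limits, whereas $\norm{\Phi(\1)}_1=n$. So a functional bounded by $1$ on the compressions need not be bounded by $1$ on $\mathcal C$. Combined with your own upper bound, this shows that no construction of $T$ using only that information can give $\norm{T}_{(d)}\le1$.

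The real constraint comes from parents on many blocks, $\theta(f)=\sum_j\theta_j(f_j)$. For these, $\norm{\theta}_{\cb}=\sup_{\norm{a}_2,\norm{b}_2\le1}\sum_j\norm{(a\otimes\1)z_j(b\otimes\1)}_{S_1^{nd}}$ by \Cref{lem:theta-norm}, with the supremum \emph{outside} the sum over $j$.

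For the same reason, your polar-decomposition step is not a free choice. Writing the functional as $R(y)=S(ayb)$ gives, for every invertible pair $(a,b)$ in the $S_2^n$ unit ball, a $T=S^*$ with $\norm{T\otimes\Phi}\ge\abs{\Lambda(\Phi)}$. But $\norm{T}_{(d)}\le1$ holds only for a pair extracted from the polar condition. The paper extracts it with Pisier's noncommutative Pietsch factorization at level $d$, a second Hahn--Banach argument over the compact set of pairs $(a,b)$ (Steps~2--4 in \Cref{app:cb-criterion}). This produces one $(a,b)$ with $\sum_x\norm{(R_x\otimes\id_{\bM_d})(z)}_{S_1^d(\bM_m)}\le\norm{(a\otimes\1)z(b\otimes\1)}_{S_1^{nd}}$ for all $z$, and only then does the duality flip give $\norm{T}_{(d)}\le1$. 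Your fallback, identifying the dual ball of $\gamma_{d,\cb}$ with $\{\norm{T}_{(d)}\le1\}$, is the statement to be proved, not an argument for it.

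If you want to avoid Pietsch altogether, here is a different repair. Let $Y=\ell_1(\mathsf X,\bM_m)$ and $\Omega$ the unit ball of $\CB(Y,\bM_d)$. The map $y\mapsto(v\mapsto v(y))$ into $C(\Omega;\bM_d)$ is a complete contraction on $Y$. It is a complete isometry for the structure $\mathrm{MIN}_d(Y)$ normed by $\sup_v\norm{(\id\otimes v)(y)}$. Injectivity of $\bM_n$ (Arveson--Wittstock), after passing to $L_\infty$ of a finite Radon measure to make $\theta$ normal, then gives $\gamma_{d,\cb}(\Phi)=\norm{\Phi:\mathrm{MIN}_d(Y)\to\bM_n}_{\cb}$. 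Smith's lemma reduces this to the $n$-th amplification, whose test elements in the unit ball of $\bM_n(\mathrm{MIN}_d(Y))$ are exactly the witnesses $T:Y^*\to\bM_n$ with $\norm{T}_{(d)}\le1$, evaluated at the Choi element.
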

See \Cref{app:cb-criterion} for the proof.

In \eqref{eq:T} the two factors carry the operator space structures given by the pairing~\eqref{eq:scalar-pairing}.
Here $S_1^m$ is the operator space dual of $\bM_m$ under the pairing $\ip{f}{y}=\Tr[f^Ty]=\sum_{s,t}f_{st}y_{st}$ of \eqref{eq:scalar-pairing} in \Cref{app:pairings}, and $\ell_\infty(\mathsf X,S_1^m)$ is the dual of $\ell_1(\mathsf X,\bM_m)$ in the same way.
With these structures the map norm dominates a single Choi evaluation (\Cref{lem:choi}):
\begin{equation}\label{eq:cb-criterion-choi}
  \norm{T \otimes \Phi} \mathrel{\ge} \norm{\sum_x (T_x \otimes \Phi_x)(\chi_m)}_{\bM_n \minten \bM_n}.
\end{equation}
The suprema of the two sides over $\norm{T}_{(d)}\le1$ coincide, so \Cref{thm:cb-criterion} holds with either quantity.
Here, $\chi_m = \sum_{s,t=1}^m \ketbra{s}{t}\otimes\ketbra{s}{t} \in \bM_{m^2}$ is the unnormalized maximally entangled state (or the Choi matrix of $\id_{\bM_m}$).
In particular, the Choi quantity is the ordinary operator norm of a single matrix in $\bM_n\minten\bM_n$, a concrete finite-dimensional quantity that the proof evaluates.

\subsection{Criterion via completely positive maps}
\Cref{thm:cb-criterion} characterizes joint $d$-factorizability in terms of completely bounded maps.
Quantum channels, however, factor through \emph{completely positive} maps, and completely positive factorization is genuinely more restrictive than its completely bounded counterpart.
The distinction is categorical: completely bounded maps are the morphisms of operator spaces, whereas completely positive maps are the morphisms of operator systems.
Since every completely positive map is completely bounded, a completely positive factorization is in particular a completely bounded one, so the infimum defining $\gamma_{d,\cp}$ ranges over a strictly smaller family than the one defining $\gamma_{d,\cb}$.
Consequently the completely bounded equivalence \eqref{eq:cb-criterion} survives only as a necessary condition for completely positive factorizability.
We now establish a completely positive analogue of the equivalence \eqref{eq:cb-criterion}.%

Throughout this section $\Phi=(\Phi_x)_{x\in\mathsf{X}}$ denotes a family of completely positive maps $\Phi_x:\bM_m\to\bM_n$.
As in the completely bounded case we identify it with the single map $\Phi:\ell_1(\mathsf{X},\bM_m)\to\bM_n$, $\Phi((z_x)_x)=\sum_x\Phi_x(z_x)$.

We say that the family $\Phi$ is \defn{jointly $d$-factorizable via c.p. maps} if there exist a $\sigma$-finite measure space $(\Omega,\Sigma,\mu)$ and completely positive contractions $\theta:L_\infty(\Omega;\bM_d)\to\bM_n$ and $\Psi_x:\bM_m\to L_\infty(\Omega;\bM_d)$ with $\Phi_x=\theta\circ\Psi_x$ for all $x$.
The family is \defn{$d$-compatible} if such a factorization exists with \emph{unital} maps, i.e. dual objects of quantum channels.

We define the c.p. factorization functional as
\begin{equation}\label{eq:gamma-cp}
  \gamma_{d,\cp}(\Phi):=\inf\{\,\norm{\theta}_{\cb}\cdot\sup_x\norm{\Psi_x}_{\cb}\mid \Phi_x=\theta\circ\Psi_x\ \forall x \},
\end{equation}
where the infimum is over all c.p. factorizations through some $L_\infty(\Omega; \bM_d)$.
We set $\gamma_{d,\cp}(\Phi):=+\infty$ if no such factorization exists.

The completely positive contractive factorizations of $\Phi$ through $L_\infty(\,\cdot\,;\bM_d)$ form a class closed under operator-space ultraproducts.
Consequently, the infimum in \eqref{eq:gamma-cp} is attained.
The proof of \Cref{thm:closure-instruments} applies with the normalization $\theta(\1)=\1$ replaced by the sub-unital conditions $\theta(\1)\le\1$ and $\Psi_x(\1)\le\1$.
These are non-strict order inequalities, which are preserved under ultralimits exactly as the equality is.
Complete positivity is likewise preserved by \Cref{lem:cp-ultra-os}, applied to the preduals as in that proof, so the ultraproduct of contractive completely positive factorizations is again one, and it is identified with a factorization through some $L_\infty(\Omega;\bM_d)$ with $\sigma$-finite $\Omega$ and normal $\theta$, as in the strategy of \Cref{subsec:proof-strategy}.

\newcommand{\CPGamma}{
  A family $\Phi=(\Phi_x)_{x\in\mathsf{X}}$ of contractive c.p. maps $\Phi_x:\bM_m\to\bM_n$ is $d$-factorizable via c.p. maps if and only if $\gamma_{d,\cp}(\Phi) \leq 1$.
  Moreover, a unital $\Phi$ is $d$-compatible if and only if $\gamma_{d,\cp}(\Phi) \leq 1$.
}
\begin{proposition}\label{prop:cp-gamma}
  \CPGamma
\end{proposition}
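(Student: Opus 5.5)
The first equivalence follows the pattern of \Cref{prop:cb-unit-ball}. For ($\Rightarrow$), a c.p.\ contractive factorization $\Phi_x=\theta\circ\Psi_x$ has $\norm{\theta}_{\cb},\sup_x\norm{\Psi_x}_{\cb}\le1$, so $\gamma_{d,\cp}(\Phi)\le1$. For ($\Leftarrow$), I would use the remark preceding the proposition: c.p.\ factorizations with a bounded product of c.b.\ norms are closed under operator-space ultraproducts, so the infimum in \eqref{eq:gamma-cp} is attained. Concretely, I take factorizations with product of norms tending to $\gamma_{d,\cp}$, rescale each so that $\norm{\theta^{(k)}}_{\cb}=1$, and pass to the ultraproduct exactly as in \Cref{thm:closure-instruments}. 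This uses \Cref{lem:cp-ultra-os} for complete positivity and \Cref{lem:pullout-finite} together with \Cref{prop:L1-maharam} to identify the intermediate space with $L_\infty(\Omega;\bM_d)$ for a $\sigma$-finite $\Omega$ and a normal $\theta$. The result is an exact c.p.\ factorization with $\norm{\theta}_{\cb}\le1$ and $\sup_x\norm{\Psi_x}_{\cb}\le\gamma_{d,\cp}\le1$. This is a c.p.\ contractive factorization, using that for c.p.\ maps $\norm{\cdot}_{\cb}=\norm{\cdot(\1)}$.

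For the unital statement, the direction ($\Rightarrow$) is immediate, since a unital c.p.\ factorization has all norms equal to $1$. For ($\Leftarrow$), I start from the attained factorization $\theta(\1)\le\1$, $\Psi_x(\1)\le\1$ and try to repair it into unital maps. The constraint is $\1=\Phi_x(\1)=\theta(\Psi_x(\1))\le\theta(\1)\le\1$, which forces $\theta(\1)=\1$ and $\theta(\1-\Psi_x(\1))=0$ for every $x$.

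Write $D_x:=\1-\Psi_x(\1)\ge0$ in $L_\infty(\Omega;\bM_d)$. To make $\Psi_x$ unital I would define
\[
\Psi'_x(z):=\Psi_x(z)+D_x^{1/2}\,\omega_0(z)\,D_x^{1/2},
\]
where $\omega_0$ is any state on $\bM_m$. This map is c.p.\ and unital, and I then need $\theta(D_x^{1/2}\,c\,D_x^{1/2})=0$ for scalar $c\ge0$, which holds because $\theta(D_x)=0$. More generally I want $\theta(a)=0$ whenever $0\le a\le \lambda D_x$. This follows from positivity of $\theta$, but $D_x^{1/2}\omega_0(z)D_x^{1/2}$ is the scalar $\omega_0(z)$ times $D_x$, so only $\theta(D_x)=0$ is needed. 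Hence $\theta\circ\Psi'_x=\Phi_x$, with $\theta$ unital and normal. Its predual is then a parent instrument, and the preduals of the $\Psi'_x$ are channels, so $\Phi$ is $d$-compatible.

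I expect the main obstacle to be the first step: justifying rigorously that the ultraproduct limit of near-optimal c.p.\ factorizations lands back in some $L_\infty(\Omega;\bM_d)$ with a normal $\theta$. The sub-unital order inequalities must survive the passage through the Maharam embedding and the conditional expectation $\cE$, so one must check that $\cE$ and $j$ are completely positive when tensored with $\id_{\bM_d}$, and that $\theta\circ(\id\otimes j)$ stays sub-unital. The unitalization step itself is elementary once $\theta(\1)=\1$ is extracted.
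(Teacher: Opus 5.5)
Your proposal is correct and matches the paper's proof essentially step for step. The steps are: obtain an attained c.p. factorization from the ultraproduct remark preceding the proposition; rescale to $\norm{\theta}_{\cb}=1$ so that $\theta(\1)\le\1$ and $\Psi_x(\1)\le\1$; use $\1=\theta(\Psi_x(\1))\le\theta(\1)\le\1$ to get $\theta(\1-\Psi_x(\1))=0$; and unitalize by $\Psi_x'(z)=\Psi_x(z)+\omega_0(z)(\1-\Psi_x(\1))$, which is exactly your $D_x^{1/2}\omega_0(z)D_x^{1/2}$. The paper cites the attainment of the infimum rather than re-deriving the ultraproduct limit as you sketch. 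It also treats $\Phi=0$ separately, because there the normalization $\norm{\theta}_{\cb}=1$ in your rescaling step is not available.
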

See \Cref{app:cp-gamma-proof} for the proof.

Before stating the c.p. criterion, we introduce the quantity $\norm{\cdot}_{(d)+}$.

For a linear map $u:\bM_m\to\bM_d$ let $J(u):=(\id_{\bM_m}\otimes u)(\chi_m)$ be its Choi matrix.
The positive elements of $\ell_\infty(\mathsf X,S_1^m)\otimes\bM_d$ are the tuples $(z_x)_{x\in\mathsf X}$ with $z_x\ge0$ in $S_1^m\otimes\bM_d=\bM_m\otimes\bM_d$, and by Choi's theorem \cite{Choi1975}, \cite[Thm.~3.14]{Paulsen_2003}, $u$ is completely positive if and only if $J(u)\ge0$.
Hence
\begin{equation}\label{eq:mincone}
  (\ell_\infty(\mathsf X,S_1^m)\otimes\bM_d)_+ = \{(J(u_x))_{x\in\mathsf X}\mid u_x:\bM_m\to\bM_d\ \text{completely positive}\}.
\end{equation}

For a self-adjoint matrix $A$ let $A_+:=\frac12(\abs{A}+A)$ be its positive part, so that $A=A_+-A_-$ with $A_\pm\ge0$ and $A_+A_-=0$.
Its operator norm is
\begin{equation}\label{eq:plus-part}
  \norm{A_+}=\max\{\lambda_{\max}(A),0\}=\sup\{\Tr[A\rho]\mid \rho\geq0,\ \Tr\rho\leq1\}.
\end{equation}
For a ${*}$-preserving $T=(T_x)_x$, $T_x:S_1^m\to\bM_n$, we define
\begin{equation}\label{eq:dplus}
  \norm{T}_{(d)+}:=\sup\{\norm{(\sum_x (T_x\otimes u_x)(\chi_m))_+} \mid u_x:\bM_m\to\bM_d\ \text{completely positive},\ u_x(\1)\leq\1\}.
\end{equation}
By the Choi correspondence the admissible inputs are exactly the tuples of Choi matrices of completely positive maps $u_x:\bM_m\to\bM_d$ with $\norm{u_x}_{\cb}\leq 1$, equivalently $u_x(\1)\le\1$ \cite[Prop.~3.6]{Paulsen_2003}, which is the form used in \eqref{eq:cpmatrix}.
The quantity $\norm{T}_{(d)+}$ is positively homogeneous and subadditive in $T$ but is not a norm, since it can vanish for $T\neq0$, for example for $T_x=-\id$ when $m=n$.

\newcommand{\CPCriterion}[1]{
Let $\mathsf X$ be finite, let $\Phi_x:\bM_m\to\bM_n$ be completely positive, $x\in \mathsf{X}$, and let $K\ge0$.
Then $\gamma_{d,\cp}(\Phi)\le K$ if and only if
\begin{equation}#1
  \norm{(\sum_x (T_x\otimes\Phi_x)(\chi_m))_+}\ \le\ K\,\norm{T}_{(d)+}
\end{equation}
for every ${*}$-preserving $T=(T_x)_x$, $T_x:S_1^m\to\bM_n$.
}
\begin{theorem}[c.p. criterion]\label{thm:cp-criterion}
  \CPCriterion{\label{eq:cpcriterion}}
\end{theorem}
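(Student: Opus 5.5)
The plan is to prove the two implications separately. \emph{Necessity} is a direct pointwise estimate. \emph{Sufficiency} is a Hahn--Banach separation in the finite-dimensional space of tuples of maps $\bM_m\to\bM_n$.

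\emph{Necessity.} Suppose $\gamma_{d,\cp}(\Phi)\le K$. The infimum in \eqref{eq:gamma-cp} is attained (the ultraproduct remark preceding \Cref{prop:cp-gamma}), so there is a c.p.\ factorization $\Phi_x=\theta\circ\Psi_x$ through $L_\infty(\Omega;\bM_d)$. After rescaling we may take $\norm{\theta}_{\cb}\le1$ and $\norm{\Psi_x}_{\cb}\le K$ for all $x$. Set
\begin{equation*}
Y:=\sum_x (T_x\otimes\Psi_x)(\chi_m)\in \bM_n\otimes L_\infty(\Omega;\bM_d),
\qquad\text{so that}\qquad
\sum_x(T_x\otimes\Phi_x)(\chi_m)=(\id_{\bM_n}\otimes\theta)(Y).
\end{equation*}
For a.e.\ $\omega$, the evaluation $\Psi_x^\omega:=\Psi_x(\cdot)(\omega)$ is a completely positive map $\bM_m\to\bM_d$ with $\Psi_x^\omega(\1)\le K\1$. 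Hence $K^{-1}\Psi_x^\omega$ is admissible in \eqref{eq:dplus}, which gives $Y(\omega)\le K\norm{T}_{(d)+}\,\1$ for a.e.\ $\omega$. This is an operator inequality $Y\le K\norm{T}_{(d)+}\1$ in the von Neumann algebra. Applying the c.p.\ map $\id\otimes\theta$ preserves it and yields
\begin{equation*}
\sum_x(T_x\otimes\Phi_x)(\chi_m)\le K\norm{T}_{(d)+}\,\1\otimes\theta(\1)\le K\norm{T}_{(d)+}\,\1 .
\end{equation*}
By \eqref{eq:plus-part}, this is exactly \eqref{eq:cpcriterion}.

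\emph{Sufficiency.} Let $\cC_K$ be the set of tuples $\Phi'=(\Phi'_x)_x$ of c.p.\ maps with $\gamma_{d,\cp}(\Phi')\le K$. I would first check three properties of $\cC_K$.
\begin{enumerate}[(i)]
\item It contains $0$ and is closed under multiplication by scalars in $[0,1]$.
\item It is convex. Two factorizations are combined through $L_\infty(\Omega_1\sqcup\Omega_2;\bM_d)$, with $\theta=\lambda\theta_1\oplus(1-\lambda)\theta_2$ and $\Psi_x=\Psi_{1,x}\oplus\Psi_{2,x}$. The cb-norm of a direct sum into $L_\infty$ is the maximum of the summands, so the constant stays $\le K$.
\item It is closed, by the attainment argument used for \Cref{prop:cp-gamma}.
\end{enumerate}
Now suppose $\Phi\notin\cC_K$. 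Hahn--Banach gives a real functional $L$ with $L(\Phi)>\sup_{\Phi'\in\cC_K}L(\Phi')$. Every real functional on tuples of maps is of the form $L(\Phi')=\sum_x\Tr[W_xJ(\Phi'_x)]$ with $W_x$ self-adjoint. Using the Choi correspondence, I would rewrite this as
\begin{equation*}
L(\Phi')=\Tr\Bigl[\rho\sum_x(T_x\otimes\Phi'_x)(\chi_m)\Bigr]
\end{equation*}
for a fixed density $\rho$ (for instance the normalized maximally entangled state on $\bC^n\otimes\bC^n$) and suitable $*$-preserving $T_x:S_1^m\to\bM_n$. Then
\begin{equation*}
L(\Phi)\le\Bigl\|\Bigl(\sum_x(T_x\otimes\Phi_x)(\chi_m)\Bigr)_+\Bigr\|\le K\norm{T}_{(d)+}
\end{equation*}
by hypothesis. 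It therefore remains to show
\begin{equation}\label{eq:plan-lower}
\sup_{\Phi'\in\cC_K}L(\Phi')\ \ge\ K\norm{T}_{(d)+},
\end{equation}
which would contradict the separation.

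The main obstacle is \eqref{eq:plan-lower}: the separating functional uses a fixed test state $\rho$, while $\norm{T}_{(d)+}$ is evaluated at the best eigenvector. To reconcile the two, I would test $\cC_K$ on elementary elements $\Phi'_x=K\,\theta\circ u_x$. Here $u_x$ is a near-optimizer in \eqref{eq:dplus}, $\Omega$ is a point, and $\theta=V^*(\cdot)V:\bM_d\to\bM_n$ is a c.p.\ contraction ($V^*V\le\1$). These lie in $\cC_K$. Moreover $\sum_x(T_x\otimes\theta u_x)(\chi_m)=(\id\otimes\theta)(Z)$ with $Z:=\sum_x(T_x\otimes u_x)(\chi_m)$ the optimal matrix. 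The freedom in $V$, together with the fact that $\rho$ may be chosen by the separation step (for example by first separating with functionals of the form $\Phi'\mapsto\max_\rho\Tr[\rho\,\cdot\,]$, i.e.\ working with the gauge of $\cC_K$ and a bipolar argument), should let the top eigenvector of $Z$ be transported onto the support of $\rho$. If a fixed $\rho$ cannot be aligned in this way, the fallback is to replace $T$ by $T$ composed with a compression adapted to $\rho$, which leaves $\norm{T}_{(d)+}$ unchanged. In that case one must check that the class of admissible witnesses is stable under this replacement.
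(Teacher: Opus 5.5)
Your necessity argument is correct and is essentially the paper's; you put the constant $K$ on $\Psi_x$ rather than on $\theta$. Your set-up for sufficiency is also sound: $\cC_K=K\hat\cF_d$ is closed, convex and contains $0$, and you separate from it, which is the paper's bipolar framework.

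The gap is exactly at the step you flag. The inequality you reduce to is $\sup_{\Phi'\in\cC_K}L(\Phi')\ge K\norm{T}_{(d)+}$, with $T$ read off from $L$ through the fixed state $\phi^+_n$. It depends only on $L$, and it is false in general. Take one setting and $T(f)=\Tr(f)\ketbra{e_1}{e_1}$ with $e_1$ a computational basis vector. Then $L(\Phi')=\Tr[\phi^+_n(T\otimes\Phi')(\chi_m)]=\frac1n\bra{e_1}\Phi'(\1)\ket{e_1}$. Using $u(y)=\tau(y)\1_d$ with a state $\tau$ in \eqref{eq:dplus} gives $\norm{T}_{(d)+}=1$. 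But every $\Phi'\in\cC_K$ satisfies $\Phi'(\1)\le K\1$, so $\sup_{\cC_K}L\le K/n$, and you lose a factor $n$. The same $L$ is also represented by the vector $\ket{e_1}\otimes\ket{e_1}=(\ketbra{e_1}{e_1}\otimes\1)\ket{\chi_n}$ and the witness $T/n$, whose normalization is $1/n$. For that representation the inequality holds.

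So the test vector must depend on $L$, and moving it changes both the witness and its normalization. Your claim that a compression adapted to $\rho$ leaves $\norm{T}_{(d)+}$ unchanged is therefore wrong. Choosing the compression so that the new normalization is at most $K^{-1}\sup_{\cC_K}L$ is the whole difficulty. Separating with $\Phi'\mapsto\max_\rho\Tr[\rho\,\cdot\,]$ is not available either, since these maps are not linear. The gauge/bipolar route (\Cref{lem:cp-polar-form}) still hands you a linear $L$ and the same problem.

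What is missing is a Pietsch-type factorization of the separating functional, which is Steps 2--4 of the paper's proof.
\begin{itemize}
\item Normalize $L$ so that $L\le1$ on $\hat\cF_d$ and $L(\Phi)>K$, and represent it by $*$-preserving maps $R_x:\bM_n\to\bM_m$ through the pairing \eqref{eq:dual-pairing}.
\item Test $L$ only on parents $\theta$ defined on $\ell_\infty^N(\bM_d)$. These are encoded by arbitrary finite families of positive matrices $z_j\in\bM_n\otimes\bM_d$, with $\norm{\theta}_{\cb}=\norm{\sum_j\Tr_{\bM_d}z_j}$.
\item Optimizing over the post-processings turns $L\le1$ on $\hat\cF_d$ into the following, for every finite family $(z_j)_j$, with the quantity \eqref{eq:S1plus} on the left:
\begin{equation*}
\sum_{x,j}\norm{(R_x\otimes\id_{\bM_d})(z_j)}_{S_1^d(\bM_m)_+}\le\sup_{\rho\ge0,\,\Tr\rho\le1}\sum_j\Tr[z_j(\rho\otimes\1)]
\end{equation*}
\item A Hahn--Banach separation in the continuous functions on the compact set of subnormalized states, followed by Riesz--Markov--Kakutani (\Cref{lem:HB-RMK}), exchanges supremum and sum. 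This gives a single $\rho$ with $\sum_x\norm{(R_x\otimes\id_{\bM_d})(z)}_{S_1^d(\bM_m)_+}\le\Tr[z(\rho\otimes\1)]$ for all $z\ge0$.
\item Set $a=\rho^{1/2}$, perturbed to be invertible at the cost of a factor $(1-\varepsilon)^{-1}$. With $S_x(y)=R_x(a^{-1}ya^{-1})$ and $T=S^*$, the bound becomes $\norm{T}_{(d)+}\le1$. Moreover $L(\Phi')=\bra{\xi}\sum_x(T_x\otimes\Phi'_x)(\chi_m)\ket{\xi}$ for $\xi=(a\otimes\1)\ket{\chi_n}$, with $\norm{\xi}\le1$.
\end{itemize}
Only with this $L$-dependent vector does your contradiction close: $L(\Phi)\le\norm{(\sum_x(T_x\otimes\Phi_x)(\chi_m))_+}\le K\norm{T}_{(d)+}\le K$.
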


Since $T_x$'s are $*$-preserving maps, their Choi matrices $J(T_x)$ are self-adjoint.
Then the condition \eqref{eq:cpcriterion} yields the following: for every finite tuple of self-adjoint matrices $J(T_x) \in \bM_m \otimes \bM_n$
\begin{equation}\label{eq:cpmatrix}
  \norm{(\sum_x (\Phi_x\otimes\id_{\bM_n})(J(T_x)))_+}
  \ \le\ K \sup_{(u_x)} \norm{(\sum_x (u_x\otimes\id_{\bM_n})(J(T_x)))_+},
\end{equation}
where the supremum is over all c.p. maps $u_x:\bM_m \to \bM_d$ with $\norm{u_x}_{\cb} \leq 1$.

For the proof, see \Cref{app:cp-criterion}.

\begin{remark}[Multi-instruments]\label{rem:criterion-instruments}
The results of this section extend to multi-instruments.
In the Heisenberg picture, an instrument with outcome set $\mathsf A$ and output $\bM_k$ is a unital c.p.\ map $\Phi_x:\ell_\infty^{\mathsf A}(\bM_k)\to\bM_n$.
Here $\ell_\infty^{\mathsf A}(\bM_k)\subseteq\bM_{|\mathsf A|k}$ is the subalgebra of block-diagonal matrices, as in \Cref{sec:d-PEB}.
Factorizations through $L_\infty(\Omega;\bM_d)$, $d$-compatibility and the functionals $\gamma_{d,\cb}$ and $\gamma_{d,\cp}$ are defined for such maps exactly as above.
Let $E:\bM_{|\mathsf A|k}\to\ell_\infty^{\mathsf A}(\bM_k)$ be the conditional expectation onto the diagonal blocks and let $\iota$ be the inclusion.
Both maps are unital, completely positive and completely contractive, and $E\circ\iota=\id$.
The unital c.p.\ map $\tilde\Phi_x:=\Phi_x\circ E$ on $\bM_{|\mathsf A|k}$ is the instrument regarded as a channel with block-diagonal output.
A factorization $\Phi_x=\theta\circ\Psi_x$ gives the factorization $\tilde\Phi_x=\theta\circ(\Psi_x\circ E)$.
Conversely, a factorization $\tilde\Phi_x=\theta\circ\tilde\Psi_x$ gives $\Phi_x=\tilde\Phi_x\circ\iota=\theta\circ(\tilde\Psi_x\circ\iota)$.
Composition with $E$ or $\iota$ preserves complete positivity and unitality and does not increase c.b.\ norms.
Hence $\gamma_{d,\cb}(\tilde\Phi)=\gamma_{d,\cb}(\Phi)$ and $\gamma_{d,\cp}(\tilde\Phi)=\gamma_{d,\cp}(\Phi)$, and $\Phi$ is $d$-compatible if and only if $\tilde\Phi$ is.
\Cref{prop:cb-unit-ball,prop:cp-gamma} and the criteria of this section therefore apply to the multi-instrument $\Phi$ through $\tilde\Phi$, with $m=|\mathsf A|k$.
Moreover, $(T_x\otimes\tilde\Phi_x)(\chi_m)=(T_x\circ E\otimes\tilde\Phi_x)(\chi_m)$, and replacing $T_x$ by $T_x\circ E$ does not increase the normalization of the witness.
It therefore suffices to use witnesses supported on the diagonal blocks.
\end{remark}

\subsection{Incompatible entanglement-breaking channels from MUB} \label{sec:eb-mub}
\subsubsection{The clock--shift MUB pair}\label{sec:clock-shift}
Let $\{\ket{k} \mid 0 \leq k \leq n-1 \}$ be the computational basis and $\omega = e^{2\pi i/n}$ be a root of unity.

Introduce the \emph{clock} and \emph{shift} unitaries
\begin{equation}
  Z \ket{k} = \omega^k \ket{k}, \q\q X \ket{k} = \ket{k+1},
\end{equation}
where addition of indices is always modulo $n$.
They obey the Weyl commutation relation $ZX=\omega\,XZ$.
Let us define
\begin{equation}
  A_n = \ell_\infty^n = \Span \{ Z^k \mid 0 \leq k \leq n-1 \}, \q\q
  B_n = \Span \{ X^k \mid 0 \leq k \leq n-1 \},
\end{equation}
each of which generates a maximal abelian subalgebra of $\bM_n$.
These are two noncommuting copies of $\ell_\infty^n$ whose products span the space of $(n \times n)$-matrices.

Let $\Phi_1, \Phi_2:\bM_n \to \bM_n$ be the pinching maps onto $A_n$ and $B_n$, respectively.
For mutually orthogonal projections $\{P_i\}_i \subset \bM_n$ with $\sum_i P_i = \1$, the associated \emph{pinching map} is given by
\begin{equation}
  \cP(\rho) = \sum_i P_i \rho P_i.
\end{equation}

\newcommand{\MultDomainStatement}{
  There is no joint u.c.p. factorization
  \begin{equation}
    \Phi_x = \theta \circ \Psi_x, \q\q x = 1,2
  \end{equation}
  through $L_\infty(\Omega;\bM_d)$ for any measure space $\Omega$ and $d \lneqq n$, that is, the pair $(\Phi_1,\Phi_2)$ is not $d$-compatible for $d<n$.
}
\begin{proposition}\label{prop:mult-domain}
  \MultDomainStatement
\end{proposition}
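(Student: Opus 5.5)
Suppose, for contradiction, that $\Phi_x=\theta\circ\Psi_x$ for $x=1,2$, with $\theta:L_\infty(\Omega;\bM_d)\to\bM_n$ and $\Psi_x:\bM_n\to L_\infty(\Omega;\bM_d)$ unital completely positive. The plan rests on the multiplicative domain of a u.c.p.\ map. Recall that for a u.c.p.\ map $\phi$, the set $\{a:\phi(a^*a)=\phi(a)^*\phi(a),\ \phi(aa^*)=\phi(a)\phi(a)^*\}$ is a subalgebra on which $\phi$ is multiplicative. Moreover, $\phi(ab)=\phi(a)\phi(b)$ whenever $a$ lies in this set and $b$ is arbitrary (Choi).

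\textbf{Step 1: the unitaries sit in multiplicative domains.} The pinching $\Phi_1$ fixes $Z$, and $Z$ is unitary. By Kadison--Schwarz applied to each map,
\[
\1=\Phi_1(Z^*Z)=\theta(\Psi_1(Z^*Z))\ \ge\ \theta(\Psi_1(Z)^*\Psi_1(Z))\ \ge\ \theta(\Psi_1(Z))^*\theta(\Psi_1(Z))=Z^*Z=\1.
\]
Set $W:=\Psi_1(Z)$, a contraction in $L_\infty(\Omega;\bM_d)$. The chain forces $\theta(\1-W^*W)=0$, and likewise $\theta(\1-WW^*)=0$, together with $\theta(W^*W)=\theta(W)^*\theta(W)$. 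Hence $W$ lies in the multiplicative domain $\cM_\theta$ of $\theta$. The same argument gives $V:=\Psi_2(X)\in\cM_\theta$, with $\theta(V)=X$. One subtlety is that $W$ need not itself be unitary, since $\theta$ may kill the positive element $\1-W^*W$. To handle this, pass to the support: let $p$ be the support projection of $\theta$ in the von Neumann algebra $L_\infty(\Omega;\bM_d)$, which exists because $\theta$ is normal. Then $\theta(\1-p)=0$, and $\theta(x)=\theta(pxp)$ for all $x$. Moreover, $p\,\cM_\theta\,p$ behaves multiplicatively.

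\textbf{Step 2: a faithful $*$-homomorphism onto $\bM_n$.} Take the $C^*$-algebra $\cC\subseteq\cM_\theta$ generated by $W$ and $V$. Restricted to $\cC$, $\theta$ is a $*$-homomorphism. Its image is a $C^*$-algebra containing $Z$ and $X$, which generate all of $\bM_n$. So $\theta|_\cC:\cC\to\bM_n$ is a surjective $*$-homomorphism.

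\textbf{Step 3: dimension count.} The algebra $\cC$ lives inside $L_\infty(\Omega;\bM_d)$, which satisfies the polynomial identities of $\bM_d$, for instance the standard identity $s_{2d}$ of Amitsur--Levitzki. Every $C^*$-subalgebra of $L_\infty(\Omega;\bM_d)$ is $d$-subhomogeneous: all its irreducible representations have dimension at most $d$. A quotient of a subhomogeneous algebra inherits the same polynomial identity. But $\bM_n$ does not satisfy $s_{2d}$ when $d<n$, since the minimal degree of a polynomial identity for $\bM_n$ is $2n$. So $\bM_n$ cannot be a quotient of $\cC$, and $d\ge n$. Equivalently, compose $\theta|_\cC$ with the identity representation of $\bM_n$. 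This yields an irreducible $n$-dimensional representation of $\cC$, which is impossible when every irreducible representation of $\cC$ has dimension at most $d$.

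\textbf{Main obstacle.} The delicate point is Step 1, in the noncommutative, possibly infinite-measure setting. Here $\Psi_x(Z)$ is only unitary modulo the kernel of $\theta$, so one must check carefully that the multiplicative-domain argument applies to $\theta$ alone, using the Kadison--Schwarz equality. I would rely on Choi's theorem that Schwarz equality $\theta(a^*a)=\theta(a)^*\theta(a)$ implies $\theta(ba)=\theta(b)\theta(a)$ for all $b$. That statement needs no unitarity of $W$, so Step 2 goes through directly, without the support projection. The subhomogeneity claim in Step 3 is standard, because the polynomial identity holds pointwise in $\omega$.
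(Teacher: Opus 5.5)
Your proof is correct. Its first two steps are the paper's. The paper uses the same Kadison--Schwarz chain to put $\Psi_1(A_n)$ and $\Psi_2(B_n)$ into the multiplicative domain $m_\theta$, and concludes that $\theta|_{m_\theta}$ is a $*$-homomorphism onto $\bM_n$. Working only with the generators $\Psi_1(Z),\Psi_2(X)$ is a harmless simplification. As you note yourself, the support-projection detour is unnecessary and can be deleted.

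The two routes differ only in the final obstruction.

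\begin{itemize}
\item The paper argues with norms. The map $[b_{ij}]\mapsto[b_{ji}]$ on $\bM_k(C(K;\bM_d))$ has norm at most $d$. This bound survives restriction to $m_\theta$, passage to the quotient $m_\theta/I$ via $\bM_k(m_\theta/I)=\bM_k(m_\theta)/\bM_k(I)$, and the completely isometric $*$-isomorphism with $\bM_n$. That contradicts the fact that the c.b.\ norm of the transpose on $\bM_n$ is $n$.
\item Your Amitsur--Levitzki argument is purely algebraic. The standard polynomial $s_{2d}$ vanishes pointwise on $L_\infty(\Omega;\bM_d)$, hence on every subalgebra and every homomorphic image. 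The staircase substitution $E_{11},E_{12},E_{22},\dots$ shows that $\bM_n$ satisfies no multilinear identity of degree below $2n$.
\end{itemize}

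Your route avoids all quotient-norm bookkeeping. It even works for the non-closed $*$-algebra generated by $W$ and $V$. The cost is importing a theorem from polynomial-identity theory, whereas the paper's version reuses the c.b.\ norm of the transpose from its own operator-space toolkit.

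Your second variant, via irreducible representations of dimension at most $d$, matches the paper's remark that $\cM$ is $d$-subhomogeneous. However, it needs the standard but nontrivial fact that irreducible representations of a $C^*$-subalgebra sit inside restrictions of irreducible representations of the ambient algebra. The polynomial-identity route sidesteps that fact.
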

See \Cref{app:mult-domain-proof} for the proof.

The clock--shift pair is the case $G=\bZ_n$ of a construction available for every finite group, in which the second algebra is the group algebra and is no longer commutative.
It is presented in \Cref{sec:finite-gp} below, where the two channels are individually $1$- and $\floor{\sqrt N}$-factorizable while the pair requires $d\ge N=|G|$.
The analytic criterion of \Cref{thm:cb-criterion} applies to the pair as the case $k=2$ of the bound for $k$ mutually unbiased bases proved next.

\subsubsection{Analytic bounds for $k$ mutually unbiased bases}\label{sec:m-MUB}\label{sec:d-incomp-MUB}

Now suppose $\bM_n$ carries $k$ mutually unbiased bases, each of which spanning maximal abelian matrix algebras $A_n^{(1)},\dots,A_n^{(k)}$.
Further suppose their trace-zero parts
\begin{equation}
P_n^{(x)}=\{z\in A_n^{(x)}\mid \Tr(z)=0\}
\end{equation}
are \emph{mutually orthogonal} in $S_2^n$.
For the clock--shift pair these are the spaces $H_n=\Span\{Z^u\mid1\le u\le n-1\}$ and $K_n=\Span\{X^v\mid1\le v\le n-1\}$, which are orthogonal because $\ip{Z^u}{X^v}_{\HS}=\Tr[Z^{-u}X^v]=\sum_i\omega^{-ui}\bra{i}X^v\ket{i}=0$ for $u,v\neq0$.
The witness passes $S_1^n \to S_2^n$ and then projects onto the $k$ orthogonal $(n-1)$-dimensional subspaces.
Let $\Phi_x:\bM_n\to\bM_n$ be the pinching onto $A_n^{(x)}$,  let $T_x$ be the Hilbert--Schmidt orthogonal projection onto $\bar P_n^{(x)}=\{\bar z\mid z\in P_n^{(x)}\}$, and assemble the witness
\begin{equation}
T:S_1^n\to\ell_1^k(\bM_n),\qquad T(z)=(T_1(z),\dots,T_k(z)).
\end{equation}
Its amplification norms satisfy $\norm{T}_{(d)}\le\sqrt k\sqrt d$ by \Cref{prop:Td-norm} in the appendix.
At $d=1$ this is the estimate $\norm{T}\le\sqrt k$, which follows from the domination of the operator norm by the Hilbert--Schmidt norm, the Cauchy--Schwarz passage from the $\ell_1^k$-sum to the $\ell_2^k$-sum, and the mutual orthogonality of the projections $T_x$, which gives $(\sum_x\norm{T_{x}z}_2^2)^{1/2}\le\norm{z}_2\le\norm{z}_1$.
\Cref{prop:Td-norm} holds for any $k$ mutually unbiased bases, in particular for the conjugate bases $\{\bar f_r^{(x)}\}_r$ used here.
In \eqref{eq:cb-criterion} the same components enter through the map $(w_x)_x\mapsto\sum_xT_x(w_x)$ from $\ell_\infty^k(S_1^n)$ to $\bM_n$, which under the pairing $\ip{f}{y}=\Tr[f^Ty]$ is the Banach adjoint of the map of \Cref{prop:Td-norm} for the bases $\{f_r^{(x)}\}_r$.
It also has $d$-norm at most $\sqrt k\sqrt d$, because $\norm{u^*}_{(d)}=\norm{u}_{(d)}$ for every linear map $u$ by Smith's lemma applied to $\bM_d(W^*)=\CB(W,\bM_d)$, see \cite[Prop.~3.2.2]{Effros2000}.

For the lower estimate it is cleanest to realize the pairing through the left--right action.

Let $\bM_n\otimes\bM_n$ act on $S_2^n$ by
\begin{equation}
\sigma(a\otimes b)\,\xi \;=\; a\,\xi\,b^T .
\label{eq:sigma-action}
\end{equation}
Multiplicativity reads $\sigma(a\otimes b)\sigma(a'\otimes b')\xi=aa'\xi b'^Tb^T=\sigma(aa'\otimes bb')\xi$, and $\sigma(a\otimes b)^*=\sigma(a^*\otimes b^*)$ for the Hilbert--Schmidt inner product.
So $\sigma$ is a unital $^*$-homomorphism, and since $\bM_n\otimes\bM_n\iso\bM_{n^2}$ is simple, $\sigma$ is injective, hence isometric for the minimal tensor norm.
Indeed, the kernel of $\sigma$ is a two-sided ideal of a simple algebra, hence $\{0\}$ or the whole algebra, and it is not the whole algebra because $\sigma$ is unital; an injective $^*$-homomorphism between $C^*$-algebras is isometric.

The action $\xi\mapsto a\xi b$ without the transpose represents $\bM_n\otimes\bM_n^{\op}$ instead.
It computes the operator norm of the partial transpose, which is not the minimal tensor norm on $\bM_n\otimes\bM_n$.

For each $x$ fix the rank-one spectral projections $f_r^{(x)}\in A_n^{(x)}$, so that $\sum_r f_r^{(x)}=\1$, and set
\begin{equation}
z_x \;=\; \sum_r f_r^{(x)}\otimes \bar f_r^{(x)} \;\in\; \bM_n\otimes_{\min}S_1^n,\qquad \norm{z_x}\le1,
\end{equation}
where the norm bound holds because $z_x$ is the flip of the Choi matrix of the unital completely positive pinching $\Phi_x$, whose completely bounded norm is one, and the flip of the two tensor factors is a complete isometry for the minimal tensor norm.
The tuple $(z_1,\dots,z_k)$ with $\max_x\norm{z_x}\le1$ is an admissible test element for $T\otimes\Phi=\sum_xT_x\otimes\Phi_x$ in \eqref{eq:cb-criterion}.
Indeed, the canonical inclusion $\ell_\infty^k(S_1^n\otimes_{\min}\bM_n)\embeds\ell_\infty^k(S_1^n)\otimes_{\min}\ell_1^k(\bM_n)$ is contractive, and the map depends only on the diagonal components $z_{xx}=(\pi_x\otimes\pi_x)(z)$, each coordinate projection $\pi_x$ being a complete contraction, so that $\max_x\norm{z_{xx}}\le\norm{z}$.
Since the pinching $\Phi_x$ fixes $A_n^{(x)}$ and $T_x$ removes the trace on  $\bar A_n^{(x)}$, we have $T_x(\bar f_r^{(x)})=\bar f_r^{(x)}-\Tr(\bar f_r^{(x)}\tfrac{1}{\sqrt n}\1)\,\tfrac{1}{\sqrt n}\1=\bar f_r^{(x)}-\frac1n\1$, and
\begin{equation}\label{eq:proj}
y_x \;=\; (\Phi_x\otimes T_x)(z_x)
\;=\; \sum_r f_r^{(x)}\otimes(\bar f_r^{(x)}-\frac{1}{n}\1).
\end{equation}
Take the maximally mixed unit vector $\hat{\1}=\tfrac{1}{\sqrt n}\1$ in $S_2^n$.
Since $(\bar f_r^{(x)})^T=(f_r^{(x)})^*=f_r^{(x)}$, the second tensor factor of $y_x$ acts on the right by $f_r^{(x)}-\frac1n\1$.
Using $f_r^{(x)}f_r^{(x)}=f_r^{(x)}$ and $\sum_r f_r^{(x)}=\1$,
\begin{equation}
\sigma(y_x)\,\hat{\1}
=\frac{1}{\sqrt n}(\sum_r f_r^{(x)}f_r^{(x)}-\frac{1}{n}\sum_r f_r^{(x)})
=(1-\frac{1}{n})\hat{\1} .
\label{eq:mub-eig}
\end{equation}
Then for each setting $x$, $\hat{\1}$ is an eigenvector, which is the advantage of the left--right action over the vectorized picture, where the transpose bookkeeping has to be arranged basis by basis.
Under the vectorization $\bC^n\otimes\bC^n\iso S_2^n$, $\ket{s}\ket{t}\mapsto\ketbra{s}{t}$, $\hat{\1}$ gets mapped to the maximally entangled state $\ket{\phi_n^+}$, so the two pictures test the same state.
Summing the $k$ contributions and evaluating at $\hat{\1}$,
\begin{equation}
\norm{\sum_x y_x}_{\bM_n \minten \bM_n}
= \norm{\sum_x \sigma(y_x)}_{B(S_2^n)}
= \sup_{\norm{\xi}_2 \leq 1}\norm{\sum_x \sigma(y_x) \xi}_2
\geq \norm{\sigma(\sum_x y_x)\hat{\1}}_2
= k(1-\frac{1}{n}).
\label{eq:mbound}
\end{equation}

The witness $T$ and the test elements $z_x$ do not depend on $d$, and only the normalization of $T$ does.

\newcommand{\DTwoStatement}[1]{Let $\Phi = (\Phi_x:\bM_n\to \bM_n)_{x=1,\dots, k}$ be the pinching maps onto $k$ mutually unbiased bases of $\bM_n$.
Then, for every $d\ge1$,
\begin{equation}#1
\gamma_{d,\cb}(\Phi) \geq \frac{\sqrt{k}}{\sqrt{d}}(1-\frac{1}{n}) .
\end{equation}
In particular, the multi-channel is not $d$-compatible whenever $d<k(1-\frac{1}{n})^2$.
For $d=1$ this is the case when $k>(\frac{n}{n-1})^2$, and for a full family of $k=n+1$ mutually unbiased bases it is the case for every $d\le n-2$.
}
\begin{theorem}
\label{thm:dtwo}\DTwoStatement{\label{eq:gammad}}
\end{theorem}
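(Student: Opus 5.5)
The plan is to apply the completely bounded criterion of \Cref{thm:cb-criterion}, using the witness $T=(T_x)_x$ and the test elements $z_x$ constructed just before the statement. Neither depends on $d$; only the normalization of $T$ does. Set $\tilde T:=T/(\sqrt k\sqrt d)$. By \Cref{prop:Td-norm} and Smith's lemma, the map $(w_x)_x\mapsto\sum_x T_x(w_x)$ on $\ell_\infty^k(S_1^n)$ has $d$-th amplification norm at most $\sqrt k\sqrt d$. Hence $\norm{\tilde T}_{(d)}\le1$, so $\tilde T$ is an admissible witness in \eqref{eq:cb-criterion}, and
\begin{equation*}
\gamma_{d,\cb}(\Phi)\ \ge\ \frac{1}{\sqrt{kd}}\,\norm{T\otimes\Phi:\ell_\infty(\mathsf X,S_1^n)\otimes_{\min}\ell_1(\mathsf X,\bM_n)\to\bM_n\otimes_{\min}\bM_n}.
\end{equation*}

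Next I bound the map norm from below by evaluating it at the tuple $(z_1,\dots,z_k)$. This tuple has norm at most $1$ in the domain, by the contractive inclusion $\ell_\infty^k(S_1^n\otimes_{\min}\bM_n)\hookrightarrow\ell_\infty^k(S_1^n)\otimes_{\min}\ell_1^k(\bM_n)$, once the tensor legs are flipped so that each $z_x$ is the flip of the Choi matrix of the u.c.p.\ pinching. The image is $\sum_x y_x$ with $y_x=(\Phi_x\otimes T_x)(z_x)$ as in \eqref{eq:proj}, up to that flip, which is a complete isometry of $\bM_n\otimes_{\min}\bM_n$. The computation \eqref{eq:mbound} with the faithful left--right representation $\sigma$ and the vector $\hat{\1}$ gives $\norm{\sum_x y_x}\ge k(1-\frac1n)$. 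Combining the two steps,
\begin{equation*}
\gamma_{d,\cb}(\Phi)\ \ge\ \frac{k(1-\frac1n)}{\sqrt{kd}}=\sqrt{\tfrac{k}{d}}\,(1-\tfrac1n).
\end{equation*}
Alternatively, one can use \eqref{eq:cb-criterion-choi} directly with the Choi evaluation, which yields the same number.

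For the consequences, suppose the pinchings were $d$-compatible. Then there is a u.c.p.\ factorization, and since u.c.p.\ maps have c.b.\ norm one, $\gamma_{d,\cb}\le\gamma_{d,\cp}\le1$ by \Cref{prop:cp-gamma}. This contradicts the lower bound whenever $\sqrt{k/d}(1-\frac1n)>1$, that is, whenever $d<k(1-\frac1n)^2$. At $d=1$ the condition is $k>(\frac{n}{n-1})^2$. For $k=n+1$ we have $(n+1)(1-\frac1n)^2=n-1-\frac1n+\frac1{n^2}>n-2$, since $1-\frac1n+\frac1{n^2}>0$. So every integer $d\le n-2$ satisfies the strict inequality.

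The main obstacle is bookkeeping rather than estimation. The order of the tensor legs and the pairing $\ip{f}{y}=\Tr[f^Ty]$ must be consistent, so that the test element $z_x\in\bM_n\otimes S_1^n$ really pairs with $T_x\otimes\Phi_x$ and the output is exactly $y_x$, with conjugates and transposes placed correctly. It must also be checked that the witness's $(d)$-norm is taken for the adjoint map on $\ell_\infty^k(S_1^n)$, not for the original $T$ on $S_1^n$. I would settle this first by recomputing $y_x$ explicitly in coordinates for a single basis and confirming that the transpose in $\sigma$ makes $\hat{\1}$ an eigenvector. I would also check that the Hilbert--Schmidt orthogonality of the trace-zero parts is the only property of MUBs used in \Cref{prop:Td-norm}.
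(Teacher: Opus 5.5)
Your proposal is correct and matches the paper's proof step for step. You normalize the witness by $\norm{T}_{(d)}\le\sqrt k\sqrt d$ from \Cref{prop:Td-norm}, transferred to the map on $\ell_\infty^k(S_1^n)$ via Smith's lemma, then evaluate $T\otimes\Phi$ at the admissible tuple $(z_x)_x$ with the eigenvector computation \eqref{eq:mbound}, and conclude from $\gamma_{d,\cb}(\Phi)>1$. The only difference is cosmetic: you exclude $d$-compatibility via $\gamma_{d,\cb}\le\gamma_{d,\cp}\le1$ and \Cref{prop:cp-gamma}, whereas the paper cites \Cref{prop:cb-unit-ball}, and both are valid.
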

See \Cref{app:dtwo-proof} for the proof.

For $n$ a prime power, there are $k=n+1$ mutually unbiased bases \cite{Ivanovic81,WF89}, so the full family of pinchings is not $d$-compatible for any $d\le n-2$.
For the clock--shift pair, $k=2$, the bound $\sqrt2(1-\frac{1}{n})$ exceeds $1$ only for $n\ge4$, and \Cref{prop:mult-domain} is stronger, excluding every $d<n$.

\subsubsection{Two MUB from a finite group}\label{sec:finite-gp}

The clock/shift pair is the case of a construction available for every finite group, in which the second algebra is no longer necessarily commutative.
The two structural properties we are looking for are that the two subalgebras generate the whole matrix algebra and that trace-zero parts of the two are orthogonal to each other, which is also known as commuting square structure in von Neumann algebras.
These properties allow for the proof of \Cref{prop:mult-domain} to hold for the finite group case.
Similar construction is used in \cite{GJL2018tro}.

Let $G$ be a finite group, $N = |G|$ and $\{\ket{g} \}_{g \in G}$ be the canonical ONB of $\ell_2(G)$, so $\cB(\ell_2(G)) = \bM_N$.
Define $A$ to be the multiplication algebra and $B$ the group algebra
\begin{align}
  A &:= \ell_\infty(G) = \{ m_f \mid f:G\to \bC \}, \q\q m_f \ket{g} := f(g) \ket{g}\\
  B &:= \Span \{ \lambda_g \mid g \in G \}, \q\q \lambda_h \ket{g} = \ket{hg}
\end{align}
where $\lambda$ is the left regular representation.
Note that both $A$ and $B$ are unital $*$-subalgebras of $\bM_N$.
We define $\Phi_A, \Phi_B$ to be the conditional expectation onto $A$ and $B$ respectively.
Denote $\tau = \tfrac{1}{N} \Tr$ as the normalized trace.
For a $C^*$-subalgebra $M \subset \bM_N$, the conditional expectation $E_M:\bM_N \to M$ is defined as the unique $\tau$-preserving unital completely positive map such that
\begin{equation} \label{eq:E_M}
  \tau(E_M(z)y) = \tau(zy), \q z \in \bM_N,\ y \in M.
\end{equation}
It is straightforward to show that the following maps satisfy \eqref{eq:E_M}
\begin{align}
  \Phi_A(z) &= \sum_{g\in G} \ip{g}{zg} \ketbra{g}{g}\\
  \Phi_B(z) &= \sum_{g\in G} \tau(\lambda_g^* z) \lambda_g.
\end{align}
For all $a\in A$ and $b \in B$, $\tau$ satisfy (\Cref{lem:gp-orth})
\begin{equation}
  \tau(a b) = \tau(a) \tau(b).
\end{equation}
This says precisely that $a - \tau(a)\1$ and $b - \tau(b)\1$ are Hilbert-Schmidt orthogonal, meaning that the trace-zero parts $\cric{A} = A \ominus \bC\1$ and $\cric{B} = B \ominus \bC\1$ are orthogonal in $S_2^N$.
For $G = \bZ_n$, this is precisely the dephasing $\Phi_x$'s of the clock-shift section.

Moreover, a straightforward calculation shows that $m_{\delta_g}\lambda_{gh^{-1}} = \ketbra{g}{h}$ for any $g,h \in G$, and it follows that $A$ and $B$ generate the matrix algebra $\bM_N$.

Note that $\Phi_A$ and $\Phi_B$ are $d$-factorizable for certain $d$.
Indeed, $\Phi_A$ is the pinching map onto the $\{\ket{g}\}$ basis, which can be seen as a measure and prepare channel.
This shows that $\Phi_A$ is 1-factorizable.
To show the $d$-factorizability of $\Phi_B$, first recall that regular representation contains all irreducible representations, and we can write $B$ as
\begin{equation}
  B \cong \bigoplus_\pi H_\pi \otimes H_\pi^* \cong \bigoplus_\pi \bM_{n_\pi},
\end{equation}
where $\pi$ index is over all irreducible representations $H_\pi$ with $n_\pi = \dim H_\pi$.
This leads to
\begin{equation}
  \sum_{\pi} n_\pi^2 = N,
\end{equation}
and $n_\pi \leq \sqrt{N}$ for any $\pi$.
Due to the inclusion $\bigoplus_\pi \bM_{n_\pi} \subseteq \ell_\infty(\bM_{\floor{\sqrt{N}}})$, $\Phi_B$ is $d$-factorizable for $d=\floor{\sqrt N}$, and hence for every $d\ge\floor{\sqrt N}$, once each block is padded to make the inclusion unital.

\newcommand{\FiniteGPStatement}[1]{
  Let $G$ be a finite group with $N = |G|$, and let $\Phi_A$, $\Phi_B$ be the conditional expectations above.
  Then,
  \begin{enumerate}[(1)]
    \item $\Phi_A$ is entanglement breaking, i.e. $1$-factorizable;
    \item $\Phi_B$ is $d$-factorizable for every $d\ge\floor{\sqrt N}$;
    \item #1 the pair $(\Phi_A, \Phi_B)$ admits no joint $d$-factorization via u.c.p.\ maps for any $d \leq N-1$.
  \end{enumerate}
}
\begin{theorem}\label{thm:finite-gp}
  \FiniteGPStatement{\label{it:finite-3}}
\end{theorem}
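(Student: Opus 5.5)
Parts (1) and (2) are essentially contained in the discussion preceding the statement, so I will only formalize them. For (1), $\Phi_A(z)=\sum_g\ip{g}{zg}\ketbra{g}{g}$ is measure-and-prepare: measure in $\{\ket g\}$ and prepare $\ket g$. This is a factorization through $L_\infty(G)=L_\infty(G;\bM_1)$ by unital c.p.\ maps. For (2), I will use the Wedderburn decomposition $B\cong\bigoplus_\pi\bM_{n_\pi}$ with $\sum_\pi n_\pi^2=N$, which gives $n_\pi\le\floor{\sqrt N}$. In the Heisenberg picture I write $\Phi_B=\iota_B\circ\Phi_B$, where $\Phi_B:\bM_N\to B$ is the conditional expectation and $\iota_B:B\to\bM_N$ is the inclusion. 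Then I route through $\ell_\infty(\{\pi\};\bM_d)$ for $d\ge\floor{\sqrt N}$. Each block $\bM_{n_\pi}$ embeds unitally in $\bM_d$ by a corner embedding $a\mapsto a\oplus\tfrac{\Tr a}{n_\pi}\1_{d-n_\pi}$ (unital c.p.). The map back compresses to the $n_\pi\times n_\pi$ corner (also unital c.p.). The composite is the identity on $B$, so $\Phi_B=\iota_B\circ(\text{compress})\circ(\text{embed})\circ\Phi_B$ is a u.c.p.\ factorization through $L_\infty(\Omega;\bM_d)$ with $\Omega$ the set of irreps.

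For (3), I will follow the proof of \Cref{prop:mult-domain}, which by the remark before the statement uses only two facts: $A$ and $B$ generate $\bM_N$, and their trace-zero parts are orthogonal (\Cref{lem:gp-orth}). Suppose $\Phi_A=\theta\circ\Psi_A$ and $\Phi_B=\theta\circ\Psi_B$ with u.c.p.\ maps through $\cM=L_\infty(\Omega;\bM_d)$. Since $\Phi_A|_A=\id_A$, the u.c.p.\ map $\theta$ restricted to the operator system $\Psi_A(A)$ acts as the inverse of $\Psi_A|_A$, and similarly for $B$. The key tool is Choi's multiplicative domain theorem. For unitaries $u\in A$ I have $\1=\theta(\Psi_A(u))^*\theta(\Psi_A(u))\le\theta(\Psi_A(u)^*\Psi_A(u))\le\theta(\1)=\1$. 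The Schwarz inequality is therefore an equality, so $\Psi_A(u)$ lies in the multiplicative domain of $\theta$, and the same holds for $B$. Because unitaries span $A$ and $B$, the $C^*$-algebra $\cD$ generated by $\Psi_A(A)\cup\Psi_B(B)$ lies in the multiplicative domain of $\theta$. Hence $\theta|_{\cD}$ is a $*$-homomorphism whose image contains $\theta\Psi_A(A)=A$ and $\theta\Psi_B(B)=B$, and therefore contains the algebra they generate, which is all of $\bM_N$.

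This yields a surjective $*$-homomorphism from a $C^*$-subalgebra $\cD\subseteq L_\infty(\Omega;\bM_d)$ onto $\bM_N$. The step I expect to be the main obstacle is showing this is impossible for $d<N$. My plan is the following. $\cD$ satisfies the polynomial identity of $\bM_d$, namely the standard identity $s_{2d}=0$ (Amitsur--Levitzki), because it holds pointwise in $\bM_d$ and hence in $L_\infty(\Omega;\bM_d)$. Polynomial identities pass to homomorphic images, so $\bM_N$ would satisfy $s_{2d}$. But $\bM_N$ satisfies no identity of degree less than $2N$, which forces $d\ge N$. Alternatively I could use irreducible representations: every irreducible representation of a subalgebra of $L_\infty(\Omega;\bM_d)$ has dimension at most $d$ (the algebra is $d$-subhomogeneous), whereas $\bM_N$ composed with the quotient map is an irreducible representation of dimension $N$. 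Either argument contradicts $d\le N-1$. I would present the PI argument, since it avoids measure-theoretic subtleties with $\Omega$.
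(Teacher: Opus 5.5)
Your proposal is correct. Up to the surjection onto $\bM_N$ it is the paper's argument: the paper also uses the Kadison--Schwarz chain to place $\Psi_A(A)$ and $\Psi_B(B)$ in the multiplicative domain $m_\theta$, and concludes that $\theta$ maps a $C^*$-subalgebra of $L_\infty(\Omega;\bM_d)$ $*$-homomorphically onto $\bM_N$. The paper runs the chain for arbitrary $z\in A$, using $\theta\Psi_A(z^*z)=z^*z$, where you use unitaries and then span.

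Two small points on that step. As the paper does, run your chain for $u^*$ as well, to get $\theta(yy^*)=\theta(y)\theta(y)^*$; two-sided multiplicativity needs this. Also, the orthogonality of the trace-zero parts is never used; only the facts that $\Phi_A,\Phi_B$ fix $A,B$ and that $A,B$ generate $\bM_N$ matter.

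The genuine difference is the final step.

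The paper argues quantitatively, via the transpose:
\begin{itemize}
\item On $L_\infty(\Omega;\bM_d)\cong C(K;\bM_d)$ the entrywise transpose $\id\otimes\vartheta_k$ has norm at most $d$ for every $k$.
\item This bound survives restriction to $m_\theta$, the quotient by $\ker\theta|_{m_\theta}$, and the $*$-isomorphism onto $\bM_N$.
\item That contradicts $\norm{\vartheta}_{\cb}=N$ on $\bM_N$.
\end{itemize}

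Your Amitsur--Levitzki argument is purely algebraic:
\begin{itemize}
\item $s_{2d}$ holds in $\bM_d(L_\infty(\Omega))$ simply because $L_\infty(\Omega)$ is commutative.
\item It passes to any subalgebra and any homomorphic image, with no closures, Gelfand representation or $C^*$-quotient bookkeeping.
\item The staircase $E_{11},E_{12},E_{22},\dots,E_{dd},E_{d,d+1}$ gives $s_{2d}\neq0$ on $\bM_N$ whenever $d\le N-1$.
\end{itemize}

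The cost is importing a deep theorem, and it is genuinely needed for the sharp bound: the trivial dimension-count identity $s_{d^2+1}$ of $\bM_d$ would only exclude $d$ with $d^2+1<2N$. The paper's route needs only the elementary value of the c.b.\ norm of the transpose and stays within the operator-space toolkit used elsewhere. Your irreducible-representation alternative is the qualitative form of the subhomogeneity that the paper's transpose estimate makes quantitative.

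Your treatment of (1) and (2) agrees with the paper, with the unital padding of each block made explicit.
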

See \Cref{app:finite-gp-proof} for the proof.

What is good about this example is that the two channels are individually cheap in terms of resources.
$\Phi_A$ needs a purely classical register and $\Phi_B$ a quantum system of size $\leq \sqrt{N}$, and jointly they need $N$.
So the obstruction is quadratically larger than either constitutent.
Extending the construction to three or more subalgebras with pairwise commuting-square structure is, however, a hard problem.
Pairs of subalgebras whose trace-zero parts are orthogonal, called orthogonal in \cite{Popa1983} and complementary or quasi-orthogonal in \cite{Petz2007}, are well understood, but systems of several pairwise complementary subalgebras are not.
For maximal abelian subalgebras the question is precisely the existence of mutually unbiased bases, whose maximal number is unknown outside prime-power dimensions \cite{Durt2010,BSTW2007}, and for systems mixing commutative and non-commutative subalgebras there are outright non-existence results already in $\bM_4$ and $\bM_{n^2}$ \cite{OPS2007,Weiner2010}.
Hence, it is practically very challenging to find examples of $k$-MUB multi-channels in dimensions that are not prime powers, or of their non-commutative analogues, for $k > 2$.

\section{$d$-preparable Choi state characterization of $d$-compatibility}\label{sec:kPEB}

The main result of this section characterizes the $d$-compatibility of a multi-instrument by $d$-preparable Choi operators.
For a single channel it reduces to the theorem of \cite{Huang2006,Chruscinski2006} that a channel is $d$-partially entanglement breaking exactly when it has a Kraus decomposition by operators of rank at most $d$, together with its factorization form.
Throughout this section $\dim H_A$ and $\dim H_B$ are finite, instruments have a finite outcome set $\mathsf{A}$, and multi-instruments have a finite index set $\mathsf{X}$.

\subsection{$d$-preparable assemblage}\label{sec:d-preparable}

Recall that the \defn{Schmidt rank} $\SR(\ket{\psi})$ of a pure state $\ket{\psi}\in H_A\otimes H_B$ is the number of nonzero coefficients in its Schmidt decomposition, or equivalently the rank of either reduced state.
The \defn{Schmidt number} of a positive operator $\sigma$ on $H_A\otimes H_B$ is
\begin{equation}
  \SN(\sigma) := \min\{\, \max_k \SR(\ket{\psi_k}) \;:\; \sigma=\sum_k p_k\,\pure{\psi_k} \,\},
\end{equation}
the minimum over all decompositions of $\sigma$ into subnormalized pure states.
It does not increase under a channel applied to one side, and it equals $1$ exactly when $\sigma$ is separable.

The Choi states $J(\Phi_{a|x})$ of a multi-instrument form a family of positive operators indexed by an outcome $a$ and a setting $x$.
The following definition asks whether such a family has a common source of bounded Schmidt number.

\begin{definition}[$d$-preparable state assemblage, extending {\cite{Designolle2021,Jones2023}}]\label{def:d-preparable}
Let $H_A$ and $H_B$ be finite-dimensional Hilbert spaces and let $\mathsf A$ and $\mathsf X$ be finite sets.
A \defn{state assemblage} on $H_A\otimes H_B$ is a family $\{\sigma_{a|x}\}_{a\in\mathsf A,\,x\in\mathsf X}$ of positive operators on $H_A\otimes H_B$ such that $\sum_{a\in\mathsf A}\Tr_B\sigma_{a|x}=\rho$ for all $x\in\mathsf X$ and some positive operator $\rho$ on $H_A$ (no signalling condition).
It is \defn{$d$-preparable} if there exist a finite-dimensional Hilbert space $K$, a positive $\sigma \in S_1(H_A\otimes K)$ with $\SN(\sigma)\le d$, and instruments $\cE_x=(\cE_{a|x})_{a\in\mathsf A}:S_1(K)\to\ell_1^{\mathsf A}(S_1(H_B))$, $x\in\mathsf X$, such that
\begin{equation}\label{eq:d-preparable}
  \sigma_{a|x}=(\id\otimes\cE_{a|x})(\sigma)\q\q\forall a\in\mathsf A\q \forall x\in\mathsf X.
\end{equation}
\end{definition}

For $H_B=\bC$ the instruments $\cE_x$ are POVMs $(A_{a|x})_{a\in\mathsf A}$ on $K$, and \eqref{eq:d-preparable} reads $\sigma_{a|x}=\Tr_K[(\1\otimes A_{a|x})\sigma]$.
This is the $n$-preparability of state assemblages of \cite{Designolle2021,Jones2023} with $n=d$.
The definition extends their notion by a quantum output $H_B$ next to the classical outcome $a$.

The definition arises from the scenario of \Cref{fig:scenario}.
There are two parties, a verifier and an untrusted party.
The verifier holds a quantum system $H_A$.
The untrusted party holds a quantum system $K$.
The two systems are in a joint state $\sigma$ on $H_A\otimes K$.
The verifier trusts its own laboratory.
That is, it knows the dimension of every system it holds and it can perform any measurement on these systems.
It knows neither $K$ nor $\sigma$ nor what the untrusted party does.
In each round the verifier chooses a setting $x\in\mathsf X$ and announces it to the untrusted party.
The untrusted party applies an instrument $\cE_x$ to its system $K$.
The instrument produces a classical outcome $a\in\mathsf A$ and an output quantum system $H_B$.
The untrusted party sends the outcome $a$ and the system $H_B$ to the verifier.
At the end of the round the verifier holds the outcome $a$ and the two quantum systems $H_A$ and $H_B$.
The operator $\sigma_{a|x}=(\id\otimes\cE_{a|x})(\sigma)$ of \eqref{eq:d-preparable} describes what the verifier holds.
The verifier can determine the operators $\sigma_{a|x}$ by measuring $H_A\otimes H_B$ over many rounds with the same setting $x$, because it trusts both systems.
So the family $\{\sigma_{a|x}\}$ is known to the verifier, while $K$, $\sigma$ and the instruments $\cE_x$ are not.

\begin{figure}[tb]
\centering
\begin{tikzpicture}[
  >=stealth,
  qwire/.style={thick},
  cwire/.style={thick, double, double distance=1.4pt},
  box/.style={draw, thick, fill=white, rounded corners=2pt, inner sep=3pt},
  lab/.style={font=\small},
]
  \fill[gray!14, rounded corners=4pt] (1.0,-0.05) rectangle (9.55,1.85);
  \node[lab, anchor=north west] at (1.05,1.83) {verifier (trusted laboratory)};
  \draw[dashed, rounded corners=4pt] (1.0,-2.05) rectangle (6.4,-0.2);
  \node[lab, anchor=south west] at (1.05,-2.03) {untrusted party};

  \node[box, minimum width=0.55cm, minimum height=2.65cm] (src) at (0,0.075) {$\sigma$};

  \draw[qwire] (src.east |- 0,1.15) -- (9.4,1.15);
  \node[lab, below] at (1.7,1.15) {$H_A$};
  \draw[qwire] (src.east |- 0,-1.0) -- (3.3,-1.0);
  \node[lab, above] at (1.9,-1.0) {$K$};

  \node[box, minimum width=1.1cm, minimum height=1.1cm] (E) at (3.85,-1.0) {$\cE_x$};

  \fill (3.85,0.45) circle (2.2pt);
  \draw[cwire, ->] (3.85,0.45) -- (E.north);
  \node[lab, right] at (3.95,0.0) {$x$};

  \draw[qwire, rounded corners=3pt] (E.east |- 0,-0.77) -- (6.75,-0.77) |- (9.4,0.72);
  \node[lab, above] at (5.4,-0.77) {$H_B$};
  \draw[cwire, rounded corners=3pt] (E.east |- 0,-1.23) -- (7.2,-1.23) |- (9.4,0.3);
  \node[lab, below] at (5.4,-1.23) {$a$};

  \draw[thick] (9.72,1.3) -- (9.82,1.3) -- (9.82,0.57) -- (9.72,0.57);
  \node[lab, right] at (9.85,0.935) {$\sigma_{a|x}$};
  \node[lab, right] at (9.72,0.3) {$a$};
\end{tikzpicture}
\caption{$d$-preparable assemblage $\{\sigma_{a|x}\}$ with $\SN(\sigma) \leq d$}
\label{fig:scenario}
\end{figure}

The non-signalling condition has the following meaning in this scenario.
At the end of a round with setting $x$, and averaged over the outcome $a$, the system $H_A$ is in the state $\sum_a\Tr_B\sigma_{a|x}$.
This state equals $\Tr_K\sigma$ for every $x$, because $\cE_x$ is trace preserving.
So it does not depend on $x$.
The physical reason is that $H_A$ never leaves the laboratory of the verifier and the untrusted party acts only on $K$.
The independence of $x$ is the non-signalling condition of the definition, with $\rho=\Tr_K\sigma$, and every family produced in this scenario satisfies it.
The output system $H_B$ may have $x$ dependence since the untrusted party chooses the instrument $\cE_x$ after learning $x$.
In a state assemblage of \cite{Designolle2021,Jones2023} there is no output system, and the non-signalling condition reads $\sum_a\sigma_{a|x}=\rho$ for every $x$.

The assemblage $\{\sigma_{a|x}\}$ is $d$-preparable exactly when it can be produced in this scenario from a shared state $\sigma$ of Schmidt number at most $d$, for some system $K$ and some instruments $\cE_x$ on $K$.
The Schmidt number refers to the cut between $H_A$ and $K$.
The system $K$ and the instruments $\cE_x$ are arbitrary, so the bound $d$ constrains the shared state only.
Suppose the verifier finds that its assemblage is not $d$-preparable.
Then no shared state of Schmidt number at most $d$ produces these states, whatever the system $K$ and the instruments on $K$ are.
The verifier has thereby certified that the state shared with the untrusted party had Schmidt number larger than $d$, trusting only its own laboratory.
For $H_B=\bC$ this is the steering scenario of \cite{Designolle2021,Jones2023}, with the verifier as the trusted party.

\subsection{Connection to quantum steering}\label{sec:steering}
The definition grades steering by the Schmidt number of the shared state, as in \cite{Designolle2021}.
For $d=1$ the shared state is separable, $\sigma=\sum_{\lambda\in\mathsf L}\rho_\lambda\otimes\tau_\lambda$ with a finite set $\mathsf L$ and positive operators $\rho_\lambda$ on $H_A$ and $\tau_\lambda$ on $K$, and \eqref{eq:d-preparable} reads
\begin{equation}\label{eq:lhs}
  \sigma_{a|x}=\sum_{\lambda\in\mathsf L}\rho_\lambda\otimes\tau_{a|x,\lambda},\q\tau_{a|x,\lambda}:=\cE_{a|x}(\tau_\lambda)\ge0,\q\sum_{a\in\mathsf A}\Tr\tau_{a|x,\lambda}=\Tr\tau_\lambda .
\end{equation}
Conversely, every family of the form \eqref{eq:lhs} with positive $\rho_\lambda$ and $\tau_{a|x,\lambda}$ and with $c_\lambda:=\sum_a\Tr\tau_{a|x,\lambda}>0$ independent of $x$ is $1$-preparable.
Take $K=\ell_2^{\mathsf L}$, $\sigma=\sum_\lambda c_\lambda\rho_\lambda\otimes\pure{\lambda}$ and $\cE_{a|x}(\eta)=\sum_\lambda\bra\lambda\eta\ket\lambda\,\tau_{a|x,\lambda}/c_\lambda$.
In \eqref{eq:lhs} the system $H_A$ of the verifier is correlated with the outcome $a$ and the returned system $H_B$ only through the classical variable $\lambda$.
For $H_B=\bC$ this is a local hidden state model, so the $1$-preparable state assemblages are the unsteerable ones \cite{Designolle2021,Jones2023}, and we use the same word for a quantum output.
An assemblage that is not $1$-preparable is steerable, and one that is not $(d-1)$-preparable shows genuine $d$-dimensional steering in the terminology of \cite{Designolle2021}.
The sets of $d$-preparable assemblages increase with $d$.
For $H_B=\bC$, \cite{Jones2023} prove that the assemblage steered from a maximally entangled state by a multi-meter is $d$-preparable exactly when the multi-meter is $d$-simulable, which is $d$-compatibility in our terms.
At $d=1$ this is the equivalence between joint measurability and unsteerability \cite{Quintino2014,Uola2014}.
\Cref{thm:choi-assemblage} below is this statement for multi-instruments, and \Cref{prop:steering-equivalent} extends it to arbitrary state assemblages.

We apply the definition to the Choi states $J(\Phi_{a|x})$ of the components of a multi-instrument.
They arise in the scenario with $K=H_A$, $\sigma=\pure{\chi_A}$ and $\cE_x=\Phi_x$.
That is, the verifier prepares the maximally entangled state $\ket{\chi_A}$, keeps one half, sends the other half to the untrusted party, and the untrusted party applies $\Phi_x$ to it.
Any source $\sigma$ of the Choi states satisfies $\Tr_K\sigma=\1_A$, since taking the partial trace over $H_B$ in \eqref{eq:d-preparable} and summing over $a$ gives $\Tr_K\sigma=\sum_{a\in\mathsf A}\Tr_BJ(\Phi_{a|x})$, and $\cE_x$ and $\Phi_x$ are trace preserving.

So $\sigma$ is itself the Choi state of a channel $S_1(H_A)\to S_1(K)$.
For a multi-meter the Choi states form the state assemblage steered from a maximally entangled state.

\newcommand{\ChoiAssemblageStatement}[1]{
Let $\mathsf A$ and $\mathsf X$ be finite sets and let $H_A$ and $H_B$ be finite-dimensional Hilbert spaces.
Let $\Phi=\{\Phi_x\}_{x\in\mathsf X}$ be a multi-instrument with members $\Phi_x:S_1(H_A)\to\ell_1^{\mathsf A}(S_1(H_B))$.
The following are equivalent.
\begin{enumerate}[(1)]
  \item $\Phi$ is $d$-compatible.
  \item $\Phi$ is $d$-compatible with a common parent into $\ell_1^m(S_1^d)$ for some finite $m$.
  \item There are finite sets $\mathsf J$ and $\mathsf R$, operators $A_j:H_A\to\bC^d$, $j\in\mathsf J$, with $\sum_jA_j^\dagger A_j=\1_A$, and operators $B_{a|x,j,r}:\bC^d\to H_B$, $a\in\mathsf A$, $r\in\mathsf R$, with $\sum_{a,r}B^\dagger_{a|x,j,r}B_{a|x,j,r}=\1_d$ for all $x\in\mathsf X$ and $j\in\mathsf J$, such that
  \begin{equation}#1
    \Phi_{a|x}(\rho)=\sum_{j\in\mathsf J}\sum_{r\in\mathsf R}B_{a|x,j,r}A_j\,\rho\,A_j^\dagger B^\dagger_{a|x,j,r}
    \q\text{for all }x\in\mathsf X,\ a\in\mathsf A\text{ and }\rho\in S_1(H_A).
  \end{equation}
  \item The assemblage $(J(\Phi_{a|x}))_{a\in\mathsf A,\,x\in\mathsf X}$ of Choi states of $\Phi$ is $d$-preparable.
\end{enumerate}
}
\begin{theorem}[Choi-state characterization of $d$-compatible multi-instrument]\label{thm:choi-assemblage}
\ChoiAssemblageStatement{\label{eq:common-kraus}}
\end{theorem}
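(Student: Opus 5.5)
I would prove the cycle (2)$\Rightarrow$(1)$\Rightarrow$(3)$\Rightarrow$(2) together with (3)$\Leftrightarrow$(4). The implication (2)$\Rightarrow$(1) is trivial, since $\ell_1^m(S_1^d)=S_1^d\otimes^\wedge\ell_1^m$ is a special case of $S_1^d\otimes^\wedge L_1(\nu)$.

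For (3)$\Rightarrow$(2), set $\theta(\rho)=\sum_j A_j\rho A_j^\dagger\otimes\ketbra{j}{j}$. This is a CPTP map into $\ell_1^{|\mathsf J|}(S_1^d)$ because $\sum_jA_j^\dagger A_j=\1_A$. For the post-processing, set $\Psi_x(\tau\otimes\ketbra{j}{j})=\sum_a(\sum_rB_{a|x,j,r}\tau B_{a|x,j,r}^\dagger)\otimes\ketbra{a}{a}$. It is a CPTP instrument by the normalization of the $B$'s, and $\Psi_x\circ\theta=\Phi_x$ is exactly \eqref{eq:common-kraus}.

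The main step is (1)$\Rightarrow$(3), since the parent $\theta:S_1(H_A)\to S_1^d\otimes^\wedge L_1(\nu)$ may have a continuous outcome space. I would first reduce to a finite one. Because $\dim H_A<\infty$ and $\mathsf X$, $\mathsf A$ are finite, the family $(\Phi_{a|x})$ is a point in a finite-dimensional space. The map $\omega\mapsto(\theta_\omega,(\Psi_{\cdot|x,\omega})_x)$ expresses $\Phi$ as an integral of the tuples $\Psi_{x,\omega}\circ\theta_\omega$, where $\theta_\omega$ is a CP density and $\Psi_{x,\omega}:S_1^d\to\ell_1^{\mathsf A}(S_1(H_B))$ is an instrument. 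Normalizing $\theta_\omega$ by its value on $\1$ is awkward, so instead I would use the Choi picture: $J(\Phi_{a|x})=\int (\id\otimes\Psi_{a|x,\omega})(J(\theta_\omega))\,d\nu(\omega)$. Each $J(\theta_\omega)$ is a positive operator on $H_A\otimes\bC^d$, with Schmidt number trivially at most $d$. Writing the integral as a limit of Riemann-type sums, or better by a conic Carath\'eodory argument as in \Cref{rem:finite-suffices} applied to the finite-dimensional vectors $\big((\id\otimes\Psi_{a|x,\omega})(J(\theta_\omega))\big)_{a,x}$ together with the marginal $\Tr_{d}J(\theta_\omega)$, I would write $\Phi$ as a finite conic combination $\sum_j c_j\Psi_{x,\omega_j}\circ\theta_{\omega_j}$. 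Tracking the marginal keeps $\sum_jc_j\theta_{\omega_j}$ trace preserving. Closedness of the cone, needed because the integral lies in the closed conic hull, follows from compactness of the instrument sets and the uniform trace bound, as in \Cref{thm:closure-instruments}. I expect this reduction to be the main obstacle: measurability of the densities and the closure of the conic hull must be handled carefully.

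Once the parent is finite, I would Kraus-decompose each $c_j\theta_{\omega_j}$ into operators $H_A\to\bC^d$ and each $\Psi_{a|x,\omega_j}$ into operators $\bC^d\to H_B$. Relabelling the Kraus indices into $\mathsf J$ and $\mathsf R$, padding with zeros so that $\mathsf R$ does not depend on $(a,x,j)$, gives \eqref{eq:common-kraus}. The normalizations follow from trace preservation of the parent and of each post-processing. Summing each component's normalization over its Kraus index then gives $\sum_{a,r}B^\dagger_{a|x,j,r}B_{a|x,j,r}=\1_d$.

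For (3)$\Rightarrow$(4), take $K=\bC^d\otimes\ell_2^{\mathsf J}$ and $\sigma=\sum_j(\id\otimes A_j)\pure{\chi_A}(\id\otimes A_j^\dagger)\otimes\ketbra jj$. Each summand has Schmidt rank at most $d$, and $\cE_x$ is read off from the $B$'s as above. For (4)$\Rightarrow$(3), decompose $\sigma=\sum_k\pure{\psi_k}$ with $\SR(\psi_k)\le d$. As noted before the theorem, $\Tr_K\sigma=\1_A$. Writing $\ket{\psi_k}=(\1\otimes V_k)\ket{\chi_A}$ with $V_k:H_A\to K$ of rank at most $d$, I would factor $V_k=W_kA_k$ with $A_k:H_A\to\bC^d$ and an isometry $W_k:\bC^d\to K$. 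The condition $\Tr_K\sigma=\1$ gives $\sum_kA_k^T\bar A_k=\1$, which becomes the required normalization after replacing $A_k$ by its complex conjugate in the transpose convention. Then $B_{a|x,k,r}:=E_{a|x,r}W_k$, with $E_{a|x,r}$ Kraus operators of $\cE_{a|x}$, is normalized because $W_k$ is an isometry. The injectivity of the Choi map recovers \eqref{eq:common-kraus}.
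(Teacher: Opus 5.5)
Your proposal is correct and uses the same ingredients as the paper's proof: a Carath\'eodory reduction to a finite parent, Kraus decompositions, the source $\sum_j J(\theta_j)\otimes\pure{j}$ written in Kraus form, and factoring Schmidt-rank-$d$ vectors through isometries $\bC^d\to K$; the only difference is that you close the cycle with a direct (2)$\Rightarrow$(3) instead of passing through (4). Two small fixes are needed: an isometry $W_k:\bC^d\to K$ exists only if $\dim K\ge d$, which the paper arranges by replacing $K,\sigma,\cE_{a|x}$ with $K\otimes\bC^d$, $\sigma\otimes\pure{0}$, $\cE_{a|x}\circ\Tr_{\bC^d}$; and no complex conjugation is needed, because with $J(\varphi)=\sum_{s,t}E_{st}\otimes\varphi(E_{st})$ the vector $(\1\otimes V_k)\ket{\chi_A}$ is the Choi vector of $\rho\mapsto V_k\rho V_k^\dagger$, and $\Tr_K\sigma=(\sum_kV_k^\dagger V_k)^T=\1_A$ already gives $\sum_kA_k^\dagger A_k=\1_A$ for the unconjugated $A_k$ that the Choi identity requires.
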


Condition (3) is a Kraus form of $d$-compatibility.
All members share the Kraus operators $A_j$ of a common first stage into $\bC^d$, and only the second factors depend on the classical label $x$.

See \Cref{app:choi-assemblage-proof} for the proof.

\subsection{One-way LOCC implementation}\label{sec:one-way-LOCC}
The conditions of \Cref{thm:choi-assemblage} are also equivalent to the following.
There exist finite-dimensional Hilbert spaces $K_1,K_2$, a state $\tau$ on $K_1\otimes K_2$ with $\SN(\tau)\le d$, a POVM $(M_\lambda)_{\lambda\in\mathsf L}$ on $H_A\otimes K_1$ with $\mathsf L$ finite, and instruments $\cD_{x,\lambda}:S_1(K_2)\to\ell_1^{\mathsf A}(S_1(H_B))$ such that
\begin{equation}\label{eq:one-way-locc}
  \Phi_x(\rho)=\sum_{\lambda\in\mathsf L}\cD_{x,\lambda}(\Tr_{AK_1}[(M_\lambda\otimes\1_{K_2})(\rho\otimes\tau)])\q\text{for all }x\in\mathsf X\text{ and }\rho\in S_1(H_A).
\end{equation}
In words, $\Phi$ is implementable with a shared state $\tau$ of Schmidt number at most $d$ and one-way classical communication from the party holding the input $\rho$ to the party holding the label $x$.
\Cref{fig:one-way-locc} shows the implementation.

To see this, we first show that the condition (2) in \Cref{thm:choi-assemblage} gives \eqref{eq:one-way-locc} by teleporting the 
$d$-dimensional output of the parent instrument.
Assume $\Phi_x = \sum_j \Psi_{x,j}\circ \theta_j$ for all $x \in \mathsf{X}$.
Let $K_1=K_2=\bC^d$, let $\tau:=\pure{\phi^+_d}_{K_1K_2}$, and let $\ket{\beta_{st}}:=(U_{st}\otimes\1)\ket{\phi^+_d}$, $s,t=0,\dots,d-1$, be the Bell basis of $\bC^d\otimes K_1$ generated by the Weyl unitaries $U_{st}$, so that for every $\xi\in S_1^d$
\begin{equation}\label{eq:teleportation-identity}
  \Tr_{\bC^d\otimes K_1}[(\pure{\beta_{st}}_{\bC^d \otimes K_1} \otimes \1_{K_2})(\xi\otimes\tau)] = \frac{1}{d^2}\,U_{st}^\dagger\xi U_{st}.
\end{equation}
The sender applies the instrument $(\theta_j)_j$ to the input and measures the register together with $K_1$ in the Bell basis.
This is the POVM $M_{(j,s,t)}:=(\theta_j^*\otimes\id_{K_1})(\pure{\beta_{st}})$ on $H_A\otimes K_1$, where $\theta_j^*$ is the adjoint of $\theta_j$, and $\sum_{j,s,t}M_{(j,s,t)}=\sum_j\theta_j^*(\1_d)\otimes\1_{K_1}=\1_A\otimes\1_{K_1}$.
By \eqref{eq:teleportation-identity},
\begin{align}
  \frac{1}{d^2}\,U_{st}^\dagger \theta_j(\rho) U_{st} &= \Tr_{\bC^d\otimes K_1}[(\pure{\beta_{st}}_{\bC^d \otimes K_1} \otimes \1_{K_2})(\theta_j(\rho) \otimes\tau)]\\
  &= \Tr_{AK_1}[(M_{(j,s,t)}\otimes\1_{K_2})(\rho\otimes\tau)].
\end{align}
Defining the receiver's instruments $\cD_{x,(j,s,t)}(\eta):=\Psi_{x,j}(U_{st}\eta U_{st}^\dagger)$ gives
\begin{equation}
  \frac{1}{d^2}\sum_{j,s,t}\cD_{x,(j,s,t)}(U_{st}^\dagger\theta_j(\rho)U_{st})=\sum_j\Psi_{x,j}(\theta_j(\rho))=\Phi_x(\rho),
\end{equation}
which is \eqref{eq:one-way-locc}.

Conversely, we claim \eqref{eq:one-way-locc} gives condition (4).
Replacing $K_2$, $\tau$ and $\cD_{x,\lambda}$ by $K_2\otimes\bC^d$, $\tau\otimes\pure{0}$ and $\cD_{x,\lambda}\circ\Tr_{\bC^d}$, we may assume $\dim K_2\ge d$.
Let $H_{A'}$ be a copy of $H_A$ and apply \eqref{eq:one-way-locc} to the $H_A$ half of $\pure{\chi_A}$ on $H_{A'}\otimes H_A$.
With
\begin{equation}
  \sigma_\lambda:=\Tr_{AK_1}[(\1_{A'}\otimes M_\lambda\otimes\1_{K_2})(\pure{\chi_A}\otimes\tau)],
\end{equation}
a positive operator on $H_{A'}\otimes K_2$, linearity gives $J(\Phi_{a|x})=\sum_\lambda(\id\otimes\cD_{a|x,\lambda})(\sigma_\lambda)$ for the components $\cD_{a|x,\lambda}$ of $\cD_{x,\lambda}$.
Put $K:=K_2\otimes\ell_2^{\mathsf L}$, $\sigma:=\sum_\lambda\sigma_\lambda\otimes\pure{\lambda}$ and $\cE_{a|x}(\eta):=\sum_\lambda\cD_{a|x,\lambda}((\1\otimes\bra{\lambda})\eta(\1\otimes\ket{\lambda}))$, the components of an instrument $\cE_x$, so that \eqref{eq:d-preparable} holds.

It remains to see that $\SN(\sigma)\le d$.
Write $\tau=\sum_jp_j\pure{\tau_j}$ with $\SR(\tau_j)\le d$ and $\ket{\tau_j}=(\1_{K_1}\otimes V_j)\ket{\hat\tau_j}$ for isometries $V_j:\bC^d\to K_2$ and vectors $\hat\tau_j\in K_1\otimes\bC^d$.
Since $M_\lambda$ acts on $H_A\otimes K_1$ only,
\begin{equation}
  \sigma_\lambda=\sum_jp_j(\1_{A'}\otimes V_j)X_{\lambda,j}(\1_{A'}\otimes V_j)^\dagger,
  \q X_{\lambda,j}:=\Tr_{AK_1}[(\1_{A'}\otimes M_\lambda\otimes\1_{\bC^d})(\pure{\chi_A}\otimes\pure{\hat\tau_j})]\ge0 .
\end{equation}
A positive operator on $H_{A'}\otimes\bC^d$ has Schmidt number at most $d$, and the Schmidt number is preserved by isometries on the second factor and does not increase under positive combinations or under tensoring with $\pure{\lambda}$.
Hence $\SN(\sigma_\lambda)\le d$ for every $\lambda$, and $\SN(\sigma)\le d$.

\begin{figure}[tb]
\centering
\begin{tikzpicture}[
  >=stealth,
  qwire/.style={thick},
  cwire/.style={thick, double, double distance=1.4pt},
  box/.style={draw, thick, fill=white, rounded corners=2pt, inner sep=3pt},
  lab/.style={font=\small},
]
  \draw[dashed, rounded corners=4pt] (0.9,0.5) rectangle (5.0,2.4);
  \node[lab, anchor=north west] at (0.95,2.38) {sender};
  \draw[dashed, rounded corners=4pt] (0.9,-2.15) rectangle (8.7,-0.1);
  \node[lab, anchor=south west] at (0.95,-2.13) {receiver};

  \node[box, minimum width=0.55cm, minimum height=2.5cm] (tau) at (0,0.05) {$\tau$};

  \draw[qwire] (-0.3,1.9) -- (3.3,1.9);
  \node[lab, left] at (-0.3,1.9) {$\rho$};
  \node[lab, below] at (1.9,1.9) {$H_A$};

  \draw[qwire] (tau.east |- 0,1.0) -- (3.3,1.0);
  \node[lab, below] at (1.9,1.0) {$K_1$};
  \draw[qwire] (tau.east |- 0,-0.9) -- (5.25,-0.9);
  \node[lab, above] at (2.6,-0.9) {$K_2$};

  \node[box, minimum width=1.1cm, minimum height=1.5cm] (M) at (3.85,1.45) {$M$};

  \draw[cwire, ->, rounded corners=3pt] (M.south) -- (3.85,0.2) -| (5.55,-0.6);
  \node[lab, right] at (3.95,0.42) {$\lambda$};

  \draw[cwire] (-0.3,-1.5) -- (5.25,-1.5);
  \node[lab, left] at (-0.3,-1.5) {$x$};

  \node[box, minimum width=1.3cm, minimum height=1.2cm] (D) at (5.9,-1.2) {$\cD_{x,\lambda}$};

  \draw[qwire] (D.east |- 0,-0.95) -- (8.4,-0.95);
  \node[lab, right] at (8.75,-0.95) {$H_B$};
  \draw[cwire] (D.east |- 0,-1.45) -- (8.4,-1.45);
  \node[lab, right] at (8.75,-1.45) {$a$};
\end{tikzpicture}

\caption{One-way LOCC implementation \eqref{eq:one-way-locc} of a $d$-compatible multi-instrument.
The sender holds the input $\rho$ and the half $K_1$ of the shared state $\tau$ with $\SN(\tau)\le d$, measures $H_A\otimes K_1$ with the POVM $(M_\lambda)_{\lambda\in\mathsf L}$ and sends the outcome $\lambda$ to the receiver.
The receiver holds the label $x$ and the half $K_2$ and applies the instrument $\cD_{x,\lambda}$ to $K_2$.}
\label{fig:one-way-locc}
\end{figure}

\begin{proposition}[One-to-one mapping between steering and $d$-compatibility problems]\label{prop:steering-equivalent}
A state assemblage is $d$-preparable if and only if the corresponding multi-instrument is $d$-compatible.
\end{proposition}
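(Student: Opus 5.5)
We first make precise the ``corresponding multi-instrument''. Fix an assemblage $\{\sigma_{a|x}\}$ with invertible reduced state $\rho=\sum_a\Tr_B\sigma_{a|x}$; the degenerate case of non-invertible $\rho$ is handled by restricting $H_A$ to $\supp\rho$. Write $\ket{\chi_\rho}=(\rho^{1/2}\otimes\1)\ket{\chi_A}$ for the purification of $\rho$. By the Choi correspondence there is a unique family of completely positive maps $\Phi_{a|x}:S_1(H_A)\to S_1(H_B)$ with
\begin{equation*}
  \sigma_{a|x}=(\id\otimes\Phi_{a|x})(\pure{\chi_\rho}),
\end{equation*}
namely $J(\Phi_{a|x})=(\rho^{-1/2}\otimes\1)\sigma_{a|x}(\rho^{-1/2}\otimes\1)$, with the transpose convention fixed by $\chi_A$. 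The no-signalling condition $\sum_a\Tr_B\sigma_{a|x}=\rho$ translates to $\sum_a\Tr_BJ(\Phi_{a|x})=\1_A$, so each $\Phi_x$ is an instrument. Conversely, every multi-instrument gives an assemblage with marginal $\rho$ by the displayed formula. This is the one-to-one correspondence, and the claim is that $d$-preparability of $\{\sigma_{a|x}\}$ is equivalent to $d$-compatibility of $\{\Phi_x\}$.

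The plan is to reduce to \Cref{thm:choi-assemblage}. Define $\Gamma=\rho^{-1/2}\otimes\1$. The Choi assemblage of $\Phi$ is $\Gamma\sigma_{a|x}\Gamma^\dagger$, and conversely $\sigma_{a|x}=\Gamma^{-1}J(\Phi_{a|x})\Gamma^{-\dagger}$. We will show that $d$-preparability is invariant under a local invertible operation $X\mapsto (L\otimes\1)X(L\otimes\1)^\dagger$ on $H_A$. Given a source $\sigma$ with $\SN(\sigma)\le d$ and instruments $\cE_x$ on $K$, the new source $(L\otimes\1_K)\sigma(L\otimes\1_K)^\dagger$ with the same $\cE_x$ prepares the transformed assemblage, because the conjugation acts on $H_A$ while $\cE_{a|x}$ acts on $K$, so the two commute. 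Moreover, the Schmidt number does not increase under local operators: each pure state $(L\otimes\1)\ket{\psi_k}$ has Schmidt rank at most $\SR(\psi_k)$, and positive combinations are preserved. Applying this with $L=\rho^{-1/2}$ and then with $L=\rho^{1/2}$ gives both directions. Then \Cref{thm:choi-assemblage}, (1)$\Leftrightarrow$(4), finishes the proof.

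An alternative direct route would use item (3) of \Cref{thm:choi-assemblage}. In that route the source would be $\sum_j(\1\otimes A_j)\pure{\chi_\rho}(\1\otimes A_j)^\dagger\otimes\pure{j}$, which has Schmidt number at most $d$, and the instruments would be built from the $B_{a|x,j,r}$. For the converse one would read off the parent from the source, as in the proof of \Cref{thm:choi-assemblage}. The conjugation argument avoids redoing this.

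The main obstacle is bookkeeping, not a deep estimate. One has to check that the transformed source still satisfies the no-signalling normalization required in \Cref{def:d-preparable}: the source for the Choi states must have $\Tr_K\sigma=\1_A$, and conjugation by $\rho^{-1/2}$ sends $\Tr_K\sigma=\rho$ to $\1_A$. One also has to treat the transpose in the Choi convention correctly, replacing $\rho$ by $\rho^T$ if needed, and handle non-invertible $\rho$ by restricting to its support. In that last case the correspondence is one-to-one only up to the action off $\supp\rho$, which is where the invertibility hypothesis from the overview enters.
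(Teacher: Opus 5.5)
Your proposal is correct and follows essentially the same route as the paper. Both use the bijection $J(\Phi_{a|x})=(\rho^{-1/2}\otimes\1)\sigma_{a|x}(\rho^{-1/2}\otimes\1)$ (pseudoinverse and domain $S_1(\supp\rho)$ in the degenerate case), note that conjugation by $\rho^{\pm1/2}\otimes\1$ commutes with $\id\otimes\cE_{a|x}$ and does not increase the Schmidt number, and then invoke \Cref{thm:choi-assemblage}, (1)$\Leftrightarrow$(4); your extra normalization check is unnecessary, because \Cref{def:d-preparable} imposes no condition on $\Tr_K\sigma$ beyond what trace preservation of $\cE_x$ already forces.
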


\begin{proof}
Let $\rho$ be a positive operator on $H_A$ with support projection $P$, let $\rho^{-1/2}$ be the pseudoinverse of $\rho^{1/2}$, so that $\rho^{1/2}\rho^{-1/2}=\rho^{-1/2}\rho^{1/2}=P$, and let $\ket{\chi_\rho}:=(\rho^{1/2}\otimes\1)\ket{\chi_A}$.
The map
\begin{equation}\label{eq:steering-equivalent}
  \Phi=\{\Phi_x\}_{x\in\mathsf X}\longmapsto\{\sigma_{a|x} := (\id\otimes\Phi_{a|x})(\pure{\chi_\rho})\}_{a\in\mathsf A,\,x\in\mathsf X}
\end{equation}
is a bijection from the multi-instruments with members $\Phi_x:S_1(\supp\rho)\to\ell_1^{\mathsf A}(S_1(H_B))$ onto the state assemblages on $H_A\otimes H_B$ with $\sum_a\Tr_B\sigma_{a|x}=\rho$.
Its inverse is
\begin{equation}\label{eq:steering-equivalent-inverse}
  J(\Phi_{a|x})=(\rho^{-1/2}\otimes\1)\,\sigma_{a|x}\,(\rho^{-1/2}\otimes\1)\q\text{for all }a\in\mathsf A\text{ and }x\in\mathsf X.
\end{equation}

Conjugation by $\rho^{\pm1/2}\otimes\1$ acts on the first factor, so it commutes with $\id\otimes\cE$ for every map $\cE$ on the second factor, and it does not increase the Schmidt number with respect to the cut at $H_A$.
In particular $(\id\otimes\Phi_{a|x})(\pure{\chi_\rho})=(\rho^{1/2}\otimes\1)J(\Phi_{a|x})(\rho^{1/2}\otimes\1)$.
These operators are positive, and their partial traces over $H_B$ sum to $\sum_{a\in\mathsf A}\rho^{1/2}\Tr_BJ(\Phi_{a|x})\rho^{1/2}=\rho$, so \eqref{eq:steering-equivalent} maps multi-instruments to state assemblages with reduced state $\rho$.
Conversely, for a state assemblage with reduced state $\rho$ the right-hand side of \eqref{eq:steering-equivalent-inverse} is positive, and its partial traces over $H_B$ sum to $\rho^{-1/2}\rho\rho^{-1/2}=P$, the identity of $\supp\rho$.
By the Choi correspondence it is the family of Choi states of the components of a unique multi-instrument $\Phi$.
The two maps are inverse to each other, because each $\sigma_{a|x}$ satisfies $\sigma_{a|x}=(P\otimes\1)\sigma_{a|x}(P\otimes\1)$: the operators $((\1-P)\otimes\1)\sigma_{a|x}((\1-P)\otimes\1)$ are positive, and their traces sum over $a$ to $\Tr[(\1-P)\rho]=0$, so \eqref{eq:steering-equivalent} is a bijection.
If $\sigma_{a|x}=(\id\otimes\cE_{a|x})(\sigma)$ with $\SN(\sigma)\le d$, then $J(\Phi_{a|x})=(\id\otimes\cE_{a|x})(\sigma')$ for $\sigma':=(\rho^{-1/2}\otimes\1)\sigma(\rho^{-1/2}\otimes\1)$ with $\SN(\sigma')\le d$, so $\Phi$ is $d$-compatible by \Cref{thm:choi-assemblage}.
If $\Phi$ is $d$-compatible, then $J(\Phi_{a|x})=(\id\otimes\cE_{a|x})(\sigma')$ with $\SN(\sigma')\le d$ by \Cref{thm:choi-assemblage}, and $\sigma_{a|x}=(\id\otimes\cE_{a|x})(\sigma)$ for $\sigma:=(\rho^{1/2}\otimes\1)\sigma'(\rho^{1/2}\otimes\1)$ with $\SN(\sigma)\le d$.
The last claim follows from \eqref{eq:steering-equivalent}, because $\pure{\chi_\rho}$ has Schmidt rank $\rank\rho\le\dim H_A$.

\end{proof}

For $H_B=\bC$ the Choi states of a multi-meter are the transposes of its effects, and \eqref{eq:steering-equivalent-inverse} gives the steering-equivalent observables of \cite{Uola_2018}.
The proposition is then the one-to-one mapping between steering and joint measurability problems of \cite{Uola_2018} for $d=1$, and the correspondence between $d$-preparability and $d$-simulability of \cite{Jones2023} for general $d$.
The proposition adds the quantum output $H_B$, and the object that corresponds to an assemblage with quantum output is a multi-instrument.

\subsection{$d$-PEB characterization of $d$-factorizability}\label{sec:d-PEB}

When the family $\Phi$ is a singleton, $d$-compatibility reduces to $d$-factorizability, which is characterized by the $d$-PEB property.
A channel $\Phi$ is \defn{$d$-partially entanglement-breaking} ($d$-PEB) if $\mathrm{SN}(J(\Phi)) \le d$ \cite{Chruscinski2006}.

To extend the $d$-PEB property to instruments $\Phi = \{\Phi_a:S_1(H_A) \to S_1(H_B)\}_{a\in \mathsf{A}}$, we define the \defn{joint Choi matrix}
\begin{equation}\label{eq:joint-choi}
    J(\Phi) = \sum_{a \in \mathsf{A}} \pure{a} \otimes J(\Phi_a).
\end{equation}

This is the Choi operator of $\Phi$ regarded as a channel $S_1(H_A)\to S_1(\ell_2^{\mathsf A}\otimes H_B)$.
Here $\ell_1^{\mathsf A}(S_1(H_B))$ is identified with the operators on $\ell_2^{\mathsf A}\otimes H_B$ that are block diagonal in the register, $(\eta_a)_a\mapsto\sum_a\pure{a}\otimes\eta_a$.
Under this identification an instrument is a channel with block diagonal range.
Its Choi operator is a positive block diagonal operator whose partial trace over the output is $\1_A$, and every positive block diagonal operator with this partial trace is the Choi operator of a unique instrument.
Schmidt ranks and Schmidt numbers of operators on $H_A\otimes\ell_2^{\mathsf A}\otimes H_B$ refer to the cut $H_A:\ell_2^{\mathsf A}\otimes H_B$, and $\Phi$ is $d$-PEB when $\SN(J(\Phi))\le d$, as for channels.

\newcommand{\KPEBStatement}{
Let $\Phi=\{\Phi_a\}_{a\in\mathsf A}$ be an instrument with components $\Phi_a:S_1(H_A)\to S_1(H_B)$.
The following are equivalent.
\begin{enumerate}[(i)]
  \item $\Phi$ is $d$-PEB, that is, $\SN(J(\Phi))\le d$.
  \item $\Phi$ admits a Kraus decomposition
  \begin{equation}
    \Phi_a(\rho)=\sum_{j=1}^mK_{a,j}\rho K_{a,j}^\dagger,\q\rank(K_{a,j})\le d,\q\sum_{a,j}K_{a,j}^\dagger K_{a,j}=\1_A,
  \end{equation}
  for some $m\in\bN$.
  \item $\Phi$ factors through $\ell_1^m(S_1^d)$ via channels for some $m\in\bN$, that is, $\Phi$ is $d$-factorizable in the sense of \Cref{sec:compatible-inst}.
\end{enumerate}
In particular, a channel is $d$-PEB if and only if it has a Kraus decomposition by operators of rank at most $d$, if and only if it factors through $\ell_1^m(S_1^d)$ via channels for some $m$.
For channels the equivalence of (i) and (ii) is due to \cite{Huang2006,Chruscinski2006}, see also \cite{Skowronek2009}.
}
\begin{corollary}[$d$-PEB characterization of $d$-factorizable instruments]\label{cor:k-PEB}
\KPEBStatement
\end{corollary}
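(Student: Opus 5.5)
The plan is to derive \Cref{cor:k-PEB} from \Cref{thm:choi-assemblage} applied to the singleton multi-instrument $\mathsf X=\{x_0\}$, with $\Phi_{x_0}=\Phi$. The one new ingredient is to identify $d$-preparability of the Choi assemblage $(J(\Phi_a))_a$ with the Schmidt-number condition on the joint Choi matrix \eqref{eq:joint-choi}. With that identification, (iii) is item (2) of the theorem. For (iii)$\Rightarrow$(i) I would also note that a factorization through a general $S_1^d\otimes^\wedge L_1(\Omega)$ gives item (1), and the theorem reduces it to a finite $m$.

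\emph{(ii)$\Leftrightarrow$(iii).} Given Kraus operators $K_{a,j}$ with $\rank K_{a,j}\le d$, factor each one as $K_{a,j}=B_{a,j}A_{a,j}$. Here $A_{a,j}:H_A\to\bC^d$ is the compression onto (a $d$-dimensional space containing) the row space, scaled as $A_{a,j}=W_{a,j}^\dagger K_{a,j}^\dagger{}^{\dagger}$-type polar pieces. Concretely, take the polar decomposition $K_{a,j}=V_{a,j}\abs{K_{a,j}}$, let $Q_{a,j}:\bC^d\to H_A$ be an isometry onto a space containing $\supp\abs{K_{a,j}}$, and set $A_{a,j}=Q_{a,j}^\dagger\abs{K_{a,j}}$ and $B_{a,j}=V_{a,j}Q_{a,j}$. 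The parent instrument $\theta_{(a,j)}(\rho)=A_{a,j}\rho A_{a,j}^\dagger$ is trace preserving overall because $\sum A^\dagger A=\sum K^\dagger K=\1$. Each post-processing $\eta\mapsto B_{a,j}\eta B_{a,j}^\dagger$, sent to outcome $a$, is a contraction. It is completed to a channel by adding the Kraus operator $(\1-B^\dagger B)^{1/2}$, attached to an arbitrary output state, and this extra term vanishes on the range of $\theta_{(a,j)}$. Conversely, item (3) of the theorem gives $\Phi_a(\rho)=\sum_{j,r}B_{a|j,r}A_j\rho A_j^\dagger B^\dagger_{a|j,r}$, and each product $B_{a|j,r}A_j$ factors through $\bC^d$, so it has rank at most $d$.

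\emph{(i)$\Leftrightarrow$(ii).} For the block-diagonal $J(\Phi)$, write it as a sum of pure states. The Kraus operators $K_{a,j}$ give the decomposition $J(\Phi)=\sum_{a,j}\pure{a}\otimes\kket{K_{a,j}}\bbra{K_{a,j}}$, in which the Schmidt rank of each term across $H_A:\ell_2^{\mathsf A}\otimes H_B$ equals $\rank K_{a,j}\le d$. This proves (ii)$\Rightarrow$(i). For the converse, take a decomposition $J(\Phi)=\sum_k\pure{\psi_k}$ with $\SR(\psi_k)\le d$. Since $J(\Phi)$ is block diagonal, apply the pinching $\eta\mapsto\sum_a(\1\otimes\pure a\otimes\1)\eta(\1\otimes\pure a\otimes\1)$, which acts on the second factor only. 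This gives $J(\Phi)=\sum_{k,a}\pure{a}\otimes\pure{\psi_{k,a}}$, where $\psi_{k,a}=(\1\otimes\bra a\otimes\1)\psi_k$ and the local projection does not increase Schmidt rank. Unvectorizing $\psi_{k,a}=\kket{K_{a,k}}$ produces Kraus operators of rank at most $d$. The normalization $\sum K^\dagger K=\1$ follows from $\Tr_{\text{out}}J(\Phi)=\1_A$, with the usual transpose bookkeeping.

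The main subtlety is keeping the Schmidt cut straight when the pinching removes off-diagonal blocks in $a$, which is why I would justify it as a local operation on the output side. The remaining steps are routine. The final sentence about channels is the case $|\mathsf A|=1$.
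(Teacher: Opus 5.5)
Your proof is correct, but it is organized differently from the paper's.

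The paper runs the single cycle (i)$\Rightarrow$(iii)$\Rightarrow$(ii)$\Rightarrow$(i). Its only new input is a one-line verification of condition (4) of \Cref{thm:choi-assemblage}, with $K=\ell_2^{\mathsf A}\otimes H_B$, $\sigma=J(\Phi)$, and the register-reading instrument $\cE_a(\eta)=(\bra a\otimes\1_B)\eta(\ket a\otimes\1_B)$. This is the ``identification'' your first paragraph announces, but your argument never actually uses it.

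Your (iii)$\Rightarrow$(ii) and (ii)$\Rightarrow$(i) coincide with the paper's. The two steps that differ are these:
\begin{itemize}
\item You prove (i)$\Rightarrow$(ii) directly, by pinching a Schmidt-rank-$\le d$ decomposition of $J(\Phi)$ on the register and unvectorizing. This is the theorem's step (4)$\Rightarrow$(3), specialized to that $\cE$ and written out.
\item You prove (ii)$\Rightarrow$(iii) directly, by factoring each low-rank Kraus operator through $\bC^d$ via its polar decomposition. The paper never needs this step, since it reaches (iii) from (i).
\end{itemize}

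What each route buys: yours is more elementary and produces the parent instrument and the Kraus operators explicitly. It invokes \Cref{thm:choi-assemblage} only to pass between $\ell_1^m(S_1^d)$ and a general $S_1^d\otimes^\wedge L_1(\Omega)$, and for (iii)$\Rightarrow$(ii). The paper's route is shorter and shows the corollary as a literal specialization of the steering characterization.

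One small repair is needed. If $d>\dim H_A$ there is no isometry $\bC^d\to H_A$, so take $Q$ to be a partial isometry with $QQ^\dagger$ the support projection of $\abs{K}$. Then $B^\dagger B=Q^\dagger Q\le\1$ still acts as the identity on the range of $A$, and the completion term still vanishes on the range of $\theta_{(a,j)}$.
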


See \Cref{app:k-PEB-proof} for the proof.

For $d=1$, \Cref{cor:k-PEB} recovers the entanglement breaking channels, which are exactly the channels that can be simulated by measurement and state preparation and form the free objects of the resource theory of quantum memories \cite{Rosset2018}.
For general $d$, condition (iii) says that $\Phi$ can be simulated with a classical memory together with a $d$-dimensional quantum memory, so the Schmidt number of the Choi state measures the quantum memory that a channel requires~\cite{NamikiTokunaga2012}.
This quantity is what \cite{Engineer2025} call the dimensionality of $\Phi$, and lower bounds on it have recently been certified in prepare-and-measure experiments.

\section{Generalized robustness of $d$-incompatibility}
\label{sec:robustness}\label{subsec:robustness}

The $d$-compatible families form the free set of a resource theory where the quantum memory is restricted to a $d$-dimensional register while the classical memory is unrestricted.
By \eqref{eq:one-way-locc} they are also the families that a sender holding the input and a receiver holding the setting can implement by one-way LOCC assisted by a shared state of Schmidt number at most $d$, so the register dimension is at the same time an entanglement dimension.
Pre- and post-processing by fixed quantum channels and classical relabeling of the family preserve the free set, because each of these operations can be absorbed into a common implementation without enlarging the quantum register.
This places the factorization framework in the resource-theoretic approach to incompatibility \cite{Ji2024,DFK19,Haapasalo15}.

For a closed convex free set the robustness measures are among the standard quantifiers of a resource theory \cite{ChitambarGour2019,Regula18}.
A robustness measure records the least amount of noise, drawn from a chosen noise set, that brings an object into the free set.
When the noise set consists of all objects of the theory the measure is called the \emph{generalized robustness}, and it then depends on nothing but the free set.
For incompatibility it was introduced in \cite{Haapasalo15}.

The role of generalized robustness in resource theory is twofold: one from its definition and one through convex duality.
By definition, it measures the least amount of arbitrary physical noise that makes an object free.
In other words, it quantifies the resource by the noise that an object tolerates.
Moreover, it is a resource monotone, i.e. it vanishes exactly on the free objects, does not increase under free operations, here the pre- and post-processing by quantum channels and the classical relabeling and randomization of the settings described above, and is convex.
So it quantifies the resource in a way that complies with the order structure induced by the free operations.
Without monotonicity, the noise tolerance would be a property of how the object happens to be presented, not of its resource, and it could not be used to compare objects.
With it, the tolerance orders the objects, and that order is respected by free conversions.
This is the reason to prefer it over noise-specific measures, whose orderings disagree across noise sets \cite{DFK19}.

Convex duality adds a second description.
One plus the generalized robustness is the largest value that the object attains on a positive witness that is at most one on every free object.
Operationally, it is the largest ratio between the success probability of the object and that of the best free object in a suitable discrimination task \cite{TR19,TRBLA19}.
For measurement incompatibility the discrimination task behind the generalized robustness is state discrimination \cite{SSC19,UKSYG19}, and for programmable instruments it is the nontransient guessing game of Ji and Chitambar \cite[Sec.~V]{Ji2024}.

Throughout this section $\mathsf X$ is finite and we work in the Heisenberg picture, as in \Cref{sec:criterion}.
A family of channels is then a family of u.c.p.\ maps $\Phi_x:\bM_m\to\bM_n$, $x\in\mathsf X$, and it is $d$-compatible when $\Phi_x=\theta\circ\Psi_x$ for u.c.p.\ maps $\Psi_x:\bM_m\to L_\infty(\Omega;\bM_d)$ and $\theta:L_\infty(\Omega;\bM_d)\to\bM_n$, which is the dual form of the definition in \Cref{sec:compatible-inst}.
We write $\mathcal F_d$ for the set of $d$-compatible families with fixed $m$ and $n$. 
We formulate the results for multi-channels.
The multi-instrument case can be treated by replacing the output by the corresponding block-diagonal classical--quantum output, as in \Cref{rem:criterion-instruments}.

\begin{definition}[Generalized $d$-compatibility robustness]
Let $\Phi=(\Phi_x)_{x\in\mathsf X}$ be a family of u.c.p. maps $\Phi_x:\bM_m\to\bM_n$.
We define its generalized $d$-compatibility robustness as
\begin{align}
\mathcal R_d(\Phi):=\inf\{s\geq0 \mid \left(\frac{\Phi_x+s\Xi_x}{1+s}\right)_{x\in\mathsf X} \text{is $d$-compatible for some u.c.p. family }(\Xi_x)_{x\in\mathsf X}\}.
\label{eq:d-compatibility-robustness}
\end{align}
\end{definition}

The measure $\mathcal R_d$ is the generalized robustness for the free set $\mathcal F_d$, so its monotone properties follow from those of $\mathcal F_d$ by the general theory \cite{Regula18,TR19}.
The set $\mathcal F_d$ is closed by \Cref{thm:closure-instruments}, and since the u.c.p.\ families form a compact set, $\mathcal F_d$ is even compact.
It is convex, because a classical random choice between free implementations can be stored in the classical register without enlarging the quantum register.
It is preserved by the free operations, as noted at the beginning of the section.
Hence $\mathcal R_d(\Phi)=0$ if and only if $\Phi$ is $d$-compatible, $\mathcal R_d$ is convex, and it does not increase under free operations.
In addition, $\mathcal R_d$ is nonincreasing in $d$, since $\mathcal F_d\subseteq\mathcal F_{d+1}$.
For $d=1$ and multi-meters, $\mathcal R_1$ is the generalized incompatibility robustness of measurements \cite{DesignolleFarkasKaniewski2019,BuscemiChitambarZhou2020}.

The rest of the section develops both roles for $\mathcal R_d$.
In \Cref{subsec:robustness-bounds} the factorization functional $\gamma_{d,\cp}$ of \Cref{sec:criterion} bounds $\mathcal R_d$ from above, so the criterion of that section also measures the failure of $d$-compatibility.
\Cref{subsec:robustness-duality} identifies the discrimination task for $d$-incompatibility, which we call the memory-bounded nontransient preparation game.
\Cref{subsec:random-noise} applies the witness bound to the pinching channels onto mutually unbiased bases and to the clock and shift pair.

\subsection{Upper bound from the factorization functional}\label{subsec:robustness-bounds}

The factorization functional $\gamma_{d,\cp}$ gives a natural way to quantify the robustness of $d$-incompatibility.
For a u.c.p.\ family $\Phi$, it satisfies $\gamma_{d,\cp}(\Phi)\geq1$, with equality exactly when $\Phi\in\mathcal F_d$ by \Cref{prop:cp-gamma}.
The inequality holds because $1=\norm{\Phi_x}_{\cb}\le\norm{\theta}_{\cb}\norm{\Psi_x}_{\cb}$ for every factorization $\Phi_x=\theta\circ\Psi_x$.
So $\gamma_{d,\cp}-1$ vanishes exactly on the free set, which makes it a criterion.
For it to be a measure of the resource it must also respect the free operations, for the reason given at the beginning of the section.
The next lemma shows that $\gamma_{d,\cp}$ cannot increase under pre- or post-processing by fixed quantum channels, also when the pre-processing is an instrument whose outcome is forwarded to the post-processing, or under classical relabeling and randomization of the settings.
This has to be checked separately from the monotonicity of $\mathcal R_d$, because the two quantities are related only by the bound \eqref{eq:robustness-gamma-bound} below and not by a functional relation.

\begin{lemma}[Monotonicity of $\gamma_{d,\cp}$]\label{lem:gamma-monotone}
Let $\Lambda:\bM_n\to\bM_{n'}$ and $\Lambda'_x:\bM_{m'}\to\bM_m$, $x\in\mathsf X$, be u.c.p.\ maps, representing quantum channels in the Heisenberg picture.
Then, for every family $\Phi=(\Phi_x:\bM_m \to \bM_n)_x$,
\begin{equation}\label{eq:gamma-monotone}
  \gamma_{d,\cp}(\Lambda\circ\Phi)\leq\gamma_{d,\cp}(\Phi),
  \qquad
  \gamma_{d,\cp}(\Phi\circ\Lambda')\leq\gamma_{d,\cp}(\Phi),
\end{equation}
where $\Lambda\circ\Phi=(\Lambda\circ\Phi_x)_x$ and $\Phi\circ\Lambda'=(\Phi_x\circ\Lambda'_{x})_x$.
More generally, let $\Lambda_z:\bM_n\to\bM_{n'}$, $z\in\mathsf Z$, be c.p.\ maps with $\sum_{z\in\mathsf Z}\Lambda_z$ unital, representing a quantum instrument with finite outcome set $\mathsf Z$ in the Heisenberg picture, and let $\Lambda'_{x,z}:\bM_{m'}\to\bM_m$, $x\in\mathsf X$, $z\in\mathsf Z$, be u.c.p.\ maps.
Then
\begin{equation}\label{eq:gamma-monotone-side}
  \gamma_{d,\cp}(\Phi')\leq\gamma_{d,\cp}(\Phi),
  \qquad
  \Phi'_x:=\sum_{z\in\mathsf Z}\Lambda_z\circ\Phi_x\circ\Lambda'_{x,z}.
\end{equation}
Classical relabeling $\Phi'_y=\Phi_{f(y)}$ by an arbitrary function $f$ and randomization $\Phi'_y=\sum_xp(x|y)\Phi_x$ of the settings also cannot increase $\gamma_{d,\cp}$.
\end{lemma}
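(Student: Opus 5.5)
The plan is to take an arbitrary completely positive factorization $\Phi_x=\theta\circ\Psi_x$ through $L_\infty(\Omega;\bM_d)$ and build, for each transformed family, a completely positive factorization through $L_\infty(\Omega';\bM_d)$ with the same $d$. Its cost $\norm{\theta'}_{\cb}\sup_x\norm{\Psi'_x}_{\cb}$ should be at most $\norm{\theta}_{\cb}\sup_x\norm{\Psi_x}_{\cb}$. Taking the infimum then gives each inequality; if $\gamma_{d,\cp}(\Phi)=+\infty$ there is nothing to prove. The basic tool is that a completely positive map is completely bounded with $\norm{\phi}_{\cb}=\norm{\phi(\1)}$ \cite[Prop.~3.6]{Paulsen_2003}. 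Hence unital c.p.\ maps are complete contractions, and composing with them does not increase c.b.\ norms.

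For the simple pre- and post-processing of \eqref{eq:gamma-monotone}, I set $\theta':=\Lambda\circ\theta$ and $\Psi'_x:=\Psi_x\circ\Lambda'_x$. Both are completely positive, $\theta'$ is normal, and $\norm{\theta'}_{\cb}\le\norm{\Lambda}_{\cb}\norm{\theta}_{\cb}=\norm{\theta}_{\cb}$, with the same bound for each $\Psi'_x$. Relabeling $\Phi'_y=\Phi_{f(y)}$ is handled by keeping $\theta$ and putting $\Psi'_y:=\Psi_{f(y)}$, so the supremum only shrinks.

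For the instrument case \eqref{eq:gamma-monotone-side}, I take the new outcome space $\Omega':=\Omega\times\mathsf Z$ with the product of $\mu$ and counting measure. Then $L_\infty(\Omega';\bM_d)=\ell_\infty^{\mathsf Z}(L_\infty(\Omega;\bM_d))$, and the quantum register is still $\bM_d$. Define
\[
\Psi'_x(y):=\bigl(\Psi_x(\Lambda'_{x,z}(y))\bigr)_{z\in\mathsf Z},\qquad \theta'\bigl((w_z)_z\bigr):=\sum_{z}\Lambda_z(\theta(w_z)).
\]
Both maps are completely positive, $\theta'$ is normal, and $\theta'\circ\Psi'_x=\Phi'_x$. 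For the norms, $\norm{\Psi'_x}_{\cb}=\norm{\Psi'_x(\1)}=\max_z\norm{\Psi_x(\1)}\le\norm{\Psi_x}_{\cb}$. Likewise
\[
\norm{\theta'}_{\cb}=\norm{\textstyle\sum_z\Lambda_z(\theta(\1))}\le\norm{\theta(\1)}\,\norm{\textstyle\sum_z\Lambda_z(\1)}=\norm{\theta}_{\cb},
\]
where the inequality uses $\theta(\1)\le\norm{\theta(\1)}\1$ and positivity of each $\Lambda_z$.

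Randomization $\Phi'_y=\sum_xp(x|y)\Phi_x$ follows the same pattern with $\Omega'=\Omega\times\mathsf X$. I set $\Psi'_y(a):=(p(x|y)\Psi_x(a))_x$ and $\theta'((w_x)_x):=\sum_x\theta(w_x)$. Then $\norm{\Psi'_y}_{\cb}\le\max_x\norm{\Psi_x}_{\cb}$, since the weights satisfy $p(x|y)\le1$.

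The delicate step is the bound on $\norm{\theta'}_{\cb}$, because $\theta'$ sums over the $\mathsf Z$ (or $\mathsf X$) components and a careless triangle inequality would lose a factor $|\mathsf Z|$. In the randomization case the plain sum also gives $\theta'(\1)=|\mathsf X|\,\theta(\1)$, which is too large. I would first try to fix this by moving the weights into $\theta'$: set $\Psi'_y(a):=(\Psi_x(a))_x$, which is independent of the weights, and $\theta'((w_x)_x):=\sum_xp(x|y)\theta(w_x)$. But this makes $\theta'$ depend on $y$, which is not allowed. The $y$-independent repair is to normalize with the classical register instead. Replace $\Omega\times\mathsf X$ by $\Omega\times\mathsf Y\times\mathsf X$ with a uniform classical variable realized on the $\theta$ side, so that $\theta'$ averages and $\Psi'_y$ applies deterministic post-processings through a unital c.p.\ map $L_\infty(\Omega';\bM_d)\to L_\infty(\Omega;\bM_d)$ built from conditional weights. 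This keeps both c.b.\ norms under control through the formula $\norm{\phi}_{\cb}=\norm{\phi(\1)}$.

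If this proves awkward, the fallback is to reduce randomization to the instrument case \eqref{eq:gamma-monotone-side}. Realize $p(x|y)$ as classical post-processing of an extra register carried by $\Psi'_y$, so that the sum appears only inside a unital $\Lambda'$ stage, and invoke the first part.
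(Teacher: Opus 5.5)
Your arguments for the pre- and post-processing inequalities \eqref{eq:gamma-monotone}, for relabeling, and for the instrument case \eqref{eq:gamma-monotone-side} are correct and coincide with the paper's proof. This includes the register $\Omega\times\mathsf Z$, the maps $\theta'$ and $\Psi'_x$, and the bound $\sum_z\Lambda_z(\theta(\1))\le\norm{\theta(\1)}\1$. The gap is the randomization clause, which your proposal does not establish. Your first construction loses the factor $|\mathsf X|$ in $\norm{\theta'}_{\cb}$, as you observe, and the $y$-dependent parent is inadmissible. The repair on $\Omega\times\mathsf Y\times\mathsf X$ is described only in words: you do not define $\theta'$ and $\Psi'_y$ or check the norm bounds. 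The fallback cannot work as stated. Both \eqref{eq:gamma-monotone} and \eqref{eq:gamma-monotone-side} build the new member with label $x$ only from the old member $\Phi_x$ with the same label, since $\Lambda'_{x,z}$ acts inside $\Phi_x$. So they never mix different members, and mixing members is exactly what randomization does.

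The missing observation is that randomization needs no new classical register, because the averaging can be done on the $\Psi$ side. Keep the parent $\theta$ and set $\Psi'_y:=\sum_x p(x|y)\Psi_x$. This is a completely positive map into the same $L_\infty(\Omega;\bM_d)$. It satisfies $\theta\circ\Psi'_y=\sum_xp(x|y)\Phi_x=\Phi'_y$ by linearity of $\theta$. Since $\sum_xp(x|y)=1$, it also satisfies $\norm{\Psi'_y}_{\cb}=\norm{\sum_xp(x|y)\Psi_x(\1)}\le\sup_x\norm{\Psi_x}_{\cb}$. This is the paper's argument, and relabeling is its deterministic case. Your pre-sampling idea can also be made rigorous. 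Store a random function $g\in\mathsf X^{\mathsf Y}$ with law $q(g)=\prod_yp(g(y)|y)$ in the register $\Omega\times\mathsf X^{\mathsf Y}$, and take $\theta'((w_g)_g)=\sum_gq(g)\theta(w_g)$ and $\Psi'_y(a)=(\Psi_{g(y)}(a))_g$. Then $\theta'(\1)=\theta(\1)$, and the marginal of $q$ at $y$ is $p(\cdot|y)$. This works, but it is an unnecessary detour.
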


Since the lemma is formulated in the Heisenberg picture, the algebraic and the physical order are reversed.
The composition $\Lambda\circ\Phi$ describes the channel $\Lambda_*$ acting on the input state before $\Phi_*$, and $\Phi\circ\Lambda'_x$ describes $\Lambda'_{x,*}$ acting on the output.
The post-processing may depend on the setting, because it acts after the setting has arrived, whereas the pre-processing $\Lambda$ is the same for all settings.
In \eqref{eq:gamma-monotone-side} the pre-processing is the instrument $(\Lambda_{z,*})_z$, and its outcome $z$ is sent past $\Phi$ as classical side information (\Cref{fig:side-channel}), so the post-processing $\Lambda'_{x,z,*}$ may depend on $z$ as well as on $x$.
The two inequalities in \eqref{eq:gamma-monotone} are the case $|\mathsf Z|=1$.

\begin{figure}[tb]
\centering
\begin{tikzpicture}[
  >=stealth,
  qwire/.style={thick},
  cwire/.style={thick, double, double distance=1.4pt},
  box/.style={draw, thick, fill=white, rounded corners=2pt, inner sep=3pt},
  lab/.style={font=\small},
]
  \draw[dashed, rounded corners=4pt] (0.65,-1.35) rectangle (7.0,2.2);
  \node[lab, anchor=north west] at (0.7,2.17) {$\Phi'_x$};

  \node[box, minimum width=1.0cm, minimum height=1.5cm] (L) at (1.35,0) {$\Lambda$};
  \node[box, minimum width=1.2cm, minimum height=1.5cm] (F) at (3.55,0) {$\Phi_x$};
  \node[box, minimum width=1.4cm, minimum height=2.2cm] (P) at (6.05,0) {$\Lambda'_{x,z}$};

  \draw[qwire] (-0.05,0) -- (L.west);
  \node[lab, above] at (0.3,0) {$\bC^{n'}$};

  \draw[qwire] (L.east) -- (F.west);
  \node[lab, below] at (2.37,0) {$\bC^{n}$};

  \draw[cwire, ->, rounded corners=3pt] (L.east |- 0,0.45) -- (2.3,0.45) |- (6.05,1.75) -- (P.north);
  \node[lab, above] at (4.2,1.75) {$z$};

  \draw[qwire] (F.east |- 0,0.3) -- (P.west |- 0,0.3);
  \node[lab, above] at (4.75,0.3) {$\bC^m$};
  \draw[cwire] (F.east |- 0,-0.3) -- (P.west |- 0,-0.3);
  \node[lab, below] at (4.75,-0.3) {$a$};

  \fill (3.55,-1.8) circle (2.2pt);
  \node[lab, left] at (3.45,-1.8) {$x$};
  \draw[cwire, ->] (3.55,-1.8) -- (F.south);
  \draw[cwire, ->] (3.55,-1.8) -| (P.south);

  \draw[qwire] (P.east |- 0,0.3) -- (7.9,0.3);
  \node[lab, right] at (7.9,0.3) {$\bC^{m'}$};
  \draw[cwire] (P.east |- 0,-0.3) -- (7.9,-0.3);
  \node[lab, right] at (7.9,-0.3) {$b$};
\end{tikzpicture}
\caption{The operation $\Phi\mapsto\Phi'$ of \eqref{eq:gamma-monotone-side} in the Schr\"odinger picture ($\Lambda$ refers to the predual $\Lambda_*$ and so do $\Phi_x$ and $\Lambda_{x,z}'$).
The pre-processing instrument $(\Lambda_z)_z$ acts before the setting $x$ is known, and its outcome $z$ bypasses $\Phi_x$ as classical side information, so the post-processing $\Lambda'_{x,z}$ may depend on both $x$ and $z$.
Single lines carry quantum systems and double lines carry classical variables. $a$, $b$ correspond to classical outputs when $\Lambda, \Lambda_{x,z}', \Phi_x$ are considered as instruments.}
\label{fig:side-channel}
\end{figure}

\begin{proof}
Take any c.p.\ factorization $\Phi_x=\theta\circ\Psi_x$ through $L_\infty(\Omega;\bM_d)$.
The family $\Lambda\circ\Phi$ factors through the same algebra with parent map $\Lambda\circ\theta$, and
\begin{equation}
  \norm{\Lambda\circ\theta}_{\cb}
  \leq\norm{\Lambda}_{\cb}\norm{\theta}_{\cb}
  =\norm{\theta}_{\cb},
\end{equation}
since a u.c.p.\ map is a complete contraction.
Similarly, $\Phi_x\circ\Lambda'_{x}=\theta\circ(\Psi_x\circ\Lambda'_{x})$, with $\norm{\Psi_x\circ\Lambda'_{x}}_{\cb}\leq\norm{\Psi_x}_{\cb}$.
In either case the cost $\norm{\theta}_{\cb}\sup_x\norm{\Psi_x}_{\cb}$ does not increase, and taking the infimum in \eqref{eq:gamma-cp} proves \eqref{eq:gamma-monotone}.
For a randomized setting $\Phi'_y=\sum_x p(x|y)\Phi_x$, use the same parent and the maps $\Psi'_y=\sum_x p(x|y)\Psi_x$.
These satisfy $\sup_y\norm{\Psi'_y}_{\cb}\leq\sup_x\norm{\Psi_x}_{\cb}$.
The deterministic case $p(x|y)=\delta_{x,f(y)}$ is the relabeling $\Phi'_y=\Phi_{f(y)}$ by a function $f$, which need not be bijective, so settings may be dropped or repeated, and a bijective $f$ only renames the settings.

For \eqref{eq:gamma-monotone-side}, put $\Omega':=\Omega\times\mathsf Z$, so that $L_\infty(\Omega';\bM_d)=\ell_\infty^{\mathsf Z}(L_\infty(\Omega;\bM_d))$, and define the c.p.\ maps
\begin{equation}
  \theta'((f_z)_z):=\sum_{z\in\mathsf Z}\Lambda_z(\theta(f_z)),
  \qquad
  \Psi'_x(y):=(\Psi_x(\Lambda'_{x,z}(y)))_{z\in\mathsf Z}.
\end{equation}
Then
\begin{equation}
\theta'\circ\Psi'_x=\sum_z\Lambda_z\circ\Phi_x\circ\Lambda'_{x,z}=\Phi'_x.
\end{equation}
Since the norm of a c.p. map is attained at $\1$ \cite[Prop.~3.6]{Paulsen_2003}, $\theta(\1)\le\norm{\theta}_{\cb}\1$ and $\sum_z\Lambda_z$ is unital, we get
\begin{equation}
  \norm{\theta'}_{\cb}=\norm{\sum_z\Lambda_z(\theta(\1))}\le\norm{\theta}_{\cb},
\end{equation}
and
\begin{equation}
\norm{\Psi'_x}_{\cb}=\max_z\norm{\Psi_x(\Lambda'_{x,z}(\1))}=\norm{\Psi_x}_{\cb}.
\end{equation}
So the cost does not increase.
\end{proof}

If $\theta$ and $\Psi_x$ are u.c.p., then so are $\theta'$ and $\Psi'_x$, so the operation $\Phi\mapsto\Phi'$ of \eqref{eq:gamma-monotone-side} maps $\mathcal F_d$ into $\mathcal F_d$.
It is linear and maps u.c.p.\ families to u.c.p.\ families, so it maps a feasible pair $(s,\Xi)$ in \eqref{eq:d-compatibility-robustness} for $\Phi$ to the feasible pair $(s,\Xi')$ for $\Phi'$, and $\mathcal R_d(\Phi')\le\mathcal R_d(\Phi)$.
The side information has to be classical.
If a quantum system $\bC^k$ is forwarded instead, the same construction only gives a factorization through $L_\infty(\Omega;\bM_{kd})$, and $\mathcal R_d$ can increase: routing the input around $\Phi$ through $\bC^n$ turns any free family into copies of $\id_n$, and since every member of a $d$-compatible family is $d$-factorizable, their robustness is at least $\mathcal R_d(\id_n)=n/d-1>0$ by \eqref{eq:identity-robustness} when $d<n$.

Thus $\gamma_{d,\cp}-1$ is a faithful resource monotone on u.c.p.\ families, although it may take the value $+\infty$.
It is finite exactly when every member of the family is $d$-partially entanglement breaking ($d$-PEB).
Indeed, a c.p.\ factorization $\Phi_x=\theta\circ\Psi_x$ through $L_\infty(\Omega;\bM_d)$ writes the Choi operator of $(\Phi_x)_*$ as
\begin{equation}
  J((\Phi_x)_*) = \int_\Omega(\id\otimes(\Psi_{x,\omega})_*)(\sigma_\omega)\,d\mu(\omega),
\end{equation}
where $(\sigma_\omega)_\omega$ is the Choi operator of $\theta_*$, an integrable family of positive operators on $\bC^n\otimes\bC^d$, so the Choi operator of $(\Phi_x)_*$ has Schmidt number at most $d$ and $\Phi_x$ is $d$-PEB.
Conversely, each $d$-PEB member has a u.c.p.\ factorization by \Cref{cor:k-PEB}, and finitely many such factorizations combine into a common one on the disjoint union $\Omega=\bigsqcup_x\Omega_x$ of their classical registers, with parent $\theta(f):=\sum_x\theta_x(f|_{\Omega_x})$, of c.b.\ norm $|\mathsf X|$, and with $\Psi_x$ placed in the $x$-th summand, so that $\gamma_{d,\cp}(\Phi)\le|\mathsf X|$ whenever every member is $d$-PEB.
For $d=1$ this says that $\gamma_{1,\cp}$ is finite exactly for families of entanglement-breaking channels \cite{HSR2003}.
Completing a c.p.\ factorization to a unital one supplies a noise family that makes $\Phi$ $d$-compatible, giving the following bound on $\mathcal R_d$.

\begin{theorem}\label{prop:gamma-bounds-Rg}
The c.p. factorization functional provides a direct upper bound on $\mathcal R_d(\Phi)$:
\begin{align}
\mathcal R_d(\Phi)\leq\gamma_{d,\cp}(\Phi)-1.
\label{eq:robustness-gamma-bound}
\end{align}
\end{theorem}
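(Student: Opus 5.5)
The plan is to start from an optimal c.p.\ factorization and complete both of its legs to unital maps. The unital pieces that have to be added then assemble into a noise family $\Xi$ with weight $s=\gamma_{d,\cp}(\Phi)-1$.

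If $\gamma_{d,\cp}(\Phi)=+\infty$ there is nothing to prove. If $\gamma_{d,\cp}(\Phi)=1$, then $\Phi\in\mathcal F_d$ by \Cref{prop:cp-gamma}, so $\mathcal R_d(\Phi)=0$. We may therefore assume $1<\gamma:=\gamma_{d,\cp}(\Phi)<\infty$. The infimum in \eqref{eq:gamma-cp} is attained, as noted before \Cref{prop:cp-gamma}. Alternatively, one can work with a factorization of cost $\gamma+\varepsilon$ and let $\varepsilon\to0$, using that $\mathcal R_d$ is an infimum.

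\textbf{Step 1 (normalization).} Take a c.p.\ factorization $\Phi_x=\theta\circ\Psi_x$ through $L_\infty(\Omega;\bM_d)$ with $\theta$ normal and $\norm{\theta}_{\cb}\sup_x\norm{\Psi_x}_{\cb}=\gamma$. Rescale as in the proof of \Cref{prop:cb-unit-ball}, so that $\norm{\theta}_{\cb}\le 1$ and $\norm{\Psi_x}_{\cb}\le\gamma$. Since the norm of a c.p.\ map is attained at $\1$, this gives $Q:=\theta(\1)\le\1$ and $P_x:=\Psi_x(\1)\le\gamma\1$.

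\textbf{Step 2 (completion to unital maps).} Fix a state $\omega$ on $\bM_m$ and a normal state $\varphi$ on $L_\infty(\Omega;\bM_d)$, for instance a density against a probability density on $\Omega$ tensored with a state on $\bM_d$. Define
\begin{equation*}
  \Psi''_x(y):=\tfrac1\gamma\Psi_x(y)+\omega(y)\,(\1-\tfrac1\gamma P_x),\qquad
  \theta''(f):=\theta(f)+\varphi(f)\,(\1-Q).
\end{equation*}
Each added term has the form (state)$\,\cdot\,$(positive element), so it is c.p. Hence $\Psi''_x$ and $\theta''$ are u.c.p., and $\theta''$ is normal. The family $(\theta''\circ\Psi''_x)_x$ is therefore $d$-compatible.

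\textbf{Step 3 (extracting the noise).} Expanding the composition gives $\theta''\circ\Psi''_x=\tfrac1\gamma\Phi_x+R_x$, where
\begin{equation*}
  R_x(y)=\omega(y)\,\theta(\1-\tfrac1\gamma P_x)+\varphi(\Psi''_x(y))\,(\1-Q).
\end{equation*}
Both summands are c.p., since each is a positive functional times a positive operator. Because $\theta''\circ\Psi''_x$ and $\Phi_x$ are unital, $R_x(\1)=(1-\tfrac1\gamma)\1$. Hence $\Xi_x:=R_x/(1-\tfrac1\gamma)$ is u.c.p. With $s=\gamma-1$ we get
\begin{equation*}
  \frac{\Phi_x+s\,\Xi_x}{1+s}=\theta''\circ\Psi''_x\in\mathcal F_d,
\end{equation*}
so $\mathcal R_d(\Phi)\le\gamma-1$.

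The only delicate points are in Steps 2 and 3. One must check that the remainder $R_x$ is genuinely completely positive and not merely a difference of c.p.\ maps; this holds because it splits into the two manifestly c.p.\ correction terms above. One must also keep $\theta''$ normal, which is why the correcting state $\varphi$ is chosen normal.
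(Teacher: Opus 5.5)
Your proof is correct and follows essentially the same route as the paper: rescale a (near-)optimal c.p.\ factorization, complete $\theta$ and $\Psi_x/\gamma$ to u.c.p.\ maps by adding (state)$\cdot$(defect) terms, and take the noise to be $(\gamma\,\theta''\circ\Psi''_x-\Phi_x)/(\gamma-1)$. The paper works with an arbitrary $K>\gamma_{d,\cp}(\Phi)$ rather than invoking attainment, and simply asserts that $K\Lambda_x-\Phi_x$ is completely positive; your explicit splitting of $R_x$ into two c.p.\ terms supplies that justification.
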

\begin{proof}
We may assume that $\gamma_{d,\cp}(\Phi)$ is finite, since otherwise the bound is immediate.
Indeed, fix $K>\gamma_{d,\cp}(\Phi)$.
By \cref{eq:gamma-cp}, there exists a c.p. factorization $\Phi_x=\theta\circ\Psi_x$ through $L_\infty(\Omega;\bM_d)$ which, after rescaling, satisfies $\norm{\theta}_{\cb}\leq1$ and $\sup_x\norm{\Psi_x}_{\cb}\leq K$.
Since the maps are c.p., this implies $\theta(\1)\leq\1$ and $\Psi_x(\1)\leq K\1$.
Choose normal states $\omega$ on $L_\infty(\Omega;\bM_d)$ and $\tau_x$ on $\bM_m$, and define
\begin{align}
\widetilde\theta(y)&:=\theta(y)+\omega(y)\bigl(\1-\theta(\1)\bigr),\qquad
\widetilde\Psi_x(z):=K^{-1}\Psi_x(z)+\tau_x(z)\bigl(\1-K^{-1}\Psi_x(\1)\bigr).
\label{eq:ucp-completions}
\end{align}
Both $\widetilde\theta$ and $\widetilde\Psi_x$ are u.c.p., and hence $\Lambda_x:=\widetilde\theta\circ\widetilde\Psi_x$ defines a $d$-compatible family.
Moreover, $K\Lambda_x-\Phi_x$ is completely positive and
\begin{align}
(K\Lambda_x-\Phi_x)(\1)=(K-1)\1.
\end{align}
Therefore, for $K>1$,
\begin{align}
\Xi_x:=\frac{K\Lambda_x-\Phi_x}{K-1}
\end{align}
is u.c.p. and satisfies
\begin{align}
\frac{\Phi_x+(K-1)\Xi_x}{K}=\Lambda_x.
\end{align}
It follows that $\mathcal R_d(\Phi)\leq K-1$.
\end{proof}

Note that The gap in \eqref{eq:robustness-gamma-bound} can be infinite.
For the identity channel on $\bM_n$ and $d<n$, $\gamma_{d,\cp}(\id_n)=+\infty$ because $\SN(J(\id_n))=n$, while $\mathcal R_d(\id_n)=n/d-1$ by \eqref{eq:identity-robustness} below.

The bound \eqref{eq:robustness-gamma-bound} can also be derived from the witness form of $\mathcal R_d$ in \Cref{prop:robustness-dual}, the analogue of \cite[Thm.~4]{Ji2024},
\begin{equation}
  1 + \cR_d(\Phi) = \sup \{ \ip{W}{\Phi} \mid W \geq 0, \ip{W}{\sigma} \leq 1 \text{ for all $d$-compatible $\sigma$ } \}
\end{equation}
is the factorization functional \eqref{eq:gamma-cp} except the constraint extends to all self-adjoint $W$ instead of positive $W$, and $\ip{W}{\sigma}\le1$ is imposed for all $\sigma$ in the larger set $\hat\cF_d$ of \eqref{eq:C-set}.

We will make this connection more apparent in the next section.

\subsection{Memory-bounded nontransient preparation game}\label{subsec:robustness-duality}

We first describe a game where the quantum memory of the player is bounded, and state the duality between the generalized robustness and the advantage in this game.
We then let the referee penalize the player.
This signed version of the game is dual to the factorization functional, and comparing the two dualities identifies the gap in \eqref{eq:robustness-gamma-bound}.

\begin{figure}[tb]
\centering
\begin{tikzpicture}[
  >=stealth,
  qwire/.style={thick},
  cwire/.style={thick, double, double distance=1.4pt},
  box/.style={draw, thick, fill=white, rounded corners=2pt, inner sep=3pt},
  lab/.style={font=\small},
]
  \fill[gray!14, rounded corners=4pt] (1.0,0.15) rectangle (11.9,2.8);
  \node[lab, anchor=north west] at (1.05,2.78) {referee};
  \draw[dashed, rounded corners=4pt] (1.0,-2.3) rectangle (8.55,-0.05);
  \node[lab, anchor=south west] at (1.05,-2.28) {player};

  \fill[gray!25] (5.45,-2.05) rectangle (6.25,-0.3);
  \node[lab, anchor=south] at (5.85,-2.02) {delay};

  \node[box, minimum width=0.55cm, minimum height=1.5cm] (src) at (1.75,1.25) {$\phi^+$};

  \node[box, minimum width=1.3cm, minimum height=1.2cm] (I) at (3.95,-1.05) {\small instrument};
  \node[box, minimum width=1.3cm, minimum height=1.2cm] (C) at (7.6,-1.05) {\small channel};
  \node[box, minimum width=1.1cm, minimum height=1.5cm] (M) at (9.95,1.35) {};
  \draw[thick] ([xshift=-0.32cm,yshift=-0.22cm]M.center) arc (180:0:0.32cm);
  \draw[thick,->] ([yshift=-0.22cm]M.center) -- ([xshift=0.26cm,yshift=0.3cm]M.center);
  \node[lab, above] at (M.north) {$\{M_x,\1-M_x\}$};

  \draw[qwire] (src.east |- 0,1.65) -- (M.west |- 0,1.65);
  \node[lab, above] at (3.4,1.65) {$\bC^n$};

  \draw[qwire, rounded corners=3pt] (src.east |- 0,0.85) -- (2.55,0.85) |- (I.west |- 0,-0.9);
  \node[lab, anchor=east] at (2.42,-0.5) {$\bC^n$};

  \draw[qwire] (I.east |- 0,-0.8) -- (C.west |- 0,-0.8);
  \node[lab, above] at (5.1,-0.8) {$\bC^d$};
  \draw[cwire] (I.east |- 0,-1.3) -- (C.west |- 0,-1.3);
  \node[lab, below] at (5.1,-1.3) {$j$};

  \fill (7.6,0.75) circle (2.2pt);
  \draw[cwire, ->] (7.6,0.75) -- (C.north);
  \node[lab, right] at (7.7,0.25) {$x$};

  \draw[qwire, rounded corners=3pt] (C.east |- 0,-1.05) -- (8.85,-1.05) |- (M.west |- 0,1.05);
  \node[lab, right] at (8.95,0.3) {$\bC^m$};

  \draw[cwire] (M.east |- 0,1.35) -- (11.7,1.35);
  \node[lab, above right] at (10.6,1.4) {win if $M_x$};
\end{tikzpicture}
\caption{Memory-bounded nontransient preparation game.}
\label{fig:delayed-game}
\end{figure}

\paragraph{Memory-bounded nontransient preparation game.}
The referee prepares a maximally entangled state $\phi^+_n$ on $\bC^n\otimes\bC^n$, keeps one half, and sends the other half to the player with $d$-dimensional quantum memory.
After a delay referee announces a uniformly chosen setting $x$, and the player returns an $m$-dimensional quantum system.
The referee jointly measures the returned system and the retained half with a two-outcome measurement $\{M_x, \1 - M_x\}$.
The player wins if the outcome corresponds to $M_x$.
The player may apply any instrument to the received half before the delay, but can carry only a $d$-dimensional quantum system and a classical outcome $j$ across it.
After the delay the player applies a channel that may depend on $x$ and on $j$, and returns its output.
These strategies are exactly the $d$-compatible families of \Cref{sec:compatible-inst}.
Note that the game is nontransient in the sense of the nontransient guessing game in \cite{Ji2024}, since the setting arrives only after a delay, and the memory which survives the delay is bounded by $d$.
The two games differ in what is due after the delay.
There the player returns the quantum system before the delay and only a classical guess after it, so a player without quantum memory may still apply any instrument to the input, the identity included, and keeps its classical outcome for the guess.
Here the quantum system itself is due after the delay, so a player without quantum memory must measure before the delay and prepare afterwards.
Both games coincide on multi-meters, reducing at $d=1$ to state discrimination with post-measurement information \cite{Carmeli2018,Buscemi2020}.

Formally, a game is specified by a tuple $M=(M_x)_{x\in\mathsf X}$ of observables $0\leq M_x\leq\1$ on $\bC^n\otimes\bC^m$.
The player is described by a u.c.p.\ family $\Phi=(\Phi_x:\bM_m\to\bM_n)_{x\in\mathsf X}$, one map for each setting, whose preduals $(\Phi_x)_*:S_1^n\to S_1^m$ are the channels applied to the received half.
The winning probability of this player is
\begin{equation}\label{eq:game-score}
  w(\Phi;M):=\frac{1}{|\mathsf X|}\sum_{x\in\mathsf X}\Tr[M_x(\id\otimes(\Phi_x)_*)(\phi^+_n)]
  =\frac{1}{|\mathsf X|}\ip{M}{\Phi}.
\end{equation}
For a tuple $W=(W_x)_{x\in\mathsf X}$ of matrices $W_x\in\bM_n\otimes\bM_m$ we write
\begin{equation}\label{eq:game-pairing}
  \ip{W}{\Phi}
  :=\sum_{x\in\mathsf X}\Tr[W_x(\id\otimes(\Phi_x)_*)(\phi^+_n)]
  =\frac1n \sum_{x\in\mathsf X}\Tr[W_xJ((\Phi_x)_*)],
\end{equation}
the Choi state dual pairing.
On elementary tensors $e \otimes f$ the pairing reads
\begin{equation}\label{eq:choi-pairing-transpose}
  \frac1n \Tr[(e\otimes f)J((\Phi_x)_*)] = \frac1n \Tr[e^T \Phi_x(f)],
\end{equation}
where $e^T$ is the transpose in the basis defining $\chi_n$.

For a self-adjoint tuple $W$, define
\begin{equation}\label{eq:robustness-free-value}
  \nu_d(W):=\sup_{\Sigma\in\mathcal F_d}\ip{W}{\Sigma}.
\end{equation}
The best winning probability with a $d$-dimensional quantum memory is
\begin{equation}
w^*(M):=\sup_{\Sigma\in\mathcal F_d}w(\Sigma;M)=\frac{\nu_d(M)}{|\mathsf X|},
\end{equation}
and we call the ratio $w(\Phi;M)/w^*(M)=\ip{M}{\Phi}/\nu_d(M)$ the \defn{advantage} of a family $\Phi$ in the game $M$, the factor by which it beats every memory-bounded player.
For measurements, the advantage in state discrimination is the quantity that the generalized robustness controls, and for states it is the advantage in channel or subchannel discrimination \cite{TR19,UKSYG19}.
The advantage $\ip{M}{\Phi}/\nu_d(M)$ is unchanged when the observables are rescaled, $M_x\mapsto cM_x$ with $c>0$, and conversely every tuple $W$ of positive matrices becomes a game after division by $c\geq\max_x\norm{W_x}$.

Hence, it suffices to optimize the advantage over games and over positive tuples $W$.

Let $\cC_{\cp}$ denote the cone of families $\Delta=(\Delta_x)_{x\in\mathsf X}$ of completely positive maps $\Delta_x:\bM_m\to\bM_n$.

\begin{proposition}[Robustness duality; cf.\ {\cite[Thm.~4]{Ji2024}}]\label{prop:robustness-dual}
For every u.c.p.\ family $\Phi$,
\begin{equation}\label{eq:robustness-dual}
  1+\mathcal R_d(\Phi)
  =\sup_{M}\frac{w(\Phi;M)}{w^*(M)}
  =\sup\{\ip{W}{\Phi}\mid W_x\geq0\ \forall x,\ \nu_d(W)\leq1\},
\end{equation}
the first supremum over all games.
In words, $1+\mathcal R_d(\Phi)$ is the largest advantage of $\Phi$ over every player with a $d$-dimensional nontransient quantum memory.
\end{proposition}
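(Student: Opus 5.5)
The plan is the standard conic-duality argument for generalized robustness, applied to the free set $\mathcal F_d$, which is convex and compact by \Cref{thm:closure-instruments}. The first equality in \eqref{eq:robustness-dual} is bookkeeping. Since $w(\Phi;M)/w^*(M)=\ip{M}{\Phi}/\nu_d(M)$ is invariant under $M\mapsto cM$, and every positive tuple becomes a game after rescaling, the supremum over games equals $\sup\{\ip{W}{\Phi}/\nu_d(W)\mid W_x\ge0\}$. Normalizing so that $\nu_d(W)=1$ gives the second expression. The degenerate case $\nu_d(W)=0$ with $W\ge0$ cannot occur for $W\neq0$: $\mathcal F_d$ contains the replacement channels, and $\sum_x\Tr[W_x\,\rho\otimes\tau]$ is positive for suitable $\rho,\tau$.

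\emph{Weak duality} ($\ge$). Let $s$ be feasible in \eqref{eq:d-compatibility-robustness}, so that $\Phi+s\Xi=(1+s)\Sigma$ with $\Sigma\in\mathcal F_d$ and $\Xi$ u.c.p. For $W\ge0$ with $\nu_d(W)\le1$, the Choi pairing \eqref{eq:game-pairing} is positive on c.p.\ families. Hence
\[
\ip{W}{\Phi}\le\ip{W}{\Phi+s\Xi}=(1+s)\ip{W}{\Sigma}\le1+s.
\]
Taking the infimum over feasible $s$ gives $\sup\le1+\mathcal R_d(\Phi)$.

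\emph{Strong duality} ($\le$). I will rewrite $1+\mathcal R_d(\Phi)$ as
\[
\inf\{t\ge1\mid t\Sigma-\Phi\in\cC_{\cp},\ \Sigma\in\mathcal F_d\}.
\]
The reason is that $t\Sigma-\Phi$ is automatically $(t-1)$ times a unital family, so $\Xi=(t\Sigma-\Phi)/(t-1)$ is u.c.p. when $t>1$; when $t=1$ we get $\Phi=\Sigma$. Let $t_0$ denote this infimum, and note that it is attained by compactness of $\mathcal F_d$.

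Suppose, for contradiction, that $t_0$ is strictly larger than the supremum $s^*$. Then $\Phi$ lies outside the set $C:=\{t\Sigma-\Delta\mid t\in[0,s^*],\ \Sigma\in\mathcal F_d,\ \Delta\in\cC_{\cp}\}$. This set is convex, as the image of a convex set under a linear map; here I use that $\mathcal F_d$ is convex and that $\cC_{\cp}$ is a convex cone. It is also closed, being the sum of the compact set $[0,s^*]\mathcal F_d$ and the closed cone $-\cC_{\cp}$.

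Apply Hahn--Banach in the finite-dimensional real space of self-adjoint families, paired via \eqref{eq:game-pairing}. This yields a self-adjoint tuple $W$ with $\ip{W}{\Phi}>\sup_C\ip{W}{\cdot}$. Finiteness of the right-hand side against $-\cC_{\cp}$ forces $\ip{W}{\Delta}\ge0$ for every c.p.\ family $\Delta$. Self-duality of the Choi cone then gives $W_x\ge0$, up to the transpose convention in \eqref{eq:choi-pairing-transpose}, which I would fix carefully.

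Taking $\Delta=0$ gives $\sup_C\ip{W}{\cdot}\ge s^*\nu_d(W)$, so $\ip{W}{\Phi}>s^*\nu_d(W)$. Rescaling $W$ so that $\nu_d(W)=1$ produces an admissible witness with value exceeding $s^*$, contradicting the definition of $s^*$. Hence $t_0\le s^*$.

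I expect the main obstacle to be the closedness and compactness needed for separation. One needs $\mathcal F_d$ compact, which comes from \Cref{thm:closure-instruments} together with the boundedness of u.c.p.\ families in fixed finite dimension. One also needs the rescaling step to be legitimate, which is where $\nu_d(W)>0$ must be checked. Everything else is finite-dimensional convex analysis.
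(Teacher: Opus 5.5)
Your proposal is correct and is essentially the paper's argument: the paper likewise writes $1+\mathcal R_d(\Phi)$ as the Minkowski functional of $D=\mathcal F_d-\cC_{\cp}$ (closed as a compact set plus a closed cone), identifies the polar of $D$ with $\{W\mid W_x\ge0,\ \nu_d(W)\le1\}$ via the bipolar theorem, and uses the completely depolarizing family to get $\nu_d(W)>0$ for the passage to games; your separation against $C$ is exactly that bipolar step unpacked, since $C=s^*D$ (write $t\Sigma-\Delta=s^*\Sigma-((s^*-t)\Sigma+\Delta)$). Two small repairs are needed. First, $(t,\Sigma)\mapsto t\Sigma$ is bilinear rather than linear, so convexity of $C$ should come from $C=s^*D$ with $D$ convex. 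Second, in the replacement-channel remark the reference factor of the Choi operator is forced to be $\1$, so the relevant quantity is $\sum_x\Tr[W_x(\1\otimes\tau_x)]$, which is positive for suitable $\tau_x$ whenever $0\neq W\ge0$.
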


\begin{proof}
Let 
\begin{equation}
D:=\mathcal F_d-\cC_{\cp}=\{\Sigma-\Delta\mid\Sigma\in\mathcal F_d,\ \Delta\in\cC_{\cp}\}.
\end{equation}
We first show that
\begin{equation}\label{eq:one-plus-Rd}
1+\mathcal R_d(\Phi)
= \inf \{1+s \mid s\geq 0,\ \frac{\Phi+s\Xi}{1+s} \in \cF_d \; \Xi \text{ u.c.p} \}
=\inf\{t>0\mid\Phi\in tD\},
\end{equation}
i.e. $1+\cR_d$ is the Minkowski functional of $D$.
Let $S$ and $T$ be the feasible sets on the left and right of \eqref{eq:one-plus-Rd}.
If $(\Phi+s\Xi)/(1+s)=\Sigma\in\mathcal F_d$ with $\Xi$ u.c.p., then $(1+s)\Sigma-\Phi=s\Xi\in\cC_{\cp}$, so $\Phi\in(1+s)D$ and $S\subseteq T$.
Conversely, let $\Phi=t\Sigma-\Delta$ with $\Sigma\in\mathcal F_d$ and $\Delta\in\cC_{\cp}$.
Evaluating at $\1$ gives $\Delta_x(\1)=(t-1)\1$, so $t\geq1$.
If $t>1$, then $\Xi:=\Delta/(t-1)$ is u.c.p.\ and $(\Phi+(t-1)\Xi)/t=\Sigma$, so $s=t-1$ is feasible in $S$.
If $t=1$, then $\Delta_x(\1)=0$ forces $\Delta_x=0$, because $0\leq\Delta_x(z)\leq\norm{z}\Delta_x(\1)$ for $z\geq0$, and hence $\Phi=\Sigma\in\mathcal F_d$ and $\mathcal R_d(\Phi)=0$, showing $T \subseteq S$, which proves the claim.

The set $D$ is convex, it contains $0$, and it is closed, being the sum of the compact set $\mathcal F_d$ and the closed cone $-\cC_{\cp}$.
By the bipolar theorem, in the real vector space of ${*}$-preserving families with the pairing \eqref{eq:game-pairing}, the Minkowski functional of $D$ is the support function of its polar,
\begin{equation}
  \inf\{t>0\mid\Phi\in tD\}=\sup\{\ip{W}{\Phi}\mid W\in D^\circ\},
  \qquad D^\circ=\{W\mid\ip{W}{X}\leq1\ \text{for all}\ X\in D\}.
\end{equation}
It remains to identify $D^\circ$.
If $W\in D^\circ$, then $\ip{W}{\Sigma}-r\ip{W}{\Delta}\leq1$ for all $\Sigma\in\mathcal F_d$, $\Delta\in\cC_{\cp}$ and $r>0$.
Letting $r\to\infty$ gives $\ip{W}{\Delta}\geq0$ for all $\Delta\in\cC_{\cp}$, that is $W_x\geq0$ for all $x$, and $\Delta=0$ gives $\nu_d(W)\leq1$.
Conversely, $W_x \geq 0$ for any $x$ and $\nu_d(W) \leq 1$ imply $\ip{W}{\Sigma-\Delta}\leq\nu_d(W)\leq1$.
This proves
\begin{equation}
1+\mathcal R_d(\Phi) = \sup \{ \ip{W}{\Phi} \mid W_x \geq 0\; \forall x,\nu_d(W) \leq 1\}.
\end{equation}
For the equality with the supremum over games, note that $\nu_d(W)>0$ whenever $W\geq0$ is nonzero, because the completely depolarizing family $N_0$, $N_{0,x}(z)=\Tr(z)\1/m$, is $1$-compatible, hence lies in $\mathcal F_d$, and $\ip{W}{N_0}=\frac{1}{nm}\sum_x\Tr W_x>0$.
Since the constraint $\nu_d(W)\leq1$ and the objective scale linearly in $W$, the constrained supremum equals the supremum of the ratios $\ip{W}{\Phi}/\nu_d(W)$ over nonzero positive tuples, which by definition is the supremum of the advantage over all games.
\end{proof}

By \eqref{eq:robustness-dual}, the upper bound $1 + \cR_d(\Phi) \leq \gamma_{d,\cp}(\Phi)$ also bounds the largest advantage.
Conversely, every single game gives a lower bound on $\mathcal R_d$ through its advantage.
\Cref{prop:robustness-dual} can be compared to an earlier result in \cite{sudarsanan2024higher}, where the authors relate the robustness of incompatibility to the task of distinguishing higher-order processes.

For a family with a single member, $d$-compatibility is $d$-factorizability of that channel, and $\mathcal R_d$ is its robustness of $d$-factorizability.
We compute it for the partial depolarizing channel $\Phi_p$ of \Cref{ex:depolarizing} on $\bM_n$, which is the identity channel at $p=1$.
The computation rests on the threshold \eqref{eq:partial-depol}, and its purpose is the universal bound of \Cref{cor:universal-bound} below.
Consider the game with the single observable $M=\phi^+_n$, whose winning probability is the entanglement fidelity,
\begin{equation}
w(\Phi;\phi^+_n)=\ip{\phi^+_n}{\Phi}=\Tr[\phi^+_n(\id\otimes\Phi_*)(\phi^+_n)].
\end{equation}
For $\Phi_p$ it equals $p+(1-p)/n^2$, and \eqref{eq:partial-depol} says that $\Phi_p$ is $d$-factorizable exactly when this fidelity is at most $d/n$, which equals the free value of the game $\nu_d(\phi^+_n)$.
To see this, note that the Choi state of a $d$-factorizable channel has Schmidt number at most $d$ by \Cref{cor:k-PEB}.
A state $\rho$ of Schmidt number at most $d$ has singlet fraction $\Tr[\phi^+_n\rho]\leq d/n$ \cite{Terhal2000}, and $\Phi_p$ at the threshold attains this value.
Hence \Cref{prop:robustness-dual} gives
\begin{equation}
1+\mathcal R_d(\Phi_p)\geq\frac nd(p+\frac{1-p}{n^2}).
\end{equation}
Conversely, let $\Xi$ be the channel with normalized Choi state $(\1-\phi^+_n)/(n^2-1)$, which is a channel because this state has reference marginal $\1/n$, and whose entanglement fidelity is zero.
The mixture $(\Phi_p+s\Xi)/(1+s)$ has an isotropic Choi state, so it is a partial depolarizing channel $\Phi_{p'}$, with entanglement fidelity $(p+(1-p)/n^2)/(1+s)$.
For
\begin{equation}
  s=\frac nd (p+\frac{1-p}{n^2})-1\geq0
\end{equation}
this fidelity is $d/n$, that is $p'=(dn-1)/(n^2-1)$, and the mixture is $d$-factorizable by \eqref{eq:partial-depol}.
When this $s$ is negative, $\Phi_p$ is itself $d$-factorizable by \eqref{eq:partial-depol}.
The two bounds agree, so
\begin{equation}\label{eq:depolarizing-robustness}
  \mathcal R_d(\Phi_p)
  =\max \{ 0,\ \frac nd (p+\frac{1-p}{n^2})-1 \}
  =\max \{0,\ \frac{p(n^2-1)-(dn-1)}{dn}\},
\end{equation}
whose zero set is the threshold \eqref{eq:partial-depol}.
The noise that makes $\Phi_p$ $d$-factorizable is exactly the noise that lowers its entanglement fidelity to $d/n$.
At $p=1$,
\begin{equation}\label{eq:identity-robustness}
  \mathcal R_d(\id_n)=\frac nd-1 ,
\end{equation}
and for $d=1$ this is the generalized robustness $n-1$ of the maximally entangled state.

\begin{corollary}[Universal bound]\label{cor:universal-bound}
Every u.c.p.\ family $\Phi=(\Phi_x:\bM_m\to\bM_n)_{x\in\mathsf X}$ satisfies
\begin{equation}\label{eq:universal-bound}
  \mathcal R_d(\Phi)\le\frac nd-1 ,
\end{equation}
with equality for the family $(\id_n)_{x\in\mathsf X}$ of copies of the identity.
\end{corollary}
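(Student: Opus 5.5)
The plan is to exhibit, for every u.c.p.\ family $\Phi$, an explicit noise family that makes $\Phi$ $d$-compatible with $s=n/d-1$. Equality for copies of the identity then follows from \eqref{eq:identity-robustness} together with monotonicity under relabeling. The natural route is to reduce to the identity channel. In the Schr\"odinger picture, each member is $(\Phi_x)_* = (\Phi_x)_*\circ\id_n$. This corresponds to post-processing by $\Phi_x$ in the Heisenberg picture: the family $\Phi$ is obtained from the constant family $(\id_n)_{x\in\mathsf X}$ by the setting-dependent post-processing $\Lambda'_x=\Phi_x$ of \Cref{lem:gamma-monotone}. The remark after that lemma shows that such operations map $\mathcal F_d$ into $\mathcal F_d$ and are linear on u.c.p.\ families, and hence do not increase $\mathcal R_d$. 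Therefore
\[
\mathcal R_d(\Phi)\le\mathcal R_d\bigl((\id_n)_{x\in\mathsf X}\bigr).
\]

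The next step is to compute $\mathcal R_d$ for the constant family $(\id_n)_x$. The upper bound requires a single $d$-factorizable channel $\Lambda$ and a single channel $\Xi$ with $(\id_n+s\Xi)/(1+s)=\Lambda$ for $s=n/d-1$. This is exactly the construction used for \eqref{eq:depolarizing-robustness} at $p=1$. There, $\Xi$ has Choi state $(\1-\phi^+_n)/(n^2-1)$, and the mixture is the depolarizing channel $\Phi_{p'}$ at the threshold $p'=(dn-1)/(n^2-1)$, which is $d$-factorizable by \eqref{eq:partial-depol}. The constant family $(\Lambda)_x$ is $d$-compatible: take the factorization of $\Lambda$ as a common parent and use the same post-processing for every $x$. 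Hence $\mathcal R_d((\id_n)_x)\le n/d-1$, and the inequality \eqref{eq:universal-bound} follows. When $d\ge n$, the bound is read as trivial, since $\Phi$ is already $d$-compatible.

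For equality, I would use relabeling monotonicity in both directions. Selecting a single setting maps the family $(\id_n)_x$ to the single channel $\id_n$, so $\mathcal R_d((\id_n)_x)\ge\mathcal R_d(\id_n)=n/d-1$ by \eqref{eq:identity-robustness}. Alternatively, one can apply \Cref{prop:robustness-dual} directly with the game $W_x=\phi^+_n$ for one fixed $x$ and $W_y=0$ for $y\neq x$. In that game the free value is $d/n$, because any member of a $d$-compatible family is $d$-factorizable with singlet fraction at most $d/n$, while $\id_n$ attains the value $1$.

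The main point to check is that the post-processing step really yields a bound on $\mathcal R_d$ and not only on $\gamma_{d,\cp}$. Concretely, if $(\id_n+s\Xi)/(1+s)=\Lambda\in\mathcal F_d$, then composing each member with $\Phi_x$ gives
\[
\frac{\Phi_x+s\,\Xi\circ\Phi_x}{1+s}=\Lambda\circ\Phi_x,
\]
with the dimension bookkeeping $\bM_m\to\bM_n$ correct. The family $(\Lambda\circ\Phi_x)_x$ factors as $\theta\circ(\Psi\circ\Phi_x)$ with u.c.p.\ maps, so it lies in $\mathcal F_d$. The family $(\Xi\circ\Phi_x)_x$ is u.c.p., so it is an admissible noise family in \eqref{eq:d-compatibility-robustness}, which gives $\mathcal R_d(\Phi)\le s=n/d-1$.
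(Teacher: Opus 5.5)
Your proposal is correct and is essentially the paper's proof: reduce to the constant family $(\id_n)_{x\in\mathsf X}$ via the setting-dependent post-processing $\Phi_x=\id_n\circ\Phi_x$, bound that family with the noise $\Xi$ whose Choi state is $(\1-\phi^+_n)/(n^2-1)$, and get equality from the entanglement-fidelity game; your single-setting game and your relabeling argument are interchangeable with the paper's choice $M_x=\phi^+_n$ for all $x$. One small correction: for $d>n$ the bound $n/d-1<0$ is not ``trivial'' but false as literally stated, since $\mathcal R_d\ge0$, so the corollary should be read for $1\le d\le n$ (as the overview states), where $d=n$ gives $\mathcal R_n=0$.
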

\begin{proof}
The computation of \eqref{eq:identity-robustness} applies to the family of copies of the identity, with the game $M_x=\phi^+_n$ for every $x$ and the noise family of copies of $\Xi$, and gives the robustness $n/d-1$.
Every family with input dimension $n$ is a setting-dependent post-processing of it, $\Phi_x=\id_n\circ\Phi_x$, and $\mathcal R_d$ is nonincreasing under such post-processing, since composing a $d$-compatible family $(\id_n+s\Xi_x)/(1+s)$ with $\Phi_x$ gives the $d$-compatible family $(\Phi_x+s\,\Xi_x\circ\Phi_x)/(1+s)$.

\end{proof}

The family of copies of the identity channel $\id_n$ is thus the most $d$-incompatible resource with input dimension $n$, and $\mathcal R_d$ measures the distance from it.

\paragraph{The signed preparation game.}
We now let the referee penalize the player.
A signed game is specified by a tuple $S=(S_x)_{x\in\mathsf X}$ of self-adjoint matrices on $\bC^n\otimes\bC^m$ with $\norm{S_x}\leq1$.
Write $S_x=S_x^+-S_x^-$ for the Jordan decomposition, with $S_x^\pm\geq0$ and $S_x^+S_x^-=0$, so that $S_x^++S_x^-=|S_x|\leq\1$.
The referee jointly measures the returned system and the retained half with the three-outcome measurement $\{S_x^+,S_x^-,\1-|S_x|\}$.
The player wins on the first outcome with payoff $+1$, loses on the second with payoff $-1$, and neither wins nor loses on the third with $0$ payoff.
The rest of the game is unchanged.
The expected payoff of a player $\Phi$ is
\begin{equation}\label{eq:signed-game-score}
  w(\Phi;S) := \frac{1}{|\mathsf X|}\ip{S}{\Phi}=w_+(\Phi)-w_-(\Phi),
  \qquad w_\pm(\Phi) := \frac{1}{|\mathsf X|}\ip{S^\pm}{\Phi},
\end{equation}
the winning probability minus the losing probability.
A game of the previous paragraphs is the case $S_x\geq0$, in which the player never loses and the expected payoff is the winning probability, and so are the discrimination tasks of \mbox{\cite{Ji2024,TR19,TRBLA19}}.
Every self-adjoint tuple becomes a signed game after division by $c\geq\max_x\norm{S_x}$.
For a signed game $\nu_d(S)$ can be zero or negative, so the advantage is not defined as a ratio, and the duality takes the constrained form on the right of \eqref{eq:robustness-dual}.
The factorization functional admits a dual description of exactly this shape, and comparing the two dual programs identifies the gap in \eqref{eq:robustness-gamma-bound}.
Let
\begin{equation}\label{eq:C-set}
  \hat\cF_d:=\{\sigma\in\cC_{\cp}\mid\gamma_{d,\cp}(\sigma)\leq1\},
\end{equation}
extending $\cF_d$ from unital to sub-unital families.
By \Cref{prop:cp-gamma} this is the set of c.p.\ families admitting a factorization through some $L_\infty(\Omega;\bM_d)$ with contractive c.p.\ maps, and a unital family lies in $\hat\cF_d$ if and only if it is $d$-compatible.

The set $\hat\cF_d$ is convex and closed, and by \Cref{lem:cp-polar-form} and \Cref{lem:gauge-polar} for every c.p.\ family $\Phi$
\begin{equation}\label{eq:gamma-polar}
  \gamma_{d,\cp}(\Phi)=\sup\{\ip{S}{\Phi}\mid S=(S_x)_x\ \text{self-adjoint},\ \ip{S}{\sigma}\leq1\ \text{for all}\ \sigma\in \hat\cF_d\},
\end{equation}
the polar form of \Cref{thm:cp-criterion}.

In the language of the game, with $w(\Phi;S)=\ip{S}{\Phi}/|\mathsf X|$ as in \eqref{eq:signed-game-score} for every self-adjoint tuple $S$, \eqref{eq:gamma-polar} reads
\begin{equation}\label{eq:gamma-game}
  \gamma_{d,\cp}(\Phi)=|\mathsf X|\sup\{w(\Phi;S)\mid S\ \text{self-adjoint},\ \ip{S}{\sigma}\leq1\ \text{for all}\ \sigma\in\hat\cF_d\},
\end{equation}
the analogue of $1+\mathcal R_d(\Phi)=|\mathsf X|\sup\{w(\Phi;W)\mid W\geq0,\ \nu_d(W)\leq1\}$ in \eqref{eq:robustness-dual}.
The proof of \eqref{eq:gamma-polar} is the same as that of \eqref{eq:robustness-dual}, the bipolar theorem applied to $\hat\cF_d$ in place of $\mathcal F_d-\cC_{\cp}$.
The supremum in \eqref{eq:gamma-game} runs over self-adjoint tuples of every norm.
This is needed, because signed games with $\norm{S_x}\leq1$ have $w(\Phi;S)\leq1$, while $\gamma_{d,\cp}(\id_n)=+\infty$ for $d<n$.

The two dual programs differ in the positivity of the witness and in the normalization set, $\hat\cF_d$ in place of $\mathcal F_d$, and the second difference is immaterial for positive witnesses.
Indeed, each $\sigma\in\hat\cF_d$ factors through contractive c.p.\ maps, and the completions \eqref{eq:ucp-completions} with $K=1$ turn this factorization into a $d$-compatible family $\Sigma$ with $\Sigma-\sigma$ completely positive.
So $\ip{W}{\sigma}\leq\ip{W}{\Sigma}\leq\nu_d(W)$ for every $W\geq0$.

Hence
\begin{equation}\label{eq:gap-positivity}
  1+\mathcal R_d(\Phi)=\sup\{\ip{W}{\Phi}\mid W\geq0,\ \ip{W}{\sigma}\leq1\ \text{for all}\ \sigma\in \hat\cF_d\}\leq\gamma_{d,\cp}(\Phi),
\end{equation}
which is a second proof of \eqref{eq:robustness-gamma-bound} and identifies the gap as the positivity constraint on the witness.

\subsection{Mutually unbiased bases and the clock--shift example}\label{subsec:random-noise}

Since the generalized robustness optimizes over the noise, a lower bound on $\mathcal R_d$ is a lower bound on the robustness against every noise set, white noise included, and the converse fails.
For MUB measurements, the white-noise robustness has been bounded, and computed exactly in some cases, in Ref.~\cite{DesignolleSkrzypczykFrowisBrunner2019}.
Here we bound the generalized robustness of the pinching channels onto mutually unbiased bases of \Cref{sec:m-MUB} from both sides.

Let $\Phi=(\Phi_x)_{x=1}^k$ be the pinching channels onto $k$ mutually unbiased bases of $\bM_n$ as in \Cref{sec:m-MUB},
\begin{equation}
  \Phi_x(z) = \sum_r f_r^{(x)}zf_r^{(x)},
\end{equation}
where $f_r^{(x)}$ are the minimal projections of the maximal abelian algebra $A_n^{(x)}$, whose trace-zero parts $P_n^{(x)}$ are mutually orthogonal in $S_2^n$.

For a family $\Lambda = (\Lambda_x:\bM_n \to \bM_n)_x$ of linear maps, define
\begin{equation}
  Y_{\Lambda} := \sum_x(\Lambda_x\otimes T_x)(z_x),
\end{equation}
where $T_x$ is the Hilbert--Schmidt orthogonal projection onto $\bar P_n^{(x)}$ and $z_x=\sum_rf_r^{(x)}\otimes\overline{f_r^{(x)}}$, as in \Cref{sec:m-MUB}.
Explicitly, $Y_\Lambda=\sum_x\sum_r\Lambda_x(f_r^{(x)})\otimes(\overline{f_r^{(x)}}-\frac1n\1)$.
Define
\begin{equation}\label{eq:mub-witness}
  W_x:=\frac{1}{\sqrt k}(T_x\otimes\id)(J(\Phi_x))
  =\frac{1}{\sqrt k}\sum_{r=1}^n(\overline{f_r^{(x)}}-\frac{\1}{n})\otimes f_r^{(x)},
  \qquad x=1,\dots,k,
\end{equation}
where $J(\Phi_x)=\sum_r\overline{f_r^{(x)}}\otimes f_r^{(x)}$ is the Choi matrix of $\Phi_x$, which is the flip of $z_x$, and the bar is entrywise complex conjugation in the basis defining $\chi_n$.
By \eqref{eq:choi-pairing-transpose}, for every family $\Lambda$,
\begin{equation}\label{eq:witness-value}
  \ip{W}{\Lambda}
  =\frac{1}{n\sqrt k}\sum_{x=1}^k\sum_{r=1}^n\Tr [(f_r^{(x)}-\frac1n\1)\,\Lambda_x(f_r^{(x)})]
  =\frac{1}{\sqrt k}\ip{\hat{\1}}{\sigma(\sum_x(\Lambda_x\otimes T_x)(z_x))\hat{\1}},
\end{equation}
using $\ip{\hat{\1}}{\sigma(a\otimes b)\hat{\1}}=\Tr(ab^T)/n$ and $(\overline{f_r^{(x)}})^T=f_r^{(x)}$ with $\sigma$ being the left--right action \eqref{eq:sigma-action}.

We have the chain
\begin{equation}\label{eq:chain}
  \norm{T \otimes \Lambda} \geq \norm{Y_\Lambda} = \norm{\sigma(Y_\Lambda)}_{B(S_2^n)} \geq \abs{\ip{\hat{\1}}{\sigma(Y_\Lambda)\hat{\1}}} = \sqrt{k}\, \abs{\ip{W}{\Lambda}},
\end{equation}
where the first inequality is the admissibility of the tuple $(z_x)_x$ in \eqref{eq:cb-criterion}, shown in \Cref{sec:m-MUB}, and the second tests the operator norm on the unit vector $\hat{\1}$.
For the pinching family the last inequality is an equality.
Indeed,
\begin{equation}
\sigma(y_x(\Phi))\xi=\Phi_x(\xi)-\frac1n\xi, 
\end{equation}
so on $S_2^n$,
\begin{equation}
\sigma(Y_{\Phi})=\sum_x\Phi_x-\frac kn\,\id,
\end{equation}
a sum of $k$ orthogonal projections with pairwise intersection $\bC\1$ and mutually orthogonal trace-zero parts, whose eigenvalues are $k(1-\frac1n)$ on $\hat{\1}$, $1-\frac kn$ on $\bigoplus_xP_n^{(x)}$, and $-\frac kn$ on the orthogonal complement.
Hence
\begin{equation}
\norm{Y_{\Phi}}=k(1-\frac1n)=\sqrt k\,\ip{W}{\Phi}.
\end{equation}

Using \eqref{eq:gamma-ratio} and $\norm{T}_{(d)} \leq \sqrt{k}\sqrt{d}$ from \Cref{prop:Td-norm}, we recover \eqref{eq:gammad},
\begin{equation}
  \gamma_{d,\cb}(\Phi)
  \geq \frac{\norm{T\otimes \Phi}}{\norm{T}_{(d)}}
  \geq \frac{\ip{W}{\Phi}}{\sqrt{d}}
  =\sqrt{\frac kd}\Bigl(1-\frac1n\Bigr).
\end{equation}

Adding a constant to each component makes the witness positive,
\begin{equation}\label{eq:mub-witness-positive}
  W_x^+:=W_x+\frac{1}{n\sqrt k}\,\1\otimes\1=\frac{1}{\sqrt k}\sum_{r=1}^n\overline{f_r^{(x)}}\otimes f_r^{(x)}\ \geq 0,
\end{equation}
using $\sum_rf_r^{(x)}=\1$, and $\ip{W^+}{\Xi}=\ip{W}{\Xi}+\sqrt k/n$ for every unital family $\Xi$.
The positive completion turns the witness into a preparation game, making the connection with \eqref{eq:robustness-dual} explicit.
The observables $M_x:=\sqrt k\,W_x^+=\sum_r\overline{f_r^{(x)}}\otimes f_r^{(x)}$ are projections, so $M=(M_x)_x$ is a game, and it is the following one.
The referee chooses $x$ and $r$ uniformly, sends the state $f_r^{(x)}$, and reveals $x$ after the storage delay.
The player returns a state that the referee tests with $f_r^{(x)}$.
The winning probability of a u.c.p.\ family $\Lambda$ is
\begin{equation}\label{eq:mub-game-score}
  w_{k-\rm MUB}(\Lambda) := w(\Lambda;M)
  =\frac{1}{kn}\sum_{x,r}\Tr[\Lambda_x(f_r^{(x)})f_r^{(x)}]
  =\frac{\ip{W}{\Lambda}}{\sqrt k}+\frac1n,
\end{equation}
by \eqref{eq:game-score} and \eqref{eq:witness-value}.
For the pinching family, $\Phi_x(f_r^{(x)})=f_r^{(x)}$ for all $x$ and $r$, so $w_{k-\rm MUB}(\Phi)=1$.
For every $d$-compatible family $\Sigma$,
\begin{equation}\label{eq:w-mub-bound}
w_{k-\rm MUB}(\Sigma)\leq \sqrt{\frac{d}{k}}+\frac1n
\end{equation}
because $\sqrt k\,\ip{W}{\Sigma}\leq\norm{T\otimes\Sigma}\leq\norm{T}_{(d)}\,\gamma_{d,\cb}(\Sigma)\leq\sqrt k\sqrt d$, by \eqref{eq:chain}, \eqref{eq:gamma-ratio}, and $\gamma_{d,\cb}(\Sigma)\leq1$ from \Cref{prop:cb-unit-ball}.

\begin{proposition}[Robustness of MUB pinching channels]\label{prop:mub-robustness}
For the pinching family $\Phi$ onto $k$ mutually unbiased bases of $\bM_n$,
\begin{equation}\label{eq:Rd-mub}
  \max\left\lbrace0,
    \frac{\sqrt k(1-1/n)-\sqrt d}{\sqrt d+\sqrt k/n}\right\rbrace
  \leq\mathcal R_d(\Phi)
  \leq\min\left\lbrace k-1,\ \frac nd-1\right\rbrace.
\end{equation}
\end{proposition}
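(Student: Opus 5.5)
The lower bound comes from \Cref{prop:robustness-dual} applied to the MUB preparation game $M=(M_x)_x$, $M_x=\sum_r\overline{f_r^{(x)}}\otimes f_r^{(x)}$, constructed just above. The upper bound has two parts: the universal bound of \Cref{cor:universal-bound}, and an explicit noise family that makes $\Phi$ $1$-compatible, which gives $k-1$.

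\emph{Lower bound.} By \Cref{prop:robustness-dual}, $1+\mathcal R_d(\Phi)\ge w(\Phi;M)/w^*(M)$. We have $w_{k-\rm MUB}(\Phi)=1$, since each pinching fixes its own basis states. By \eqref{eq:w-mub-bound}, every $d$-compatible family satisfies $w_{k-\rm MUB}(\Sigma)\le\sqrt{d/k}+1/n$, so $w^*(M)\le\sqrt{d/k}+1/n$. Hence
\[
1+\mathcal R_d(\Phi)\ge\frac{1}{\sqrt{d/k}+1/n}=\frac{\sqrt k}{\sqrt d+\sqrt k/n}.
\]
Subtracting $1$ gives $(\sqrt k-\sqrt d-\sqrt k/n)/(\sqrt d+\sqrt k/n)$, which is the stated expression. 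Combined with $\mathcal R_d\ge0$, this yields the maximum in \eqref{eq:Rd-mub}. The one point to check is that \eqref{eq:w-mub-bound} holds for all of $\mathcal F_d$. It does, because it rests only on $\gamma_{d,\cb}(\Sigma)\le\gamma_{d,\cp}(\Sigma)\le1$ (\Cref{prop:cp-gamma}) and on the witness estimate $\norm{T}_{(d)}\le\sqrt k\sqrt d$ from \Cref{prop:Td-norm}.

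\emph{Upper bound $n/d-1$.} This is exactly \Cref{cor:universal-bound}: the family has input dimension $n$, so nothing further is needed.

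\emph{Upper bound $k-1$.} I would choose the noise so that the mixture is a uniform randomization over the settings. Set $s=k-1$ and $\Xi_x:=\frac{1}{k-1}\sum_{y\ne x}\Phi_y$, which is u.c.p. Then
\[
\frac{\Phi_x+s\Xi_x}{1+s}=\frac1k\sum_{y=1}^k\Phi_y=:\bar\Phi
\]
for every $x$, so the mixed family consists of $k$ copies of the single channel $\bar\Phi$. Such a family is $d$-compatible as soon as $\bar\Phi$ is $d$-factorizable: take a factorization of $\bar\Phi$ as the common parent and use the same post-processing for every $x$. Moreover, $\bar\Phi$ is entanglement breaking, being a convex combination of measure-and-prepare channels. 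Concretely, the parent measures basis $y$ with probability $1/k$ and records $(y,r)$ classically, and the post-processing prepares $f_r^{(y)}$. So the mixture lies in $\mathcal F_1\subseteq\mathcal F_d$, giving $\mathcal R_d(\Phi)\le k-1$.

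The main obstacle is the lower bound, and it has already been absorbed into \eqref{eq:w-mub-bound}. The remaining care is only the arithmetic simplification above, and checking that the first supremum in \Cref{prop:robustness-dual} may be evaluated at this single game, which is immediate since any one game bounds the supremum from below.
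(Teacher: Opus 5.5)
Your lower bound and the $n/d-1$ half of the upper bound follow the paper's argument: the game $M$, \eqref{eq:w-mub-bound}, \Cref{prop:robustness-dual} and \Cref{cor:universal-bound}. The arithmetic and the justification of \eqref{eq:w-mub-bound} on all of $\mathcal F_d$ are also correct.

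For the bound $k-1$ you take a genuinely different route, and it is valid. You exhibit the noise $\Xi_x=\frac{1}{k-1}\sum_{y\neq x}\Phi_y$. This collapses every member of the mixture to the single entanglement-breaking channel $\bar\Phi=\frac1k\sum_y\Phi_y$, so the mixed family is trivially $1$-compatible.

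The paper does not construct noise at all. It writes down the c.p.\ factorization $\Psi_x(z)(x',r)=\delta_{xx'}\Tr(f_r^{(x)}z)$, $\theta(g)=\sum_{x',r}g(x',r)f_r^{(x')}$ through $\ell_\infty^{kn}$, with $\norm{\theta}_{\cb}=k$ and $\norm{\Psi_x}_{\cb}=1$. It concludes $\gamma_{d,\cp}(\Phi)\le\gamma_{1,\cp}(\Phi)\le k$ and then uses $\mathcal R_d\le\gamma_{d,\cp}-1$ from \Cref{prop:gamma-bounds-Rg}.

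Your argument is more elementary and self-contained, since it needs neither the factorization functional nor \Cref{prop:gamma-bounds-Rg}. The paper's route buys two things. First, it gives the stronger statement $\gamma_{d,\cp}(\Phi)\le k$ about the functional itself. Second, the unital completion behind \Cref{prop:gamma-bounds-Rg} naturally produces white noise, and the paper notes that $\frac1k\Phi_x+\frac{k-1}{k}N_{0,x}$ is $1$-compatible. So $k-1$ also bounds the white-noise robustness, which your choice of noise does not show.

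One small omission: for $k=1$ your $\Xi$ is undefined. In that case the single pinching is entanglement breaking and $\mathcal R_d(\Phi)=0$ directly.
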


\begin{proof}
By \Cref{prop:robustness-dual} and \eqref{eq:w-mub-bound}
\begin{equation}
  1 + \cR_d(\Phi) = \sup_{M'}\frac{w(\Phi;M')}{w^*(M')} \geq \frac{w_{k-\mathrm{MUB}}(\Phi)}{\sup_{\Sigma \in \cF_d} w_{k-\mathrm{MUB}}(\Sigma)} \geq \frac{1}{\sqrt{d/k} + 1/n},
\end{equation}
and rearranging gives the lower bound.

On the upper bound, the one of the min values $n/d-1$ is from \Cref{cor:universal-bound}.
To show the bound $k-1$, consider the c.p.\ factorization of $\Phi$ through $\ell_\infty^{kn}=L_\infty(\Omega;\bM_1)$ with $\Omega=\{1,\dots,k\}\times\{1,\dots,n\}$, given by
\begin{equation}
  \Psi_x(z)(x',r):=\delta_{xx'}\Tr(f_r^{(x)}z),
  \qquad
  \theta(g):=\sum_{x',r}g(x',r)\,f_r^{(x')}.
\end{equation}
Indeed
\begin{equation}
  \theta \circ \Psi_x (z) = \sum_{x',r} \delta_{x,x'} \Tr[f_r^{(x)}z]f_r^{(x')} = \sum_r \Tr[f_r^{(x)}z]f_r^{(x)} = \Phi_x(z).
\end{equation}
Furthermore, $\norm{\Psi_x}_{\cb}=\norm{\Psi_x(\1)}_\infty=1$ and $\norm{\theta}_{\cb}=\norm{\theta(\1)}=k$ since a c.p.\ map attains its c.b.\ norm at the unit \cite[Prop.~3.6]{Paulsen_2003}.
Hence $\gamma_{1,\cp}(\Phi)\leq k$, and $\gamma_{d,\cp}\leq\gamma_{1,\cp}$ because a factorization through $\bM_1$ is one through a corner of $\bM_d$.
\end{proof}

The bound $k-1$ is obtained with white noise.
The mixture $\frac1k\Phi+\frac{k-1}{k}N_0$, where $N_{0,x}(z)=\Tr(z)\1/n$ is the completely depolarizing family, is $1$-compatible, being the strategy that measures in a uniformly random basis $x'$, prepares the outcome if $x'$ equals the announced setting $x$, and prepares $\1/n$ otherwise, so $k-1$ also bounds the white-noise robustness.

The white-noise robustness computed in \cite{DesignolleSkrzypczykFrowisBrunner2019} is that of the basis measurements rather than of the measure-and-prepare channels $\Phi_x$, and it corresponds to the case $d=1$ here, since the family $\eta\Phi+(1-\eta)N_0$ is $1$-compatible exactly when the depolarized measurements $\{\eta f_r^{(x)}+(1-\eta)\1/n\}_r$ are jointly measurable.

The clock--shift pair of \Cref{sec:clock-shift} gives a concrete instance of these bounds.
The two bases are the computational basis and its Fourier transform.
Their pinching channels are
\begin{equation}\label{eq:clockshift-pinchings}
  \Phi_Z(z)=\frac1n\sum_{a=0}^{n-1}Z^azZ^{-a},
  \qquad
  \Phi_X(z)=\frac1n\sum_{a=0}^{n-1}X^azX^{-a}.
\end{equation}
Taking $k=2$ and $d=1$ gives
\begin{equation}\label{eq:Rd-clockshift}
  \mathcal R_1(\Phi_Z,\Phi_X)
  \geq\max\left\lbrace0,
    \frac{\sqrt2(1-1/n)-1}{1+\sqrt2/n}\right\rbrace,
\end{equation}
together with the upper bound $\mathcal R_1(\Phi_Z,\Phi_X)\leq\sqrt n/(\sqrt n+2)$, the white-noise robustness of a pair of mutually unbiased bases \cite{DesignolleSkrzypczykFrowisBrunner2019}, which is also the white-noise robustness of $(\Phi_Z,\Phi_X)$ at $d=1$.
For example, at $n=4$ the lower bound is $(8\sqrt2-11)/7\approx0.0448$.
It is positive for $n\geq4$ and tends to $\sqrt2-1$ as $n\to\infty$.
For $n=2,3$ it gives only the trivial bound, although the algebraic argument in \Cref{prop:mult-domain} still proves incompatibility.
More generally, the lower bound in \eqref{eq:Rd-mub} is positive whenever $d<k(1-1/n)^2$, so it quantifies the same range of memory dimensions excluded by \Cref{thm:dtwo}.

\section{Conclusions and future directions}
We formulated the compatibility of quantum measurements, channels and instruments as a common factorization through $S_1^d\otimes^\wedge L_1(\Omega)$, a $d$-dimensional quantum register together with an unrestricted classical register.
The least such $d$ is the quantum memory that a common implementation has to keep until the setting $x$ is supplied.
The operator-space formulation gave four groups of results.
First, the $d$-compatible multi-instruments with a separable input space and finite-dimensional outputs form a closed set for an arbitrary set of settings (\Cref{thm:closurePOVM,thm:closure-instruments}).
This holds even when the limit has no parent with finitely many outcomes.
Second, operator-space duality expresses the c.b.\ and c.p.\ compatibility functionals as suprema over witnesses (\Cref{thm:cb-criterion,thm:cp-criterion}).
For the pinching channels onto $k$ mutually unbiased bases of $\bC^n$ these witnesses exclude every $d<k(1-1/n)^2$ (\Cref{thm:dtwo}).
For the clock--shift pair and its finite-group analogue an algebraic argument excludes every $d$ below the full dimension (\Cref{prop:mult-domain,thm:finite-gp}), although each channel alone needs a much smaller register.
Third, in finite dimensions and for finitely many settings and outcomes, $d$-compatibility is equivalent to the $d$-preparability of the Choi state assemblage (\Cref{thm:choi-assemblage}) and to an implementation by one-way LOCC with a shared state of Schmidt number at most $d$ (\Cref{sec:one-way-LOCC}).
The bijection of \Cref{prop:steering-equivalent} identifies the $d$-compatible multi-instruments with the $d$-preparable state assemblages of any fixed invertible reduced state.
Fourth, one plus the generalized robustness $\mathcal R_d$ is the largest advantage over all players with a $d$-dimensional quantum memory in the memory-bounded nontransient preparation game (\Cref{prop:robustness-dual}).
The robustness $\mathcal R_d$ is at most $\gamma_{d,\cp}-1$ (\Cref{prop:gamma-bounds-Rg}), and it is at most $n/d-1$ for every family with input dimension $n$ (\Cref{cor:universal-bound}).

\paragraph{Future directions.}
Several questions remain open.
The bijection of \Cref{prop:steering-equivalent} suggests a resource theory of steering with quantum outputs, whose free objects include the $d$-preparable assemblages with a fixed reduced state.
We do not develop this theory in full here, and there is room to develop it along the lines of the resource theory of steering \cite{Gallego2015}.
For multi-instruments, \Cref{fig:side-channel} shows an example of an operation that should be free.
Another direction is to generalize the factorization framework to higher-order quantum processes such as quantum combs \cite{Chiribella_2008,Chiribella_2009}.
The closure theorems allow a separable input space but assume a finite-dimensional output.
We do not prove the case of an infinite-dimensional output here.
Likewise, the Choi characterization, the criteria and the robustness duality are proved only in finite dimensions.
Their extension to continuous-variable systems is open.

\vspace{3em}

\paragraph*{Acknowledgments.}
FL was supported by National Science Foundation Grant No.~2426103.
MJ is partially supported by NSF DMS 2247114 and NSF DMS 2556195.
YZ acknowledges the support of the Institute for Quantum Computing, the University of Waterloo, and the Perimeter Institute. Research at the Institute for Quantum Computing is supported by Innovation, Science and Economic Development Canada.

\printbibliography

@article{Heisenberg_1927,
	author = {Heisenberg, W. },
	da = {1927/03/01},
	doi = {10.1007/BF01397280},
	id = {Heisenberg1927},
	isbn = {0044-3328},
	journal = {Zeitschrift f{\"u}r Physik},
	number = {3},
	pages = {172--198},
	title = {{\"U}ber den anschaulichen Inhalt der quantentheoretischen Kinematik und Mechanik},
	ty = {JOUR},
	url = {https://doi.org/10.1007/BF01397280},
	volume = {43},
	year = {1927},
}

@article{sudarsanan2024higher,
doi = {10.1088/1367-2630/ad93f6},
url = {https://doi.org/10.1088/1367-2630/ad93f6},
year = {2024},
month = {dec},
publisher = {IOP Publishing},
volume = {26},
number = {12},
pages = {123003},
author = {Sudarsanan Ragini, Nidhin and Sazim, Sk},
title = {Higher-order incompatibility improves distinguishability of causal quantum networks},
journal = {New Journal of Physics},
eprint = {2405.20080},
primaryclass = {quant-ph},
archiveprefix = {arXiv}
}

@article{Bohr_1928,
	author = {Bohr, N. },
	da = {1928/04/01},
	doi = {10.1038/121580a0},
	id = {BOHR1928},
	isbn = {1476-4687},
	journal = {Nature},
	number = {3050},
	pages = {580--590},
	title = {The Quantum Postulate and the Recent Development of Atomic Theory1},
	ty = {JOUR},
	url = {https://doi.org/10.1038/121580a0},
	volume = {121},
	year = {1928}
}

@article{Heinosaari_2016,
   title={An invitation to quantum incompatibility},
   volume={49},
   ISSN={1751-8121},
   url={http://dx.doi.org/10.1088/1751-8113/49/12/123001},
   DOI={10.1088/1751-8113/49/12/123001},
   number={12},
   journal={Journal of Physics A: Mathematical and Theoretical},
   publisher={IOP Publishing},
   author={Heinosaari, Teiko and Miyadera, Takayuki and Ziman, Mário},
   year={2016},
   month=feb, pages={123001}
}

@article{Uola_2018,
    author = {Uola, Roope and Budroni, Costantino and G\"uhne, Otfried and Pellonp\"a\"a, Juha-Pekka},
    title = {{One-to-one mapping between steering and joint measurability problems}},
    journal = {Physical Review Letters},
    volume = {115},
    number = {23},
    pages = {230402},
    year = {2015},
    doi = {10.1103/PhysRevLett.115.230402},
    eprint = {1507.08633},
    archivePrefix = {arXiv}
}

@article{Zhang2025costofsimulating,
  doi = {10.22331/q-2025-10-31-1902},
  url = {https://doi.org/10.22331/q-2025-10-31-1902},
  title = {Cost of {S}imulating {E}ntanglement in {S}teering {S}cenarios},
  author = {Zhang, Yujie and Zhang, Jiaxuan and Chitambar, Eric},
  journal = {{Quantum}},
  issn = {2521-327X},
  publisher = {{Verein zur F{\"{o}}rderung des Open Access Publizierens in den Quantenwissenschaften}},
  volume = {9},
  pages = {1902},
  month = oct,
  year = {2025}
}

@article{Skrzypczyk2020,
  ids     = {SkrzypczykHobanSainzLinden2020},  % duplicate keys used in the .tex files
  doi = {10.1103/physrevresearch.2.023292},
  url = {https://doi.org/10.1103%2Fphysrevresearch.2.023292},
  year = 2020,
  month = {jun},
  publisher = {American Physical Society ({APS})},
  volume = {2},
  number = {2},
  pages = {023292},
  author = {Paul Skrzypczyk and Matty J. Hoban and Ana Bel{\'{e}}n Sainz and Noah Linden},
  title = {Complexity of compatible measurements},
  journal = {Physical Review Research}
}

@book{pisier1986factorization,
  title={Factorization of Linear Operators and Geometry of Banach Spaces},
  author={Pisier, G. and Conference Board of the Mathematical Sciences},
  isbn={9780821807101},
  lccn={lc85018605},
  series={Regional conference series in mathematics},
  url={https://books.google.com/books?id=5SbaBwAAQBAJ},
  year={1986},
  publisher={Conference Board of the Mathematical Sciences}
}

@article{Pisier1998,
  author = {Pisier, Gilles},
  journal = {Ast{\'{e}}risque},
  publisher = {Soci{\'{e}}t{\'{e}} Math{\'{e}}matique de France},
  title = {{Non-commutative vector valued $L_p$-spaces and completely $p$-summing maps}},
  volume = {247},
  year = {1998}
}

@book{Pisier_2003,
    place={Cambridge},
    series={London Mathematical Society Lecture Note Series},
    title={Introduction to Operator Space Theory},
    publisher={Cambridge University Press},
    author={Pisier, Gilles},
    year={2003},
    collection={London Mathematical Society Lecture Note Series}
}

@article{Chruscinski2006,
archivePrefix = {arXiv},
arxivId = {quant-ph/0511244},
author = {Chru{\'{s}}ci{\'{n}}ski, Dariusz and Kossakowski, Andrzej},
doi = {10.1007/s11080-006-7264-7},
eprint = {quant-ph/0511244},
issn = {12301612},
journal = {Open Systems \& Information Dynamics},
number = {1},
pages = {17--26},
primaryClass = {quant-ph},
title = {{On partially entanglement breaking channels}},
volume = {13},
year = {2006}
}

@book{Effros2000,
  title={Operator Spaces},
  author={Effros, E.G. and Ruan, Z.J.},
  isbn={9780198534822},
  lccn={lc00056685},
  series={London Mathematical Society monographs},
  url={https://books.google.com/books?id=v7mj8Dy84k8C},
  year={2000},
  publisher={Clarendon Press}
}

@article{Heinrich_1980,
author = {Heinrich, Stefan},
journal = {Journal f\"{u}r die reine und angewandte Mathematik},
pages = {72-104},
title = {Ultraproducts in Banach space theory.},
url = {http://eudml.org/doc/152195},
volume = {313},
year = {1980},
}

@book{Diestel1977,
  title={Vector Measures},
  author={Diestel, J. and Uhl, J.J.},
  isbn={9780821815151},
  lccn={lc77009625},
  series={Mathematical surveys and monographs},
  url={https://books.google.com/books?id=fEKZAwAAQBAJ},
  year={1977},
  publisher={American Mathematical Society}
}

@book{Cohn1994,
  title={Measure Theory},
  author={Cohn, D.L.},
  isbn={9780817630034},
  lccn={80014768},
  url={https://books.google.com/books?id=vRxV2FwJvoAC},
  year={1994},
  publisher={Birkh{\"a}user Boston}
}

@article{Ozawa_1984,
    author = {Ozawa, Masanao},
    title = {Quantum measuring processes of continuous observables},
    journal = {Journal of Mathematical Physics},
    volume = {25},
    number = {1},
    pages = {79-87},
    year = {1984},
    month = {01},
    issn = {0022-2488},
    doi = {10.1063/1.526000},
    url = {https://doi.org/10.1063/1.526000},
    eprint = {https://pubs.aip.org/aip/jmp/article-pdf/25/1/79/19265867/79_1_online.pdf},
}

@article{Chiribella_2008,
  title={Quantum circuits architecture},
  author={Chiribella, Giulio and D'Ariano, Giacomo Mauro and Perinotti, Paolo},
  journal={Physical Review Letters},
  volume={101},
  number={6},
  pages={060401},
  year={2008}
}

@misc{Kim-Zhang2026,
  author       = {Yujie Zhang},
  title        = {Numerical Bounds on Quantum-Channel $d$-Compatibility},
  year         = {2026},
  howpublished = {Python source code},
  url          = {https://github.com/yujie4phy/corner-channel-bounds}
}

@article{Chiribella_2009,
  title={Theoretical framework for quantum networks},
  author={Chiribella, Giulio and D'Ariano, Giacomo Mauro and Perinotti, Paolo},
  journal={Physical Review A},
  volume={80},
  number={2},
  pages={022339},
  year={2009}
}

@Article{Achenbach_2025,
author={Achenbach, Tim
and Bluhm, Andreas
and Lepp{\"a}j{\"a}rvi, Leevi
and Nechita, Ion
and Pl{\'a}vala, Martin},
title={Factorization of multimeters: a unified view on nonclassical quantum phenomena},
journal={Letters in Mathematical Physics},
year={2026},
month={May},
day={21},
volume={116},
number={3},
pages={56},
issn={1573-0530},
doi={10.1007/s11005-026-02088-2},
url={https://doi.org/10.1007/s11005-026-02088-2}
}

@book{Paulsen_2003,
    place={Cambridge},
    series={Cambridge Studies in Advanced Mathematics},
    title={Completely Bounded Maps and Operator Algebras},
    publisher={Cambridge University Press},
    author={Paulsen, Vern},
    year={2003},
    collection={Cambridge Studies in Advanced Mathematics}
}

@article{Skowronek2009,
  title={Cones of positive maps and their duality relations},
  author={Lukasz Skowronek and Erling St{\o}rmer and Karol Życzkowski},
  journal={Journal of Mathematical Physics},
  year={2009},
  volume={50},
  pages={062106},
  url={https://api.semanticscholar.org/CorpusID:9085540}
}

@phdthesis{JungeHab,
  author = {Junge, Marius},
  title  = {Factorization Theory for Spaces of Operators},
  type   = {Habilitationsschrift},
  school = {Universit\"at Kiel},
  year   = {1996},
}

@book{rudin1987real,
  title={Real and Complex Analysis},
  author={Rudin, Walter},
  isbn={978-0070542340},
  edition={3rd},
  year={1987},
  publisher={McGraw-Hill},
  address={New York}
}

@article{Ji2024,
  ids     = {JiChitambar2024},  % duplicate keys used in the .tex files
  title   = {Incompatibility as a resource for programmable quantum instruments},
  author  = {Ji, Kaiyuan and Chitambar, Eric},
  journal = {PRX Quantum},
  volume  = {5},
  number  = {1},
  pages   = {010340},
  year    = {2024},
  publisher = {American Physical Society},
  doi     = {10.1103/PRXQuantum.5.010340},
  eprint  = {2112.03717},
  archivePrefix = {arXiv},
  primaryClass  = {quant-ph}
}

@article{Regula18,
  title   = {Convex geometry of quantum resource quantification},
  author  = {Regula, Bartosz},
  journal = {Journal of Physics A: Mathematical and Theoretical},
  volume  = {51},
  number  = {4},
  pages   = {045303},
  year    = {2018},
  publisher = {IOP Publishing},
  doi     = {10.1088/1751-8121/aa9100},
  eprint  = {1707.06298},
  archivePrefix = {arXiv},
  primaryClass  = {quant-ph}
}

@article{TR19,
  title   = {General Resource Theories in Quantum Mechanics and Beyond: Operational Characterization via Discrimination Tasks},
  author  = {Takagi, Ryuji and Regula, Bartosz},
  journal = {Physical Review X},
  volume  = {9},
  number  = {3},
  pages   = {031053},
  year    = {2019},
  publisher = {American Physical Society},
  doi     = {10.1103/PhysRevX.9.031053},
  eprint  = {1901.08127},
  archivePrefix = {arXiv},
  primaryClass  = {quant-ph}
}

@article{TRBLA19,
  title   = {Operational Advantage of Quantum Resources in Subchannel Discrimination},
  author  = {Takagi, Ryuji and Regula, Bartosz and Bu, Kaifeng and Liu, Zi-Wen and Adesso, Gerardo},
  journal = {Physical Review Letters},
  volume  = {122},
  number  = {14},
  pages   = {140402},
  year    = {2019},
  publisher = {American Physical Society},
  doi     = {10.1103/PhysRevLett.122.140402},
  eprint  = {1809.01672},
  archivePrefix = {arXiv},
  primaryClass  = {quant-ph}
}

@article{SSC19,
  title   = {All Sets of Incompatible Measurements give an Advantage in Quantum State Discrimination},
  author  = {Skrzypczyk, Paul and \v{S}upi\'c, Ivan and Cavalcanti, Daniel},
  journal = {Physical Review Letters},
  volume  = {122},
  number  = {13},
  pages   = {130403},
  year    = {2019},
  publisher = {American Physical Society},
  doi     = {10.1103/PhysRevLett.122.130403},
  eprint  = {1901.00816},
  archivePrefix = {arXiv},
  primaryClass  = {quant-ph}
}

@article{UKSYG19,
  title   = {Quantifying Quantum Resources with Conic Programming},
  author  = {Uola, Roope and Kraft, Tristan and Shang, Jiangwei and Yu, Xiao-Dong and G\"uhne, Otfried},
  journal = {Physical Review Letters},
  volume  = {122},
  number  = {13},
  pages   = {130404},
  year    = {2019},
  publisher = {American Physical Society},
  doi     = {10.1103/PhysRevLett.122.130404},
  eprint  = {1812.09216},
  archivePrefix = {arXiv},
  primaryClass  = {quant-ph}
}

@article{DFK19,
  ids     = {DesignolleFarkasKaniewski2019},  % duplicate keys used in the .tex files
  title   = {Incompatibility robustness of quantum measurements: a unified framework},
  author  = {Designolle, S\'ebastien and Farkas, M\'at\'e and Kaniewski, J\k{e}drzej},
  journal = {New Journal of Physics},
  volume  = {21},
  number  = {11},
  pages   = {113053},
  year    = {2019},
  publisher = {IOP Publishing},
  doi     = {10.1088/1367-2630/ab5020},
  eprint  = {1906.00448},
  archivePrefix = {arXiv},
  primaryClass  = {quant-ph}
}

@article{Haapasalo15,
  title   = {Robustness of incompatibility for quantum devices},
  author  = {Haapasalo, Erkka},
  journal = {Journal of Physics A: Mathematical and Theoretical},
  volume  = {48},
  number  = {25},
  pages   = {255303},
  year    = {2015},
  publisher = {IOP Publishing},
  doi     = {10.1088/1751-8113/48/25/255303},
  eprint  = {1502.04881},
  archivePrefix = {arXiv},
  primaryClass  = {quant-ph}
}

@article{Choi74,
  title   = {A Schwarz inequality for positive linear maps on $C^*$-algebras},
  author  = {Choi, Man-Duen},
  journal = {Illinois Journal of Mathematics},
  volume  = {18},
  number  = {4},
  pages   = {565--574},
  year    = {1974},
  publisher = {Duke University Press},
  doi     = {10.1215/ijm/1256051007}
}

@book{Pisier96,
  author    = {Pisier, Gilles},
  title     = {The Operator {H}ilbert Space {OH}, Complex Interpolation and Tensor Norms},
  series    = {Memoirs of the American Mathematical Society},
  number    = {585},
  volume    = {122},
  publisher = {American Mathematical Society},
  address   = {Providence, RI},
  year      = {1996},
  issn      = {0065-9266}
}

@article{Ioannou2022,
  ids     = {IoannouEtAl2022},  % duplicate keys used in the .tex files
  author  = {Ioannou, Marie and Sekatski, Pavel and Designolle, S\'ebastien and
             Jones, Benjamin D. M. and Uola, Roope and Brunner, Nicolas},
  title   = {Simulability of High-Dimensional Quantum Measurements},
  journal = {Physical Review Letters},
  volume  = {129},
  number  = {19},
  pages   = {190401},
  year    = {2022},
  doi     = {10.1103/PhysRevLett.129.190401},
  eprint  = {2202.12980},
  archivePrefix = {arXiv},
  primaryClass  = {quant-ph}
}

@article{Jones2023,
  ids     = {JonesEtAl2023},  % duplicate keys used in the .tex files
  author  = {Jones, Benjamin D. M. and Uola, Roope and Cope, Thomas and Ioannou, Marie and
             Designolle, S\'ebastien and Sekatski, Pavel and Brunner, Nicolas},
  title   = {Equivalence between simulability of high-dimensional measurements and
             high-dimensional steering},
  journal = {Physical Review A},
  volume  = {107},
  number  = {5},
  pages   = {052425},
  year    = {2023},
  doi     = {10.1103/PhysRevA.107.052425},
  eprint  = {2207.04080},
  archivePrefix = {arXiv},
  primaryClass  = {quant-ph}
}

@article{Buscemi2020,
  ids     = {BuscemiChitambarZhou2020},  % duplicate keys used in the .tex files
  author  = {Buscemi, Francesco and Chitambar, Eric and Zhou, Wenbin},
  title   = {Complete Resource Theory of Quantum Incompatibility as Quantum Programmability},
  journal = {Physical Review Letters},
  volume  = {124}, number = {12}, pages = {120401}, year = {2020},
  doi     = {10.1103/PhysRevLett.124.120401},
  eprint  = {1908.11274}, archivePrefix = {arXiv}
}

@article{Buscemi2023,
  author  = {Buscemi, Francesco and Kobayashi, Kodai and Minagawa, Shintaro and
             Perinotti, Paolo and Tosini, Alessandro},
  title   = {Unifying different notions of quantum incompatibility into a strict hierarchy
             of resource theories of communication},
  journal = {Quantum},
  volume  = {7},
  pages   = {1035},
  year    = {2023},
  doi     = {10.22331/q-2023-06-07-1035},
  eprint  = {2211.09226},
  archivePrefix = {arXiv},
  primaryClass  = {quant-ph}
}

@article{Piani2015,
  author  = {Piani, Marco},
  title   = {Channel steering},
  journal = {Journal of the Optical Society of America B},
  volume  = {32}, number = {4}, pages = {A1--A7}, year = {2015},
  doi     = {10.1364/JOSAB.32.0000A1},
  eprint  = {1411.0397}, archivePrefix = {arXiv}
}

@article{Zjawin2023,
  author  = {Zjawin, Beata and Schmid, David and Hoban, Matty J. and Sainz, Ana Bel\'en},
  title   = {The resource theory of nonclassicality of channel assemblages},
  journal = {Quantum},
  volume  = {7}, pages = {1134}, year = {2023},
  doi     = {10.22331/q-2023-10-10-1134}
}

@article{HeinosaariMiyadera2017,
  author  = {Heinosaari, Teiko and Miyadera, Takayuki},
  title   = {Incompatibility of quantum channels},
  journal = {Journal of Physics A: Mathematical and Theoretical},
  volume  = {50}, number = {13}, pages = {135302}, year = {2017},
  doi     = {10.1088/1751-8121/aa5f6b},
  eprint  = {1608.01794}, archivePrefix = {arXiv}
}

@article{Haapasalo2021,
  author  = {Haapasalo, Erkka and Kraft, Tristan and Miklin, Nikolai and Uola, Roope},
  title   = {Quantum marginal problem and incompatibility},
  journal = {Quantum},
  volume  = {5}, pages = {476}, year = {2021},
  doi     = {10.22331/q-2021-06-15-476},
  eprint  = {1909.02941}, archivePrefix = {arXiv}
}

@article{Girard2021,
  author  = {Girard, Mark and Pl\'avala, Martin and Sikora, Jamie},
  title   = {Jordan products of quantum channels and their compatibility},
  journal = {Nature Communications},
  volume  = {12}, pages = {2129}, year = {2021},
  doi     = {10.1038/s41467-021-22275-0},
  eprint  = {2009.03279}, archivePrefix = {arXiv}
}

@article{Hsieh2022,
  author  = {Hsieh, Chung-Yun and Lostaglio, Matteo and Ac\'in, Antonio},
  title   = {Quantum channel marginal problem},
  journal = {Physical Review Research},
  volume  = {4}, number = {1}, pages = {013249}, year = {2022},
  doi     = {10.1103/PhysRevResearch.4.013249},
  eprint  = {2102.10926}, archivePrefix = {arXiv}
}

@article{Barnum1996,
  author  = {Barnum, Howard and Caves, Carlton M. and Fuchs, Christopher A. and
             Jozsa, Richard and Schumacher, Benjamin},
  title   = {Noncommuting Mixed States Cannot Be Broadcast},
  journal = {Physical Review Letters},
  volume  = {76}, number = {15}, pages = {2818--2821}, year = {1996},
  doi     = {10.1103/PhysRevLett.76.2818},
  eprint  = {quant-ph/9511010}, archivePrefix = {arXiv}
}

@article{Busch1986,
  author  = {Busch, Paul},
  title   = {Unsharp reality and joint measurements for spin observables},
  journal = {Physical Review D},
  volume  = {33}, number = {8}, pages = {2253--2261}, year = {1986},
  doi     = {10.1103/PhysRevD.33.2253}
}

@article{Terhal2000,
  ids     = {TerhalHorodecki1999},  % duplicate keys used in the .tex files
  author  = {Terhal, Barbara M. and Horodecki, Pawe{\l}},
  title   = {Schmidt number for density matrices},
  journal = {Physical Review A},
  volume  = {61},
  number  = {4},
  pages   = {040301(R)},
  year    = {2000},
  doi     = {10.1103/PhysRevA.61.040301},
  eprint  = {quant-ph/9911117},
  archivePrefix = {arXiv}
}

@article{Khandelwal2025,
  author  = {Khandelwal, Shishir and Tavakoli, Armin},
  title   = {Simulating Quantum Instruments with Projective Measurements and
             Quantum Postprocessing},
  journal = {Physical Review Letters},
  volume  = {135},
  number  = {4},
  pages   = {040202},
  year    = {2025},
  doi     = {10.1103/bhr5-g71p},
  eprint  = {2503.00956},
  archivePrefix = {arXiv},
  primaryClass  = {quant-ph},
  note    = {APS assigned the non-numeric DOI 10.1103/bhr5-g71p}
}

@article{Carmeli2019,
  author  = {Carmeli, Claudio and Heinosaari, Teiko and Toigo, Alessandro},
  title   = {Quantum Incompatibility Witnesses},
  journal = {Physical Review Letters},
  volume  = {122},
  number  = {13},
  pages   = {130402},
  year    = {2019},
  doi     = {10.1103/PhysRevLett.122.130402},
  eprint  = {1812.02985},
  archivePrefix = {arXiv},
  primaryClass  = {quant-ph}
}

@article{CarmeliJMP2019,
  author  = {Carmeli, Claudio and Heinosaari, Teiko and Miyadera, Takayuki and Toigo, Alessandro},
  title   = {Witnessing incompatibility of quantum channels},
  journal = {Journal of Mathematical Physics},
  volume  = {60},
  number  = {12},
  pages   = {122202},
  year    = {2019},
  doi     = {10.1063/1.5126496},
  eprint  = {1906.10904},
  archivePrefix = {arXiv},
  primaryClass  = {quant-ph}
}

@article{Guhne2023,
  author  = {G\"uhne, Otfried and Haapasalo, Erkka and Kraft, Tristan and
             Pellonp\"a\"a, Juha-Pekka and Uola, Roope},
  title   = {Colloquium: Incompatible measurements in quantum information science},
  journal = {Reviews of Modern Physics},
  volume  = {95}, number = {1}, pages = {011003}, year = {2023},
  doi     = {10.1103/RevModPhys.95.011003},
  eprint  = {2112.06784}, archivePrefix = {arXiv}
}

@article{Leppajarvi2024,
  ids     = {LeppajarviSedlak2022,LeppajarviSedlak2024},  % duplicate keys used in the .tex files
  author  = {Lepp\"aj\"arvi, Leevi and Sedl\'ak, Michal},
  title   = {Incompatibility of quantum instruments},
  journal = {Quantum},
  volume  = {8},
  pages   = {1246},
  year    = {2024},
  doi     = {10.22331/q-2024-02-12-1246},
  eprint  = {2212.11225},
  archivePrefix = {arXiv},
  primaryClass  = {quant-ph}
}

@article{Mitra2022,
  ids     = {MitraFarkas2021,MitraFarkas2022},  % duplicate keys used in the .tex files
  author  = {Mitra, Arindam and Farkas, M\'at\'e},
  title   = {Compatibility of quantum instruments},
  journal = {Physical Review A},
  volume  = {105},
  number  = {5},
  pages   = {052202},
  year    = {2022},
  doi     = {10.1103/PhysRevA.105.052202},
  eprint  = {2110.00932},
  archivePrefix = {arXiv},
  primaryClass  = {quant-ph}
}

@article{HSR2003,
  author  = {Horodecki, Micha{\l} and Shor, Peter W. and Ruskai, Mary Beth},
  title   = {Entanglement breaking channels},
  journal = {Reviews in Mathematical Physics},
  volume  = {15},
  number  = {6},
  pages   = {629--641},
  year    = {2003},
  doi     = {10.1142/S0129055X03001709},
  eprint  = {quant-ph/0302031},
  archivePrefix = {arXiv}
}

@article{GJL2018tro,
  author  = {Gao, Li and Junge, Marius and LaRacuente, Nicholas},
  title   = {Capacity Estimates via Comparison with {TRO} Channels},
  journal = {Communications in Mathematical Physics},
  volume  = {364}, pages = {83--121}, year = {2018},
  doi     = {10.1007/s00220-018-3249-y},
  eprint  = {1609.08594}, archivePrefix = {arXiv},
  note    = {Page range read off secondary listings; confirm against the Springer page}
}

@article{Rosset2018,
  author  = {Rosset, Denis and Buscemi, Francesco and Liang, Yeong-Cherng},
  title   = {Resource Theory of Quantum Memories and Their Faithful Verification with Minimal Assumptions},
  journal = {Physical Review X},
  volume  = {8},
  pages   = {021033},
  year    = {2018},
  doi     = {10.1103/PhysRevX.8.021033},
  eprint  = {1710.04710},
  archivePrefix = {arXiv},
  primaryClass  = {quant-ph}
}

@article{Ivanovic81,
  author  = {Ivanovi{\'c}, I. D.},
  title   = {Geometrical description of quantal state determination},
  journal = {Journal of Physics A: Mathematical and General},
  volume  = {14},
  number  = {12},
  pages   = {3241--3245},
  year    = {1981},
  doi     = {10.1088/0305-4470/14/12/019}
}

@article{WF89,
  author  = {Wootters, William K. and Fields, Brian D.},
  title   = {Optimal state-determination by mutually unbiased measurements},
  journal = {Annals of Physics},
  volume  = {191},
  number  = {2},
  pages   = {363--381},
  year    = {1989},
  doi     = {10.1016/0003-4916(89)90322-9}
}

@article{BLN25,
  author  = {Bluhm, Andreas and Lepp{\"a}j{\"a}rvi, Leevi and Nechita, Ion},
  title   = {On the simulation of quantum multimeters},
  journal = {Quantum},
  volume  = {9},
  pages   = {1608},
  year    = {2025},
  doi     = {10.22331/q-2025-01-27-1608},
  eprint  = {2402.18333},
  archivePrefix = {arXiv},
  primaryClass  = {quant-ph}
}

@article{Popa1983,
  author  = {Popa, Sorin},
  title   = {Orthogonal pairs of $*$-subalgebras in finite von {N}eumann algebras},
  journal = {Journal of Operator Theory},
  volume  = {9},
  number  = {2},
  pages   = {253--268},
  year    = {1983}
}

@article{Petz2007,
  author  = {Petz, D{\'e}nes},
  title   = {Complementarity in quantum systems},
  journal = {Reports on Mathematical Physics},
  volume  = {59},
  number  = {2},
  pages   = {209--224},
  year    = {2007},
  doi     = {10.1016/S0034-4877(07)00010-9},
  eprint  = {quant-ph/0610189},
  archivePrefix = {arXiv}
}

@article{OPS2007,
  author  = {Ohno, Hiromichi and Petz, D{\'e}nes and Sz{\'a}nt{\'o}, Andr{\'a}s},
  title   = {Quasi-orthogonal subalgebras of $4\times4$ matrices},
  journal = {Linear Algebra and its Applications},
  volume  = {425},
  pages   = {109--118},
  year    = {2007},
  doi     = {10.1016/j.laa.2007.03.020}
}

@article{Weiner2010,
  author  = {Weiner, Mih{\'a}ly},
  title   = {On orthogonal systems of matrix algebras},
  journal = {Linear Algebra and its Applications},
  volume  = {433},
  number  = {3},
  pages   = {520--533},
  year    = {2010},
  doi     = {10.1016/j.laa.2010.03.017},
  eprint  = {1002.0017},
  archivePrefix = {arXiv},
  note    = {arXiv title: On quasi-orthogonal systems of matrix algebras}
}

@article{BSTW2007,
  author  = {Boykin, P. Oscar and Sitharam, Meera and Tiep, Pham Huu and Wocjan, Pawel},
  title   = {Mutually unbiased bases and orthogonal decompositions of {L}ie algebras},
  journal = {Quantum Information and Computation},
  volume  = {7},
  number  = {4},
  pages   = {371--382},
  year    = {2007},
  doi     = {10.26421/QIC7.4-6},
  eprint  = {quant-ph/0506089},
  archivePrefix = {arXiv}
}

@article{Durt2010,
  author  = {Durt, Thomas and Englert, Berthold-Georg and Bengtsson, Ingemar and {\.Z}yczkowski, Karol},
  title   = {On mutually unbiased bases},
  journal = {International Journal of Quantum Information},
  volume  = {8},
  number  = {4},
  pages   = {535--640},
  year    = {2010},
  doi     = {10.1142/S0219749910006502}
}

@article{DesignolleSkrzypczykFrowisBrunner2019,
  author  = {Designolle, S{\'e}bastien and Skrzypczyk, Paul and Fr{\"o}wis, Florian and Brunner, Nicolas},
  title   = {Quantifying Measurement Incompatibility of Mutually Unbiased Bases},
  journal = {Physical Review Letters},
  volume  = {122},
  pages   = {050402},
  year    = {2019},
  doi     = {10.1103/PhysRevLett.122.050402}
}

@article{PLAVALA202515,
  author  = {Pl{\'a}vala, Martin and Ligthart, Laurens T. and Gross, David},
  title   = {The polarization hierarchy for polynomial optimization over convex bodies, with applications to nonnegative matrix rank},
  journal = {Linear Algebra and its Applications},
  volume  = {723},
  pages   = {15--32},
  year    = {2025},
  doi     = {10.1016/j.laa.2025.05.019}
}

@article{ChoiEffros1977,
  author  = {Choi, Man-Duen and Effros, Edward G.},
  title   = {Injectivity and operator spaces},
  journal = {Journal of Functional Analysis},
  volume  = {24},
  number  = {2},
  pages   = {156--209},
  year    = {1977},
  doi     = {10.1016/0022-1236(77)90052-0}
}

@article{Kitaev1997,
  author  = {Kitaev, A. Yu.},
  title   = {Quantum computations: algorithms and error correction},
  journal = {Russian Mathematical Surveys},
  volume  = {52},
  number  = {6},
  pages   = {1191--1249},
  year    = {1997},
  doi     = {10.1070/RM1997v052n06ABEH002155}
}

@book{Watrous2018,
  author    = {Watrous, John},
  title     = {The Theory of Quantum Information},
  publisher = {Cambridge University Press},
  address   = {Cambridge},
  year      = {2018},
  doi       = {10.1017/9781316848142}
}

@article{Alessandro2025,
  author  = {D'Alessandro, Nicola and Roch i Carceller, Carles and Tavakoli, Armin},
  title   = {Semidefinite Relaxations for High-Dimensional Entanglement in the Steering Scenario},
  journal = {Physical Review Letters},
  volume  = {134},
  number  = {9},
  pages   = {090802},
  year    = {2025},
  doi     = {10.1103/PhysRevLett.134.090802}
}

@article{Gois2023,
  ids     = {deGoisEtAl2023},
  author  = {de Gois, Carlos and Pl{\'a}vala, Martin and Schwonnek, Ren{\'e} and G{\"u}hne, Otfried},
  title   = {Complete Hierarchy for High-Dimensional Steering Certification},
  journal = {Physical Review Letters},
  volume  = {131},
  number  = {1},
  pages   = {010201},
  year    = {2023},
  doi     = {10.1103/PhysRevLett.131.010201}
}

@article{MitraFarkas2023,
  author  = {Mitra, Arindam and Farkas, M{\'a}t{\'e}},
  title   = {Characterizing and Quantifying the Incompatibility of Quantum Instruments},
  journal = {Physical Review A},
  volume  = {107},
  pages   = {032217},
  year    = {2023},
  doi     = {10.1103/PhysRevA.107.032217}
}

@article{Designolle2021,
  author  = {Designolle, S{\'e}bastien and Srivastav, Vatshal and Uola, Roope and Herrera Valencia, Natalia and McCutcheon, Will and Malik, Mehul and Brunner, Nicolas},
  title   = {Genuine High-Dimensional Quantum Steering},
  journal = {Physical Review Letters},
  volume  = {126},
  number  = {20},
  pages   = {200404},
  year    = {2021},
  doi     = {10.1103/PhysRevLett.126.200404},
  eprint  = {2007.02718},
  archivePrefix = {arXiv},
  primaryClass  = {quant-ph}
}

@article{Bavaresco2017,
  author  = {Bavaresco, Jessica and Quintino, Marco T\'ulio and Guerini, Leonardo and
             Maciel, Thiago O. and Cavalcanti, Daniel and Terra Cunha, Marcelo},
  title   = {Most incompatible measurements for robust steering tests},
  journal = {Physical Review A},
  volume  = {96}, number = {2}, pages = {022110}, year = {2017},
  doi     = {10.1103/PhysRevA.96.022110},
  eprint  = {1704.02994}, archivePrefix = {arXiv}
}

@article{Bluhm2022,
  author  = {Bluhm, Andreas and Jen\v{c}ov\'a, Anna and Nechita, Ion},
  title   = {Incompatibility in General Probabilistic Theories, Generalized Spectrahedra,
             and Tensor Norms},
  journal = {Communications in Mathematical Physics},
  volume  = {393},
  number  = {3},
  pages   = {1125--1198},
  year    = {2022},
  doi     = {10.1007/s00220-022-04379-w},
  eprint  = {2011.06497},
  archivePrefix = {arXiv},
  primaryClass  = {quant-ph}
}

@article{BluhmNechita2018,
  author  = {Bluhm, Andreas and Nechita, Ion},
  title   = {Joint measurability of quantum effects and the matrix diamond},
  journal = {Journal of Mathematical Physics},
  volume  = {59}, number = {11}, pages = {112202}, year = {2018},
  eprint  = {1807.01508}, archivePrefix = {arXiv}
}

@article{BluhmNechita2020,
  author  = {Bluhm, Andreas and Nechita, Ion},
  title   = {Compatibility of Quantum Measurements and Inclusion Constants for the Matrix Jewel},
  journal = {SIAM Journal on Applied Algebra and Geometry},
  volume  = {4}, number = {2}, pages = {255--296}, year = {2020},
  doi     = {10.1137/19M123837X},
  eprint  = {1809.04514}, archivePrefix = {arXiv}
}

@article{BluhmNechita2022cube,
  author  = {Bluhm, Andreas and Nechita, Ion},
  title   = {Maximal violation of steering inequalities and the matrix cube},
  journal = {Quantum},
  volume  = {6}, pages = {656}, year = {2022},
  doi     = {10.22331/q-2022-02-21-656},
  eprint  = {2105.11302}, archivePrefix = {arXiv}
}

@article{BluhmNechita2022tensor,
  author  = {Bluhm, Andreas and Nechita, Ion},
  title   = {A tensor norm approach to quantum compatibility},
  journal = {Journal of Mathematical Physics},
  volume  = {63}, number = {6}, pages = {062201}, year = {2022},
  eprint  = {2202.13993}, archivePrefix = {arXiv}
}

@article{BluhmRauberWolf2018,
  author  = {Bluhm, Andreas and Rauber, Lukas and Wolf, Michael M.},
  title   = {Quantum Compression Relative to a Set of Measurements},
  journal = {Annales Henri Poincar\'e},
  volume  = {19}, number = {6}, pages = {1891--1937}, year = {2018},
  doi     = {10.1007/s00023-018-0660-z}
}

@article{Busch2013,
  author  = {Busch, Paul and Heinosaari, Teiko and Schultz, Jussi and Stevens, Neil},
  title   = {Comparing the degrees of incompatibility inherent in probabilistic physical theories},
  journal = {Europhysics Letters},
  volume  = {103}, number = {1}, pages = {10002}, year = {2013},
  doi     = {10.1209/0295-5075/103/10002},
  eprint  = {1210.4142}, archivePrefix = {arXiv}
}

@article{Carmeli2012,
  author  = {Carmeli, Claudio and Heinosaari, Teiko and Toigo, Alessandro},
  title   = {Informationally complete joint measurements on finite quantum systems},
  journal = {Physical Review A},
  volume  = {85}, number = {1}, pages = {012109}, year = {2012},
  doi     = {10.1103/PhysRevA.85.012109},
  eprint  = {1111.3509}, archivePrefix = {arXiv}
}

@article{Carmeli2018,
  author  = {Carmeli, Claudio and Heinosaari, Teiko and Toigo, Alessandro},
  title   = {State discrimination with postmeasurement information and incompatibility of quantum measurements},
  journal = {Physical Review A},
  volume  = {98}, number = {1}, pages = {012126}, year = {2018},
  doi     = {10.1103/PhysRevA.98.012126},
  eprint  = {1804.09693}, archivePrefix = {arXiv}
}

@article{CavalcantiSkrzypczyk2016,
  author  = {Cavalcanti, Daniel and Skrzypczyk, Paul},
  title   = {Quantitative relations between measurement incompatibility, quantum steering,
             and nonlocality},
  journal = {Physical Review A},
  volume  = {93}, number = {5}, pages = {052112}, year = {2016},
  doi     = {10.1103/PhysRevA.93.052112}
}

@article{ChitambarGour2019,
  author  = {Chitambar, Eric and Gour, Gilad},
  title   = {Quantum resource theories},
  journal = {Reviews of Modern Physics},
  volume  = {91}, number = {2}, pages = {025001}, year = {2019},
  doi     = {10.1103/RevModPhys.91.025001},
  eprint  = {1806.06107}, archivePrefix = {arXiv}
}

@article{Ducuara2025,
  author  = {Ducuara, Andr\'es F. and Takakura, Ryo and Hernandez, Fernando J. and Susa, Cristian E.},
  title   = {Multiobject operational tasks for measurement incompatibility},
  journal = {Physical Review Research},
  volume  = {7}, number = {3}, pages = {033050}, year = {2025},
  doi     = {10.1103/m7ln-tb1s},
  eprint  = {2412.15615}, archivePrefix = {arXiv}
}

@article{Engineer2025,
  author  = {Engineer, Sophie and Goel, Suraj and Egelhaaf, Sophie and McCutcheon, Will and
             Srivastav, Vatshal and Leedumrongwatthanakun, Saroch and Wollmann, Sabine and
             Jones, Benjamin D. M. and Cope, Thomas and Brunner, Nicolas and Uola, Roope and
             Malik, Mehul},
  title   = {Certifying high-dimensional quantum channels},
  journal = {Physical Review Research},
  volume  = {7}, number = {3}, pages = {033233}, year = {2025},
  doi     = {10.1103/bwd5-wx7j},
  eprint  = {2408.15880}, archivePrefix = {arXiv}
}

@article{Filippov2018,
  author  = {Filippov, Sergey N. and Heinosaari, Teiko and Lepp\"aj\"arvi, Leevi},
  title   = {Simulability of observables in general probabilistic theories},
  journal = {Physical Review A},
  volume  = {97}, number = {6}, pages = {062102}, year = {2018},
  doi     = {10.1103/PhysRevA.97.062102},
  eprint  = {1803.11006}, archivePrefix = {arXiv}
}

@article{Guerini2017,
  author  = {Guerini, Leonardo and Bavaresco, Jessica and Terra Cunha, Marcelo and Ac\'in, Antonio},
  title   = {Operational framework for quantum measurement simulability},
  journal = {Journal of Mathematical Physics},
  volume  = {58}, number = {9}, pages = {092102}, year = {2017},
  doi     = {10.1063/1.4994303},
  eprint  = {1705.06343}, archivePrefix = {arXiv}
}

@article{Heinosaari2015,
  author  = {Heinosaari, Teiko and Kiukas, Jukka and Reitzner, Daniel},
  title   = {Noise robustness of the incompatibility of quantum measurements},
  journal = {Physical Review A},
  volume  = {92}, number = {2}, pages = {022115}, year = {2015},
  doi     = {10.1103/PhysRevA.92.022115},
  eprint  = {1501.04554}, archivePrefix = {arXiv}
}

@article{Hsieh2024thermo,
  author  = {Hsieh, Chung-Yun and Chen, Shin-Liang},
  title   = {Thermodynamic Approach to Quantifying Incompatible Instruments},
  journal = {Physical Review Letters},
  volume  = {133}, number = {17}, pages = {170401}, year = {2024},
  doi     = {10.1103/PhysRevLett.133.170401},
  eprint  = {2402.13080}, archivePrefix = {arXiv}
}

@article{JPPVW2010cmp,
  author  = {Junge, Marius and Palazuelos, Carlos and P\'erez-Garc\'ia, David and
             Villanueva, Ignacio and Wolf, Michael M.},
  title   = {Unbounded Violations of Bipartite {Bell} Inequalities via Operator Space Theory},
  journal = {Communications in Mathematical Physics},
  volume  = {300}, number = {3}, pages = {715--739}, year = {2010},
  doi     = {10.1007/s00220-010-1125-5}
}

@article{JPPVW2010prl,
  author  = {Junge, Marius and Palazuelos, Carlos and P\'erez-Garc\'ia, David and
             Villanueva, Ignacio and Wolf, Michael M.},
  title   = {Operator Space Theory: A Natural Framework for {Bell} Inequalities},
  journal = {Physical Review Letters},
  volume  = {104}, number = {17}, pages = {170405}, year = {2010},
  doi     = {10.1103/PhysRevLett.104.170405},
  eprint  = {0912.1941}, archivePrefix = {arXiv}
}

@article{Jencova2018,
  author  = {Jen\v{c}ov\'a, Anna},
  title   = {Incompatible measurements in a class of general probabilistic theories},
  journal = {Physical Review A},
  volume  = {98}, number = {1}, pages = {012133}, year = {2018},
  doi     = {10.1103/PhysRevA.98.012133},
  eprint  = {1705.08008}, archivePrefix = {arXiv}
}

@article{Jencova2022,
  author  = {Jen\v{c}ov\'a, Anna},
  title   = {Assemblages and steering in general probabilistic theories},
  journal = {Journal of Physics A: Mathematical and Theoretical},
  volume  = {55}, number = {43}, pages = {434001}, year = {2022},
  doi     = {10.1088/1751-8121/ac97ce},
  eprint  = {2202.09109}, archivePrefix = {arXiv}
}

@article{Kiukas2017,
  author  = {Kiukas, Jukka and Budroni, Costantino and Uola, Roope and Pellonp\"a\"a, Juha-Pekka},
  title   = {Continuous-variable steering and incompatibility via state-channel duality},
  journal = {Physical Review A},
  volume  = {96}, number = {4}, pages = {042331}, year = {2017},
  doi     = {10.1103/PhysRevA.96.042331},
  eprint  = {1704.05734}, archivePrefix = {arXiv}
}

@article{Kuramochi2018,
  author  = {Kuramochi, Yui},
  title   = {Quantum incompatibility of channels with general outcome operator algebras},
  journal = {Journal of Mathematical Physics},
  volume  = {59}, number = {4}, pages = {042203}, year = {2018},
  doi     = {10.1063/1.5008300},
  eprint  = {1708.00150}, archivePrefix = {arXiv}
}

@misc{Kuramochi2020,
  author  = {Kuramochi, Yui},
  title   = {Compact convex structure of measurements and its applications to simulability,
             incompatibility, and convex resource theory of continuous-outcome measurements},
  year    = {2020},
  eprint  = {2002.03504},
  archivePrefix = {arXiv},
  primaryClass  = {quant-ph},
  doi     = {10.48550/arXiv.2002.03504}
}

@article{Lee2026,
  author  = {Lee, Kuan-Yi and Lin, Jhen-Dong and Miranowicz, Adam and Chen, Yueh-Nan},
  title   = {General Class of Functionals for Certifying Quantum Incompatibility},
  journal = {Physical Review Letters},
  volume  = {136}, pages = {120203}, year = {2026},
  doi     = {10.1103/rsx1-zvbp},
  eprint  = {2601.02239}, archivePrefix = {arXiv}
}

@article{Lobo2026,
  author  = {Lobo, Edwin Peter and Balanz\'o-Juand\'o, Maria and Pironio, Stefano},
  title   = {Generalized measurement incompatibility},
  journal = {arXiv preprint},
  year    = {2026},
  eprint  = {2605.16151}, archivePrefix = {arXiv}
}

@article{Loulidi2021,
  author  = {Loulidi, Faedi and Nechita, Ion},
  title   = {The compatibility dimension of quantum measurements},
  journal = {Journal of Mathematical Physics},
  volume  = {62},
  pages   = {042205},
  year    = {2021},
  doi     = {10.1063/5.0028658},
  eprint  = {2008.10317},
  archivePrefix = {arXiv},
  primaryClass  = {quant-ph}
}

@article{Loulidi2022,
  author  = {Loulidi, Faedi and Nechita, Ion},
  title   = {Measurement Incompatibility versus {Bell} Nonlocality: An Approach via Tensor Norms},
  journal = {PRX Quantum},
  volume  = {3}, number = {4}, pages = {040325}, year = {2022},
  doi     = {10.1103/PRXQuantum.3.040325},
  eprint  = {2205.12668}, archivePrefix = {arXiv}
}

@article{Masini2024,
  author  = {Masini, Michele and Ioannou, Marie and Brunner, Nicolas and
             Pironio, Stefano and Sekatski, Pavel},
  title   = {Joint-measurability and quantum communication with untrusted devices},
  journal = {Quantum},
  volume  = {8}, pages = {1574}, year = {2024},
  doi     = {10.22331/q-2024-12-23-1574},
  eprint  = {2403.14785}, archivePrefix = {arXiv}
}

@misc{Minagawa2026,
  author        = {Minagawa, Shintaro and Takakura, Ryo and Torii, Kensei},
  title         = {Joint Realizability Tradeoffs Bounded by Quantum Channel Incompatibility},
  year          = {2026},
  eprint        = {2605.11924}, archivePrefix = {arXiv}, primaryClass = {quant-ph}
}

@article{Moroder2016,
  author  = {Moroder, Tobias and Gittsovich, Oleg and Huber, Marcus and Uola, Roope and
             G\"uhne, Otfried},
  title   = {Steering Maps and Their Application to Dimension-Bounded Steering},
  journal = {Physical Review Letters},
  volume  = {116}, number = {9}, pages = {090403}, year = {2016},
  doi     = {10.1103/PhysRevLett.116.090403},
  eprint  = {1412.2623}, archivePrefix = {arXiv}
}

@article{Oszmaniec2017,
  author  = {Oszmaniec, Micha\l{} and Guerini, Leonardo and Wittek, Peter and Ac\'in, Antonio},
  title   = {Simulating Positive-Operator-Valued Measures with Projective Measurements},
  journal = {Physical Review Letters},
  volume  = {119}, number = {19}, pages = {190501}, year = {2017},
  doi     = {10.1103/PhysRevLett.119.190501},
  eprint  = {1609.06139}, archivePrefix = {arXiv}
}

@article{Oszmaniec2019,
  author  = {Oszmaniec, Micha\l{} and Biswas, Tanmoy},
  title   = {Operational relevance of resource theories of quantum measurements},
  journal = {Quantum},
  volume  = {3}, pages = {133}, year = {2019},
  doi     = {10.22331/q-2019-04-26-133},
  eprint  = {1901.08566}, archivePrefix = {arXiv}
}

@article{PianiWatrous2015,
  author  = {Piani, Marco and Watrous, John},
  title   = {Necessary and Sufficient Quantum Information Characterization of
             {Einstein--Podolsky--Rosen} Steering},
  journal = {Physical Review Letters},
  volume  = {114}, number = {6}, pages = {060404}, year = {2015},
  doi     = {10.1103/PhysRevLett.114.060404}
}

@article{Porto2026lp,
  author  = {Porto, Lucas E. A. and Designolle, S\'ebastien and Pokutta, Sebastian and
             Quintino, Marco T\'ulio},
  title   = {Measurement incompatibility and quantum steering via linear programming},
  journal = {Quantum},
  volume  = {10}, pages = {2141}, year = {2026},
  doi     = {10.22331/q-2026-06-19-2141},
  eprint  = {2506.03045}, archivePrefix = {arXiv}
}

@article{Quintino2014,
  author  = {Quintino, Marco T\'ulio and V\'ertesi, Tam\'as and Brunner, Nicolas},
  title   = {Joint Measurability, {Einstein--Podolsky--Rosen} Steering, and {Bell} Nonlocality},
  journal = {Physical Review Letters},
  volume  = {113}, number = {16}, pages = {160402}, year = {2014},
  doi     = {10.1103/PhysRevLett.113.160402},
  eprint  = {1406.6976}, archivePrefix = {arXiv}
}

@article{Uola2014,
  author  = {Uola, Roope and Moroder, Tobias and G\"uhne, Otfried},
  title   = {Joint Measurability of Generalized Measurements Implies Classicality},
  journal = {Physical Review Letters},
  volume  = {113}, number = {16}, pages = {160403}, year = {2014},
  doi     = {10.1103/PhysRevLett.113.160403},
  eprint  = {1407.2224}, archivePrefix = {arXiv}
}

@article{Uola2020,
  author  = {Uola, Roope and Costa, Ana C. S. and Nguyen, H. Chau and G\"uhne, Otfried},
  title   = {Quantum steering},
  journal = {Reviews of Modern Physics},
  volume  = {92}, number = {1}, pages = {015001}, year = {2020},
  doi     = {10.1103/RevModPhys.92.015001},
  eprint  = {1903.06663}, archivePrefix = {arXiv}
}

@article{Yin2015,
  author  = {Yin, Zhi and Marciniak, Marcin and Horodecki, Micha{\l}},
  title   = {Operator space approach to steering inequality},
  journal = {Journal of Physics A: Mathematical and Theoretical},
  volume  = {48}, number = {13}, pages = {135303}, year = {2015},
  eprint  = {1405.1945}, archivePrefix = {arXiv}
}

@article{AubrunMullerHermes2023,
  author  = {Aubrun, Guillaume and M{\"u}ller-Hermes, Alexander},
  title   = {Annihilating Entanglement Between Cones},
  journal = {Communications in Mathematical Physics},
  volume  = {400},
  pages   = {931--976},
  year    = {2023},
  doi     = {10.1007/s00220-022-04621-5},
  eprint  = {2110.11825},
  archivePrefix = {arXiv}
}

@article{AubrunMullerHermesPlavala2025,
  author  = {Aubrun, Guillaume and M{\"u}ller-Hermes, Alexander and Pl{\'a}vala, Martin},
  title   = {Monogamy of entanglement between cones},
  journal = {Mathematische Annalen},
  volume  = {391},
  year    = {2025},
  doi     = {10.1007/s00208-024-02935-4},
  eprint  = {2206.11805},
  archivePrefix = {arXiv}
}

@misc{AubrunLaPianaMullerHermes2026,
  author  = {Aubrun, Guillaume and La Piana, Francesca and M{\"u}ller-Hermes, Alexander},
  title   = {Factorization through {Lorentz} cones},
  year    = {2026},
  eprint  = {2606.27825},
  archivePrefix = {arXiv},
  primaryClass  = {math.FA}
}

@article{AubrunDavidsonMullerHermesPaulsenRahaman2024,
  author  = {Aubrun, Guillaume and Davidson, Kenneth R. and M{\"u}ller-Hermes, Alexander and Paulsen, Vern I. and Rahaman, Mizanur},
  title   = {Completely bounded norms of $k$-positive maps},
  journal = {Journal of the London Mathematical Society},
  volume  = {109},
  number  = {6},
  pages   = {e12936},
  year    = {2024},
  doi     = {10.1112/jlms.12936},
  eprint  = {2401.12352},
  archivePrefix = {arXiv}
}

@article{Choi1975,
  author  = {Choi, Man-Duen},
  title   = {Completely positive linear maps on complex matrices},
  journal = {Linear Algebra and its Applications},
  volume  = {10},
  number  = {3},
  pages   = {285--290},
  year    = {1975},
  doi     = {10.1016/0024-3795(75)90075-0}
}

@article{Jamiolkowski1972,
  author  = {Jamio{\l}kowski, Andrzej},
  title   = {Linear transformations which preserve trace and positive semidefiniteness of operators},
  journal = {Reports on Mathematical Physics},
  volume  = {3},
  number  = {4},
  pages   = {275--278},
  year    = {1972},
  doi     = {10.1016/0034-4877(72)90011-0}
}

@article{Friis2019,
  author  = {Friis, Nicolai and Vitagliano, Giuseppe and Malik, Mehul and Huber, Marcus},
  title   = {Entanglement certification from theory to experiment},
  journal = {Nature Reviews Physics},
  volume  = {1},
  pages   = {72--87},
  year    = {2019},
  doi     = {10.1038/s42254-018-0003-5}
}

@article{Gharibian2010,
  author  = {Gharibian, Sevag},
  title   = {Strong {NP}-hardness of the quantum separability problem},
  journal = {Quantum Information and Computation},
  volume  = {10},
  number  = {3--4},
  pages   = {343--360},
  year    = {2010},
  doi     = {10.26421/QIC10.3-4-11},
  eprint  = {0810.4507},
  archivePrefix = {arXiv}
}

@article{Fawzi2021,
  author  = {Fawzi, Hamza},
  title   = {The Set of Separable States has no Finite Semidefinite Representation Except in Dimension $3\times2$},
  journal = {Communications in Mathematical Physics},
  volume  = {386},
  pages   = {1319--1335},
  year    = {2021},
  doi     = {10.1007/s00220-021-04163-2},
  eprint  = {1905.02575},
  archivePrefix = {arXiv}
}

@article{Heinosaari2014strong,
  author  = {Heinosaari, Teiko and Miyadera, Takayuki and Reitzner, Daniel},
  title   = {Strongly Incompatible Quantum Devices},
  journal = {Foundations of Physics},
  volume  = {44},
  pages   = {34--57},
  year    = {2014},
  doi     = {10.1007/s10701-013-9761-1},
  eprint  = {1209.1382},
  archivePrefix = {arXiv},
  primaryClass  = {quant-ph}
}

@article{Quintino2015,
  title = {Inequivalence of entanglement, steering, and Bell nonlocality for general measurements},
  author = {Quintino, Marco T\'ulio and V\'ertesi, Tam\'as and Cavalcanti, Daniel and Augusiak, Remigiusz and Demianowicz, Maciej and Ac\'{\i}n, Antonio and Brunner, Nicolas},
  journal = {Phys. Rev. A},
  volume = {92},
  issue = {3},
  pages = {032107},
  numpages = {6},
  year = {2015},
  month = {Sep},
  publisher = {American Physical Society},
  doi = {10.1103/PhysRevA.92.032107},
  url = {https://link.aps.org/doi/10.1103/PhysRevA.92.032107}
}

@article{Hirsch2018,
  title = {Quantum measurement incompatibility does not imply Bell nonlocality},
  author = {Hirsch, Flavien and Quintino, Marco T{\'u}lio and Brunner, Nicolas},
  journal = {Phys. Rev. A},
  volume = {97},
  issue = {1},
  pages = {012129},
  year = {2018},
  doi = {10.1103/PhysRevA.97.012129}
}

@article{WolfPerezGarciaFernandez2009,
  author  = {Wolf, Michael M. and P\'erez-Garc\'ia, David and Fernandez, Carlos},
  title   = {Measurements Incompatible in Quantum Theory Cannot Be Measured Jointly in Any Other No-Signaling Theory},
  journal = {Physical Review Letters},
  volume  = {103},
  pages   = {230402},
  year    = {2009},
  doi     = {10.1103/PhysRevLett.103.230402},
  eprint  = {0905.2998},
  archivePrefix = {arXiv},
  primaryClass  = {quant-ph}
}

@article{Plavala2025,
  title = {All Incompatible Measurements on Qubits Lead to Multiparticle Bell Nonlocality},
  author = {Pl{\'a}vala, Martin and G{\"u}hne, Otfried and Quintino, Marco T{\'u}lio},
  journal = {Phys. Rev. Lett.},
  volume = {134},
  issue = {20},
  pages = {200201},
  year = {2025},
  doi = {10.1103/PhysRevLett.134.200201}
}

@article{TavakoliUola2020,
  title = {Measurement incompatibility and steering are necessary and sufficient for operational contextuality},
  author = {Tavakoli, Armin and Uola, Roope},
  journal = {Phys. Rev. Research},
  volume = {2},
  issue = {1},
  pages = {013011},
  year = {2020},
  doi = {10.1103/PhysRevResearch.2.013011}
}

@article{Selby2023,
  title = {Contextuality without Incompatibility},
  author = {Selby, John H. and Schmid, David and Wolfe, Elie and Sainz, Ana Bel{\'e}n and Kunjwal, Ravi and Spekkens, Robert W.},
  journal = {Phys. Rev. Lett.},
  volume = {130},
  issue = {23},
  pages = {230201},
  year = {2023},
  doi = {10.1103/PhysRevLett.130.230201}
}

@article{ZhangSchmidYingSpekkens2026,
  title = {Reassessing the Boundary between Classical and Nonclassical for Individual Quantum Processes},
  author = {Zhang, Yujie and Schmid, David and Y{\`i}l{\`e} Y{\=i}ng and Spekkens, Robert W.},
  journal = {Phys. Rev. X},
  volume = {16},
  issue = {2},
  pages = {021050},
  year = {2026},
  doi = {10.1103/vqfz-wzjg}
}

@article{Zhang2026quantifiers,
  doi = {10.22331/q-2026-07-30-2180},
  url = {https://doi.org/10.22331/q-2026-07-30-2180},
  title = {Quantifiers and witnesses for the nonclassicality of measurements and of states},
  author = {Zhang, Yujie and Yīng, Y{\`{i}}l{\`{e}} and Schmid, David},
  journal = {{Quantum}},
  issn = {2521-327X},
  publisher = {{Verein zur F{\"{o}}rderung des Open Access Publizierens in den Quantenwissenschaften}},
  volume = {10},
  pages = {2180},
  month = jul,
  year = {2026}
}

@misc{GourHeinosaariSpekkens2018,
  author = {Gour, Gilad and Heinosaari, Teiko and Spekkens, Robert W.},
  year   = {2018},
  note   = {Introduced in APS March Meeting 2018,
            Session S26: Quantum Resource Theories I,
            Los Angeles, California},
  url    = {https://meetings-archive.aps.org/mar/2018/s26/4/}
}

@book{rudin1991functional,
  title     = {Functional Analysis},
  author    = {Rudin, Walter},
  edition   = {2nd},
  series    = {International Series in Pure and Applied Mathematics},
  year      = {1991},
  publisher = {McGraw-Hill},
  address   = {New York}
}

@article{DaviesLewis1970,
  author  = {Davies, E. B. and Lewis, J. T.},
  title   = {An operational approach to quantum probability},
  journal = {Communications in Mathematical Physics},
  volume  = {17},
  number  = {3},
  pages   = {239--260},
  year    = {1970},
  doi     = {10.1007/BF01647093}
}

@article{Pellonpaa2013,
  author  = {Pellonp\"a\"a, Juha-Pekka},
  title   = {Quantum instruments: {I}. {E}xtreme instruments},
  journal = {Journal of Physics A: Mathematical and Theoretical},
  volume  = {46},
  number  = {2},
  pages   = {025302},
  year    = {2013},
  eprint  = {1202.5905},
  archivePrefix = {arXiv}
}

@book{BuschQM2016,
  author    = {Busch, Paul and Lahti, Pekka and Pellonp\"a\"a, Juha-Pekka and Ylinen, Kari},
  title     = {Quantum Measurement},
  series    = {Theoretical and Mathematical Physics},
  publisher = {Springer},
  address   = {Cham},
  year      = {2016},
  doi       = {10.1007/978-3-319-43389-9}
}

@book{AubrunSzarek2017,
  author    = {Aubrun, Guillaume and Szarek, Stanis{\l}aw J.},
  title     = {Alice and {Bob} Meet {Banach}: The Interface of Asymptotic Geometric Analysis and Quantum Information Theory},
  series    = {Mathematical Surveys and Monographs},
  volume    = {223},
  publisher = {American Mathematical Society},
  address   = {Providence, RI},
  year      = {2017},
  doi       = {10.1090/surv/223}
}

@book{GuptaMandayamSunder2015,
  author    = {Gupta, Ved Prakash and Mandayam, Prabha and Sunder, V. S.},
  title     = {The Functional Analysis of Quantum Information Theory: A Collection of Notes Based on Lectures by {Gilles Pisier}, {K. R. Parthasarathy}, {Vern Paulsen} and {Andreas Winter}},
  series    = {Lecture Notes in Physics},
  volume    = {902},
  publisher = {Springer},
  address   = {Cham},
  year      = {2015},
  doi       = {10.1007/978-3-319-16718-3},
  eprint    = {1410.7188},
  archivePrefix = {arXiv}
}

@article{Huang2006,
  author  = {Huang, Siendong},
  title   = {Schmidt number for quantum operations},
  journal = {Physical Review A},
  volume  = {73}, number = {5}, pages = {052318}, year = {2006},
  doi     = {10.1103/PhysRevA.73.052318}
}

@misc{Sekatski2026,
  author        = {Sekatski, Pavel},
  title         = {The bottleneck dimension of quantum operations},
  year          = {2026},
  eprint        = {2608.25010},
  archivePrefix = {arXiv},
  note          = {arXiv preprint, August 2026}
}

@article{NamikiTokunaga2012,
  author  = {Namiki, Ryo and Tokunaga, Yuuki},
  title   = {Schmidt-number benchmark for genuine quantum memories and gates},
  journal = {Physical Review A},
  volume  = {85}, number = {1}, pages = {010305}, year = {2012},
  doi     = {10.1103/PhysRevA.85.010305},
  eprint  = {1202.0346}, archivePrefix = {arXiv}
}

@article{Gallego2015,
  author  = {Gallego, Rodrigo and Aolita, Leandro},
  title   = {Resource Theory of Steering},
  journal = {Physical Review X},
  volume  = {5}, number = {4}, pages = {041008}, year = {2015},
  doi     = {10.1103/PhysRevX.5.041008}
}

@book{Takesaki_1979,
  title     = {Theory of Operator Algebras I},
  author    = {Takesaki, Masamichi},
  publisher = {Springer-Verlag},
  address   = {New York},
  year      = {1979},
  doi       = {10.1007/978-1-4612-6188-9}
}

\appendix

\section{Background from operator space theory and functional analysis}\label{app:background}

This appendix collects the definitions and standard results from the literature that are used in the proofs of \Cref{app:closure,app:criterion,app:mub}.
\Cref{app:tensor} fixes the tensor products, \Cref{app:cb-facts} records two facts about completely bounded maps, \Cref{app:noncommLp} recalls noncommutative $L_p$-spaces, \Cref{app:OH} recalls Pisier's operator Hilbert space, \Cref{app:ultraprod,app:pisier,app:maharam} contain the ultraproduct constructions, Pisier's factorization results and the consequence of Maharam's classification used for the closure theorems, and \Cref{app:convex} collects the results from convex analysis behind the dual characterizations.

\subsection{Tensor products}\label{app:tensor}

Throughout, $V\otimes W$ denotes the \emph{algebraic} tensor product; the symbols below specify the norm (and, for operator spaces, the matrix norms) under which it is completed.

\paragraph{Hilbert space, $\otimes$.}
For Hilbert spaces $H,K$, the space $H\otimes K$ is the completion of $H\otimes K$ in the inner product
\begin{equation}
  \langle h\otimes k,\ h'\otimes k'\rangle \;=\; \langle h,h'\rangle_H\,\langle k,k'\rangle_K .
\end{equation}

\paragraph{Banach projective, $\otimes_{\pi}$.}
For Banach spaces $V,W$, the completion of $V\otimes W$ in
\begin{equation}
  \norm{u}_{\pi} \;:=\; \inf \{ \textstyle\sum_{i=1}^{n}\norm{v_i}_V\,\norm{w_i}_W \ \mid u=\sum_{i=1}^{n} v_i\otimes w_i \}.
\end{equation}
It linearizes bounded bilinear maps, and $(V\otimes_{\pi}W)^{*}=\cB(V,W^{*})$ (bounded linear maps $V\to W^{*}$).

\paragraph{Operator-space projective, $\otimes^{\wedge}$.}
For operator spaces $V,W$, the matrix norm on $\bM_n(V\otimes W)$ is
\begin{equation}
  \norm{u}_{\wedge}\;:=\;\inf \{\, \norm{\alpha}\,\norm{v}_{M_p(V)}\,\norm{w}_{M_q(W)}\,\norm{\beta}\,\},
  \qquad u=\alpha\,(v\otimes w)\,\beta,
\end{equation}
the infimum over $p,q\in\bN$, $v\in M_p(V)$, $w\in M_q(W)$, and scalar matrices $\alpha\in M_{n,pq}$, $\beta\in M_{pq,n}$, where $(v\otimes w)_{(i,k),(j,l)}=v_{ij}\otimes w_{kl}\in M_{pq}(V\otimes W)$.
It linearizes \emph{jointly completely bounded} bilinear maps, with $(V\otimes^{\wedge}W)^{*}=\CB(V,W^{*})$, and satisfies $S_1(H)\otimes^{\wedge}S_1(K)=S_1(H\otimes K)$.

\paragraph{Operator-space minimal (spatial), $\otimes_{\min}$.}
For operator spaces $V\subseteq\cB(H)$ and $W\subseteq\cB(K)$, the algebraic tensor product $V\otimes W$ sits inside $\cB(H\otimes K)$ via $(v\otimes w)(h\otimes k)=vh\otimes wk$, and $V\minten W$ is its closure there, with the matrix norms inherited at every level:
\begin{equation}
  \norm{u}_{\bM_n(V\minten W)}\;:=\;\norm{u}_{\bM_n(\cB(H\otimes K))},
  \qquad n\in\bN,\quad u\in \bM_n(V\otimes W),
\end{equation}
where $\bM_n(\cB(H\otimes K))=\cB(\ell_2^n\otimes H\otimes K)$.
These norms do not depend on the choice of completely isometric embeddings $V\subseteq\cB(H)$, $W\subseteq\cB(K)$.

\paragraph{Minimal $C^{*}$-norm, $\otimes_{\min}$.}
For $C^{*}$-algebras $\cA,\cB$, the minimal (spatial) $C^{*}$-norm on $\cA\otimes\cB$ is
\begin{equation}
  \norm{x}_{\min}\;:=\;\sup_{\pi,\sigma}\,\norm{(\pi\otimes\sigma)(x)},\qquad x\in \cA\otimes\cB,
\end{equation}
the supremum over all representations $\pi:\cA\to\cB(H_\pi)$, $\sigma:\cB\to\cB(H_\sigma)$; equivalently the norm of $x$ under any faithful spatial representation $\cA\otimes\cB\subseteq\cB(H\otimes K)$.
It is the smallest $C^{*}$-norm on $\cA\otimes\cB$, and $\cA\minten\cB$ is its completion (the spatial $C^{*}$-tensor product).

\subsection{Completely bounded maps}\label{app:cb-facts}

\begin{lemma}\label{lem:cb-duality}
For any $\sigma$-finite $(\Omega,\Sigma,\mu)$ and any $n\in\bN$,
\begin{equation}\label{eq:normal-cb-dual}
  \CB^\sigma(L_\infty(\Omega;\bM_d),\,\bM_n)=\bM_n(L_1(\Omega;S_1^d))
\end{equation}
completely isometrically, where $\CB^\sigma$ denotes the \emph{normal} (weak-$*$ continuous) completely bounded maps; moreover $\norm{\theta}_{\cb}=\norm{\theta}_{(n)}$ for every such $\theta$.
\end{lemma}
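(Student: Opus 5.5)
The plan has two parts: first identify the normal maps with the predual matrix space, then prove the matrix-level claim $\norm{\theta}_{\cb}=\norm{\theta}_{(n)}$.

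\emph{Step 1: the identification.} Write $N:=L_\infty(\Omega;\bM_d)=\bM_d\otimes_{\min}L_\infty(\Omega)$. By the tensor-product discussion of \Cref{subsec:os-instruments}, this is a von Neumann algebra with predual $N_*=L_1(\Omega;S_1^d)=S_1^d\otimes^\wedge L_1(\Omega)$, and the operator space dual of $N_*$ is $N$ completely isometrically. For a normal map $\theta:N\to\bM_n$, let $\theta_{ij}:=e_i^*\theta(\cdot)e_j$. Each $\theta_{ij}$ is a normal functional, so it lies in $N_*$. Conversely, any matrix $(f_{ij})\in\bM_n(N_*)$ defines a normal map $y\mapsto(\langle f_{ij},y\rangle)_{ij}$. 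This gives the linear bijection.

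For the norms, we use the definition of the predual operator space structure. It gives $\bM_n(N_*)\subseteq\bM_n(N^*)=\CB(N,\bM_n)$ isometrically at each matrix level; this is the standard fact $\bM_n(V^*)=\CB(V,\bM_n)$ (Effros--Ruan, Blecher--Paulsen). The identification is therefore isometric at level one. For higher levels, $\bM_k(\bM_n(N_*))=\bM_{kn}(N_*)=\CB^\sigma(N,\bM_{kn})$, and this matches the matrix norms of $\CB^\sigma(N,\bM_n)$ viewed as an operator space via $\bM_k(\CB(N,\bM_n))=\CB(N,\bM_k(\bM_n))$. So the identification is completely isometric. The only work here is checking that normality is preserved in both directions, which is immediate since normality is an entrywise condition.

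\emph{Step 2: the main claim $\norm{\theta}_{\cb}=\norm{\theta}_{(n)}$.} This is Smith's lemma: for any operator space $V$ and any linear map $u:V\to\bM_n$, one has $\norm{u}_{\cb}=\norm{\id_n\otimes u}$ (Effros--Ruan Prop.~2.2.2; Paulsen Prop.~8.11). Sketch: take $x\in\bM_k(V)$ with $k>n$ and a unit vector $\xi\in\bC^k\otimes\bC^n$. Write $\xi=\sum_{s=1}^n\alpha_s\otimes e_s$, so the span of the $\alpha_s$ has dimension at most $n$. Compress by a scalar isometry $\beta:\bC^n\to\bC^k$ onto that span. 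Using Ruan's axiom (R2),
\begin{equation}
\abs{\langle\eta,(\id_k\otimes u)(x)\xi\rangle}\le\norm{(\id_n\otimes u)((\beta'^*\otimes\1)\,x\,(\beta\otimes\1))}\le\norm{u}_{(n)}\norm{x}.
\end{equation}
Here $\beta'$ is the analogous compression for the vector $\eta$. Taking suprema over $\xi,\eta$ and over $x$ gives $\norm{u}_{(k)}\le\norm{u}_{(n)}$ for all $k$, which is the claim.

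\emph{Main obstacle.} I expect the hardest step to be Step 1, and specifically confirming that the operator space structure on $L_1(\Omega;S_1^d)$ used in the paper is the predual structure. The paper defines it as $S_1^d\otimes^\wedge L_1(\Omega)$. To match the two, I would invoke the projective tensor duality $(V\otimes^\wedge W)^*=\CB(V,W^*)$ from \Cref{app:tensor}. This yields $(S_1^d\otimes^\wedge L_1)^*=\CB(S_1^d,L_\infty)=\bM_d(L_\infty)=N$. Since the $\otimes^\wedge$-norm is recovered by duality against $N$, the two structures agree. The remaining analytic content is Smith's lemma, which can be quoted; $\sigma$-finiteness is needed only to ensure that $L_1(\Omega)^*=L_\infty(\Omega)$.
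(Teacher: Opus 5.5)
Your proposal is correct and follows the same route as the paper, which simply cites the normal form of the mapping duality $\CB(V,\bM_n)=\bM_n(V^*)$ with the predual $L_1(\Omega;S_1^d)$ in place of the dual (Effros--Ruan Prop.~4.2.5 and Thm.~7.2.4), and Smith's lemma (Paulsen Prop.~8.11) for $\norm{\theta}_{\cb}=\norm{\theta}_{(n)}$. The one difference is in checking that the $\otimes^\wedge$ structure on $S_1^d\otimes^\wedge L_1(\Omega)$ is the predual structure: the paper gets this from Effros--Ruan Thm.~7.2.4, while you argue directly via $(S_1^d\otimes^\wedge L_1)^*=\CB(S_1^d,L_\infty)=\bM_d(L_\infty)$ and the completely isometric bidual embedding. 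Your argument is valid because $S_1^d$ is finite-dimensional, and it implicitly uses the parallel (scalar) pairing convention that the paper also adopts.
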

The identity \eqref{eq:normal-cb-dual} is the normal (weak-$*$ continuous) form of the operator-space mapping duality $\CB(V,\bM_n)=\bM_n(V^*)$, taken with the predual $L_1(\Omega;S_1^d)$ in place of the dual; see \cite[Prop.~4.2.5 and Thm.~7.2.4]{Effros2000}. The norm identity $\norm{\theta}_{\cb}=\norm{\theta}_{(n)}$ is Smith's lemma \cite[Proposition~8.11]{Paulsen_2003}.

\begin{lemma}[Tomiyama; see {\cite[Exercise~3.10(ii)]{Paulsen_2003}}]
\label{lem:dbound}
Let $X,Y$ be operator spaces.
For any bounded $T:X\to Y$ one has $\norm{T}_{(d)}\le d\,\norm{T}$.
\end{lemma}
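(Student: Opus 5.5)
The approach is to reduce the amplified norm to a block-matrix estimate built from the first-level norm. Fix $x=[x_{ij}]\in\bM_d(X)$ with $\norm{x}_{\bM_d(X)}\le1$. The goal is $\norm{[T(x_{ij})]}_{\bM_d(Y)}\le d\,\norm{T}$.

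The first step is to control each entry. By Ruan's axiom (R2), compressing with the scalar row $e_i^*$ and column $e_j$ gives $\norm{x_{ij}}_X\le\norm{x}_{\bM_d(X)}\le1$. Hence $\norm{T(x_{ij})}_Y\le\norm{T}$ for all $i,j$.

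The second step is to bound a matrix in $\bM_d(Y)$ by its entries. Write $[y_{ij}]=\sum_{i,j}e_{i}\,y_{ij}\,e_j^*$, where $e_i\in M_{d,1}(\bC)$. Each summand has $\bM_d(Y)$-norm $\norm{y_{ij}}$ by (R1) and (R2). Applying the triangle inequality entry by entry would only give $d^2\norm{T}$, so a sharper grouping is needed.

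To get the factor $d$, I would split by diagonals. For $k=0,\dots,d-1$, let
\[
D_k=\sum_i e_{i}\,y_{i,i+k}\,e_{i+k}^*,
\]
with indices taken mod $d$. Then $D_k=\mathrm{diag}(y_{1,1+k},\dots,y_{d,d+k})\,P_k$ for a permutation matrix $P_k$. By (R1) the diagonal part has norm $\max_i\norm{y_{i,i+k}}$, and by (R2) multiplying by the unitary $P_k$ does not increase the norm. Summing the $d$ diagonals gives
\[
\norm{[T(x_{ij})]}\le\sum_k\max_i\norm{T(x_{i,i+k})}\le d\,\norm{T}.
\]

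The main point to verify is that the unitary scalar multiplication is isometric under (R2). It is, since $\norm{u}=\norm{u^{-1}}=1$. With that, the whole argument uses only Ruan's axioms and so holds for arbitrary operator spaces $X$ and $Y$.
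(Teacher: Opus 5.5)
Your argument is correct. The paper does not prove this lemma. It only cites Paulsen's exercise, so your write-up supplies a self-contained proof where the paper has a citation.

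The key step is splitting $[y_{ij}]$ into the $d$ wrapped diagonals $D_k=\mathrm{diag}(y_{1,1+k},\dots,y_{d,d+k})P_k$. You then use (R1) for the block-diagonal factor and (R2) for the scalar permutation $P_k$. This is the standard way to get the factor $d$ rather than the $d^2$ from the entrywise triangle inequality. Because it uses nothing beyond Ruan's axioms, it applies verbatim to abstract operator spaces $X,Y$, which is the generality of the stated lemma, with no need for a concrete representation.

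An equally short alternative embeds $Y\subseteq\cB(K)$ concretely and uses $\norm{[y_{ij}]}\le\norm{[\norm{y_{ij}}]}_{\bM_d}\le d\max_{i,j}\norm{y_{ij}}$. The first inequality follows by testing against vectors in $K^d$, and the second from bounding the operator norm of the scalar matrix by its Hilbert--Schmidt norm. That route is quicker to state but relies on the representation theorem.

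One small remark on your write-up: you only need that multiplying by $P_k$ does not increase the norm, which is (R2) with $\norm{P_k}=1$. That $P_k$ acts isometrically is true but is never used.
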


\subsection{Noncommutative \texorpdfstring{$L_p$}{Lp}-spaces}\label{app:noncommLp}

The Schatten classes $S_p(H)$ admit a representation as noncommutative $L_p$-spaces associated with the von Neumann algebra $\cB(H)$ endowed with its canonical trace, i.e. $S_p \equiv L_p(\cB(H), \Tr)$.
More generally, for a von Neumann algebra $(M,\tau)$ with a faithful normal semifinite trace, the noncommutative $L_p$-space $L_p(M,\tau)$ is defined as the completion of $M$ under the norm
\begin{equation}
  \|x\|_{L_p(M,\tau)} := \tau(|x|^p)^{1/p}.
\end{equation}
When $M = \cB(\ell_2^d)$ and $\tau$ is the usual trace, one obtains the Schatten class $S_p^d$.
In particular, $S_1^d$ is the trace-class and $S_\infty^d$ coincides with the compact operators.

Moreover, $L_1$ and $S_1$ have Banach lattice structure, which behaves well under ultraproducts.
A \defn{Banach lattice} is a Banach space equipped with a partial order and lattice operations $(x\vee y,\, x\wedge y)$ satisfying the compatibility condition
\begin{equation}
    |x|\le |y| \implies \|x\|\le \|y\|.
\end{equation}
The lattice is an \defn{abstract $L_p$-space} if whenever $x,y\ge 0$ and $x\wedge y=0$, one has
\begin{equation}
    \|x+y\|^p = \|x\|^p + \|y\|^p.
\end{equation}

For a Banach space $X$, one may consider Bochner-valued $L_1$-spaces $L_1(\mu;X)$, which serve as classical analogues of the noncommutative spaces.
In the present work, we require only the basic case $S_1(X) := S_1 \otimes_\wedge X$.
More generally, Pisier~\cite{Pisier1998} introduced noncommutative Bochner spaces $L_1(\mu;X)$ and Schatten-valued spaces $S_1(X)$ using complex interpolation, producing operator-space structures that behave well under complete boundedness and tensor norms.
The operator-space structure on $L_1(\mu;X)$ is canonically identified with the operator-space projective tensor product $L_1(\mu)\otimes^\wedge X$; see~\cite[(7.1.6)]{Effros2000} for the Banach-space level and \cite[Chapter~2]{Pisier1998}.

Moreover, if $X$ is an operator space, then
  \begin{equation} \label{eq:L1-S1}
    L_1(\mu; S_1(X)) \;\cong\; S_1(L_1(\mu; X))
  \end{equation}
completely isometrically~\cite[Proposition~2.1(ii)]{Pisier1998}. 
In particular,  $L_1(\mu; S_1^d) \iso S_1^d(L_1(\mu))$ completely isometrically.
More generally, $L_p(\mu; S_p^d) \iso S_p^d(L_p(\mu))$ completely isometrically \cite[Proposition~2.1(ii)]{Pisier1998}.

\subsection{Operator Hilbert spaces}\label{app:OH}

We recall the definition \cite[\S1]{Pisier96}.
For an operator space $E$, the \defn{conjugate} $\overline E$ is the set $\{\overline x\mid x\in E\}$ with $\overline x+\overline y=\overline{x+y}$, $\lambda\overline x=\overline{\bar\lambda x}$ and matrix norms $\norm{(\overline{x_{ij}})}_{\bM_r(\overline E)}:=\norm{(x_{ij})}_{\bM_r(E)}$.
Let $K$ be a finite-dimensional Hilbert space with orthonormal basis $(e_i)_i$.
The \defn{operator Hilbert space} $OH(K)$ is $K$ with the matrix norms
\begin{equation}\label{eq:OH-norm}
  \norm{\sum_ia_i\otimes e_i}_{\bM_r(OH(K))}:=\norm{\sum_ia_i\otimes\overline{a_i}}_{\bM_r\otimes\bM_r}^{1/2},\qquad a_i\in\bM_r,\ r\in\bN,
\end{equation}
where $\overline{a_i}$ is the entrywise complex conjugate of $a_i$ and the right-hand side is the operator norm on $\bC^r\otimes\bC^r$.
These norms do not depend on the choice of the orthonormal basis, and for $r=1$ they give the Hilbert-space norm of $K$.
By \cite[Thm.~1.1]{Pisier96}, $OH(K)$ is the unique operator space isometric to $K$ for which the canonical identification of $OH(K)^*$ with $\overline{OH(K)}$ is a complete isometry.
An \defn{$OH$ space} is an operator space completely isometric to some $OH(K)$.
The proof of \Cref{prop:Td-norm} uses the following properties of $OH$ spaces.
\begin{enumerate}[(i)]
\item\label{it:OH-dual} $OH(K)^*=\overline{OH(K)}$ completely isometrically \cite[Thm.~1.1]{Pisier96}, and by the definition of the conjugate the matrix norms of $\overline{OH(K)}$ are given by \eqref{eq:OH-norm} with $\overline{e_i}$ in place of $e_i$.
\item\label{it:OH-sub} Every subspace of an $OH$ space, with the induced matrix norms, is an $OH$ space \cite[Prop.~1.5(i)]{Pisier96}.
\item\label{it:OH-hom} Every linear map $v$ between $OH$ spaces satisfies $\norm{v}_{\cb}=\norm{v}$ \cite[Prop.~7.2(iii)]{Pisier_2003}.
\item\label{it:OH-S2} The operator-space structure on $S_2^n$ obtained by complex interpolation between $\bM_n$ and $S_1^n$ is $OH$ for the Hilbert--Schmidt inner product \cite[Remark~1.11]{Pisier1998}, and $S_2^d(S_2^n)=S_2^{dn}$ \cite[Theorem~1.9]{Pisier1998}.
\end{enumerate}

\subsection{Ultraproducts of Banach spaces, operator spaces and noncommutative \texorpdfstring{$L_p$}{Lp}-spaces}\label{app:ultraprod}

As described in \Cref{sec:closure}, we must consider the limits of factorizations
\begin{equation}
  M^{(n)}_x = u^{(n)}_x \circ \Pi^{(n)},
\end{equation}
where the intermediate spaces $L_1(\Omega_n)$ vary with $n$.
Ordinary limit constructions are inadequate because the maps $u_n$ take values in different Banach spaces, which may not embed canonically into a common ambient space.
Ultraproducts provide a powerful functional-analytic tool to overcome this difficulty.
They allow one to construct a ``limit Banach space" from a sequence $\{X_n\}$ of Banach spaces even when the $X_n$ are not naturally comparable.
The key idea is to replace pointwise convergence by an \emph{ultralimit}, defined using a \emph{free ultrafilter}.
A free ultrafilter acts as a finitely additive probability measure that identifies a ``dominant" subsequence of indices; regarding this generalized limit notion, bounded sequences always have a well-defined ultralimit, even without a classical limit.

This subsection reviews the basic definitions of ultrafilters, ultralimits, and ultraproducts for Banach spaces, and then extends these constructions to the operator-space setting pertinent to quantum channels and instruments, including the spaces $L_1(\Omega_n;S_1^d)$.
We present the standard definitions and structural properties without proof;  classical references include \cite{Heinrich_1980,Diestel1977} for Banach-space ultraproducts and \cite[Section~10.3]{Effros2000} and \cite[Section~2.8]{Pisier_2003} for operator-space ultraproducts.

\paragraph{Ultrafilters and ultralimits.}

An \defn{ultrafilter} $\cU$ on a set $X$ is a family of subsets of $X$ satisfying:
  \begin{enumerate}[(i)]
    \item (Non-degeneracy) $\varnothing \not\in \cU$;
    \item (Monotonicity) If $A\in\cU$ and $A\subset B$, then $B\in\cU$;
    \item (Closure under intersection) If $A,B\in\cU$, then $A\cap B\in\cU$;
    \item[(iv)] (Dichotomy) For every $A\subset X$, exactly one of $A$ or $A^c$ lies in $\cU$.
  \end{enumerate}
An ultrafilter is \defn{non-principal} if it contains no finite sets.
Non-principal ultrafilters on $\mathbb{N}$ exist by the axiom of choice and may be identified with the points of the Stone--Čech compactification $\beta\mathbb{N}$.
The ultrafilters discussed in this paper are non-principal only.

  Let $(X,d)$ be a metric space and $\cU$ an ultrafilter on $\mathbb{N}$.
  A point $x\in \mathsf{X}$ is the \defn{ultralimit} (or $\cU$-limit) of a sequence $(x_n)$, denoted $\lim_{\cU} x_n = x$, if for every $\varepsilon>0$,
  \begin{equation}
    \{n\in\mathbb{N} \mid d(x_n,x) < \varepsilon\} \in \cU.
  \end{equation}
Ultralimits possess useful properties: they are unique, they agree with ordinary limits when they exist, and apply to all bounded sequences in compact spaces.

\paragraph{Ultraproducts of Banach spaces.}

  Let $\{X_i\}_{i\in\mathbb{N}}$ be Banach spaces and define
  \begin{equation}
    \ell_\infty(X_i) := \{ (x_i) \in \prod_i X_i \mid \sup_i \|x_i\| < \infty\}.
  \end{equation}
  Fix a non-principal ultrafilter $\cU$ on $\mathbb{N}$ and let
  \begin{equation}
    N := \{ (x_i) \in \ell_\infty(X_i) \mid \lim_{\cU} \|x_i\| = 0\}.
  \end{equation}
  The \defn{Banach space ultraproduct} $\uprod{X_i}$ is defined as the quotient
  \begin{equation}
    \uprod{X_i} := \ell_\infty(X_i) / N.
  \end{equation}

The class of $(x_i)$ in the ultraproduct is denoted $\uprod{x_i}$, and its norm is given by
\begin{equation}
\norm{\uprod{x_i}} := \lim_{\cU} \norm{x_i}.
\end{equation}

Ultraproducts interact well with linear maps: if $u_i : X_i \to Y_i$ are bounded linear maps, their \defn{ultraproduct} $\uprod{u_i} : \uprod{X_i}\to \uprod{Y_i}$ is defined by
\begin{equation}
\uprod{u_i}( \uprod{x_i} ) := \uprod{u_i(x_i)}.
\end{equation}
If $\sup_i\norm{u_i}<\infty$, then $\uprod{u_i}$ is bounded with norm~\cite[(2.8.3)]{Pisier_2003}
\begin{equation}
\norm{\uprod{u_i}} := \lim_{\cU} \norm{u_i}.
\end{equation}
When $X_i$'s are identical, $\uprod{X_i} = \uprod{X}$ is called \defn{ultrapower} and denoted $X^{\mathbb{N}}$.
If $X$ is finite dimensional, the constant sequence map $X \rightarrow{} \uprod{X}$ is an isometric isomorphism of Banach spaces~\cite[Lemma~10.3.1]{Effros2000}.

If each $X_i$ is a Banach lattice, then $\uprod{X_i}$ admits a natural Banach lattice structure, and if each $X_i$ is an abstract $L_p$-space, then $\uprod{X_i}$ is also an abstract $L_p$-space~\cite[\S8.b and Theorem~8.6]{pisier1986factorization}.
This compatibility of lattice structure with ultraproducts is useful when working with $L_1$ and $S_1$, both of which are Banach lattices.

\paragraph{Ultraproducts of $C^*$-algebras and von Neumann algebras.}

Ultraproducts can also be formed for $C^*$-algebras and von Neumann algebras.
Since positivity and complete positivity are stable under ultraproduct constructions, ultraproducts of completely positive maps naturally arise in the study of limits of quantum channels and measurements.
The following lemma is essential for constructing limiting factorizations of compatible POVMs and instruments.
\begin{lemma}\label{lem:ucp-ultraproduct}
  Let $u_i : A_i \to B_i$ be completely positive maps between $C^*$-algebras with $\sup_i\norm{u_i}<\infty$.
  Then their ultraproduct $\uprod{u_i}$ is completely positive.
\end{lemma}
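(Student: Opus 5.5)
The plan is to realize the ultraproduct $\uprod{A_i}$ concretely and read off complete positivity from the fact that positivity of a matrix is witnessed by a square root. We work with the $C^*$-algebra ultraproduct $\uprod{A_i}=\ell_\infty(A_i)/N$. Here $\ell_\infty(A_i)$ is a $C^*$-algebra with coordinatewise operations and sup norm, and $N$ is a closed two-sided $*$-ideal, so the quotient is a $C^*$-algebra whose norm is $\lim_\cU\norm{a_i}$. Since $\sup_i\norm{u_i}_{\cb}=\sup_i\norm{u_i}$ for c.p.\ maps ($\norm{u_i}_{\cb}=\norm{u_i(\1)}$ in the unital case, and via approximate units otherwise), the coordinatewise map $(a_i)\mapsto(u_i(a_i))$ is bounded on $\ell_\infty(A_i)$. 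It sends $N$ into the corresponding null ideal for the $B_i$, so $\uprod{u_i}$ is well defined and bounded.

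Next we identify matrix levels. For each $n$ we use the canonical $*$-isomorphism $\bM_n(\uprod{A_i})\cong\uprod{\bM_n(A_i)}$, sending $([a^{kl}_i])_{k,l}$ to $\uprod{(a^{kl}_i)_{k,l}}$. This map is a $*$-homomorphism and bijective, because $\bM_n$ is finite dimensional and matrix norms are equivalent to the maximum of the entry norms up to a factor $n$. A bijective $*$-homomorphism of $C^*$-algebras is isometric, so the identification is exact. Under it, $\id_n\otimes\uprod{u_i}$ corresponds to $\uprod{\id_n\otimes u_i}$.

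Positivity is then checked directly. Let $X\ge0$ in $\bM_n(\uprod{A_i})$, and write $X=Y^*Y$ with $Y\in\bM_n(\uprod{A_i})$. We lift $Y$ to a bounded representative $(Y_i)$ with $Y_i\in\bM_n(A_i)$. Then $X=\uprod{Y_i^*Y_i}$ has the positive representative $(Y_i^*Y_i)$, and $(\id_n\otimes u_i)(Y_i^*Y_i)\ge0$ for every $i$. Hence $(\id_n\otimes\uprod{u_i})(X)$ is the class of a bounded sequence of positive elements. Such a class is positive because the quotient map $\ell_\infty(B_i)\to\uprod{B_i}$ is a $*$-homomorphism and so preserves positivity. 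Since this holds for every $n$, $\uprod{u_i}$ is completely positive.

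The only step needing care is the matrix-level identification, that $\bM_n$ commutes with the ultraproduct, and the observation that positive elements of the quotient lift to positive representatives. We handle the lifting through the factorization $Y^*Y$, which avoids having to lift square roots, so this step is routine once the isomorphism is recorded.
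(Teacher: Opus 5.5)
Your proof is correct and follows the same route as the paper. You identify $\bM_n(\uprod{A_i})$ with $\uprod{\bM_n(A_i)}$, write a positive element as $Y^*Y$, and use the representative $(Y_i^*Y_i)_i$ to see that the image is the class of a bounded sequence of positive elements, hence positive. Beyond the paper, you also verify that $\uprod{u_i}$ is well defined and that the matrix-level identification is a $*$-isomorphism, both of which the paper takes from its cited references.
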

This rests on the identification $\bM_k(\uprod{A_i})=\uprod{\bM_k(A_i)}$ of \cite[(2.8.1)]{Pisier_2003} and \cite[\S10.3]{Effros2000}, under which a positive class has a positive representative.
Indeed, a positive element of the $C^*$-algebra $\bM_k(\uprod{A_i})$ is of the form $y^*y$, and if $(y_i)_i$ represents $y$, then $(y_i^*y_i)_i$ is a positive representative.

\paragraph{Operator-space lemmas used for the instrument case.}

The following four operator-space results are used in the proof of \Cref{thm:closure-instruments}.
\Cref{lem:cp-ultra-os} says that ultraproducts of completely bounded maps are completely bounded with c.b.\ norm at most $\lim_\cU\norm{u_n}_{\cb}$.
It also shows that complete positivity passes to ultraproducts, for operator systems and for the spaces $L_1(\Omega_n;S_1^d)$.
\Cref{lem:pullout-finite} pulls a finite-dimensional operator space out of an ultraproduct through the operator-space projective tensor product.
\Cref{lem:diag-ultra-limit} is the compatibility between operator-norm convergence and ultraproducts.
And \Cref{lem:tensor-preserve-embed} promotes an order-isometric embedding of $L_1$-spaces to a complete isometry after tensoring with a finite-dimensional operator space.

\begin{lemma}\label{lem:pullout-finite}
Let $d\in\bN$ and let $(X_n)$ be a family of operator spaces.
Then there is a canonical complete isometry
\begin{equation}
\uprod{S_1^d\otimes^\wedge X_n} \;\cong\; S_1^d\otimes^\wedge \uprod{X_n}.
\end{equation}
In particular, if $\dim H'=d$ then $S_1(H')\cong S_1^d$ as operator spaces and
\begin{equation}
\uprod{S_1(H')\otimes^\wedge X_n} \;\cong\; S_1(H')\otimes^\wedge \uprod{X_n}
\qquad\text{completely isometrically.}
\end{equation}
\end{lemma}
This is \cite[Lemma~10.3.8]{Effros2000}, where $T_d$ denotes the trace class $S_1^d$.
It also follows from \cite[Theorems~1.1 and~1.5]{Pisier1998}, which describe $S_1^d\otimes^\wedge X$ through factorizations over $\bM_d(X)$ with factors of fixed size $d$, together with $\bM_k(\uprod{X_n})=\uprod{\bM_k(X_n)}$ \cite[(2.8.1)]{Pisier_2003}.

\begin{lemma} \label{L:up-L1-S1}
  Let $d\in\bN$ and let $(\Omega_n,\Sigma_n,\mu_n)$ be measure spaces.
  Then $\uprod{L_1(\Omega_n; S_1^d)} \iso S_1^d(\uprod{L_1(\Omega_n)})$ completely isometrically.
\end{lemma}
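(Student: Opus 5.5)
The plan is to chain three complete isometries that the paper has already recorded, so that no new norm computation is needed. First, by the identification of Bochner spaces with operator-space projective tensor products recalled in \Cref{app:noncommLp}, for each $n$ we have $L_1(\Omega_n;S_1^d)\iso L_1(\Omega_n)\otimes^\wedge S_1^d\iso S_1^d\otimes^\wedge L_1(\Omega_n)$ completely isometrically. The operator-space projective tensor product is commutative, so the flip is a complete isometry. Since these identifications are completely isometric for every $n$, they induce a complete isometry of the ultraproducts:
\begin{equation*}
\uprod{L_1(\Omega_n;S_1^d)}\iso\uprod{S_1^d\otimes^\wedge L_1(\Omega_n)}.
\end{equation*}
This step uses that $\bM_k$ of an operator-space ultraproduct is the ultraproduct of the $\bM_k$'s, so a family of complete isometries $u_n$ passes to a complete isometry $\uprod{u_n}$ with inverse $\uprod{u_n^{-1}}$.

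Second, apply \Cref{lem:pullout-finite} with $X_n=L_1(\Omega_n)$, carrying its predual operator-space structure. This gives
\begin{equation*}
\uprod{S_1^d\otimes^\wedge L_1(\Omega_n)}\iso S_1^d\otimes^\wedge\uprod{L_1(\Omega_n)}
\end{equation*}
completely isometrically. Third, identify $S_1^d\otimes^\wedge Y$ with the Schatten-valued space $S_1^d(Y)$ for the operator space $Y=\uprod{L_1(\Omega_n)}$. In Pisier's framework, $S_1^d[Y]$ is defined as $S_1^d\otimes^\wedge Y$ (\cite[Theorem~1.5]{Pisier1998}), so this last step is definitional.

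The point I expect to need the most care is that the operator-space structure on $\uprod{L_1(\Omega_n)}$ is the operator-space ultraproduct structure. It is not a priori the predual structure of some $L_\infty$. The statement is stated with the ultraproduct structure, so this is harmless, but one must check that the two middle identifications are natural. Concretely, the map should send the class of $(f_n)$ with $f_n\in L_1(\Omega_n;S_1^d)$ to the corresponding element $\sum_{ij}e_{ij}\otimes(f_{n,ij})_{\cU}$, and this map should commute with the flips. Because $S_1^d$ is finite-dimensional, every element of $S_1^d\otimes Y$ is a finite sum $\sum_{i,j}e_{ij}\otimes y_{ij}$. This allows the composite map to be checked on such elements, which settles the naturality.
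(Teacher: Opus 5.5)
Your proposal is correct and follows essentially the same route as the paper. The paper also identifies $L_1(\Omega_n;S_1^d)\iso S_1^d\otimes^\wedge L_1(\Omega_n)$ for each $n$, passes to the ultraproduct of these complete isometries, and then applies \Cref{lem:pullout-finite} together with $S_1^d(Y)=S_1^d\otimes^\wedge Y$. The only cosmetic difference is in the first step: you obtain the per-$n$ identification from the Bochner/projective-tensor identification followed by the flip, whereas the paper cites Pisier's identity \eqref{eq:L1-S1} directly.
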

\begin{proof}
  By \eqref{eq:L1-S1},
  \begin{equation}
  L_1(\Omega_n;S_1^d)\iso S_1^d(L_1(\Omega_n))=S_1^d\otimes^\wedge L_1(\Omega_n)
  \end{equation}
  completely isometrically for every $n$, and the ultraproduct of these complete isometries is a complete isometry \cite[Proposition~10.3.2]{Effros2000}.
  Hence
  \begin{equation}
  \uprod{L_1(\Omega_n;S_1^d)}\iso\uprod{S_1^d\otimes^\wedge L_1(\Omega_n)},
  \end{equation}
  and \Cref{lem:pullout-finite} with $X_n=L_1(\Omega_n)$ identifies the latter completely isometrically with
  \begin{equation}
  S_1^d\otimes^\wedge\uprod{L_1(\Omega_n)}=S_1^d(\uprod{L_1(\Omega_n)}).
  \end{equation}
\end{proof}

\begin{lemma}\label{L:up-L1-S1-order}
  Let $d\in\bN$, let $(\Omega_n,\Sigma_n,\mu_n)$ be measure spaces, and let $f\in L_1(\Omega;S_1^d)$ be the image of a class $\uprod{z_n}\in\uprod{L_1(\Omega_n;S_1^d)}$ under the identification of \Cref{L:up-L1-S1} composed with an order isometry $\uprod{L_1(\Omega_n)}\iso L_1(\Omega,\Sigma,\mu)$ as in \cite[Theorem~8.6]{pisier1986factorization}.
  Then the class has a representative with $z_n\ge0$ for all $n$ if and only if $f \geq 0$ almost everywhere, and $\lim_\cU\Tr\int_{\Omega_n}z_n\,d\mu_n=\Tr\int_\Omega f\,d\mu$.
\end{lemma}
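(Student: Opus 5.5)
The identification of \Cref{L:up-L1-S1} is built from the matrix-level description $L_1(\Omega;S_1^d)\iso S_1^d(L_1(\Omega))$. Under it an element is a $d\times d$ matrix $(f_{ij})$ of functions, and positivity means $\sum_{ij}\bar\xi_i\xi_j f_{ij}(\omega)\ge0$ almost everywhere for all $\xi\in\bC^d$. The plan is to reduce everything to scalar statements about the abstract $L_1$-lattice $\uprod{L_1(\Omega_n)}$, where \cite[Theorem~8.6]{pisier1986factorization} gives an order isometry onto $L_1(\Omega,\Sigma,\mu)$. That isometry preserves positivity and the integral, since on positive elements the integral is the norm.

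I would first dispose of the trace identity, which is scalar and easy. One has $\Tr\int z_n\,d\mu_n=\sum_i\int (z_n)_{ii}\,d\mu_n$. Each functional $g\mapsto\int g\,d\mu_n$ is the difference of norms of positive and negative parts, so its ultralimit is the linear functional on $\uprod{L_1(\Omega_n)}$ given by $g\mapsto\lim_\cU\int g_n$. On positive classes this functional equals the lattice norm, and by linearity over the decomposition into real positive parts it is $\int_\Omega\cdot\,d\mu$ after the order isometry. Applying this to the diagonal entries gives $\lim_\cU\Tr\int z_n=\Tr\int f$. It also needs to be checked that the lattice operations commute with the ultraproduct, which is standard since $|\cdot|$ is $1$-Lipschitz.

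For the forward direction of positivity, assume $z_n\ge0$. For a fixed vector $\xi$, the scalar function $\langle\xi,z_n\xi\rangle\ge0$ in $L_1(\Omega_n)$, and its class is $\langle\xi,f\xi\rangle$, which is positive because positive cones pass to the ultraproduct lattice and the order isometry respects them. Taking a countable dense set of $\xi$ gives $f\ge0$ almost everywhere.

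For the converse, suppose $f\ge0$ and let $(z_n)$ be any representative. I would replace $z_n$ by its pointwise positive part $z_n^+(\omega)=(z_n(\omega))_+$, a measurable positive $S_1^d$-valued function. It remains to show $\lim_\cU\|z_n-z_n^+\|_{L_1(S_1^d)}=0$, where $z_n-z_n^+=-z_n^-$. The cleanest route is the trace identity. Since $z_n^-\ge0$, its norm is $\Tr\int z_n^-$.

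The main obstacle is controlling $z_n^-$ through the ultraproduct, because the spectral decomposition is nonlinear. The approach I would try is to write $\Tr z_n^-(\omega)=\sup_P(-\Tr P z_n(\omega))$ over projections $P$. Then approximate with finitely many rank-one projections $P_k=|\xi_k\rangle\langle\xi_k|$ from an $\varepsilon$-net, using $\Tr z^-\le \sum_k\langle\xi_k,z\xi_k\rangle_- +C\varepsilon\|z\|_1$ or a similar finite-net estimate in dimension $d$. The scalar functions $\langle\xi_k,z_n\xi_k\rangle_-$ are lattice expressions whose classes are $\langle\xi_k,f\xi_k\rangle_-=0$, so their $L_1$ norms tend to $0$ along $\cU$. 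Finally, let $\varepsilon\to0$ using $\sup_n\|z_n\|<\infty$.
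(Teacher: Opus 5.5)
Your proposal is correct. The forward direction is the paper's argument. Your trace identity is equivalent to the paper's: the paper works at the matrix level, noting that both sides equal the norm on positive classes and that every class is a combination of four positive classes, while you reduce to the diagonal entries and the scalar functional $\uprod{g_n}\mapsto\lim_{\cU}\int g_n$.

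The genuine difference is in the converse. The paper never manipulates a given representative. It approximates $f\ge0$ by sums $\sum_lP_l\otimes g_l$ with $P_l\ge0$ and $g_l\ge0$, and lifts each $g_l$ to a positive representative. It then proves that the classes admitting a positive representative form a closed set, by picking $p_n$ in the positive cone $P_n$ with $\norm{z_n-p_n}\le2\operatorname{dist}(z_n,P_n)$. You instead show directly that the pointwise positive part of an arbitrary representative is itself a positive representative. You do this by dominating $\Tr z_n^-$ by finitely many scalar negative parts, whose classes vanish because lattice operations pass through the ultraproduct and the order isometry.

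Your route is explicit and quantitative, and it shows exactly where $d<\infty$ is used, namely in the size of the net. The paper's density-plus-closedness argument is softer and needs no spectral calculus or net estimate.

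Two routine points remain to fill in.
\begin{enumerate}
\item First replace $z_n$ by $\frac12(z_n+z_n^*)$, which represents the same class because $f=f^*$. This makes $(z_n(\omega))_+$ and the real functions $\langle\xi_k,z_n\xi_k\rangle$ well defined.
\item Your net estimate holds in the form $\Tr z^-\le\sum_k\langle\xi_k,z\xi_k\rangle_-+2d\varepsilon\norm{z}_1$, for any $\varepsilon$-net of the unit sphere of $\bC^d$ with $\varepsilon<1/\sqrt2$. With this choice, distinct orthonormal eigenvectors for negative eigenvalues lie within $\varepsilon$ of distinct net points. For unit vectors one also has $\abs{\langle\xi,z\xi\rangle-\langle e,ze\rangle}\le2\norm{\xi-e}\norm{z}_\infty$, which gives the error term.
\end{enumerate}
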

\begin{proof}
  If every $z_n$ is positive, then so is $\langle\xi,z_n\xi\rangle\in L_1(\Omega_n)$ for every $\xi\in\bC^d$.
  The order isometry sends the class of these functions to the positive function $\langle\xi,f\xi\rangle$, and letting $\xi$ run through a countable dense subset of $\bC^d$ shows that $f$ has positive semidefinite values almost everywhere.
  Conversely, such an $f$ is a norm limit of sums $\sum_lP_l\otimes g_l$ with $P_l\ge0$ and $g_l\ge0$, and the class of each such sum has a positive representative.
  The classes with a positive representative form a closed set.
  Indeed, if $\uprod{z_n}$ is a limit of such classes, then $\lim_\cU\operatorname{dist}(z_n,P_n)=0$ for the closed cones $P_n$ of positive elements, and $p_n\in P_n$ with $\norm{z_n-p_n}\le2\operatorname{dist}(z_n,P_n)$ form a bounded positive representative.
  For a positive class both sides of the trace identity equal its norm, and every class is a combination of four positive classes, which gives the trace identity.
\end{proof}

\begin{lemma}\label{lem:cp-ultra-os}
Let $(E_n)$ and $(F_n)$ be operator spaces and let $u_n:E_n\to F_n$ be completely bounded maps with $\sup_n\|u_n\|_{cb}<\infty$.
Then the ultraproduct map
\begin{equation}
\uprod{u_n}: \uprod{E_n}\to \uprod{F_n},\qquad \uprod{x_n}\mapsto \uprod{u_n(x_n)},
\end{equation}
is completely bounded and satisfies
\begin{equation}
\norm{\uprod{u_n}}_{\cb}\le\lim_\cU \norm{u_n}_{\cb}.
\end{equation}

If, in addition, each $E_n$ and $F_n$ is an operator system (or $C^*$-algebra) and each $u_n$ is completely positive (respectively, unital completely positive), then $\uprod{u_n}$ is completely positive (respectively, unital completely positive).

The same holds for completely positive $u_n:L_1(\Omega_n;S_1^d)\to L_1(\Omega_n';S_1^{d'})$ with $d,d'\in\bN$ fixed, where a class is positive if it has a positive representative, and $\uprod{u_n}$ is trace preserving if every $u_n$ is.
For a finite set $\Omega'$, the diagonal embedding of $L_1(\Omega';S_1^{d'})$ into its ultrapower is onto, and it and its inverse are completely positive and trace preserving.
\end{lemma}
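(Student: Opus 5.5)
The plan is to prove the three assertions of \Cref{lem:cp-ultra-os} in turn: the complete bound, preservation of complete positivity for operator systems, and the $L_1(\Omega_n;S_1^d)$ version together with the finite-$\Omega'$ diagonal statement. Each step passes through the identification $\bM_k(\uprod{E_n})=\uprod{\bM_k(E_n)}$ of \cite[(2.8.1)]{Pisier_2003}, which is isometric at every matrix level.

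\textbf{Complete boundedness.} Under this identification, $\id_k\otimes\uprod{u_n}$ is the ultraproduct of the maps $\id_k\otimes u_n$. These are uniformly bounded by $\sup_n\norm{u_n}_{\cb}$, so the map is well defined on classes. The Banach-space norm formula for ultraproducts of maps then gives
\[
\norm{\id_k\otimes\uprod{u_n}}=\lim_\cU\norm{\id_k\otimes u_n}\le\lim_\cU\norm{u_n}_{\cb}.
\]
Taking the supremum over $k$ yields the c.b.\ bound.

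\textbf{Complete positivity for operator systems.} For $C^*$-algebras this is \Cref{lem:ucp-ultraproduct}. A positive element of $\bM_k(\uprod{A_n})$ has the form $y^*y$ and so lifts to a positive representative $(y_n^*y_n)_n$. Its image $(u_n^{(k)}(y_n^*y_n))_n$ is positive termwise, hence defines a positive class. Unitality is immediate, since $\uprod{u_n}(\uprod{\1})=\uprod{u_n(\1)}=\uprod{\1}$.

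For general operator systems the ultraproduct is defined so that its matrix cones consist of the classes with positive representatives, or equivalently by the embedding into the ultraproduct of ambient $C^*$-algebras. With that convention, the same lifting argument works verbatim.

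\textbf{The $L_1(\Omega_n;S_1^d)$ case.} Here positivity is defined by the existence of a positive representative, as in \Cref{L:up-L1-S1-order}. If $\uprod{z_n}$ has a positive representative $(z_n)$, then $(u_n(z_n))$ is a positive representative of the image, so positivity is preserved. At matrix level $k$ we use $\bM_k(L_1(\Omega;S_1^d))\supset L_1(\Omega;S_1^{kd})$ together with \Cref{L:up-L1-S1}: positivity of $\id_k\otimes u_n$ means positivity on the $S_1^{kd}$-valued spaces, and the same representative argument applies.

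Trace preservation follows from the trace identity of \Cref{L:up-L1-S1-order}:
\[
\Tr\uprod{u_n}(\uprod{z_n})=\lim_\cU\Tr u_n(z_n)=\lim_\cU\Tr z_n,
\]
which holds first on positive classes and then by linearity, writing every class as a combination of four positive classes.

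\textbf{The finite-$\Omega'$ statement.} When $\Omega'$ is finite, $L_1(\Omega';S_1^{d'})$ is finite dimensional, so the diagonal embedding is an isometric isomorphism onto the ultrapower \cite[Lemma~10.3.1]{Effros2000}. The diagonal map is clearly positive and trace preserving.

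For the inverse, consider a class with a positive representative $(z_n)$. By compactness the $z_n$ converge along $\cU$ in the finite-dimensional space to some $z$, and $z$ is positive because the positive cone is closed. The trace is continuous, so the inverse preserves it as well. The same argument applies at every matrix level.

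\textbf{Main obstacle.} The delicate point is the $L_1$ case: the cone defined by positive representatives has to be compatible with the matrix-level structure supplied by the pull-out lemma. I would handle this by relying on the closedness and lifting argument already established in \Cref{L:up-L1-S1-order}, applied with $kd$ in place of $d$.
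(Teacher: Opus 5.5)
Your proposal is correct and follows essentially the same route as the paper: the c.b.\ bound from the matrix-level identification of ultraproducts (the content of Pisier's (2.8.4)), complete positivity via positive representatives at each level $L_1(\Omega_n;S_1^{kd})$ together with the trace identity of \Cref{L:up-L1-S1-order}, and the finite-$\Omega'$ case via finite dimensionality and the positive ultralimit $\lim_\cU p_n$. The only cosmetic difference is that the paper amplifies by $S_1^k\otimes^\wedge(\cdot)$ and invokes \Cref{lem:pullout-finite} to identify $\id_{S_1^k}\otimes\uprod{u_n}$ with $\uprod{\id_{S_1^k}\otimes u_n}$, whereas you write $\bM_k(\cdot)$ (your ``$\supset$'' is really an equality of underlying vector spaces with different norms); this gives the same cones, since positivity at a fixed level $k$ does not depend on which of these norms is used.
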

\begin{proof}
The inequality is \cite[(2.8.4)]{Pisier_2003}, see also \cite[Proposition~10.3.2]{Effros2000}.
Let $E_n=L_1(\Omega_n;S_1^d)$ and $F_n=L_1(\Omega_n';S_1^{d'})$; complete positivity of $u_n$ means that $\id_{S_1^k}\otimes u_n$ is positive on $S_1^k\otimes^\wedge E_n=L_1(\Omega_n;S_1^{kd})$ for every $k$.
By \Cref{lem:pullout-finite} and \eqref{eq:L1-S1}, $S_1^k\otimes^\wedge\uprod{E_n}=\uprod{L_1(\Omega_n;S_1^{kd})}$, and this identification sends $e\otimes\uprod{x_n}$ to $\uprod{e\otimes x_n}$, so it carries $\id_{S_1^k}\otimes\uprod{u_n}$ to $\uprod{\id_{S_1^k}\otimes u_n}$.
By \Cref{L:up-L1-S1-order} with $kd$ in place of $d$, the classes with a positive representative are those whose images in $L_1(\Omega;S_1^{kd})$ are positive semidefinite almost everywhere.
The map $\uprod{\id_{S_1^k}\otimes u_n}$ sends a class with positive representative $(z_n)$ to the class of the positive elements $(\id_{S_1^k}\otimes u_n)(z_n)$, and by the trace identity of \Cref{L:up-L1-S1-order} it preserves $\Tr\otimes\int$ when every $u_n$ does.
For finite $\Omega'$, the diagonal embedding is onto because $L_1(\Omega';S_1^{d'})$ is finite-dimensional \cite[Lemma~10.3.1]{Effros2000}.
Constant representatives show that it is completely positive and trace preserving, and a class with positive representative $(p_n)$ is the diagonal image of the positive element $\lim_\cU p_n$.
\end{proof}

\begin{lemma}\label{lem:diag-ultra-limit}
Let $X$ and $Y$ be Banach spaces, and suppose $T_n:X\to Y$ converges in operator norm to $T$.
Let $\Delta_X:X\to\uprod{X}$ and $\Delta_Y:Y\to \uprod{Y}$ be the diagonal embeddings
$\Delta_X(x)=\uprod{x}$ and $\Delta_Y(y)=\uprod{y}$.  Then
\begin{equation}
\uprod{T_n}\circ \Delta_X = \Delta_Y\circ T.
\end{equation}
\end{lemma}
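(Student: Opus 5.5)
To prove \Cref{lem:diag-ultra-limit}, I will compare the two maps pointwise on an arbitrary $x\in X$, working directly from the definition of the Banach-space ultraproduct as the quotient $\ell_\infty(Y)/N$. First, $\uprod{T_n}$ is well defined and bounded. Since $T_n\to T$ in operator norm, the sequence $(\norm{T_n})_n$ is bounded, say by $C$. So for a bounded sequence $(x_n)$ the image $(T_n x_n)$ is bounded, and sequences in $N$ are mapped into $N$. For the constant representative $(x,x,\dots)$ of $\Delta_X(x)$ this gives
\begin{equation}
\uprod{T_n}(\Delta_X(x)) = \uprod{T_n x},
\qquad
\Delta_Y(Tx) = \uprod{Tx}.
\end{equation}

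Next, I will show that the difference of these two classes vanishes. By the definition of the ultraproduct norm,
\begin{equation}
\norm{\uprod{T_n x}-\uprod{Tx}} = \lim_{\cU}\norm{T_n x - Tx} \le \lim_{\cU}\norm{T_n-T}\,\norm{x}.
\end{equation}
The sequence $\norm{T_n-T}$ converges to $0$ in the ordinary sense. An ultralimit along a non-principal ultrafilter agrees with an ordinary limit when the latter exists: every tail $\{n\ge N_0\}$ has finite complement and hence lies in $\cU$. Therefore the ultralimit is $0$. Equivalently, the sequence $(T_n x - Tx)_n$ lies in $N$, so the two classes coincide in $\uprod{Y}$.

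Since $x$ was arbitrary, $\uprod{T_n}\circ\Delta_X=\Delta_Y\circ T$. There is no real obstacle. The only point needing care is that non-principality of $\cU$ is used exactly once, to identify the ultralimit with the ordinary limit. The same argument carries over verbatim to nets, provided the ultrafilter contains all tails, as remarked after \Cref{thm:closure-instruments}.
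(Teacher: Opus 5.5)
Your proof is correct and follows the paper's argument: the paper also derives the identity directly from the definition of the ultraproduct norm, using that the ultralimit of the convergent sequence $\norm{T_n x - Tx}$ equals its ordinary limit $0$ (the same reasoning appears in the proof of \Cref{thm:closurePOVM}). Your additional checks, namely that $\uprod{T_n}$ is well defined because $\sup_n\norm{T_n}<\infty$ and that non-principality of $\cU$ is what identifies the ultralimit with the ordinary limit, supply exactly the details the paper leaves implicit.
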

This follows directly from the definition of the ultraproduct norm \cite[\S2.8]{Pisier_2003}, and the diagonal embeddings are isometric \cite[Lemma~10.3.1]{Effros2000}.

\begin{lemma}\label{lem:tensor-preserve-embed}
Let $E$ be a finite-dimensional operator space and let $\jmath:L_1(\nu)\hookrightarrow L_1(\Omega)$
be an order-isometric embedding with a positive contractive left inverse $P$.  Equip $L_1(\cdot)$ with its canonical dual operator-space
structure coming from $L_\infty(\cdot)^*$.
Then the induced map
\begin{equation}
\id_E\otimes \jmath:\;E\otimes^\wedge L_1(\nu)\longrightarrow E\otimes^\wedge L_1(\Omega)
\end{equation}
is a complete isometry.
\end{lemma}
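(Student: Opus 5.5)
The plan is to prove the statement by duality, reducing it to the commutative case. Identify $S_1(K)$ with the predual of $\cB(K)$ for some Hilbert space $K$, as is done throughout the paper. Here $E$ is finite-dimensional, so $\bM_n(E\otimes^\wedge L_1(\cdot))$ is dual to $\CB(E,\bM_n(L_\infty(\cdot)))$, completely isometrically up to the standard identifications of \Cref{app:tensor}. A first observation does most of the work: since $E^*$ is finite-dimensional,
\[
\CB(E,L_\infty(\Omega))=E^*\otimes_{\min}L_\infty(\Omega)=L_\infty(\Omega;E^*),
\]
and the same holds at every matrix level. Hence $E\otimes^\wedge L_1(\Omega)$ is the predual of $L_\infty(\Omega;E^*)$ with its minimal structure.

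Contractivity of $\id_E\otimes\jmath$ is immediate, because $\jmath$ is an isometry from $L_1$ into $L_1$. As explained in \Cref{subsec:os-instruments}, bounded maps between $L_1$-spaces are automatically completely bounded with the same norm, since their adjoints map into commutative $L_\infty$. Thus $\jmath$ is a complete contraction, and so is $\id_E\otimes\jmath$ by functoriality of $\otimes^\wedge$. By the same argument $P$ is a complete contraction. This uses that a positive map on $L_1$ with $\norm{P}\le1$ is contractive, which holds because $P$ is a positive contractive left inverse.

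For the reverse inequality I would use the left inverse:
\[
(\id_E\otimes P)\circ(\id_E\otimes\jmath)=\id_E\otimes(P\circ\jmath)=\id.
\]
Therefore, for $u\in\bM_n(E\otimes^\wedge L_1(\nu))$,
\[
\norm{u}\le\norm{\id_E\otimes P}_{\cb}\,\norm{(\id_E\otimes\jmath)(u)}\le\norm{(\id_E\otimes\jmath)(u)}.
\]
Combined with contractivity, this gives a complete isometry. The real content is the claim $\norm{\id_E\otimes P}_{\cb}\le\norm{P}_{\cb}\le1$, which is functoriality of the projective tensor product together with $\norm{P}_{\cb}=\norm{P}$.

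The step I expect to be delicate is the justification of $\norm{P}_{\cb}=\norm{P}$ and $\norm{\jmath}_{\cb}=\norm{\jmath}$ for the dual operator-space structure on $L_1$, when the measure spaces are not $\sigma$-finite (as happens for ultraproducts). My first attempt would be to pass to adjoints. The adjoint $P^*:L_\infty(\nu)\to L_1(\Omega)^*$ lands in the dual of an $L_1$-space, which is a commutative $C^*$-algebra $C(\hat K)$, so it is completely bounded with $\norm{P^*}_{\cb}=\norm{P^*}$. The map $P$ itself is then the restriction of $P^{**}$ to $L_1(\Omega)\subset L_1(\Omega)^{**}$, and it lands in $L_1(\nu)\subset L_\infty(\nu)^*$; this yields $\norm{P}_{\cb}\le\norm{P^*}_{\cb}=\norm{P}$. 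The same argument applies to $\jmath$. Positivity of $P$ is not even needed for this norm estimate; it matters only for the way the lemma is used later. Finally, I would check that the finite dimensionality of $E$ ensures that the algebraic tensor products are already complete, so that no completion issues arise.
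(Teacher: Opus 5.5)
Your proof is correct and matches the paper's: $\jmath$ and $P$ are complete contractions because their adjoints take values in commutative $C^*$-algebras, and c.b.\ norms pass to and from adjoints. Functoriality of $\otimes^\wedge$ then makes $\id_E\otimes\jmath$ and $\id_E\otimes P$ complete contractions, and $(\id_E\otimes P)\circ(\id_E\otimes\jmath)=\id$ gives the reverse inequality. The duality identification $\CB(E,L_\infty(\Omega))=L_\infty(\Omega;E^*)$ in your first paragraph is never used and can be dropped.
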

\Cref{lem:tensor-preserve-embed} follows by applying \cite[Theorem~3.9]{Paulsen_2003} to the adjoints of $\jmath$ and $P$, which map into commutative $C^*$-algebras, then \cite[Proposition~3.2.2]{Effros2000} to see that $\jmath$ and $P$ are complete contractions, and \cite[Corollary~7.1.3]{Effros2000} to $\id_E\otimes\jmath$ and $\id_E\otimes P$, since $(\id_E\otimes P)\circ(\id_E\otimes\jmath)=\id$.

\subsection{Factorization lemmas}\label{app:pisier}

\begin{lemma}(\cite[Corollary~8.7(i)]{pisier1986factorization})\label{lem:Lp-factorization}
  Consider a family of linear maps of Banach spaces $\{u_i: X_i\to Y_i\}_{i \in \mathbb{N}}$.
  For a fixed $p \in [1,\infty)$, if each $u_i$ factors through $L_p$ as $u_i=q_i \circ p_i$ with $\sup_i\norm{p_i}<\infty$ and $\sup_i\norm{q_i}<\infty$, then the ultraproduct  $\uprod{u_i}$ factors through a $L_p(M)$ space over some measure space $M$.
\end{lemma}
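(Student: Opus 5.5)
The plan is to build the factorization directly in the ultraproduct and then recognize the middle space as a concrete $L_p$-space. For each $i$ write $u_i=q_i\circ p_i$ with $p_i:X_i\to L_p(\Omega_i)$ and $q_i:L_p(\Omega_i)\to Y_i$. Set $C_1:=\sup_i\norm{p_i}$ and $C_2:=\sup_i\norm{q_i}$, both finite by hypothesis.

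\emph{Step 1 (formal factorization).} Since the families are uniformly bounded, the ultraproduct maps
\begin{equation*}
P:=\uprod{p_i}:\uprod{X_i}\to\uprod{L_p(\Omega_i)},\qquad Q:=\uprod{q_i}:\uprod{L_p(\Omega_i)}\to\uprod{Y_i}
\end{equation*}
are well defined. A bounded family sends $\cU$-null sequences to $\cU$-null sequences, so these maps do not depend on the chosen representatives. Their norms satisfy $\norm{P}\le C_1$ and $\norm{Q}\le C_2$. On a class $\uprod{x_i}$ we then have
\begin{equation*}
Q P\uprod{x_i}=\uprod{q_i(p_i(x_i))}=\uprod{u_i(x_i)}=\uprod{u_i}\uprod{x_i},
\end{equation*}
hence $\uprod{u_i}=Q\circ P$. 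It remains to show that $\uprod{L_p(\Omega_i)}$ is isometrically lattice isomorphic to some $L_p(M)$; composing $P$ and $Q$ with that isomorphism gives the claim.

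\emph{Step 2 (the ultraproduct is an abstract $L_p$-space).} Define the lattice operations coordinatewise, $\uprod{x_i}\vee\uprod{y_i}:=\uprod{x_i\vee y_i}$, and similarly for $\wedge$. This is well defined because $\abs{x\vee y-x'\vee y'}\le\abs{x-x'}+\abs{y-y'}$ in any Banach lattice. The inequality $\abs{x}\le\abs{y}\Rightarrow\norm{x}\le\norm{y}$ passes to the $\cU$-limit of norms. Now take $x,y\ge0$ in the ultraproduct with $x\wedge y=0$. Choose positive representatives $x_i,y_i$, available by replacing them with their positive parts. Then $\lim_\cU\norm{x_i\wedge y_i}=0$, and the sequences $x_i':=x_i-x_i\wedge y_i$ and $y_i':=y_i-x_i\wedge y_i$ represent the same classes and are disjoint in each $L_p(\Omega_i)$. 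Disjoint functions satisfy $\norm{x_i'+y_i'}^p=\norm{x_i'}^p+\norm{y_i'}^p$, and taking the $\cU$-limit gives $\norm{x+y}^p=\norm{x}^p+\norm{y}^p$. This reproduces the content of \cite[Theorem~8.6]{pisier1986factorization}.

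\emph{Step 3 (representation).} I would invoke the classical representation theorem for abstract $L_p$-spaces: Kakutani for $p=1$, and Bohnenblust--Nakano for $1\le p<\infty$. It states that every abstract $L_p$-space is order-isometric to $L_p(M)$ for some, in general not $\sigma$-finite, measure space $M$. Let $\jmath$ denote this isometry. Then
\begin{equation*}
\uprod{u_i}=(Q\circ\jmath^{-1})\circ(\jmath\circ P)
\end{equation*}
is a factorization through $L_p(M)$ with the same norm bounds.

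\emph{Main obstacle.} I expect the main difficulty to be Step 2, specifically the disjointness argument. The point is that disjointness of two classes in the ultraproduct does not give disjoint representatives directly, so one has to perturb the representatives within the same class, as done above. Step 3 is then a black-box citation of the representation theorem.
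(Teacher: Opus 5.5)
Your proof is correct and takes the same route the paper relies on. The paper gives no proof of its own and cites Pisier. In the closure argument it uses exactly this chain: factor in the ultraproduct, note that $\uprod{L_p(\Omega_i)}$ is an abstract $L_p$-space (citing \cite[Theorem~8.6]{pisier1986factorization}, which your Step~2 proves directly by adjusting representatives), and represent it concretely via Kakutani/Bohnenblust--Nakano. You are slightly quick on norm monotonicity, since $|x|\le|y|$ in the ultraproduct does not give $|x_i|\le|y_i|$ coordinatewise. The same trick fixes it: $|x|=\uprod{|x_i|\wedge|y_i|}$ and $\norm{|x_i|\wedge|y_i|}\le\norm{y_i}$.
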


In what follows, $S_p(E)$ denotes Pisier's vector-valued Schatten class over an operator space $E$ (with finite level $S_p^n(E)$, so that $S_\infty^n(E)=\bM_n(E)$), and a linear map $u:E\to F$ is \defn{completely $p$-summing} if $\id_{S_p}\otimes u:S_p\otimes_{\min}E\to S_p(F)$ is bounded, with $\pi_p^{\circ}(u)$ denoting its norm.

\begin{equation}
  \pi_p^\circ(u) := \norm{\id_{S_p} \otimes u}_{S_p \otimes_{\min} E \to S_p(F)}
\end{equation}

\begin{lemma}[{\cite[Theorem~1.5]{Pisier1998}}]\label{thm:pisier-1.5}
Let $1\le p<\infty$ and let $u\in S_p^n(E)$ with associated matrix $(u_{ij})$, $u_{ij}\in E$.
Then
\begin{equation}
\norm{u}_{S_p^n(E)}
=\inf \{\,\norm{a}_{S_{2p}^n}\,\norm{v}_{\bM_n(E)}\,\norm{b}_{S_{2p}^n}\ |\ (u_{ij})=a\,v\,b,\ \ a,b\in S_{2p}^n,\ v\in\bM_n(E)\,\}.
\end{equation}
\end{lemma}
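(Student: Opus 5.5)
The plan is to treat $S_p^n(E)$ as Pisier defines it, namely as the complex interpolation space $(\bM_n(E),S_1^n\otimes^\wedge E)_\theta$ with $\theta=1/p$. Write
\[
N_p(u):=\inf\{\norm{a}_{S_{2p}^n}\norm{v}_{\bM_n(E)}\norm{b}_{S_{2p}^n}\mid u=avb\}.
\]
The proof then splits into the two inequalities $\norm{u}_{S_p^n(E)}\le N_p(u)$ and $N_p(u)\le\norm{u}_{S_p^n(E)}$.

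\emph{Endpoints.} For $p=1$ the formula is the Effros--Ruan description of the operator-space projective norm on $S_1^n\otimes^\wedge E$. A matrix $(u_{ij})$ in $S_1^n(E)$ can be written as $\alpha(v)\beta$ with Hilbert--Schmidt factors (this is the $\otimes^\wedge$ formula of \Cref{app:tensor}). Polar decompositions then allow the factors to be taken in $\bM_n$ with $\norm{a}_2\norm{v}\norm{b}_2$ arbitrarily close to $\norm{u}_{S_1^n(E)}$. For $p=\infty$ the infimum equals $\norm{u}_{\bM_n(E)}$ by Ruan's axiom (R2), with $a=b=\1$ giving the reverse inequality.

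\emph{Upper bound $\norm{u}_{S_p^n(E)}\le N_p(u)$.} Given $u=avb$ with $\norm{a}_{2p}=\norm{b}_{2p}=1$, I write $a=w_1\abs{a}$ and $b=\abs{b^*}w_2$ using polar decompositions. I then form the analytic family
\[
f(z)=w_1\abs{a}^{pz}\,v\,\abs{b^*}^{pz}w_2, \qquad 0\le\Re z\le1 .
\]
On $\Re z=0$ the outer factors are partial isometries times unitaries, so $\norm{f(it)}_{\bM_n(E)}\le\norm{v}$. On $\Re z=1$ the outer factors have $S_2$-norm $1$, and the $p=1$ endpoint gives $\norm{f(1+it)}_{S_1^n(E)}\le\norm{v}$. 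Since $f(1/p)=u$, the definition of complex interpolation yields the bound.

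\emph{Lower bound $N_p(u)\le\norm{u}_{S_p^n(E)}$.} This is the main obstacle. First I check that $N_p$ is a norm. The triangle inequality comes from concatenating factorizations: replace $(a_1,a_2)$ by $(a_1a_1^*+a_2a_2^*)^{1/2}$ and use the block-diagonal $v_1\oplus v_2$ allowed by (R1). I then show that the space $X_\theta:=(\bM_n\otimes E,N_p)$, indexed by $\theta=1/p$, is an exact interpolation scale. That is, if $g$ is analytic on the strip with $\norm{g(it)}_{\bM_n(E)}\le1$ and $\norm{g(1+it)}_{S_1^n(E)}\le1$, then $N_p(g(\theta))\le1$. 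For this I would use the $p=1$ endpoint to factor $g(1+it)=a(t)v(t)b(t)$ pointwise. The positive parts $\abs{a(t)}^2$ and $\abs{b(t)^*}^2$ are then interpolated via a noncommutative Lozanovskii/outer-function construction in $S_1^n$, obtaining analytic families $A(z),B(z)$ with $\norm{A(\theta)}_{2p},\norm{B(\theta)}_{2p}\le1$. Finally I set $V(z)=A(z)^{-1}g(z)B(z)^{-1}$ and apply the maximum principle to $V$ in $\bM_n(E)$.

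If the outer-function step proves too delicate, I would instead argue by duality. Using $S_p^n(E)^*=S_{p'}^n(E^*)$, I would show that the dual norm of $N_p$ is dominated by $\norm{\cdot}_{S_{p'}^n(E^*)}$. This would come from the already proven upper bound for $E^*$ together with the Hölder pairing $\Tr(a v b\, w)$ between $S_{2p}\times\bM_n(E)\times S_{2p}$ and $S_{p'}^n(E^*)$. Either route reduces the lemma to the two endpoint identities and one interpolation argument.
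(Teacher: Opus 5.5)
\textbf{Verdict: there is a genuine gap.} Your endpoint discussion is fine, and so is the upper bound $\norm{u}_{S_p^n(E)}\le N_p(u)$ obtained from the analytic family $w_1\abs{a}^{pz}v\abs{b^*}^{pz}w_2$. The gap is the reverse inequality $N_p(u)\le\norm{u}_{S_p^n(E)}$, which carries all the content of the lemma, and neither of your two routes delivers it.

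\textbf{The duality fallback points the wrong way.} H\"older gives $\abs{\Tr(avb\,w)}\le\norm{a}_{2p}\norm{v}_{\bM_n(E)}\norm{b}_{2p}\norm{w}_{S_{p'}^n(E^*)}$, that is, $N_p^*\le\norm{\cdot}_{S_{p'}^n(E^*)}$. Under $S_p^n(E)^*=S_{p'}^n(E^*)$ this is just a restatement of $\norm{u}_{S_p^n(E)}\le N_p(u)$, the inequality you already have. The hard direction needs the opposite bound, $\norm{w}_{S_{p'}^n(E^*)}\le N_p^*(w)=\sup_{a,b}\norm{a\,w\,b}_{S_1^n(E^*)}$, with the supremum over the unit ball of $S_{2p}^n$ (up to transposes). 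This is a supremum formula of the same kind as \cite[Lemma~1.7]{Pisier1998}. Such formulas are derived from the factorization you are trying to prove; the paper's proof of that lemma in \Cref{app:mub} uses the present one. So this route is circular.

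\textbf{The first route leaves the decisive step unproved.} The scalar Lozanovskii device from the upper bound does not transfer, because the pointwise factors $a(t)$ change with $t$ and do not commute, so no power of them gives an analytic family. What you need is an operator-valued Szeg\H{o} factorization (Wiener--Masani, Helson--Lowdenslager, Devinatz). Concretely:
\begin{enumerate}
\item choose $t\mapsto a(t),b(t)$ measurably and regularize to $W(t)=a(t)a(t)^*+\epsilon\1$; note that the left factor must dominate $a(t)a(t)^*$, not $\abs{a(t)}^2=a(t)^*a(t)$, and the right one $b(t)^*b(t)$;
\item produce a bounded analytic $A$ on the strip with bounded analytic inverse, $A(it)$ unitary and $A(1+it)A(1+it)^*=W(t)$ almost everywhere, and similarly for $B$.
\end{enumerate}
Granting this, $V=A^{-1}gB^{-1}$ has boundary values of norm at most $1$ in $\bM_n(E)$ by (R2), and the three lines lemma controls $\norm{V(\theta)}_{\bM_n(E)}$. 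Interpolating $\norm{A(it)}\le1$ against $\norm{A(1+it)}_2\le(1+n\epsilon)^{1/2}$ gives $\norm{A(\theta)}_{2p}\le(1+n\epsilon)^{1/(2p)}$, and the argument closes. But that factorization theorem is a substantial input, not a routine construction. Supplying it amounts to reproving Kouba's theorem that the Haagerup tensor product commutes with complex interpolation.

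The paper gives no proof of this lemma and only cites Pisier. As far as I know, Pisier reads the formula off the identification $S_p^n(E)=C_n^p\otimes_h E\otimes_h R_n^p$, which itself rests on that interpolation theorem.
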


\begin{lemma}[{\cite[Theorem~5.1]{Pisier1998}}] \label{thm:pisier-5.1}
Assume $E\subseteq\cB(H)$, let $u:E\to F$ be completely $p$-summing ($1\le p<\infty$), and let $C=\pi_p^{\circ}(u)$.
Then there are an ultrafilter $\cU$ over an index set $\mathsf{J}$ and families $(a_\alpha)_{\alpha\in\mathsf{J}}$, $(b_\alpha)_{\alpha\in\mathsf{J}}$ in the unit ball of $S_{2p}(H)$ such that for all $n$ and all $(z_{ij})\in\bM_n(E)$,
\begin{equation}
\norm{(u(z_{ij}))}_{S_p^n(F)}\le C\,\lim_{\cU}\norm{(a_\alpha\, z_{ij}\, b_\alpha)}_{S_p(\ell_2^n\otimes H)}
\quad\text{and}\quad
\norm{(u(z_{ij}))}_{\bM_n(F)}\le C\,\lim_{\cU}\norm{(a_\alpha\, z_{ij}\, b_\alpha)}_{\bM_n(S_p(H))}.
\end{equation}
Conversely, if $u$ satisfies the second inequality, then $u$ is completely $p$-summing with $\pi_p^{\circ}(u)\le C$.
\end{lemma}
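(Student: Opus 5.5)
This is Pisier's Pietsch-type factorization theorem, so I would reproduce the classical Pietsch domination argument in the noncommutative setting. The plan has four steps: a concave reformulation, a Hahn--Banach separation, a convexification of mixtures, and a passage from finitely supported data to an ultrafilter limit.

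\emph{Step 1 (reformulation).} First rewrite the hypothesis $\pi_p^\circ(u)=C$ in a form that is uniform over the input. Realize $S_p\otimes_{\min}E$ concretely. The dual form of \Cref{thm:pisier-1.5} gives, for $z=(z_{ij})\in\bM_n(E)$ and a finite family of such $z^{(k)}$,
\begin{equation*}
\textstyle\sum_k\norm{(u(z^{(k)}_{ij}))}_{S_p^{n_k}(F)}^p\le C^p\sup_{a,b}\sum_k\norm{(a\,z^{(k)}_{ij}\,b)}_{S_p(\ell_2^{n_k}\otimes H)}^p .
\end{equation*}
Here the supremum runs over $a,b$ in the unit ball of $S_{2p}(H)$. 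I would obtain this from boundedness of $\id_{S_p}\otimes u$ applied to the block-diagonal element $\bigoplus_k z^{(k)}$. The key identity to verify is that the $S_p\otimes_{\min}E$-norm of a block-diagonal element equals $\sup_{a,b}(\sum_k\norm{a z^{(k)}b}_p^p)^{1/p}$ over $S_{2p}$-unit-ball pairs. I expect this to follow from Hölder and \Cref{thm:pisier-1.5}.

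\emph{Step 2 (separation).} For each finite family, consider the function on the set $\mathcal K$ of pairs $(a,b)$ given by
\begin{equation*}
(a,b)\mapsto C^p\sum_k\norm{a z^{(k)}b}_p^p-\sum_k\norm{u(z^{(k)})}_{S_p(F)}^p .
\end{equation*}
The functions arising this way form a convex cone, and each one has nonnegative supremum on $\mathcal K$. A Hahn--Banach or Ky Fan minimax argument then produces, for each finite subfamily, a finitely supported probability measure $\sum_l\lambda_l\delta_{(a_l,b_l)}$ that dominates it.

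\emph{Step 3 (convexification).} A mixture $\sum_l\lambda_l\norm{a_l z b_l}_p^p$ must be replaced by a single pair. I would use $a=\bigoplus_l\lambda_l^{1/2p}a_l$ and $b=\bigoplus_l\lambda_l^{1/2p}b_l$. These lie in the unit ball of $S_{2p}(\ell_2^L\otimes H)$, because $\sum_l\lambda_l\le1$. This enlarges $H$, so I would rather work with the pairs $(a,b)$ in the unit ball of $S_{2p}(H)$ realized via $E\subseteq\cB(H)\cong\cB(\ell_2^L\otimes H)$ after an amplification of $H$. I expect this convexification, done without changing $H$, to be the main obstacle. If it fails, my fallback is to replace $H$ by $\ell_2\otimes H$, which is harmless for the applications.

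\emph{Step 4 (limit and consequences).} Index the pairs by the directed set $\mathsf J$ of finite subsets of $\bigcup_n\bM_n(E)$ and take an ultrafilter $\cU$ on $\mathsf J$ containing all tails. This yields the first inequality with $\lim_\cU$. For the second inequality, I would combine the first with the formula $\norm{y}_{\bM_n(F)}=\sup\norm{\alpha y\beta}_{S_p^n(F)}$ over $\alpha,\beta$ in the unit ball of $S_{2p}^n$, which is \Cref{thm:pisier-1.5} read in reverse. Left and right scalar multiplication commutes with both $u$ and $a_\alpha(\cdot)b_\alpha$, so the bound transfers.

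The converse follows by the same identification. Assuming the second inequality at all matrix levels, I would use $\bM_m(\bM_n(E))$ together with \Cref{thm:pisier-1.5} to bound $\norm{(\id_{S_p^m}\otimes u)(z)}_{S_p^m(F)}$ by $C\norm{z}_{S_p^m\otimes_{\min}E}$. This gives $\pi_p^\circ(u)\le C$.
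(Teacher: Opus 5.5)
You have the same architecture as the proof the paper attributes to Pisier in Step~3 of the proof of \Cref{thm:cb-criterion}: a domination inequality for finite families, a separation step producing a finitely supported mixture of pairs $(a,b)$, collapsing the mixture to a single pair, and an ultrafilter over finite families. The genuine gap is the collapsing step, which you flag but do not resolve. What is needed is the following. Given positive $a_l,b_l$ in the unit ball of $S_{2p}(H)$ and probability weights $\lambda_l$, you need a single pair $a,b$ in the unit ball of $S_{2p}(H)$, for the \emph{given} $H$, with
\begin{equation*}
\sum_l\lambda_l\norm{(a_l z_{ij} b_l)}_{S_p(\ell_2^n\otimes H)}^p\le\norm{(a z_{ij} b)}_{S_p(\ell_2^n\otimes H)}^p\quad\text{for all } n \text{ and } (z_{ij})\in\bM_n(E).
\end{equation*}
In the paper's description this is exactly what \cite[Lemma~1.14]{Pisier1998} supplies. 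Positivity costs nothing, since replacing $(a_l,b_l)$ by $(\abs{a_l},\abs{b_l^*})$ changes neither side.

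Your direct sum does not give this. A spatial identification $\cB(H)\cong\cB(\ell_2^L\otimes H)$ by a unitary $U$ sends $z$ to $UzU^*$, not to $\1_L\otimes z$. So block-diagonal weights do not act blockwise on $z$, and for finite-dimensional $H$ no such identification exists at all. Your fallback proves the statement only for the amplified embedding $e\mapsto\1\otimes e$ into $\cB(\ell_2\otimes H)$, not for the given $E\subseteq\cB(H)$. It is also not harmless. In the paper's level-$d$ adaptation with $H=\bC^n$, compactness of the unit ball of $S_2^n$ turns the ultralimit into a single pair, which is then inverted on $\bC^n$ to define the witness $S(w)=u_0(a^{-1}wb^{-1})$. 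Neither step survives with weights on $\ell_2\otimes\bC^n$.

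For $p=1$, the case the paper needs, the missing inequality is elementary. Let $A:H\to\ell_2^L\otimes H$ be the column with entries $\lambda_l^{1/2}a_l$ and $B$ the row with entries $\lambda_l^{1/2}b_l$. Then $(\1\otimes A)z(\1\otimes B)$ is an $L\times L$ block matrix with diagonal blocks $\lambda_l(\1\otimes a_l)z(\1\otimes b_l)$. Pinching and polar decomposition give $\sum_l\lambda_l\norm{a_lzb_l}_1\le\norm{AzB}_1=\norm{azb}_1$ (amplifications suppressed), with $a=(\sum_l\lambda_l a_l^*a_l)^{1/2}$, $b=(\sum_l\lambda_l b_lb_l^*)^{1/2}$ and $\norm{a}_2^2=\sum_l\lambda_l\norm{a_l}_2^2\le1$.

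For $p>1$ the same construction, with entries $\lambda_l^{1/2p}a_l$, has $\norm{a}_{2p}^2=\norm{\sum_l\lambda_l^{1/p}a_l^*a_l}_p$. This can exceed $1$, for example when all $a_l$ coincide. The right choice is $a=(\sum_l\lambda_l a_l^{2p})^{1/2p}$ and $b=(\sum_l\lambda_l b_l^{2p})^{1/2p}$, which lie in the unit ball. The required inequality then becomes the joint concavity of $(A,B)\mapsto\Tr\abs{A^{1/2p}zB^{1/2p}}^p$, a noncommutative Cauchy--Schwarz inequality. One way to prove it is as follows. For invertible $A,B$, this quantity is the infimum over factorizations $z=x_1x_2$ of $\tfrac12(\Tr(x_1^*A^{1/p}x_1)^p+\Tr(x_2B^{1/p}x_2^*)^p)$. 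H\"older and AM--GM give $\le$; equality is attained at $x_1=A^{-1/2p}U\abs{T}^{1/2}$, $x_2=\abs{T}^{1/2}B^{-1/2p}$, where $T=A^{1/2p}zB^{1/2p}=U\abs{T}$. Each term is concave by Epstein's theorem, so the infimum is jointly concave.

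With this lemma, the rest of your plan goes through. One small point: without compactness your separation gives domination only up to $\varepsilon$. You can absorb this by indexing the ultrafilter over pairs consisting of a finite family and an $\varepsilon>0$.
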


\begin{lemma}[{\cite[Proposition~5.6]{Pisier1998}}]\label{prop:pisier-5.6}
Let $K$ be a Hilbert space, let $a,b\in S_{2p}(K)$, and let $M_{a,b}:\cB(K)\to S_p(K)$ be defined by $M_{a,b}(z)=azb$.
Then
\begin{equation}
\norm{M_{a,b}}_{\cb}\ \le\ \pi_p^{\circ}(M_{a,b})\ \le\ \norm{a}_{S_{2p}(K)}\,\norm{b}_{S_{2p}(K)}.
\end{equation}
\end{lemma}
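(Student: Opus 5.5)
The statement has two inequalities, and I would prove them separately. The first, $\norm{M_{a,b}}_{\cb}\le\pi_p^{\circ}(M_{a,b})$, is a general fact valid for any completely $p$-summing map $u:E\to F$. The second, $\pi_p^{\circ}(M_{a,b})\le\norm{a}_{S_{2p}}\norm{b}_{S_{2p}}$, is a Hölder-type estimate for two-sided multiplication. Throughout I use the identification $S_p^n(S_p(K))=S_p(\ell_2^n\otimes K)$ and the factorization description of $S_p^n(E)$ in \Cref{thm:pisier-1.5}.

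For the first inequality I will use the dual form of \Cref{thm:pisier-1.5}, namely
\begin{equation*}
\norm{y}_{\bM_n(F)}=\sup\{\norm{\alpha y\beta}_{S_p^n(F)}\mid \norm{\alpha}_{S_{2p}^n},\norm{\beta}_{S_{2p}^n}\le1\},
\end{equation*}
which is Pisier's companion lemma to Theorem~1.5. I will take this as known or derive it from \Cref{thm:pisier-1.5} by duality.

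Fix $z=(z_{ij})\in\bM_n(E)$ and set $y=(u(z_{ij}))$. Since $u$ acts entrywise, $\alpha y\beta=(\id\otimes u)(\alpha z\beta)$. Hence
\begin{equation*}
\norm{\alpha y\beta}_{S_p^n(F)}\le\pi_p^{\circ}(u)\,\norm{\alpha z\beta}_{S_p^n\otimes_{\min}E}.
\end{equation*}
The minimal norm on $S_p^n\otimes E$ is dominated by the $S_p^n(E)$ norm, and by \Cref{thm:pisier-1.5} the latter is at most $\norm{\alpha}_{S_{2p}^n}\norm{z}_{\bM_n(E)}\norm{\beta}_{S_{2p}^n}$. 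Taking the supremum over $\alpha,\beta$ gives $\norm{y}_{\bM_n(F)}\le\pi_p^{\circ}(u)\norm{z}_{\bM_n(E)}$ for every $n$, that is, $\norm{u}_{\cb}\le\pi_p^{\circ}(u)$.

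For the second inequality, the goal is
\begin{equation*}
\norm{(\1\otimes a)\,x\,(\1\otimes b)}_{S_p(\ell_2^n\otimes K)}\le\norm{a}_{2p}\norm{b}_{2p}\norm{x}_{S_p^n\otimes_{\min}\cB(K)}
\end{equation*}
for all $x\in S_p^n\otimes\cB(K)$. By density it suffices to treat finite-rank $a,b$ and finite $n$. I would then argue by interpolation between the endpoints $p=\infty$ and $p=1$.

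At $p=\infty$ the bound is just the submultiplicativity of the operator norm on $\bM_n(\cB(K))$.

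At $p=1$ one needs $\norm{(\1\otimes a)x(\1\otimes b)}_{S_1}\le\norm{a}_2\norm{b}_2\norm{x}_{\min}$. I would test against $y\in\bM_n(\cB(K))$ of norm one via trace duality. I would view $x\in S_1^n\otimes_{\min}\cB(K)$ as a c.b.\ map $T_x:\bM_n\to\cB(K)$ with $\norm{T_x}_{\cb}=\norm{x}_{\min}$. The pairing then becomes $\Tr\big(b\,(\text{an entrywise expression } \sum_{ij}T_x(\cdot)\,y_{ji})\,a\big)$. This should be estimated by Cauchy--Schwarz in $S_2(K)$ together with a Stinespring/Haagerup factorization of $T_x$, in the form $T_x=V^*\pi(\cdot)W$ with $\norm{V}\norm{W}=\norm{T_x}_{\cb}$.

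For intermediate $p$ I would apply complex interpolation to the trilinear map $(a,b,x)\mapsto(\1\otimes a)x(\1\otimes b)$, using $S_{2p}=[S_\infty,S_2]_{1/p}$ and $S_p(\ell_2^n\otimes K)=[\bM,S_1]_{1/p}$. Alternatively, I would first polar-decompose so that $a,b\ge0$ and interpolate along $a^{2p\zeta}$ and $b^{2p\zeta}$ with $x$ held fixed.

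I expect the main obstacle to be the third step. The minimal tensor product $S_p^n\otimes_{\min}\cB(K)$ is not itself an interpolation scale in $p$, so interpolating in $x$ is not legitimate. I would therefore hold $x$ fixed and interpolate only in $a$ and $b$. This requires the $p=1$ and $p=\infty$ endpoint bounds to hold with $x$ measured in the minimal norms of $S_1^n$ and $\bM_n$ respectively. Comparing these two norms for the same $x$ costs a factor of $n$, which must not appear. If this comparison fails, my fallback is to prove the $S_p$ estimate directly, by writing $x$ as an $\cB(K)$-valued map and applying the noncommutative Hölder inequality $\norm{a\,X\,b}_p\le\norm{a}_{2p}\norm{X}_\infty\norm{b}_{2p}$ at each matrix level. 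This is the route taken by Pisier's original argument.
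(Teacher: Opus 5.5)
\textbf{First inequality.} Your argument for $\norm{u}_{\cb}\le\pi_p^{\circ}(u)$ is fine. The one unproved input, $\norm{y}_{S_p^n\otimes_{\min}E}\le\norm{y}_{S_p^n(E)}$, follows by testing against complete contractions $\phi:E\to\bM_k$. One then interpolates the formal identity $S_q^n(\bM_k)\to S_q^n\otimes_{\min}\bM_k=\bM_k(S_q^n)$, which is isometric at $q=\infty$ and contractive at $q=1$.

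\textbf{Second inequality: the gap.} This inequality is the real content of the lemma, and it is not established. The gap is exactly the one you anticipate. Interpolating along $a^{pz},b^{pz}$ with $x$ held fixed only yields
\begin{equation*}
\norm{(\1\otimes a)x(\1\otimes b)}_{S_p}\le\norm{a}_{2p}\norm{b}_{2p}\,\norm{x}_{\bM_n\otimes_{\min}\cB(K)}^{1-1/p}\,\norm{x}_{S_1^n\otimes_{\min}\cB(K)}^{1/p}.
\end{equation*}
The geometric mean on the right is in general strictly larger than $\norm{x}_{S_p^n\otimes_{\min}\cB(K)}$. Already for $K=\bC$ and $x=\mathrm{diag}(1,t,\dots,t)$ with $t=1/(n-1)$, it equals $2^{1/p}$, while $\norm{x}_p=(1+(n-1)^{1-p})^{1/p}\to1$. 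Since $S_p^n\otimes_{\min}\cB(K)$ is not the interpolation space of the two endpoint minimal tensor products, $x$ cannot be moved along the strip either, so this route cannot be repaired.

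Your fallback does not close the gap. The scalar Hölder inequality $\norm{\alpha X\beta}_p\le\norm{\alpha}_{2p}\norm{X}_\infty\norm{\beta}_{2p}$, applied to $X=x\in\cB(\ell_2^n\otimes K)$ with $\alpha=\1\otimes a$ and $\beta=\1\otimes b$, gives $n^{1/p}\norm{a}_{2p}\norm{b}_{2p}\norm{x}_{\bM_n(\cB(K))}$. That bound carries a dimensional factor and uses the wrong norm on $x$.

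\textbf{The missing idea: flip the tensor factors.} Exchange the two factors so that $a$ and $b$ act on the outer matrix level and the middle factor lives in one fixed space.

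First make the usual reductions: take $a,b$ of finite rank, compress $K$ to a finite-dimensional $\ell_2^m$ containing the ranges and supports of $a$ and $b$, and work with $S_p^n$ in place of $S_p$. The flip is then a complete isometry $S_p^n\otimes_{\min}\bM_m\to\bM_m\otimes_{\min}S_p^n=\bM_m(S_p^n)$. It sends $x$ to $\hat x$, and $(\1\otimes a)x(\1\otimes b)$ to $(a\otimes\1)\hat x(b\otimes\1)$.

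The easy half of \Cref{thm:pisier-1.5}, with $E=S_p^n$, gives
\begin{equation*}
\norm{(a\otimes\1)\hat x(b\otimes\1)}_{S_p^m(S_p^n)}\le\norm{a}_{S_{2p}^m}\,\norm{\hat x}_{\bM_m(S_p^n)}\,\norm{b}_{S_{2p}^m}.
\end{equation*}
Pisier's Fubini theorem $S_p^m(S_p^n)=S_p(\ell_2^m\otimes\ell_2^n)$ \cite[Theorem~1.9]{Pisier1998} identifies the left side with $\norm{(\id_{S_p^n}\otimes M_{a,b})(x)}_{S_p^n(S_p^m)}$. This uses that the flip is isometric on Schatten classes. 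The result is exactly $\pi_p^{\circ}(M_{a,b})\le\norm{a}_{2p}\norm{b}_{2p}$.

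The easy half of Theorem~1.5 is itself proved by interpolating only in $a$ and $b$, with the middle factor held fixed in $\bM_m(E)$. That is legitimate here because $E=S_p^n$ does not move, and it is precisely the structure your argument lacked. Your $p=1$ computation via a Haagerup--Paulsen--Wittstock factorization and Cauchy--Schwarz is correct, but it becomes unnecessary.

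\textbf{Convention caveat.} For either route, $S_p^n$ must carry the structure induced by the scalar pairing \eqref{eq:scalar-pairing}, for which the Fubini identity holds. With the trace pairing, the case $p=1$ already fails for $x=\sum_{i,j}\ketbra{i}{j}\otimes\ketbra{j}{i}$ with $a=b=\1/\sqrt n$: the left side equals $n$, while the right side equals $1$.
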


\subsection{Separable subspaces of $L_1$ via Maharam's classification}\label{app:maharam}

Maharam's classification of abelian von Neumann algebras provides the geometric ingredient that, combined with Pisier's factorization, lets us pass from an abstract ultraproduct $L_1(\Omega)$ to the concrete probability space $L_1[0,1]$.
The reason is that the image of $S_1(H)$ under the ultraproduct of parent measurements is separable, and a separable subspace of any $L_1$ lies in an $L_1$-space that embeds order-isometrically into $L_1[0,1]$.
This is the content of \Cref{prop:L1-maharam} below.

We first present the key lemma used for the proof in \Cref{prop:L1-maharam}.
\begin{lemma}\label{lem:sig-subalg}
    Let $X \subset L_1(\Omega,\Sigma,\mu)$ be a separable subspace. Then,
    there exists a countably generated
    $\sigma$-subalgebra $\Sigma_X \subset \Sigma$ such that
    $X\subset L_1(\Omega,\Sigma_X, \mu) \subset L_1(\Omega, \Sigma,\mu)$.
\end{lemma}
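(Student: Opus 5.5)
The plan is to build $\Sigma_X$ directly from a countable dense subset of $X$, using the fact that every element of $L_1$ is the norm limit of simple functions. Fix a countable dense set $\{f_k\}_{k\in\bN}\subset X$, which exists by separability. Each $f_k$ is an equivalence class of measurable functions; choose a $\Sigma$-measurable representative of each.

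\emph{Step 1: the generating family.} For every $k$ and every rational rectangle $R=\{z\in\bC \mid \Re z\in(a,b),\ \mathrm{Im}\,z\in(c,e)\}$ with $a,b,c,e\in\bQ$, the preimage $f_k^{-1}(R)$ lies in $\Sigma$. Let $\Sigma_X$ be the $\sigma$-algebra generated by all these preimages. This family is countable, so $\Sigma_X$ is countably generated. Rational rectangles generate the Borel $\sigma$-algebra of $\bC$, so each $f_k$ is $\Sigma_X$-measurable. Since $\Sigma_X\subset\Sigma$, the restriction of $\mu$ to $\Sigma_X$ is a measure, and $L_1(\Omega,\Sigma_X,\mu)$ is a subspace of $L_1(\Omega,\Sigma,\mu)$. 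The norms coincide because $\int\abs{f}\,d\mu$ is the same computed in either $\sigma$-algebra.

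\emph{Step 2: closedness.} The subspace $L_1(\Omega,\Sigma_X,\mu)$ is complete, hence closed in $L_1(\Omega,\Sigma,\mu)$. Its elements are classes containing a $\Sigma_X$-measurable representative. By construction each $f_k$ lies in it, so the closed linear span of $\{f_k\}$ also lies in it. That span contains $\overline{X}\supset X$.

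\emph{The main obstacle} is the identification of equivalence classes. Two functions can agree $\mu$-a.e. while the null set where they differ lies in $\Sigma$ but not in $\Sigma_X$. I would therefore define $L_1(\Omega,\Sigma_X,\mu)$ explicitly as the set of classes in $L_1(\Omega,\Sigma,\mu)$ admitting a $\Sigma_X$-measurable representative. Completeness then comes from the usual argument: an $L_1$-Cauchy sequence of $\Sigma_X$-measurable functions has an a.e.-convergent subsequence, and its $\limsup$ of real and imaginary parts is $\Sigma_X$-measurable. This representative agrees $\mu$-a.e. with the $L_1(\Sigma)$-limit. This settles the closedness used in Step 2.

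If $\mu$ is not $\sigma$-finite, as can happen for ultraproducts, nothing changes: each $f_k$ is integrable, so it is supported on a $\sigma$-finite set, and the argument above never uses $\sigma$-finiteness.
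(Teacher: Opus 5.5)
Your proposal is correct and follows essentially the same route as the paper. You generate $\Sigma_X$ from preimages of countably many rational basic open sets under representatives of a countable dense subset of $X$, and conclude from the closedness of $L_1(\Omega,\Sigma_X,\mu)$, which is complete and isometrically included in $L_1(\Omega,\Sigma,\mu)$. The differences are cosmetic: you use rational rectangles for the complex-valued $f_k$ instead of the paper's splitting into four nonnegative parts with rational balls, and you make explicit that $L_1(\Omega,\Sigma_X,\mu)$ means the classes admitting a $\Sigma_X$-measurable representative, which the paper leaves implicit.
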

\begin{proof}
Let $S = \{x_n\}_n\subset X$ be the countable dense set, and write $x_n = (a_n-b_n) + i(c_n-d_n)$
with non-negative real functions $\{a_n,b_n,c_n,d_n\}_{n\in \bN}$.
Define the set $A_{n,p_1,p_2,r} \subset \Sigma$
\begin{equation}
    A_{n,p_1,p_2,r} := \{ a_n^{-1}(\cB(p_1+ip_2, r)) \mid p_1,p_2, r\in \bQ\},
\end{equation}
where $\cB(p_1+ip_2,r)$ is the open ball with center $p_1+ip_2$ and radius $r$ in $\bC$,
and define $B_{n,p_1,p_2,r}$, $C_{n,p_1,p_2,r}$ and $D_{n, p_1,p_2,r}$ similarly with
$b_n, c_n$ and $d_n$ respectively. Let $\Sigma_X$ be the $\sigma$-algebra generated by
\begin{equation}
E = \{A_{n,p_1,p_2,r}, B_{n,p_1,p_2,r}, C_{n,p_1,p_2,r}, D_{n,p_1,p_2,r}\mid n\in \bN, p_1,p_2,r\in \bQ\}.
\end{equation}
For any $y\in X$, there exists $(y_n)_n \subset S$ be a sequence that converges
to $y$, and write $y_n = (a_n-b_n)+i(c_n-d_n)$.
By construction the functions $a_n,b_n,c_n,d_n$ are $\Sigma_X$-measurable, hence so is $y_n$,
and therefore $y_n \in L_1(\Omega, \Sigma_X, \mu)$, which shows $X \subset L_1(\Omega, \Sigma_X, \mu)$.

Here $L_1(\Omega,\Sigma_X,\mu)$ is closed in $L_1(\Omega,\Sigma,\mu)$.
Indeed, the inclusion is isometric, since both norms are given by the same integral.
Moreover $L_1(\Omega,\Sigma_X,\mu)$ is complete by the Riesz--Fischer theorem \cite[Theorem~3.11]{rudin1987real}, which holds for every measure.
A complete subspace of a metric space is closed, so the limit $y$ of the $y_n$ lies in $L_1(\Omega,\Sigma_X,\mu)$.

Since $\Sigma_X$ is the $\sigma$-subalgebra of $\Sigma$, we have the inclusion $L_1(\Omega, \Sigma_X, \mu) \subset L_1(\Omega, \Sigma, \mu)$.
\end{proof}

The variant of Maharam's classification theorem on Abelian von Neumann algebra \cite{Takesaki_1979} states that if $H$ is a separable Hilbert space and $\cM \subset \cB(H)$ is an abelian von Neumann algebra, then there exists a $*$-isomorphism $\varphi:\cM\to \cN$ preserving weak operator topology where $\cN$ is one of the following:
\begin{enumerate}
\item $\ell^\infty(\{1,\cdots, n\}), n\geq 1$
\item $\ell^\infty(\bN)$
\item $L^\infty[0,1]$
\item $L^\infty[0,1] \oplus \ell^\infty(\{1,\cdots, n\}), n\geq 1$
\item $L^\infty[0,1] \oplus \ell^\infty(\bN)$.
\end{enumerate}
Based on this theorem, we have the following key results that we use for the proof of \Cref{thm:closurePOVM}.

\begin{proposition}\label{prop:L1-maharam}
If $X \subset L_1(\Omega, \Sigma, \mu)$ is a separable subspace, then there exist a countably generated $\sigma$-subalgebra $\Sigma_X\subset\Sigma$ and an order-isometric embedding $j:L_1(\Omega,\Sigma_X,\mu)\embeds L_1[0,1]$ such that
\begin{equation}
    X \subset L_1(\Omega,\Sigma_X,\mu) \overset{j}{\embeds} L_1[0,1].
\end{equation}
Moreover, there is a positive map $\cE:L_1[0,1]\to L_1(\Omega,\Sigma_X,\mu)$ that preserves integrals and satisfies $\cE\circ j=\id$.
\end{proposition}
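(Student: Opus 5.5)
The plan has four steps: use \Cref{lem:sig-subalg} to obtain $\Sigma_X$, pass to the von Neumann algebra $L_\infty(\Omega,\Sigma_X,\mu)$, identify it with one of the standard algebras from Maharam's classification, and finally embed the corresponding $L_1$-space into $L_1[0,1]$.

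\Cref{lem:sig-subalg} directly gives a countably generated $\sigma$-subalgebra $\Sigma_X$ with $X\subset L_1(\Omega,\Sigma_X,\mu)$. It remains to embed $L_1(\Omega,\Sigma_X,\mu)$ order-isometrically into $L_1[0,1]$ and to build the left inverse $\cE$. The plan is to realize the restricted measure space as an abelian von Neumann algebra on a separable Hilbert space and then apply the variant of Maharam's classification recalled above.

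The first step is a reduction to a $\sigma$-finite situation. The measure $\mu$ coming from the ultraproduct need not be $\sigma$-finite. However, $L_1(\Omega,\Sigma_X,\mu)$ has a countable dense set $\{y_n\}$, because $\Sigma_X$ is countably generated: simple functions on a countable generating algebra, restricted to finite-measure sets, are dense. I would therefore replace $\Omega$ by the support $\Omega_0=\bigcup_n\{y_n\neq0\}$, which is $\sigma$-finite. Every element of $L_1(\Omega,\Sigma_X,\mu)$ vanishes a.e.\ off $\Omega_0$.

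On $\Omega_0$ I choose a strictly positive density $h=\sum_n 2^{-n}|y_n|/(1+\norm{y_n}_1)$. This gives a probability measure $\nu=h\mu/\int h\,d\mu$, equivalent to $\mu$ on $\Omega_0$. The map $f\mapsto f/h\cdot\int h$ is an order isometry $L_1(\mu|_{\Omega_0})\cong L_1(\nu)$.

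Next, $\cM=L_\infty(\Omega_0,\Sigma_X,\nu)$ acts by multiplication on $L_2(\Omega_0,\Sigma_X,\nu)$. This Hilbert space is separable, again because $\Sigma_X$ is countably generated. Maharam's classification then yields a normal $*$-isomorphism $\varphi:\cM\to\cN$ onto one of the five listed algebras. Since $\varphi$ is a normal $*$-isomorphism, it carries the faithful normal state $\int\cdot\,d\nu$ to a faithful normal state on $\cN$. Taking preduals, $\varphi$ induces an order isometry of $L_1(\nu)=\cM_*$ onto $\cN_*$, which is $L_1$ of the corresponding measure.

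In every case, $\cN_*$ embeds order-isometrically into $L_1[0,1]$. I would split $[0,1]$ into $[0,1/2]$ and disjoint intervals $I_k$ of lengths $2^{-k-1}$. The continuous part $L_1[0,1]$ goes to $[0,1/2]$ by rescaling, and each atom $\delta_k$ goes to $|I_k|^{-1}\1_{I_k}$. Composing these maps gives $j$.

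For $\cE$, I take the conditional expectation of $L_1[0,1]$ onto the image of $j$. Concretely, it is the identity on the functions supported on $[0,1/2]$ and averages over each $I_k$. Transported back through the isometries, it is positive, preserves integrals, and satisfies $\cE\circ j=\id$.

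I expect the main obstacle to be the bookkeeping in the non-$\sigma$-finite reduction. One must make sure that separability of $L_1(\Sigma_X)$ really forces a $\sigma$-finite support. One must also check that the change of density is compatible with the order and norm, so that $j$ is an order isometry for $\mu$ and not only for $\nu$.
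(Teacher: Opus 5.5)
Your outline follows the paper's route: \Cref{lem:sig-subalg}, a separable $L_2$, Maharam's classification, preduals, and an embedding into $L_1[0,1]$. Your extra care is warranted, since the paper applies the separability criterion for $\sigma$-finite measures to $(\Omega,\Sigma_X,\mu)$ without checking $\sigma$-finiteness, although the ultraproduct measure need not be $\sigma$-finite, and its proof never constructs $\cE$. However, the step your reduction rests on is not valid as you justify it. Countable generation of $\Sigma_X$ does not make $L_1(\Omega,\Sigma_X,\mu)$ separable when $\mu$ is not $\sigma$-finite. Take counting measure on the Borel sets of $[0,1]$, which form a countably generated $\sigma$-algebra. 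Then $L_1=\ell_1([0,1])$ is non-separable, and no finite-measure set in the algebra generated by rational intervals approximates $\{t\}$ for irrational $t$. So your density claim for simple functions fails exactly in the regime you are trying to handle, and the support $\Omega_0$ of a dense sequence is not yet available.

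The repair has to use the specific generators of \Cref{lem:sig-subalg}. Take the canonical decomposition $a_n=(\mathrm{Re}\,x_n)_+$, $b_n=(\mathrm{Re}\,x_n)_-$, and similarly for $c_n,d_n$. With an arbitrary decomposition into nonnegative functions, $\Sigma_X$ can be the whole Borel $\sigma$-algebra in the example above, and then no such $j$ exists.

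With the canonical choice, these functions are integrable and vanish off $S:=\bigcup_n\{x_n\neq0\}$, which is $\sigma$-finite. Every generator $a_n^{-1}(B)$, and hence every set in $\Sigma_X$, meets $\Omega\setminus S$ either in $\emptyset$ or in all of $\Omega\setminus S$. So every $\Sigma_X$-measurable function is constant on $\Omega\setminus S$. Thus $L_1(\Omega,\Sigma_X,\mu)$ is $L_1(S,\Sigma_X|_S,\mu)$, which is separable by the $\sigma$-finite criterion, plus at most the single atom $\1_{\Omega\setminus S}$ when $0<\mu(\Omega\setminus S)<\infty$. From there your change of density, the Maharam step and the predual order isometry go through.

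A smaller point concerns $\cE$. Unless the model algebra is $L_\infty[0,1]\oplus\ell_\infty(\bN)$, part of $[0,1]$ lies outside the supports used by $j$, for instance $(1/2,1]$ when there are no atoms. Your $\cE$ discards the mass there, so it does not preserve integrals. You can fix this in either of two ways:
\begin{itemize}
\item choose the pieces so that they tile $[0,1]$ in each of the five cases; or
\item add $(\int_U g)\,\eta$, where $U$ is the unused set and $\eta$ is a fixed positive element of integral one. Such an $\eta$ exists whenever $L_1(\Omega,\Sigma_X,\mu)\neq0$, as in the application.
\end{itemize}
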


\begin{proof} Let $L_1(\Omega, \Sigma_X,\mu)$ be the induced subspace from
\Cref{lem:sig-subalg} such that
\begin{equation}
    X \subset L_1(\Omega, \Sigma_X, \mu) \subset L_1(\Omega, \Sigma, \mu).
\end{equation}
Note that if $(\Omega, \Sigma', \mu)$ is a measure space with $\mu$ $\sigma$-finite and $\Sigma'$ is countably generated, then $L_p(\Omega, \Sigma', \mu)$ is separable for all $p \in [1, \infty)$ \cite[Proposition 3.4.5]{Cohn1994}.
Since $\Sigma_X$ is countably generated, $L_2(\Omega, \Sigma_X, \mu)$ is separable.

$L_\infty(\Omega, \Sigma_X) \subset \cB(L_2(\Omega, \Sigma_X))$ is an abelian von Neumann algebra and by the classification theorem, we have an order-isometric isomorphism $L_\infty(\Omega, \Sigma_X) \iso \ell_\infty(\bN, \mu_d) \oplus L_\infty([0,1], \mu_c)$, where $\mu_d$ corresponds to the atomic measure and $\mu_c$ the non-atomic measure.
Because von Neumann algebras are uniquely determined by their preduals, we have an order-isometric isomorphism
\begin{equation}
    L_1(\Omega, \Sigma_X) \iso \ell_1(\bN) \oplus L_1[0,1],
\end{equation}
which naturally embeds into $L_1[0,1] \oplus L_1[0,1] \embeds L_1[0,2] \iso L_1[0,1]$.
\end{proof}

\subsection{Convex analysis}\label{app:convex}

\begin{lemma}[{Hahn-Banach separation for open sets \cite[Theorem~3.4(a)]{rudin1991functional}}]\label{thm:HB-open}
  Let $X$ be a topological vector space, $A \subseteq X$ a non-empty convex, open set, and $B\subseteq X$ a non-empty convex set such that $A \cap B = \emptyset$.
  Then there exists a continuous linear functional $\psi \neq 0$ and a scalar $\alpha \in \bR$ such that
  \begin{equation}
    \psi(a) \leq \alpha \leq \psi(b)
  \end{equation}
  for any $a \in A$ and $b \in B$.
\end{lemma}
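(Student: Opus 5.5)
We would follow the classical route through the Minkowski functional and the analytic Hahn--Banach extension theorem, treating the real case first and then passing to complex scalars, where $\psi(a)\le\alpha\le\psi(b)$ is to be read for $\Re\psi$ (the statement is applied only to real pairings in \Cref{app:convex}). First we move the problem to a neighbourhood of the origin. Fix $a_0\in A$ and $b_0\in B$, put $x_0:=b_0-a_0$, and set
\begin{equation*}
  C:=A-B+x_0=\{a-b+x_0\mid a\in A,\ b\in B\}.
\end{equation*}
This set is convex, since $A$ and $B$ are. It is open, being a union of translates $A-b+x_0$ of the open set $A$, and it contains $0=a_0-b_0+x_0$. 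Because $A\cap B=\emptyset$, we have $0\notin A-B$, so $x_0\notin C$.

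Next we introduce the Minkowski functional $p(x):=\inf\{t>0\mid x\in tC\}$. Since $C$ is a convex neighbourhood of $0$, it is absorbing, so $p$ is finite, positively homogeneous and subadditive. Openness of $C$ gives $C=\{p<1\}$, and from $x_0\notin C$ we get $p(x_0)\ge1$. On the line $\bR x_0$ we define $f(tx_0):=t$ and check that $f\le p$ there. For $t>0$ this follows from $p(tx_0)=tp(x_0)\ge t$, and for $t\le0$ from $f(tx_0)\le 0\le p$. The real Hahn--Banach extension theorem then yields a linear $\Lambda$ on $X$ with $\Lambda(x_0)=1$ and $\Lambda\le p$.

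The step needing care is continuity of $\Lambda$, since Hahn--Banach alone gives only an algebraic functional. We have $\Lambda\le p<1$ on $C$ and $\Lambda>-1$ on $-C$. Hence $\abs{\Lambda}<1$ on the neighbourhood $C\cap(-C)$ of $0$, so $\Lambda$ is continuous at $0$ and therefore continuous everywhere. It is nonzero because $\Lambda(x_0)=1$.

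Finally we separate. For $a\in A$ and $b\in B$ we have $a-b+x_0\in C$, so
\begin{equation*}
  \Lambda a-\Lambda b+1<1, \qquad\text{that is,}\qquad \Lambda a<\Lambda b .
\end{equation*}
Thus $\Lambda(A)$ and $\Lambda(B)$ are disjoint convex subsets of $\bR$ with $\Lambda(A)$ lying to the left. Taking $\alpha:=\sup\Lambda(A)$ gives $\Lambda a\le\alpha\le\Lambda b$. In the complex case we apply the above to $X$ as a real space and set $\psi(x):=\Lambda(x)-i\Lambda(ix)$. This $\psi$ is complex linear and continuous with $\Re\psi=\Lambda$, which gives the claim for $\Re\psi$.
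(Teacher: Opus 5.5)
Your proposal is correct and is essentially the proof of the result the paper cites (Rudin, \emph{Functional Analysis}, Theorem~3.4(a)). Like Rudin, you take the Minkowski functional of $C=A-B+x_0$, extend $tx_0\mapsto t$ by Hahn--Banach, get continuity from $\abs{\Lambda}<1$ on $C\cap(-C)$, set $\alpha=\sup\Lambda(A)$, and pass to complex scalars through $\Re\psi$; the paper itself gives no proof beyond the citation, and in its only use ($C(K)$ with real-valued functions in \Cref{lem:HB-RMK}) the real case you treat first is the relevant one.
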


\begin{lemma}[{Riesz-Markov-Kakutani representation theorem \cite[Theorem~2.14]{rudin1987real}}] \label{thm:RMK}
  Let $X$ be a locally compact Hausdorff space and $\psi$ be a positive linear functional on $C_c(X,\bR)$, the real continuous functions on $X$ with compact support.
  Then there exists a unique positive Borel measure $\mu$ on $X$ such that
  \begin{equation}
    \psi(f) = \int_X f(z)\;d\mu(z)
  \end{equation}
  for any $f\in C_c(X,\bR)$, such that for some $\Sigma$ containing the Borel $\sigma$-algebra
  \begin{enumerate}[(1)]
    \item $(X,\Sigma,\mu)$ is a complete measure space;
    \item $\mu(K) < \infty$ for any compact $K \subseteq X$;
    \item (outer regularity) $\mu(E) = \inf \{\mu(U) \mid E \subset U, \text{ $U$ open }\}$ for any Borel set $E \in \Sigma$;
    \item (inner regularity) $\mu(E) = \sup \{\mu(K) \mid K \subset E, \text{ $K$ compact } \}$ for any $E \in \Sigma$ that is open or Borel set with $\mu(E) < \infty$.
  \end{enumerate}
\end{lemma}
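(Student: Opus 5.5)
The plan is to follow the classical construction of Rudin, building $\mu$ directly from $\psi$ by way of open sets and then checking that integration against $\mu$ reproduces $\psi$. Write $K\prec f$ when $f\in C_c(X,\bR)$, $0\le f\le1$ and $f=1$ on $K$, and write $f\prec V$ when $0\le f\le 1$ and $\supp f\subseteq V$. The only topological inputs are two consequences of local compactness and the Hausdorff property: Urysohn's lemma (for compact $K\subseteq V$ open there is $f$ with $K\prec f\prec V$) and finite partitions of unity subordinate to a finite open cover of a compact set.

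\emph{Step 1: uniqueness.} If two positive Borel measures satisfy (2)--(4) and both represent $\psi$, Urysohn's lemma gives $\mu_1(K)\le\psi(f)<\mu_2(V)+\varepsilon$ for every compact $K$ and suitable open $V\supseteq K$. Outer regularity then yields $\mu_1(K)\le\mu_2(K)$, and by symmetry the two agree on compact sets. Inner regularity propagates this to all of $\Sigma$.

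\emph{Step 2: existence, definition of $\mu$.} For $V$ open, set $\mu(V):=\sup\{\psi(f)\mid f\prec V\}$. For arbitrary $E\subseteq X$, set $\mu(E):=\inf\{\mu(V)\mid E\subseteq V\text{ open}\}$; this is consistent on open sets and monotone. Define $\mathcal M_F$ as the family of sets $E$ with $\mu(E)<\infty$ that are inner regular by compacts. Define $\mathcal M$ as the family of $E$ with $E\cap K\in\mathcal M_F$ for every compact $K$. The successive claims, in order, are:
\begin{enumerate}[(a)]
\item $\mu$ is countably subadditive on all sets. This follows from partitions of unity applied to $f\prec\bigcup V_i$.
\item Compact sets lie in $\mathcal M_F$, with $\mu(K)=\inf\{\psi(f)\mid K\prec f\}$.
\item Every open set of finite measure lies in $\mathcal M_F$.
\item $\mu$ is countably additive on disjoint members of $\mathcal M_F$.
\item If $E\in\mathcal M_F$, then for each $\varepsilon$ there are $K\subseteq E\subseteq V$ with $\mu(V\setminus K)<\varepsilon$.
\item $\mathcal M_F$ is closed under differences, unions and intersections.
\item $\mathcal M$ is a $\sigma$-algebra containing the Borel sets, $\mathcal M_F=\{E\in\mathcal M\mid\mu(E)<\infty\}$, and $\mu$ is countably additive on $\mathcal M$.
\end{enumerate}
Completeness of $(X,\mathcal M,\mu)$ is then immediate, because null sets satisfy $E\cap K\in\mathcal M_F$ trivially. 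Properties (2)--(4) hold by construction together with (b), (c) and (g), so $\Sigma:=\mathcal M$ does the job.

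\emph{Step 3: $\psi(f)=\int f\,d\mu$.} By linearity it suffices to prove $\psi(f)\le\int f\,d\mu$ for real $f$; applying it to $-f$ gives equality. Let $K=\supp f$, cover the range of $f$ by an interval $[a,b]$, and cut it into steps $y_0<a<y_1<\dots<y_n=b$ with $y_i-y_{i-1}<\varepsilon$. Put $E_i=f^{-1}((y_{i-1},y_i])\cap K$, and choose open $V_i\supseteq E_i$ with $\mu(V_i)<\mu(E_i)+\varepsilon/n$ and $f<y_i+\varepsilon$ on $V_i$. Take a partition of unity $h_i\prec V_i$ with $\sum h_i=1$ on $K$. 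Then $f=\sum h_if\le\sum h_i(y_i+\varepsilon)$, and $\mu(K)\le\sum\psi(h_i)$ by (b). Expanding $\psi(f)\le\sum(y_i+\varepsilon)\psi(h_i)$ and comparing with $\int f\,d\mu\ge\sum(y_i-\varepsilon)\mu(E_i)$ gives $\psi(f)\le\int f\,d\mu+C\varepsilon$, with $C$ depending only on $\mu(K)$, $a$ and $b$.

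I expect the main obstacle to be Step 2(d)--(g): countable additivity and the $\sigma$-algebra property. The first thing to try is to prove additivity for two disjoint compact sets, by separating them with an Urysohn function and splitting $f\prec V$ into $fg$ and $f(1-g)$. This extends to finite disjoint unions of compacts, and inner regularity then lifts it to $\mathcal M_F$. Closure of $\mathcal M$ under complements and countable unions is the subtle point, since it requires approximating $E\cap K$ by compacts and using (e) to handle differences. The finiteness restriction in (4) is exactly what makes this work without $\sigma$-finiteness.
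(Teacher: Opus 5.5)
Your proposal is correct and is essentially the ten-step proof of Rudin's Theorem~2.14, which is what the paper relies on: it quotes the theorem from Rudin and gives no proof of its own. Two things to tighten: claim (c) should cover every open set, including those of infinite measure, since (4) and your uniqueness step need it (the argument $\psi(f)\le\mu(\supp f)$ for $f\prec V$ does not use finiteness); and in the uniqueness step, agreement on compacts passes to open sets by inner regularity and only then to all of $\Sigma$ by outer regularity.
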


The following form of the bipolar theorem is the step behind \Cref{lem:cp-polar-form} below and behind the proof of \Cref{prop:robustness-dual}.

\begin{lemma}[Minkowski functional and polar]\label{lem:gauge-polar}
Let $V$ be a finite-dimensional real vector space in duality with $V'$ through a nondegenerate bilinear pairing $\ip{\cdot}{\cdot}$, and let $K\subset V$ be closed, convex, and contain $0$.
Then for every $v\in V$
\begin{equation}\label{eq:gauge-polar}
  \inf\{t>0\mid v\in tK\}=\sup\{\ip{y}{v}\mid y\in K^\circ\},
  \qquad K^\circ:=\{y\in V'\mid\ip{y}{z}\leq1\ \text{for all}\ z\in K\},
\end{equation}
both sides being $+\infty$ when $v\notin tK$ for every $t>0$.
\end{lemma}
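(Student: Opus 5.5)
This is the standard bipolar theorem written in gauge form, and we prove the two inequalities separately. Write $p_K(v):=\inf\{t>0\mid v\in tK\}$ and $h(v):=\sup\{\ip{y}{v}\mid y\in K^\circ\}$.

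\emph{Easy direction, $h\le p_K$.} If $v\in tK$ with $t>0$, then $v=tz$ for some $z\in K$. For every $y\in K^\circ$ this gives $\ip{y}{v}=t\ip{y}{z}\le t$. Taking the supremum over $y$ and then the infimum over admissible $t$ yields $h(v)\le p_K(v)$. If no admissible $t$ exists, then $p_K(v)=+\infty$ and there is nothing to prove.

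\emph{Main direction, $p_K\le h$.} We argue by contradiction with a separating hyperplane. Suppose $h(v)<s<p_K(v)$ for some real $s$. Since $h(v)\ge\ip{0}{v}=0$, we have $s>0$. Because $0\in K$ and $K$ is convex, the sets $tK$ increase with $t$, so $p_K(v)>s$ gives $v\notin sK$, and $sK$ is closed and convex.

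\begin{itemize}
\item In finite dimensions we strictly separate the point $v$ from $sK$. Apply \Cref{thm:HB-open} to a small open ball around $v$ that misses $sK$, which exists because $sK$ is closed. Nondegeneracy of the pairing identifies every linear functional on $V$ with some $y'\in V'$.
\item This produces $y'\in V'$ and $\alpha$ with $\ip{y'}{sz}\le\alpha<\ip{y'}{v}$ for all $z\in K$.
\item Taking $z=0$ gives $\alpha\ge0$. If $\alpha>0$, set $y:=(s/\alpha)\,y'$. If $\alpha=0$, replace $\alpha$ by a small $\alpha'>0$ with $\alpha'<\ip{y'}{v}$ and set $y:=(s/\alpha')\,y'$.
\item In either case $\ip{y}{z}\le1$ for all $z\in K$, so $y\in K^\circ$, while $\ip{y}{v}>s$.
\end{itemize}

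This gives $h(v)>s$, a contradiction. The same argument with $s$ arbitrary large shows that $p_K(v)=+\infty$ forces $h(v)=+\infty$.

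The only delicate point is the normalization of the separating functional in the case $\alpha=0$. That case is handled by the slack available from strict separation, so it does not cause an obstruction. Closedness of $K$ is used exactly once, to obtain strict separation.
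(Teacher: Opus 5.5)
Your proof is correct and takes essentially the same route as the paper. The paper cites the bipolar theorem $K^{\circ\circ}=K$ and rescales, whereas you unpack that citation: the easy direction is the inclusion $K\subseteq K^{\circ\circ}$, and your separation of $v$ from $sK$ via \Cref{thm:HB-open}, together with the normalization of the functional (including the $\alpha=0$ case), is precisely the standard proof of $K^{\circ\circ}\subseteq K$. The one detail you leave implicit is routine: the strict inequality $\alpha<\ip{y'}{v}$ holds because the separating functional is nonzero and $v$ is an interior point of the open ball.
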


\begin{proof}
By the bipolar theorem $K^{\circ\circ}=K$, since $K$ is closed, convex and contains $0$.
Hence for $t>0$, $v\in tK$ if and only if $v/t\in K^{\circ\circ}$, that is, if and only if $\ip{y}{v}\leq t$ for all $y\in K^\circ$.
So the left-hand side of \eqref{eq:gauge-polar} is the infimum of the $t>0$ that dominate the right-hand side, which is the right-hand side itself because $0\in K^\circ$ makes it nonnegative.
\end{proof}

\section{Proofs of the closure theorems}\label{app:closure}

This appendix proves \Cref{thm:closurePOVM,thm:closure-instruments}, using the ultraproducts of \Cref{app:ultraprod}, Pisier's factorization of \Cref{app:pisier} and the Maharam-based embedding of \Cref{app:maharam}.

\subsection{Proof of \texorpdfstring{\Cref{thm:closurePOVM}}{Theorem~\ref{thm:closurePOVM}}}

The main idea is to frame the sequence of POVMs as the ultraproduct of POVMs and to use the fact that, thanks to the separability of the input Hilbert space $H$, the ultraproduct of $L_1$ spaces can be replaced by the concrete probability space $L_1[0,1]$.
One key ingredient is Pisier's factorization of ultraproducts of $L_1$ spaces (\Cref{lem:Lp-factorization}); the other is the order-isometric embedding into $L_1[0,1]$ of the $L_1$-space generated by a separable subspace (\Cref{prop:L1-maharam}).

\begin{theorem*}[restated]
\maintheoremPOVM
\end{theorem*}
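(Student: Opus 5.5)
The plan follows the four steps of \Cref{subsec:proof-strategy}. Fix a non-principal ultrafilter $\cU$ on $\bN$ and, for each $n$, a factorization $M^{(n)}_x=u^{(n)}_x\circ\Pi^{(n)}$ through $L_1(\Omega_n)$. Here $\Pi^{(n)}$ is positive and trace preserving, and each $u^{(n)}_x$ is stochastic (positive, integral preserving). All these maps are contractions, so their ultraproducts are well defined. This gives, for every $x\in\mathsf X$ (no countability of $\mathsf X$ needed),
\begin{equation*}
  \uprod{M^{(n)}_x}=\uprod{u^{(n)}_x}\circ\uprod{\Pi^{(n)}}:\uprod{S_1(H)}\to\uprod{L_1(\Omega_n)}\to\uprod{\ell_1^m}.
\end{equation*}
By \cite[Theorem~8.6]{pisier1986factorization} (equivalently \Cref{lem:Lp-factorization} with $p=1$), $\uprod{L_1(\Omega_n)}$ is an abstract $L_1$-space. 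Hence it is order isometric to some $L_1(\Omega,\Sigma,\mu)$, with $\mu$ possibly not $\sigma$-finite. Positivity passes to the ultraproduct maps because the lattice operations are computed representative-wise. The same holds for trace and integral preservation, since the integral is the norm on the positive cone and the ultraproduct norm is $\lim_\cU$ of the norms.

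Next, precompose with the isometric diagonal embedding $\Delta_H$ and set $\widetilde\Pi:=\uprod{\Pi^{(n)}}\circ\Delta_H:S_1(H)\to L_1(\Omega)$. This map is positive and trace preserving. Because $H$ is separable, $S_1(H)$ is separable, so $X:=\widetilde\Pi(S_1(H))$ is a separable subspace of $L_1(\Omega)$. \Cref{prop:L1-maharam} then provides a countably generated $\Sigma_X$ with $X\subset L_1(\Omega,\Sigma_X,\mu)$. It also gives an order-isometric embedding $j$ into $L_1[0,1]$ and a positive integral-preserving $\cE$ with $\cE\circ j=\id$. The candidate parent is $\Pi:=j\circ\widetilde\Pi:S_1(H)\to L_1[0,1]$. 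It is positive, and it is trace preserving because $j$ preserves integrals on positive elements, being an order isometry. Its dual is therefore a normal POVM on $L_\infty[0,1]$.

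For the post-processings, $\ell_1^m$ is finite dimensional, so $\Delta_m:\ell_1^m\to\uprod{\ell_1^m}$ is an isometric isomorphism. Moreover $\Delta_m^{-1}$ is positive and sum preserving, since a positive class has a positive representative whose coordinatewise $\cU$-limit is its preimage. Define
\begin{equation*}
  u_x:=\Delta_m^{-1}\circ\uprod{u^{(n)}_x}\circ\iota\circ\cE:L_1[0,1]\to\ell_1^m,
\end{equation*}
where $\iota$ is the inclusion $L_1(\Omega,\Sigma_X,\mu)\subset L_1(\Omega)\cong\uprod{L_1(\Omega_n)}$. This $u_x$ is a composition of positive integral-preserving maps, hence stochastic. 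Using $\cE\circ j=\id$ on $X$ and \Cref{lem:diag-ultra-limit}, we get
\begin{equation*}
  u_x\circ\Pi=\Delta_m^{-1}\circ\uprod{u^{(n)}_x}\circ\uprod{\Pi^{(n)}}\circ\Delta_H=\Delta_m^{-1}\circ\uprod{M^{(n)}_x}\circ\Delta_H=\Delta_m^{-1}\circ\Delta_m\circ M_x=M_x,
\end{equation*}
which is the required factorization.

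The main obstacle is Step~2: checking that the order-isometric identification of $\uprod{L_1(\Omega_n)}$ with an abstract $L_1(\Omega)$ is compatible with the ultraproduct maps. Concretely, positive classes must be exactly those with positive representatives, and the integral on $L_1(\Omega)$ must correspond to $\lim_\cU\int_{\Omega_n}$, so that positivity and normalization of $\Pi$ and $u_x$ really survive. I would first argue this directly: in an abstract $L_1$-space the integral is the unique functional equal to the norm on the positive cone. The positive-representative claim I would prove by the closed-cone distance argument used in \Cref{L:up-L1-S1-order} with $d=1$. A secondary point is that $\mu$ need not be $\sigma$-finite. This is handled because \Cref{prop:L1-maharam} is applied only to the separable $\Sigma_X$-part, where the functions generating $X$ are integrable and hence $\mu$ restricted to their support is $\sigma$-finite.
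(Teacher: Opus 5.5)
Your proposal is correct and follows the paper's proof essentially step for step. Both take the ultraproduct of the factorizations and identify $\uprod{L_1(\Omega_n)}$ with an $L_1(\Omega)$ via Pisier and Kakutani, embed the separable image of $\uprod{\Pi^{(n)}}\circ\Delta_H$ into $L_1[0,1]$ by \Cref{prop:L1-maharam}, and define $u_x=\Delta_m^{-1}\circ\uprod{u^{(n)}_x}\circ\cE$, verifying the factorization through \Cref{lem:diag-ultra-limit}. Your extra checks fill in points the paper passes over quickly without changing the route: that positive classes have positive representatives, that the integral on $L_1(\Omega)$ is $\lim_\cU$ of the integrals, and that one restricts to the $\sigma$-finite support before invoking Maharam.
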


\begin{proof}
For each $n\in\mathbb{N}$, the multi-meter $\cM^{(n)}$ is compatible, so there exist a measure space $(\Omega_n,\Sigma_n,\mu_n)$, a parent measurement $\Pi^{(n)} : S_1(H) \to L_1(\Omega_n)$, and stochastic postprocessing maps $u^{(n)}_x : L_1(\Omega_n) \to \ell_1^m \qquad (x\in\mathsf{X})$ such that
\begin{equation}
  M^{(n)}_x = u^{(n)}_x \circ \Pi^{(n)} \qquad (x\in\mathsf{X}).
\end{equation}
and $\|\Pi^{(n)}\| = \|u^{(n)}_x\| = 1$ for all $n$ and $x$.

Fix a non-principal ultrafilter $\cU$ on $\mathbb{N}$.
Consider the Banach-space ultraproduct
\begin{equation}
  X := \uprod{L_1(\Omega_n)}.
\end{equation}
By \mbox{\Cref{lem:Lp-factorization}}, $X$ is an abstract $L_1$-space, and by Kakutani's representation theorem there is an order isometry of $X$ onto $L_1(\Omega,\Sigma,\mu)$ for some measure space $(\Omega,\Sigma,\mu)$, see \cite[Theorem~8.6]{pisier1986factorization}.

We fix such an order-isometric identification
\begin{equation}
  \uprod{L_1(\Omega_n)} \cong L_1(\Omega).
\end{equation}

Next, form the ultraproduct maps
\begin{align}
  &\uprod{\Pi^{(n)}} : \uprod{S_1(H)} \to \uprod{L_1(\Omega_n)} \cong L_1(\Omega)\\
  &\uprod{u^{(n)}_x} : \uprod{L_1(\Omega_n)} \cong L_1(\Omega) \to \uprod{\ell_1^m}\\
  &\uprod{M^{(n)}_x} : \uprod{S_1(H)} \to \uprod{\ell_1^m}
\end{align}
By construction,
\begin{equation}
  \uprod{M^{(n)}_x} = \uprod{u^{(n)}_x} \circ \uprod{\Pi^{(n)}} \qquad (x\in\mathsf{X}).
\end{equation}

Let $\Delta_{H} : S_1(H) \to \uprod{S_1(H)}$ and $\Delta_m : \ell_1^m \to \uprod{\ell_1^m}$ denote the diagonal isometric embeddings
\begin{equation}
  \Delta_{H}(\rho) := \uprod{\rho}, \qq \Delta_m(v) := \uprod{v},
\end{equation}

Because $M^{(n)}_x \to M_x$ in operator norm, for each fixed
$\rho\in S_1(H)$ the sequence $M^{(n)}_x(\rho)$ converges in $\ell_1^m$
to $M_x(\rho)$.
Hence the ultralimit coincides with the usual limit, and therefore
\begin{equation}
  \uprod{M^{(n)}_x(\rho)} = \uprod{M_x(\rho)}
\end{equation}
in $\uprod{\ell_1^m}$.
Equivalently,
\begin{equation}\label{eq:ultra-ident}
  \uprod{M^{(n)}_x} \circ \Delta_{H} = \Delta_m \circ M_x \qquad (x\in\mathsf{X}).
\end{equation}

Now consider the separable subspace
\begin{equation}
  Y := \overline{ \uprod{\Pi^{(n)}} ( \Delta_{H}(S_1(H)) ) } \subset L_1(\Omega).
\end{equation}
Since $S_1(H)$ is separable and $\uprod{\Pi^{(n)}}$ is bounded, $Y$ is separable.
By \Cref{prop:L1-maharam} there are a countably generated $\sigma$-subalgebra $\Sigma_X$ with $Y\subset L_1(\Omega,\Sigma_X,\mu)$, an order-isometric embedding $j:L_1(\Omega,\Sigma_X,\mu)\hookrightarrow L_1[0,1]$, and a positive integral-preserving map $\cE:L_1[0,1]\to L_1(\Omega,\Sigma_X,\mu)$ with $\cE\circ j=\id$.
We regard $L_1(\Omega,\Sigma_X,\mu)$ as a subspace of $L_1(\Omega)$.

Define a map $\Pi : S_1(H) \to L_1[0,1]$ by
\begin{equation}
  \Pi := j \circ \uprod{\Pi^{(n)}} \circ \Delta_{H}.
\end{equation}
Since $\Delta_H$ and $j$ are isometric embeddings, $j$ preserves integrals of positive functions, and $\uprod{\Pi^{(n)}}$ is the ultraproduct of positive contractions, $\Pi$ is a positive, trace-preserving map.
Thus $\Pi$ is a valid parent measurement in the Schr\"odinger picture.

For each $x\in\mathsf{X}$, define $u_x : L_1[0,1] \to \ell_1^m$ by 
\begin{equation}
  u_x := \Delta_m^{-1}\circ\uprod{u^{(n)}_x}\circ\cE.
\end{equation}
This is well defined because $\ell_1^m$ is finite-dimensional, so $\Delta_m$ is onto $\uprod{\ell_1^m}$.
Each $u_x$ is positive, because $\cE$ is positive and a positive class in $L_1(\Omega)$ has positive representatives.
It preserves total mass, because $\cE$ preserves integrals and each $u^{(n)}_x$ is stochastic.
So each $u_x$ is a stochastic postprocessing map.

We now show that $M_x = u_x\circ \Pi$.
Using \eqref{eq:ultra-ident} and the definitions above, we compute:
\begin{equation}
  \Delta_m \circ M_x
  = \uprod{M^{(n)}_x} \circ \Delta_H
  = \uprod{u^{(n)}_x} \circ \uprod{\Pi^{(n)}} \circ \Delta_H
  = \uprod{u^{(n)}_x} \circ \cE\circ j\circ\uprod{\Pi^{(n)}} \circ \Delta_H
  = \Delta_m \circ u_x \circ \Pi.
\end{equation}
Since $\Delta_m$ is an isometric embedding, it is injective, and we conclude
\begin{equation}
  M_x = u_x \circ \Pi \qquad (x\in\mathsf{X}).
\end{equation}

Thus the limiting multi-meter
\begin{equation}
  \cM = \{M_x : S_1(H)\to\ell_1^m\}_{x\in\mathsf{X}}
\end{equation}
admits a common parent measurement $\Pi$ and stochastic postprocessings $u_x$, and is therefore compatible.
\end{proof}

\subsection{Proof of \Cref{thm:closure-instruments}}

We now extend the closure result for compatible POVMs to the case of quantum instruments.
Throughout this proof, $H$ denotes a separable Hilbert space and $K$ a finite-dimensional Hilbert space, and we view instruments in the Schr\"{o}dinger picture as completely positive, trace-preserving maps
\begin{equation}
    T: S_1(H) \to S_1(K) \otimes^\wedge L_1(\Omega),
\end{equation}
where $\otimes^\wedge$ denotes the operator-space projective tensor product, as in the preliminaries.

As in the compatible measurement case, the challenge in taking limits is that the intermediate Banach/operator spaces $L_1(\Omega_n)$ may depend on $n$ and have no canonical embedding into a common space.

\begin{proof}
Let $\cU$ be a non-principal ultrafilter on $\bN$.
For each $x \in \mathsf{X}$, assume $\Phi^{(n)}_x = \Psi^{(n)}_x \circ \theta^{(n)}$ with an instrument $\theta^{(n)}: S_1(H) \to S_1^d \otimes^{\wedge} L_1(\Omega_n)$ and a quantum channel $\Psi^{(n)}_x:S_1^d \otimes^{\wedge} L_1(\Omega_n) \to S_1(K) \otimes^\wedge \ell_1^m$.
Consider the ultraproduct maps
\begin{align}
&\uprod{\Phi^{(n)}_x}:\ \uprod{S_1(H)} \to \uprod{S_1(K)\otimes^\wedge \ell_1^m}\\
&\uprod{\theta^{(n)}}:\ \uprod{S_1(H)}\to \uprod{S_1^d\otimes^\wedge L_1(\Omega_n)}\\
&\uprod{\Psi^{(n)}_x}:\ \uprod{S_1^d\otimes^\wedge L_1(\Omega_n)} \to \uprod{S_1(K)\otimes^\wedge \ell_1^m}.
\end{align}

By \Cref{lem:cp-ultra-os}, each $\uprod{\Psi^{(n)}_x}$ is completely positive and trace-preserving, and we have
\begin{equation}\label{eq:ultra-factorization}
\uprod{\Phi^{(n)}_x} \;=\; \uprod{\Psi^{(n)}_x}\circ \uprod{\theta^{(n)}}
\qquad\text{for all } x\in\mathsf{X}.
\end{equation}
Here $S_1^d\otimes^\wedge L_1(\Omega_n)=L_1(\Omega_n;S_1^d)$ and $S_1(K)\otimes^\wedge\ell_1^m=L_1(\{1,\dots,m\};S_1(K))$ by \eqref{eq:L1-S1}, so \Cref{lem:cp-ultra-os}, whose $L_1$ part rests on \Cref{L:up-L1-S1,L:up-L1-S1-order}, applies with $d'=\dim K$.
By the last part of \Cref{lem:cp-ultra-os} the inverse $\Delta_F^{-1}$ of the diagonal embedding of the finite-dimensional $F$ defined below is completely positive and trace preserving.

Since $d$ is finite, the ultraproduct stability of the operator space projective tensor product, and applying \Cref{lem:pullout-finite}, yields a complete isometry
\begin{equation}\label{eq:pullout}
\uprod{S_1^d\otimes^\wedge L_1(\Omega_n)} \;\cong\; \uprod{S_1^d} \otimes^\wedge \uprod{L_1(\Omega_n)} \;\cong\; S_1^d\otimes^\wedge \uprod{L_1(\Omega_n)}.
\end{equation}
The ultraproduct of $L_1$ spaces is an $L_1$ space and there exists a measure space $(\Omega,\Sigma,\mu)$ and an order isometry \cite[Theorem 8.6]{pisier1986factorization}
\begin{equation}\label{eq:L1-ultra-represent}
\uprod{L_1(\Omega_n)} \;\cong\; L_1(\Omega,\Sigma,\mu).
\end{equation}

Transporting the operator-space structure along this identification, we may regard the common intermediate operator space as
\begin{equation}
E \;:=\; S_1^d\otimes^\wedge L_1(\Omega,\Sigma,\mu),
\end{equation}
and rewrite $\uprod{\theta^{(n)}}$ as a map $\uprod{\theta^{(n)}}:\ \uprod{S_1(H)}\to E$.
Likewise, each $\uprod{\Psi^{(n)}_x}$ becomes a completely positive trace-preserving map $\uprod{\Psi^{(n)}_x}:\ E\to \uprod{S_1(K)\otimes^\wedge \ell_1^m}$.

Let $\Delta_{H}:S_1(H)\to \uprod{S_1(H)}$ denote the diagonal embedding and define
\begin{equation}
\widetilde{\theta} := \uprod{\theta^{(n)}}\circ \Delta_{H}:\ S_1(H)\to E.
\end{equation}
Thus $\widetilde\theta(\rho)$ is the class of $(\theta^{(n)}(\rho))_n$.
For positive $\rho\in S_1^k\otimes^\wedge S_1(H)$ the class $(\id_{S_1^k}\otimes\widetilde\theta)(\rho)$ has the positive representative $((\id_{S_1^k}\otimes\theta^{(n)})(\rho))_n$, so $\widetilde\theta$ is completely positive by \Cref{L:up-L1-S1-order} with $kd$ in place of $d$, and it is trace preserving by the trace identity there.
Since $S_1(H)$ is separable and $\widetilde{\theta}$ is bounded linear, its range $\widetilde{\theta}(S_1(H))$ is separable and its closure
\begin{equation}
Y := \overline{\widetilde{\theta}(S_1(H))}\subseteq E
\end{equation}
is a separable subspace of $E$.

Choose a basis $\{e_\alpha\}_{\alpha=1}^{d^2}$ of $S_1^d$ and let $\{\phi_\alpha\}$ be the dual basis in $\cB(\bC^d)$.
For each $y\in Y$, the coefficients
\begin{equation}
y_\alpha := (\phi_\alpha\otimes \id_{L_1(\Omega)})(y)\in L_1(\Omega)
\qquad(\alpha=1,\dots,d^2)
\end{equation}
belong to a separable subspace of $L_1(\Omega)$ as $y$ ranges over $Y$.
Let $Z\subseteq L_1(\Omega)$ be the closed separable subspace generated by all such coefficients $\{y_\alpha:\ y\in Y,\ \alpha\le d^2\}$.
Then
\begin{equation}\label{eq:Y-in-tensor}
Y \subseteq S_1^d\otimes^\wedge Z \subseteq S_1^d\otimes^\wedge L_1(\Omega).
\end{equation}

Apply \Cref{prop:L1-maharam} to the separable subspace $Z\subseteq L_1(\Omega)$.
We obtain a countably generated $\sigma$-subalgebra $\Sigma_Z$ with $Z\subseteq L_1(\Omega,\Sigma_Z,\mu)$, an order-isometric embedding $j:L_1(\Omega,\Sigma_Z,\mu)\hookrightarrow L_1[0,1]$, and a positive integral-preserving map $\cE:L_1[0,1]\to L_1(\Omega,\Sigma_Z,\mu)$ with $\cE\circ j=\id$.
Since $\cE$ is a positive contractive left inverse of $j$, \Cref{lem:tensor-preserve-embed} shows that
\begin{equation}
\id_{S_1^d}\otimes j:\ S_1^d\otimes^\wedge L_1(\Omega,\Sigma_Z,\mu)\to S_1^d\otimes^\wedge L_1[0,1]
\end{equation}
is a complete isometry.

Since $Z\subseteq L_1(\Omega,\Sigma_Z,\mu)$, the inclusion \eqref{eq:Y-in-tensor} implies that $Y\subseteq S_1^d\otimes^\wedge L_1(\Omega,\Sigma_Z,\mu)$.
Define a new map
\begin{equation}
\theta \;:=\; (\id\otimes j)\circ \widetilde{\theta}:\ S_1(H)\to S_1^d\otimes^\wedge L_1[0,1].
\end{equation}
Then $\theta$ is completely positive and trace-preserving, because $\widetilde{\theta}$ is c.p. and trace-preserving and $\id_{S_1^d}\otimes j$ is completely positive and trace preserving, since $j$ is positive and preserves integrals.

The rest of the proof follows in analogous way to the compatible measurement case.
Namely, with $F:=S_1(K)\otimes^\wedge\ell_1^m$, which is finite-dimensional, the post-processings are $\Psi_x:=\Delta_F^{-1}\circ\uprod{\Psi^{(n)}_x}\circ(\id_{S_1^d}\otimes\cE)$, and $\Psi_x\circ\theta=\Delta_F^{-1}\circ\uprod{\Psi^{(n)}_x}\circ\widetilde\theta=\Delta_F^{-1}\circ\uprod{\Phi^{(n)}_x}\circ\Delta_H=\Phi_x$.

\end{proof}

\section{Proofs for \texorpdfstring{\Cref{sec:criterion}}{the criterion section}}\label{app:criterion}

\subsection{Dual Pairings}\label{app:pairings}

The proofs in this appendix use dual pairings that we collect here.

\paragraph{Trace pairing: $\cB(H)=S_1(H)^*$ and $L_\infty(\Omega;\bM_d)=L_1(\Omega;S_1^d)^*$ via Banach-space duality.}

The trace pairing
\begin{equation}\label{eq:trace-pairing}
  \ip{a}{\rho}=\Tr(a\rho),\qquad a\in\cB(H),\ \rho\in S_1(H),
\end{equation}
identifies $\cB(H)$ with the dual of $S_1(H)$.
Combined with integration, $\ip{g}{f}=\int_\Omega\Tr(g(\omega)f(\omega))\,d\mu(\omega)$ identifies $L_\infty(\Omega;\bM_d)$ with the dual of $L_1(\Omega;S_1^d)$ for $\sigma$-finite $\mu$.
The predual $T_*$ of a map $T$ between algebras of bounded operators is defined by $\Tr[T_*(y)\,e]=\Tr[y\,T(e)]$.
We use the trace pairing only as a Banach-space duality, for these norms and for predual maps, and not to define matrix norms.

\paragraph{Scalar pairing: $\bM_m=(S_1^m)^*$ via operator-space duality.}

The criterion of \Cref{thm:cb-criterion} is stated with the scalar pairing \cite[(1.1.26)]{Effros2000}
\begin{equation}\label{eq:scalar-pairing}
  \ip{f}{y}=\Tr[f^Ty]=\sum_{s,t}f_{st}y_{st},\qquad f\in S_1^m,\ y\in\bM_m.
\end{equation}
It is the trace pairing composed with the transpose in the basis $(\ket s)_s$ that defines $\ket{\chi_m}=\sum_s\ket s\otimes\ket s$ and the Choi matrix $J(\varphi)=\sum_{s,t}E_{st}\otimes\varphi(E_{st})$ of a map $\varphi:\bM_m\to\bM_n$.
Under \eqref{eq:scalar-pairing} the map $\varphi$ corresponds to $J(\varphi)\in S_1^m\otimes\bM_n$.
So $\id_{\bM_m}$ corresponds to $\chi_m$, and $\varphi$ is completely positive exactly when $J(\varphi)\ge0$.
In vector form,
\begin{equation}
\bra{\chi_m}(X\otimes Y)\ket{\chi_m}=\Tr[X^TY].
\end{equation}
A Banach dual pairing between $V^*$ and $V$ defines the operator-space structure of $V^*$ through $\bM_n(V^*)=\CB(V,\bM_n)$ \cite[(3.2.2)]{Effros2000}.
The trace and scalar pairings give $S_1^m$ the same norm but different matrix norms, because the transpose is an isometry of $\bM_m$ but not a complete isometry.
For example, $\chi_m\in\bM_m(S_1^m)$ corresponds to $\id_{\bM_m}$ under \eqref{eq:scalar-pairing}, of c.b.\ norm one, and to the transpose under \eqref{eq:trace-pairing}, of c.b.\ norm $m$.
Only under \eqref{eq:scalar-pairing} does the Choi matrix of a map represent the map, and $S_1^{nd}=S_1^d(S_1^n)$ holds for the structure it gives.
So in \Cref{thm:cb-criterion}, its proof, \Cref{lem:choi} and \Cref{lem:theta-norm}, the spaces $\ell_\infty(\mathsf X,S_1^m)$, $S_1^n$ and $L_1(\Omega;S_1^d)$ carry the structure given by \eqref{eq:scalar-pairing}, where sums over $x\in\mathsf X$ identify $\ell_\infty(\mathsf X,S_1^m)$ with the dual of $\ell_1(\mathsf X,\bM_m)$.

\paragraph{Pairing of maps: $\CB(\bM_m,\bM_n)=(\bM_m\otimes^\wedge S_1^n)^*$ via operator-space duality.}

Since $\bM_n$ is the dual of $S_1^n$ under \eqref{eq:trace-pairing}, \cite[Corollary~7.1.5]{Effros2000} identifies $\CB(\bM_m,\bM_n)$ with the operator-space dual of the operator-space projective tensor product $\bM_m\otimes^\wedge S_1^n$ through
\begin{equation}\label{eq:cb-dual-pairing}
  \ip{T}{e\otimes f}=\Tr[f\,T(e)],\qquad T\in\CB(\bM_m,\bM_n),\ e\in\bM_m,\ f\in S_1^n.
\end{equation}
As these spaces are finite-dimensional, $\bM_m\otimes^\wedge S_1^n$ is also the dual of $\CB(\bM_m,\bM_n)$.
We write its elements as linear maps $R:\bM_n\to\bM_m$, where $\sum_kz_k\otimes\rho_k$ corresponds to $R(y)=\sum_k\Tr[\rho_ky]\,z_k$.
Then \eqref{eq:cb-dual-pairing} becomes
\begin{equation}
  \sum_k\Tr[\rho_kT(z_k)]=\Tr[R\circ T],
\end{equation}
the trace of the linear map $R\circ T$ on $\bM_m$.
Here the trace of a linear map $N$ on $\bM_k$ is
\begin{equation}
  \Tr[N]=\sum_{s,t}[N(\ketbra st)]_{st}.
\end{equation}
This is the pairing for a single pair of matrix algebras.

\paragraph{Pairing of families: $\CB(\ell_1(\mathsf X,\bM_m),\bM_n)$ via operator-space duality.}
The space $\CB(\ell_1(\mathsf X,\bM_m),\bM_n)$ is a different space from $\CB(\bM_m,\bM_n)$: its elements are families $\Phi=(\Phi_x)_{x\in\mathsf X}$ with $\Phi_x=\Phi(\,\cdot\,\delta_x):\bM_m\to\bM_n$, bounded jointly in the c.b.\ norm of $\ell_1(\mathsf X,\bM_m)$ and not one at a time.
Its elements are paired with families $R=(R_x)_x$ with $R_x:\bM_n\to\bM_m$, which are the maps $R:\bM_n\to\ell_1(\mathsf X,\bM_m)$ of the polar set in the proof of \Cref{thm:cb-criterion}.
Summing the pairing \eqref{eq:cb-dual-pairing} over $x$ gives the pairing of $\CB(\ell_1(\mathsf X,\bM_m),\bM_n)$ with these families,
\begin{equation}\label{eq:cp-pairing}
  \sum_x\Tr[R_x\circ\Phi_x]=\sum_x\sum_{p,q}\Tr[E_{qp}R_x(\Phi_x(E_{pq}))].
\end{equation}
The vector states in the proofs produce instead its transposed version
\begin{equation}\label{eq:dual-pairing}
  \sum_x\Tr[\vartheta\circ R_x\circ\vartheta\circ\Phi_x],
\end{equation}
where $\vartheta$ is the transpose.
Indeed, let $T,\Phi:\bM_m\to\bM_n$ be linear and let $a,b\in\bM_n$.
Then
\begin{equation}
  (b^*\otimes\1)(T\otimes\Phi)(\chi_m)(a\otimes\1)=\sum_{s,t}b^*T(E_{st})a\otimes\Phi(E_{st}),
\end{equation}
and the vector form of the scalar pairing gives
\begin{equation}\label{eq:vector-pairing}
\begin{aligned}
  \bra{\chi_n}(b^*\otimes\1)(T\otimes\Phi)(\chi_m)(a\otimes\1)\ket{\chi_n}
  &=\sum_{s,t}\Tr[(b^*T(E_{st})a)^T\,\Phi(E_{st})]\\
  &=\sum_{s,t}\Tr[\Phi(E_{st})^T\,b^*T(E_{st})a]\\
  &=\sum_{s,t}\Tr[E_{st}\,T_*(a\,\Phi(E_{st})^T\,b^*)]\\
  &=\Tr[\vartheta\circ R^{a,b}\circ\vartheta\circ\Phi],
\end{aligned}
\end{equation}
where $R^{a,b}(y):=T_*(a\,y\,b^*)$.
The second equality is $\Tr[X^TY]=\Tr[Y^TX]$.
The third is cyclicity of the trace and the definition of the predual $T_*$.
The last is the definition of $\Tr[N]$, since
\begin{equation}
[(R^{a,b}(\Phi(E_{st})^T))^T]_{st}=\Tr[E_{st}\,R^{a,b}(\Phi(E_{st})^T)].
\end{equation}
With $\ket\xi=(a\otimes\1)\ket{\chi_n}$ and $\ket\eta=(b\otimes\1)\ket{\chi_n}$ the left-hand side is $\bra\eta(T\otimes\Phi)(\chi_m)\ket\xi$, and every vector in $\bC^n\otimes\bC^n$ has this form with $\norm{\xi}=\norm{a}_2$.
Each proof says which of the two pairings it writes as $\ip{R}{\Phi}$.
By cyclicity of the trace, \eqref{eq:dual-pairing} at $\Phi$ is \eqref{eq:cp-pairing} at $(\vartheta\circ\Phi_x\circ\vartheta)_x$.
In the c.b.\ case the two pairings give the same polar of the set of factorizable families.
Indeed, conjugating $\theta$ and $\Psi_x$ by the pointwise transpose preserves c.b.\ norms, so this set is invariant under $\Phi_x\mapsto\vartheta\circ\Phi_x\circ\vartheta$.
On ${*}$-preserving families \eqref{eq:dual-pairing} is real.

\subsection{Proof of \Cref{thm:cb-criterion}}
\label{app:cb-criterion}

If $\Phi = 0$, the statement is trivial, so assume $\Phi \neq 0$.
Since both sides scale linearly in $\Phi$, we can assume $\norm{\Phi}_{\cb} = 1$ without loss.

Let $\cF \subset \CB(\ell_1(\mathsf{X}, \bM_m), \bM_n)$ be the set
\begin{equation}
  \cF := \{\Phi = \theta \circ \Psi \mid \norm{\theta}_{\cb}, \norm{\Psi}_{\cb} \leq 1 \},
\end{equation}
where $\Psi:\ell_1(\mathsf{X}, \bM_m) \to L_\infty(\Omega;\bM_d)$ and a normal map $\theta: L_\infty(\Omega;\bM_d) \to \bM_n$ for some $\sigma$-finite $(\Omega,\Sigma,\mu)$.

We first show that
\begin{equation}\label{eq:polar}
  \gamma := \gamma_{d, \cb}(\Phi) = \sup_{R \in \cF^\circ} \Re \ip{R}{\Phi},
\end{equation}
where $\cF^\circ$ is the polar of $\cF$
\begin{equation}
  \cF^\circ = \{ R: \bM_n \to \ell_1(\mathsf{X}, \bM_m) \mid \Re \ip{R}{X} \leq 1 \; \forall X \in \cF \}.
\end{equation}
and $\ip{\cdot}{\cdot}$ is the dual pairing \eqref{eq:cp-pairing} of the operator space $\CB(\ell_1(\mathsf{X}, \bM_m), \bM_n)$.

We then show that
\begin{equation}\label{eq:R-T-equiv}
  \sup_{R\in \cF^\circ} \Re \ip{R}{\Phi} = \sup_{\norm{T}_{(d)} \leq 1} \norm{T \otimes \Phi},
\end{equation}
which together with \eqref{eq:polar} proves the theorem.

The inequality $\gamma \geq \sup_R$ is trivial.
Let $t > \gamma$.
Then $\Phi/t \in \cF$, and by the definition of $\cF^\circ$, for every $R \in \cF^\circ$, $\Re \ip{R}{\Phi/t} \leq 1$, i.e.
\begin{equation}
  \sup_{R \in \cF^\circ} \Re \ip{R}{\Phi} \leq t.
\end{equation}
It holds for arbitrary $t$, so taking $t \to \gamma$ we have $\sup_{R \in \cF^\circ} \Re\ip{R}{\Phi} \leq \gamma$.

For the reverse inequality $\gamma \leq \sup_R$ we invoke the Hahn--Banach theorem, which completes \eqref{eq:polar}.
Since $\cF$ is balanced ($\lambda \Phi = (\lambda \theta) \circ \Psi$ and $\norm{\lambda \theta}_{\cb} \leq 1$ for any $\abs{\lambda} \leq 1$) and convex (\Cref{lem:F-convex}), $\Phi$ lies in the closure of $\cF$ if and only if $\Re \ip{R}{\Phi} \leq 1$ for all $R \in \cF^\circ$.

Equivalently, $R \in \cF^\circ$ means that
\begin{equation} \label{eq:HB-witness}
  \Re\ip{R}{\theta \Psi} \leq \norm{\theta}_{\cb} \, \norm{\Psi}_{\cb}
\end{equation}
holds for \emph{all} $\theta,\Psi$; the normalized form (with $\norm{\theta}_\cb,\norm{\Psi}_\cb\le1$ and bound $1$) and this homogeneous form are equivalent by rescaling.
Specifically,
\begin{equation}
  \sup_{R \in \cF^\circ} \Re\ip{R}{\Phi} \leq \inf_{\Phi = \theta\Psi} \norm{\theta}_{\cb}\norm{\Psi}_{\cb} = \gamma,
\end{equation}
where the infimum is over all completely bounded factorizations $\Phi = \theta \Psi$ via $L_\infty(\Omega;\bM_d)$ for some $\Omega$.
For the reverse inequality let 
\begin{equation}
  s:=\sup_{R\in\cF^\circ}\Re\ip{R}{\Phi}
\end{equation}
and $t>s$.
Then $\Re\ip{R}{\Phi/t}\le1$ for all $R\in\cF^\circ$, so $\Phi/t$ lies in $\cF^{\circ\circ}$, which is the closure of $\cF$ by the bipolar theorem in the finite-dimensional space $\CB(\ell_1(\mathsf X,\bM_m),\bM_n)$.
This closure lies in $\{\gamma_{d,\cb}\le1\}$, because $\gamma_{d,\cb}$ is the Minkowski functional of $\cF$, a seminorm that is finite and hence continuous.
It is finite because expanding each $\Phi_x$ in a basis factors it through some $\ell_\infty^k(\bM_d)$.

Finally, we show \eqref{eq:R-T-equiv}.

We can express $\Phi = (\Phi_x)_{x\in \mathsf{X}}$ and $R = (R_x)_{x \in \mathsf{X}}$, where $\Phi_x \in \CB(\bM_m, \bM_n)$ and $R_x \in \CB(\bM_m, \bM_n)^*$.
Here $R_x:\bM_n\to\bM_m$ is identified with an element of $\CB(\bM_m,\bM_n)^*$ as in \Cref{app:pairings}.
In the rest of the proof $\ip{R}{\Phi}$ is the transposed pairing \eqref{eq:dual-pairing}, which gives the same polar $\cF^\circ$ and the same \eqref{eq:polar} by \Cref{app:pairings}.

We work with the Choi quantity, because \Cref{lem:choi} gives only an inequality for an individual $T$.
\begin{align}
  \
  \norm{\sum_x(T_x\otimes \Phi_x)(\chi_m)}
  = \sup_{\norm{\xi}\leq 1} \norm{\sum_x(T_x\otimes \Phi_x)(\chi_m)\xi}
  = \sup_{\norm{\xi},\norm{\eta} \leq 1} \abs{\sum_x \bra{\eta}(T_x\otimes \Phi_x)(\chi_m)\ket{\xi}}
\end{align}
where the last equation is due to Cauchy-Schwarz.

For $a,b\in\bM_n$ with $\ket{\xi} = (a\otimes \1)\ket{\chi_n}$ and $\ket{\eta} = (b \otimes \1)\ket{\chi_n}$, \eqref{eq:vector-pairing} applied to $T_x$ and $\Phi_x$ gives
\begin{equation}
  \bra{\eta}(T_x\otimes \Phi_x)(\chi_m)\ket{\xi}=\Tr[\vartheta\circ R_x^{a,b}\circ\vartheta\circ\Phi_x]=\ip{R_x^{a,b}}{\Phi_x},
\end{equation}
where $R_x^{a,b}(y)=S_x(a\,y\,b^*)$ with $S_x:=(T_x)_*$, so that $(S_x)^*=T_x$.
This concludes
\begin{equation}\label{eq:norm-as-pairing}
  \ \norm{\sum_x(T_x\otimes \Phi_x)(\chi_m)} = \sup_{\norm{a}_2, \norm{b}_2\leq 1} \sum_x \Re \ip{R_x^{a,b}}{\Phi_x}.
\end{equation}

By \eqref{eq:HB-witness}, membership $R \in \cF^\circ$ is equivalent to the condition that $\Re\sum_x \Tr[R_x \theta \Psi_x] \leq \norm{\theta}_{\cb} \sup_x \norm{\Psi_x}_{\cb}$ for all $\theta$ and $(\Psi_x)$.
We transform this polar condition, through a reversible chain, into the boundedness $\norm{T}_{(d)} \leq 1$ of a witness $T$; combined with \eqref{eq:norm-as-pairing} this identifies $\cF^\circ$ with $\{\norm{T}_{(d)} \leq 1\}$ and yields \eqref{eq:R-T-equiv}.

\textbf{Step 1 -- Rewrite $\norm{\theta}_{\cb}$.}

Fix a normal contraction $\theta:L_\infty(\Omega;\bM_d)\to\bM_n$, given by a measurable family $(\theta_\omega)_{\omega\in\Omega}$ of maps $\theta_\omega:\bM_d\to\bM_n$ through $\theta(f)=\int_\Omega\theta_\omega(f(\omega))\,d\mu(\omega)$, and write $z_\omega\in\bM_n\otimes\bM_d$ for the Choi matrix of $\theta_\omega$, so that $z=(z_\omega)_\omega$ is an element of $\bM_n(L_1(\Omega;S_1^d))$.

Operator-space duality $\CB^\sigma(L_\infty(\Omega;\bM_d), \bM_n) = \bM_n(L_1(\Omega;S_1^d))$ for normal maps (\Cref{lem:cb-duality}) together with Smith's lemma expresses $\norm{\theta}_{\cb}$ as (\Cref{lem:theta-norm}):
\begin{equation}\label{eq:theta-sandwich}
  \norm{\theta}_{\cb}
  = \sup_{\norm{a}_2, \norm{b}_2 \leq 1} \int_\Omega \norm{(a \otimes \1) z_\omega (b \otimes \1)}_{S_1^{nd}} \, d\mu(\omega),
\end{equation}
where $(z_\omega)_\omega$ are the Choi matrices.

\textbf{Step 2 -- Reducing the polar condition.}

Fix $\theta$. By \eqref{eq:HB-witness} and homogeneity in $\Psi$, the supremum of $\Re\sum_x \Tr[R_x\theta\Psi_x]$ over $\Psi$ in the unit ball is at most $\norm{\theta}_{\cb}$.
Since $\sup_x\norm{\Psi_x}_\cb = \sup_x\esssup_\omega\norm{\Psi_{x,\omega}}_\cb$ and the maps $\Psi_{x,\omega}$ vary independently, this supremum splits:
\begin{align}
  &\sup_{\sup_x \norm{\Psi_x}_{\cb} \leq 1} \Re\sum_x \Tr [R_x \theta \Psi_x] \nonumber\\
  &\q= \sup_{\esssup_\omega\norm{\Psi_{x,\omega}}_{\cb} \leq 1} \Re\sum_x \int_\Omega \Tr [R_x \theta_\omega \Psi_{x,\omega}]\, d\mu(\omega) \nonumber\\
  &\q= \sum_x \int_\Omega \norm{(R_x\otimes\id_{\bM_d})(z_\omega)}_{S_1^d(\bM_m)}\, d\mu(\omega).
\end{align}

The last equality is the trace duality between the completely bounded norm on $\CB(\bM_m,\bM_d)$ and the  norm of Pisier's vector-valued trace class $S_1^d(\bM_m)$ of the Choi matrix \cite[(7.1.16) and (7.1.20)]{Effros2000} and \cite[Thm.~4.1 and p.~142]{Pisier_2003}.
It differs from the trace norm on $\bM_{md}$, since $\1_d\otimes\1_m$ has norm $d$ in $S_1^d(\bM_m)$ and trace norm $md$.
The suprema of the real parts equal those of the absolute values, because the unit balls are invariant under multiplication by phases.
For each $\omega$ the supremum of $\Re\Tr[R_x\theta_\omega\Psi_{x,\omega}]$ over $\norm{\Psi_{x,\omega}}_{\cb}\le1$ equals $\norm{J(R_x\theta_\omega)}_{S_1^d(\bM_m)}$, where $J(R_x\theta_\omega)=(R_x\otimes\id_{\bM_d})(z_\omega)$, integrated over $\omega$ by the $L_\infty(\Omega;\bM_d)$--$L_1(\Omega;S_1^d)$ duality.
The same equality follows without splitting the supremum pointwise in $\omega$ from $(L_1(\Omega;S_1^d)\otimes^\wedge\bM_m)^*=\CB(\bM_m,L_\infty(\Omega;\bM_d))$ \cite[(7.1.15) and Thm.~7.2.4]{Effros2000}.

Substituting this into \eqref{eq:HB-witness} and using \Cref{lem:theta-norm} for $\norm{\theta}_{\cb}$, the polar bound becomes 
\begin{equation}\label{ineq:psi-cb}
  \sum_x \int_\Omega \norm{(R_x \otimes \id_{\bM_d})(z_\omega)}_{\,S_1^d(\bM_m)}\, d\mu(\omega)
  \leq \sup_{\norm{a}_2, \norm{b}_2 \leq 1} \int_\Omega \norm{(a\otimes \1)z_\omega(b\otimes \1)}_{S_1^{nd}}\, d\mu(\omega)
\end{equation}
for the Choi matrices $z=(z_\omega)_\omega$ of $\theta$.
Every step is reversible, so \eqref{eq:HB-witness} $\Leftrightarrow$ \eqref{ineq:psi-cb}.

\textbf{Step 3 -- Pisier's non-commutative Pietsch factorization.}

Inequality \eqref{ineq:psi-cb} is a $1$-summing bound at the fixed level $d$. We recall Pisier's completely $p$-summing maps and adapt his factorization theorem to this level.
The factorization we use is Pisier's operator-space version of Pietsch's factorization theorem \cite[Theorem~5.1]{Pisier1998}, recalled in \Cref{thm:pisier-5.1}.

\begin{definition}
Let $E, F$ be operator spaces, in particular $E \subseteq \cB(H)$.
Then $u:E \to F$ is \emph{completely $p$-summing ($1\leq p < \infty$)} if $\id_{S_p} \otimes u: S_p \minten E \to S_p(F)$ is bounded.
Its $p$-summing norm is defined as
\begin{equation}
  \pi_p^\circ(u) := \norm{\id_{S_p} \otimes u}_{S_p\minten E \to S_p(F)}.
\end{equation}
\end{definition}

Define $u_0:\bM_n \to \ell_1(\mathsf{X}; \bM_m)$, $u:\bM_n \otimes \bM_d \to \ell_1(\mathsf{X}; \bM_m)\otimes\bM_d$ as
\begin{equation}
  u_0(y) = (R_x(y))_x, \q\q u = u_0 \otimes \id_{\bM_d}.
\end{equation}
and write $\norm{u(z)}:=\sum_x\norm{(R_x\otimes\id_{\bM_d})(z)}_{S_1^d(\bM_m)}$, the left-hand side of \eqref{ineq:psi-cb} for a single $z$.

For $p=1$ \cite[Theorem 5.3]{Pisier1998} expresses the norm $\norm{\cdot}_{S_1\minten E}$ as the following:
for $E\subseteq\cB(H)$ and $z\in S_1\minten E$,
\begin{equation}\label{eq:sandwich-p1}
  \norm{z}_{S_1\minten E} = \sup\{\, \norm{a z b}_{S_1(\ell_2\otimes H)} \mid a,b\ge0,\ \norm{a}_{S_2},\norm{b}_{S_2}\le1 \,\}.
\end{equation}
Restricted to $z\in\bM_n\otimes\bM_d=S_1^d\otimes_{\min}\bM_n$, this is the right-hand side of \eqref{ineq:psi-cb} for a single $z$.
Since \eqref{ineq:psi-cb} concerns only the level $d$, it does not make $u_0$ completely $1$-summing, but it gives the hypothesis of the proof of \cite[Theorem~5.1]{Pisier1998} at this level.
Indeed, taking $\Omega$ finite with counting measure, \eqref{ineq:psi-cb} states that
\begin{equation}
  \sum_k\norm{u(z_k)}\le\sup_{a,b}\sum_k\norm{(a\otimes\1)z_k(b\otimes\1)}_{S_1^{nd}}
\end{equation}
for every finite family $(z_k)_k$ in $\bM_n\otimes\bM_d$, where the supremum runs over $a,b\ge0$ in the unit ball of $S_2^n$, since $\norm{(a\otimes\1)z(b\otimes\1)}_{S_1^{nd}}=\norm{(\abs{a}\otimes\1)z(\abs{b^*}\otimes\1)}_{S_1^{nd}}$.
The proof of \cite[Theorem~5.1]{Pisier1998} starts from this inequality for finite families, applies \cite[Lemma~5.2]{Pisier1998} to the convex cone of the functions $(a,b)\mapsto\sum_k\norm{(a\otimes\1)z_k(b\otimes\1)}_{S_1^{nd}}-\sum_k\norm{u(z_k)}$, and replaces each finitely supported probability measure on the pairs $(a,b)$ by a single pair using \cite[Lemma~1.14]{Pisier1998}.
It therefore applies verbatim at the level $d$.

Hence there exists an ultrafilter $\cU$ over an index set $I$ and families $(a_\alpha)_{\alpha}, (b_\alpha)_{\alpha}$ in the unit ball of $S_2^n$ so that for all $z \in \bM_n \otimes \bM_d$
\begin{equation}\label{ineq:p-summing}
  \norm{u(z)} \leq \lim_{\cU} \norm{(a_\alpha \otimes \1_{\bM_d})z(b_\alpha \otimes \1_{\bM_d})}_{S_1^{nd}}.
\end{equation}

Since the unit ball in $S_2^n$ is compact, the ultralimits $a := \lim_\cU a_\alpha, b := \lim_\cU b_\alpha$ exist inside the unit ball, and by continuity, we can move the ultralimit inside and obtain \eqref{ineq:p-summing} with the single pair $(a,b)$ in place of the ultralimit.

With a single pair $(a,b)$ in hand, it remains to absorb it into $R$ and recognize the adjoint of the resulting map as an admissible witness.
We may assume that $a$ and $b$ are invertible.
By polar decomposition we may take $a,b\ge0$, since replacing $(a,b)$ by $(|a|,|b^*|)$ changes neither the $S_2$-norms nor the trace norm in \eqref{ineq:p-summing}.
If $a\ge0$ is not invertible, we use the approximate identity: for $\varepsilon\in(0,1)$, $a_\varepsilon:=(1-\varepsilon)a+\varepsilon\1_n/\sqrt n$ is invertible with $\norm{a_\varepsilon}_2\le1$, and $a=a_\varepsilon c_\varepsilon=c_\varepsilon a_\varepsilon$ with $c_\varepsilon:=a\,a_\varepsilon^{-1}$ and $\norm{c_\varepsilon}\le(1-\varepsilon)^{-1}$.
Hence \eqref{ineq:p-summing} holds for $(a_\varepsilon,b_\varepsilon)$ up to the factor $(1-\varepsilon)^{-2}$, which carries through the remaining bounds and disappears in the suprema of \eqref{eq:R-T-equiv} as $\varepsilon\to0$.
For invertible $a,b$ we put $S(w):=u_0(a^{-1}wb^{-1})$, so that $R(y)=S(ayb)$, and \eqref{ineq:p-summing} gives $\norm{(S\otimes\id_{\bM_d})(w)}\le\norm{w}_{S_1^{nd}}$ for all $w\in\bM_n\otimes\bM_d$.

\textbf{Step 4 -- Duality flip.}
For any linear $S : V \to W$ one has $\norm{S}_{(d)} = \norm{\id \otimes S^{*} : S_1^d(W^*) \to S_1^d(V^*)}$.
Applied with $V = S_1^n$ and $W = \ell_1(\mathsf{X}; \bM_m)$, the adjoint $T := S^{*} : \ell_\infty(\mathsf{X}; S_1^m) \to \bM_n$ satisfies $\norm{T}_{(d)} = \norm{S}_{(d)} \leq 1$ and is the witness of \eqref{eq:T}, with $R = R^{a,b}$ for this $T$ and pair $a, b$.
Applied to $T$, whose adjoint is $S$, the identity above and $S_1^{nd}=S_1^d(S_1^n)$ turn the bound on $S\otimes\id_{\bM_d}$ from Step~3 into $\norm{T}_{(d)}\le1$.
The equality $\norm{T}_{(d)}=\norm{S}_{(d)}$ uses $\norm{u^*}_{(d)}=\norm{u}_{(d)}$ for linear maps $u:V\to W$ between operator spaces, which follows from Smith's lemma applied to $\bM_d(W^*)=\CB(W,\bM_d)$, see \cite[Prop.~3.2.2]{Effros2000}.

In this step all $S_1$-spaces carry the structure of \eqref{eq:scalar-pairing}, for which $S_1^{nd}=S_1^d(S_1^n)$, and the first identity holds for the adjoint $S^{\ddagger}$ with respect to \eqref{eq:scalar-pairing}.
The trace-pairing adjoint is $S^{*}=\vartheta\circ S^{\ddagger}\circ\vartheta$, and $\id_{\bM_d}\otimes(\vartheta\circ N\circ\vartheta)$ is $\id_{\bM_d}\otimes N$ conjugated by full transposes, which are isometric at every matrix level, so $\norm{S^{*}}_{(d)}=\norm{S^{\ddagger}}_{(d)}$.

Read backwards, Steps~2--4 show that $R^{a,b}\in\cF^\circ$ whenever $\norm{T}_{(d)}\le1$ and $\norm{a}_2,\norm{b}_2\le1$. Read forwards, they show that every $R\in\cF^\circ$ has this form up to the factor $(1-\varepsilon)^{-2}$ of the approximate identity, which disappears in the suprema.
Taking the supremum of $\Re \ip{R}{\Phi}$ over both descriptions and using \eqref{eq:norm-as-pairing} and the equality of suprema in \Cref{lem:choi},
\begin{equation}
  \sup_{R \in \cF^\circ} \Re \ip{R}{\Phi}
  = \sup_{\norm{T}_{(d)} \leq 1} \; \sup_{\norm{a}_2, \norm{b}_2 \leq 1} \Re \ip{R^{a,b}}{\Phi}
  = \sup_{\norm{T}_{(d)} \leq 1} \norm{T \otimes \Phi},
\end{equation}
which is \eqref{eq:R-T-equiv}.
Together with \eqref{eq:polar} this yields $\gamma_{d, \cb}(\Phi) = \sup_{\norm{T}_{(d)} \leq 1} \norm{T \otimes \Phi}$, completing the proof.


\subsection{Lemmas for the proof of \Cref{thm:cb-criterion}}

\begin{lemma}\label{lem:F-convex}
Let $\cF \subset \CB(\ell_1(\mathsf{X}, \bM_m), \bM_n)$ be the set
\begin{equation}
  \cF := \{\Phi = \theta \circ \Psi \mid \norm{\theta}_{\cb}, \norm{\Psi}_{\cb} \leq 1 \},
\end{equation}
where $\Psi:\ell_1(\mathsf{X}, \bM_m) \to L_\infty(\Omega;\bM_d)$ and $\theta: L_\infty(\Omega;\bM_d) \to \bM_n$ is normal, for some $\sigma$-finite $(\Omega,\Sigma,\mu)$.

$\cF$ is convex.
\end{lemma}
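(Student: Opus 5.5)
The plan is to prove convexity by stacking two factorizations side by side on a disjoint union of measure spaces. Let $\Phi^{(1)}=\theta_1\circ\Psi_1$ and $\Phi^{(2)}=\theta_2\circ\Psi_2$ be elements of $\cF$. Here $\theta_i:L_\infty(\Omega_i;\bM_d)\to\bM_n$ is normal, $\Psi_i:\ell_1(\mathsf X,\bM_m)\to L_\infty(\Omega_i;\bM_d)$, and all four maps are complete contractions. Fix $\lambda\in[0,1]$. Put $\Omega:=\Omega_1\sqcup\Omega_2$ with the measure restricting to $\mu_i$ on $\Omega_i$. Then $\Omega$ is again $\sigma$-finite, and
\[
L_\infty(\Omega;\bM_d)=L_\infty(\Omega_1;\bM_d)\oplus_\infty L_\infty(\Omega_2;\bM_d)
\]
completely isometrically. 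This is a $C^*$-direct sum, so the matrix norms are the maximum of the two components.

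The parent and the post-processing are split between the two summands as follows:
\[
\theta(f_1\oplus f_2):=\lambda\,\theta_1(f_1)+(1-\lambda)\,\theta_2(f_2),\qquad \Psi(z):=\Psi_1(z)\oplus\Psi_2(z).
\]
By construction $\theta\circ\Psi=\lambda\Phi^{(1)}+(1-\lambda)\Phi^{(2)}$.

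The map $\Psi$ is a complete contraction. At every matrix level $k$,
\[
\norm{(\id_k\otimes\Psi)(z)}=\max_i\norm{(\id_k\otimes\Psi_i)(z)}\le\norm{z},
\]
because the norm of the $\ell_\infty$-direct sum is the maximum of the components.

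The map $\theta$ is normal, since each summand is normal and the coordinate projections are weak$^*$ continuous. For its c.b. norm, write $\theta=\lambda\,\theta_1\circ\pi_1+(1-\lambda)\,\theta_2\circ\pi_2$, where the $\pi_i$ are the coordinate projections. Each $\pi_i$ is a complete contraction, so the triangle inequality gives $\norm{\theta}_{\cb}\le\lambda+(1-\lambda)=1$.

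There is a quicker alternative that avoids splitting $\theta$ across the summands. Keep $\theta=\theta_1\oplus\theta_2$ unweighted and put the weights into $\Psi$ instead, taking $\Psi(z)=\lambda\Psi_1(z)\oplus(1-\lambda)\Psi_2(z)$. This also gives the right composition, but bounding $\norm{\theta}_{\cb}$ for an unweighted sum of two contractions on a direct sum could cost a factor $2$. The weighted form of $\theta$ above avoids that issue, so I will use it.

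I expect no serious obstacle. The only point needing care is that the disjoint-union space is again of the admissible type. Its $\sigma$-finiteness follows from that of $\Omega_1$ and $\Omega_2$. The identification of $L_\infty(\Omega;\bM_d)$ with the $\ell_\infty$-sum of the two $L_\infty(\Omega_i;\bM_d)$ is straightforward. Normality of $\theta$ can be checked through its predual: the predual of $\theta$ maps $S_1^n$ into $L_1(\Omega_1;S_1^d)\oplus_1 L_1(\Omega_2;S_1^d)=L_1(\Omega;S_1^d)$.
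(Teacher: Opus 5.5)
Your proposal is correct and is essentially the paper's own proof. The paper also works on the disjoint union $\Omega_1\sqcup\Omega_2$ with $\Psi_x=\Psi^{(1)}_x\oplus\Psi^{(2)}_x$ and $\theta(y_1\oplus y_2)=\lambda\theta^{(1)}(y_1)+(1-\lambda)\theta^{(2)}(y_2)$, bounding $\norm{\Psi_x}_{\cb}$ by the maximum of the two components and $\norm{\theta}_{\cb}$ by $\lambda\norm{\theta^{(1)}}_{\cb}+(1-\lambda)\norm{\theta^{(2)}}_{\cb}\le1$; your normality and $\sigma$-finiteness checks only spell out what the paper asserts in one line.
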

\begin{proof}
  Let $\Phi^{(1)}, \Phi^{(2)} \in \cF$ with $\Phi_x^{(1)} = \theta^{(1)} \circ \Psi_x^{(1)}$ and $\Phi_x^{(2)} = \theta^{(2)} \circ \Psi_x^{(2)}$, where the factorizations are through $\ell_\infty^{k_1}(\bM_d)$ and $\ell_\infty^{k_2}(\bM_d)$, respectively.
  Fix $\lambda \in [0,1]$.
  Define linear maps $\Psi_x:\bM_m \to \ell_\infty^{k_1+k_2}(\bM_d)$ and $\theta:\ell_\infty^{k_1+k_2}(\bM_d) \to \bM_n$ as
  \begin{equation}
    \Psi_x(z) = \Psi_x^{(1)}(z) \oplus \Psi_x^{(2)}(z), \q\q
    \theta(y_1 \oplus y_2) = \lambda \theta^{(1)}(y_1) + (1-\lambda) \theta^{(2)}(y_2).
  \end{equation}
  For general registers $L_\infty(\Omega_1;\bM_d)$ and $L_\infty(\Omega_2;\bM_d)$, read $\ell_\infty^{k_1+k_2}(\bM_d)$ below as $L_\infty(\Omega_1\sqcup\Omega_2;\bM_d)=L_\infty(\Omega_1;\bM_d)\oplus_\infty L_\infty(\Omega_2;\bM_d)$.
  The disjoint union is again $\sigma$-finite, $\theta$ below is normal when $\theta^{(1)}$ and $\theta^{(2)}$ are.
  Then,
  \begin{equation}
  \lambda\Phi^{(1)} + (1-\lambda)\Phi^{(2)} = \lambda \theta^{(1)} \circ \Psi_x^{(1)} + (1-\lambda) \theta^{(2)} \circ \Psi_x^{(2)} = \theta \circ \Psi_x.
  \end{equation}
  Since
  \begin{equation}
  \norm{\Psi_x}_{\cb} = \norm{\Psi_x^{(1)} \oplus \Psi_x^{(2)}}_{\cb} \leq \max \{\norm{\Psi_x^{(1)}}_{\cb}, \norm{\Psi_x^{(2)}}_{\cb} \} \leq 1,
  \end{equation}
  and
  \begin{align}
    \norm{\theta}_{\cb} &= \sup_j \norm{\id_j \otimes \theta}\\
    &= \sup_j \sup_{\norm{Y_1 \oplus Y_2}_{\bM_j(\ell_\infty^{k_1+k_2}(\bM_d))}\leq 1} \norm{\lambda(\id_j \otimes \theta^{(1)})(Y_1) + (1-\lambda)(\id_j \otimes \theta^{(2)})(Y_2)}_{\bM_j(\bM_n)}\\
    &\leq \sup_j \sup_{\norm{Y_1\oplus Y_2}\leq 1} \lambda \norm{(\id_j \otimes \theta^{(1)})(Y_1)}_{\bM_j(\bM_n)} + (1-\lambda) \norm{(\id_j \otimes \theta^{(2)})(Y_2)}_{\bM_j(\bM_n)}\\
    &\leq \lambda \norm{\theta^{(1)}}_{\cb} + (1-\lambda) \norm{\theta^{(2)}}_{\cb}\\
    &\leq 1,
  \end{align}
  we have shown that $\lambda \Phi^{(1)} + (1-\lambda) \Phi^{(2)} \in \cF$.

\end{proof}

\begin{lemma}\label{lem:choi}
  Let $T:\ell_\infty(\mathsf{X}, S_1^m) \to \bM_n$ and $\Phi:\ell_1(\mathsf{X}, \bM_m) \to \bM_n$ be linear maps, with the operator space structures above, where $S_1^m$ is paired with $\bM_m$ by $\ip{f}{y}=\Tr[f^Ty]$.
  Then
    \begin{equation}\label{eq:choi-lower}
    \norm{\sum_x (T_x \otimes \Phi_x)(\chi_m)}_{\bM_n \minten \bM_n}\le\norm{T \otimes \Phi}
  \end{equation}
  and
  \begin{equation}\label{eq:choi-sup}
    \sup_{\norm{T}_{(d)}\le1}\norm{T \otimes \Phi}=\sup_{\norm{T}_{(d)}\le1}\norm{\sum_x (T_x \otimes \Phi_x)(\chi_m)}_{\bM_n \minten \bM_n},
  \end{equation}
  where $\chi_m = \sum_{s,t=1}^m \ketbra{s}{t}\otimes\ketbra{s}{t} \in \bM_m \otimes \bM_m$ is the Choi matrix of $\id_{\bM_m}$.
\end{lemma}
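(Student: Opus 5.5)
The plan is to realize the Choi quantity as the value of the map $T\otimes\Phi$ on one admissible test element. Then \eqref{eq:choi-lower} is a norm estimate, and \eqref{eq:choi-sup} follows by combining the inequality with the duality already used in the proof of \Cref{thm:cb-criterion}. Write $\chi^{(x)}:=\delta_x\otimes\chi_m$ for the element of $\ell_\infty(\mathsf X,S_1^m)\otimes\ell_1(\mathsf X,\bM_m)$ that sits in the $(x,x)$ diagonal slot, and let $\chi:=\sum_x\chi^{(x)}$.

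By definition of \eqref{eq:T}, $(T\otimes\Phi)(\chi)=\sum_x(T_x\otimes\Phi_x)(\chi_m)$. Hence \eqref{eq:choi-lower} follows once I show that $\norm{\chi}_{\ell_\infty(\mathsf X,S_1^m)\otimes_{\min}\ell_1(\mathsf X,\bM_m)}\le1$.

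\begin{itemize}
\item Under the scalar pairing \eqref{eq:scalar-pairing}, $S_1^m\otimes_{\min}\bM_m$ is identified completely isometrically with $\CB(\bM_m,\bM_m)$, and $\chi_m$ corresponds to $\id_{\bM_m}$. Its norm is therefore $\norm{\id}_{\cb}=1$.
\item The same identification sends $\chi$ to the map $\ell_1(\mathsf X,\bM_m)\to\ell_\infty(\mathsf X,\bM_m)$, $(y_x)_x\mapsto(y_x)_x$, which is the formal identity between the two direct sums.
\item That map is a complete contraction, because the $\ell_1$-direct sum carries the largest matrix norms compatible with complete contractivity of the coordinate inclusions, while the $\ell_\infty$-sum carries the smallest. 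Equivalently, this is the contractive inclusion $\ell_\infty^k(S_1^n\otimes_{\min}\bM_n)\embeds\ell_\infty(S_1^n)\otimes_{\min}\ell_1(\bM_n)$ already quoted in \Cref{sec:m-MUB}.
\end{itemize}

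The main point to get right here is that the identification of $\ell_\infty(\mathsf X,S_1^m)$ as the dual of $\ell_1(\mathsf X,\bM_m)$ is made with the scalar pairing, not the trace pairing. The remark in \Cref{app:pairings} shows that with the trace pairing $\chi_m$ would correspond to the transpose and carry norm $m$, so the bound would fail.

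For \eqref{eq:choi-sup}, the inequality $\ge$ is immediate from \eqref{eq:choi-lower} after taking suprema. For $\le$, I would not bound $\norm{T\otimes\Phi}$ for an individual $T$, since that inequality can fail for a single witness. Instead I would route through the polar set $\cF^\circ$ from the proof of \Cref{thm:cb-criterion}:
\begin{itemize}
\item First, $\sup_{\norm{T}_{(d)}\le1}\norm{T\otimes\Phi}\le\gamma_{d,\cb}(\Phi)$. Given a factorization $\Phi=\theta\circ\Psi$, write $T\otimes\Phi=(\id\otimes\theta)\circ(T\otimes\Psi)$. Bound $\id\otimes\theta$ by $\norm{\theta}_{\cb}$ on the minimal tensor product, and bound $T\otimes\Psi$ by $\norm{T}_{(d)}\norm{\Psi}_{\cb}$. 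The latter holds because $\Psi$ takes values in $L_\infty(\Omega;\bM_d)$, so only the level-$d$ amplification of $T$ enters.
\item Second, $\gamma_{d,\cb}(\Phi)=\sup_{R\in\cF^\circ}\Re\ip{R}{\Phi}$, which is \eqref{eq:polar}.
\item Third, Steps~1--4 of that proof show that every $R\in\cF^\circ$ equals some $R^{a,b}$ with $\norm{T}_{(d)}\le1$ and $\norm{a}_2,\norm{b}_2\le1$, up to the vanishing factor $(1-\varepsilon)^{-2}$. Those steps only use the Choi form \eqref{eq:norm-as-pairing}, never the map norm of $T\otimes\Phi$.
\item Finally, \eqref{eq:norm-as-pairing} bounds $\Re\ip{R^{a,b}}{\Phi}$ by the Choi quantity.
\end{itemize}
Chaining these four facts gives the reverse inequality.

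The step I expect to be most delicate is the first bullet, namely the level-$d$ estimate for $T\otimes\Psi$. I would prove it by factoring through $\ell_\infty(\mathsf X,S_1^m)\otimes_{\min}L_\infty(\Omega;\bM_d)=L_\infty(\Omega;\bM_d(\ell_\infty(\mathsf X,S_1^m)))$, on which $T$ acts pointwise at matrix level $d$. Doing this carefully also confirms there is no circularity: the proof of \Cref{thm:cb-criterion} invokes \eqref{eq:choi-sup} only at its very end, after the polar description has been established independently.
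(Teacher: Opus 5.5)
Your argument is correct, but for the inequality ``$\le$'' in \eqref{eq:choi-sup} it takes a genuinely different and much heavier route than the paper.

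\textbf{The paper's route.} It never leaves the lemma. With $Y=\ell_1(\mathsf X,\bM_m)$ one has $Y^*\minten Y=\CB(Y,Y)$. So every $z$ in the unit ball equals $(\id\otimes\Lambda)(\omega)=(\Lambda^*\otimes\id)(\omega)$ for a complete contraction $\Lambda:Y\to Y$, where $\omega$ is your $\chi$. Hence $(T\otimes\Phi)(z)$ is exactly the Choi quantity of the modified witness $T\circ\Lambda^*$, and $\norm{T\circ\Lambda^*}_{(d)}\le\norm{T}_{(d)}$. The obstruction you identify, that there is no bound for an individual $T$, is removed by changing the witness rather than by passing through $\gamma_{d,\cb}$.

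\textbf{Your route.} Your sandwich
$\sup_T\norm{T\otimes\Phi}\le\gamma_{d,\cb}(\Phi)=\sup_{R\in\cF^\circ}\Re\ip{R}{\Phi}\le\sup_T\norm{\sum_x(T_x\otimes\Phi_x)(\chi_m)}$
is valid. As you say, it is non-circular, since \eqref{eq:polar}, Steps~1--4 and \eqref{eq:norm-as-pairing} do not use the lemma. Its first link is a nice direct proof of $\norm{T\otimes\Phi}\le\norm{T}_{(d)}\,\gamma_{d,\cb}(\Phi)$, i.e.\ of \eqref{eq:gamma-ratio}, which the paper obtains only through the polar.

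\textbf{Trade-off.} The cost is that the lemma becomes a corollary of Pisier's level-$d$ Pietsch factorization. In fact your chain already proves \Cref{thm:cb-criterion} for both quantities at once. The paper's proof, by contrast, is elementary and keeps the lemma logically prior to the theorem. It also works verbatim for any witness normalization that is stable under composition with complete contractions, such as $\norm{\cdot}_{\cb}$.

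\textbf{A small correction to your first part.} Under $\ell_\infty(\mathsf X,S_1^m)\minten\ell_1(\mathsf X,\bM_m)=\CB(\ell_1(\mathsf X,\bM_m),\ell_1(\mathsf X,\bM_m))$, the element $\chi$ is the identity of $\ell_1(\mathsf X,\bM_m)$. This is the paper's $\norm{\omega}=\norm{\id_Y}_{\cb}=1$. It is not the formal identity into $\ell_\infty(\mathsf X,\bM_m)$, which represents an element of a different tensor product. The bound $\norm{\chi}\le1$ is unaffected, and your alternative justification via the diagonal inclusion of \Cref{sec:m-MUB} is sound.
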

\begin{proof}
  Set $Y=\ell_1(\mathsf{X},\bM_m)$, so that $Y^{*}=\ell_\infty(\mathsf{X},S_1^m)$ and the two tensor factors are in  the pairing $\ip{f}{y}=\Tr[f^Ty]$ of the statement.
  The algebraic tensor $Y^{*}\otimes Y$ contains the canonical maximally entangled element
  \begin{equation}
    \omega=\sum_x e_x\otimes e_x\otimes\chi_m = J(\id_Y),
  \end{equation}
  the Choi matrix of $\id_Y$.
  In the minimal tensor norm $\norm{\omega}=\norm{\id_Y}_{\cb}=1$, and $\omega$ is a norming element of the unit ball: by Choi--Jamio\l kowski \cite[Prop.~8.1.2]{Effros2000} every $z$ with $\norm{z}_{\min}\le1$ is $z=(\id\otimes\Lambda)(\omega)$ for a complete contraction $\Lambda:Y\to Y$.
  Because $T\otimes\Phi$ is a product map, evaluation at $\omega$ contracts the $x$-diagonal against $\chi_m$,
  \begin{equation}\label{eq:choi-omega}
    (T\otimes\Phi)(\omega)=\sum_x (T_x\otimes\Phi_x)(\chi_m)\in\bM_n\minten\bM_n,
  \end{equation}
  so the lower bound $\norm{T\otimes\Phi}\ge\norm{(T\otimes\Phi)(\omega)}$ is immediate from $\norm{\omega}=1$.
  For the  equality of the suprema, take $z$ with $\norm{z}_{\min}\le1$ and write $z=(\id\otimes\Lambda)(\omega)$ as above; then $(T\otimes\Phi)(z)$ is $(T\otimes\Phi)(\omega)$ with $\Phi$ post-composed by the complete contraction $\Lambda$.
  
  Since $(\id\otimes\Lambda)(\omega)=(\Lambda^*\otimes\id)(\omega)$ for the adjoint $\Lambda^*:Y^*\to Y^*$, also $(T\otimes\Phi)(z)=((T\circ\Lambda^*)\otimes\Phi)(\omega)$.
  Here
  \begin{equation}
  \norm{T\circ\Lambda^*}_{(d)}\le\norm{T}_{(d)}\norm{\Lambda^*}_{\cb}\le\norm{T}_{(d)},
  \end{equation}
  because $\norm{\Lambda^*}_{\cb}=\norm{\Lambda}_{\cb}\le1$.
  Now let $\norm{T}_{(d)}\le1$ and $\norm{z}_{\min}\le1$.
  By \eqref{eq:choi-omega} applied to $T':=T\circ\Lambda^*$, which satisfies $\norm{T'}_{(d)}\le1$,
  \begin{equation}
    \norm{(T\otimes\Phi)(z)}=\norm{\sum_x(T'_x\otimes\Phi_x)(\chi_m)}\le\sup_{\norm{T'}_{(d)}\le1}\norm{\sum_x(T'_x\otimes\Phi_x)(\chi_m)}.
  \end{equation}
  Taking the supremum over $z$ gives $\norm{T\otimes\Phi}\le$ the right-hand side of \eqref{eq:choi-sup}, and then the supremum over $T$ gives ``$\le$'' in \eqref{eq:choi-sup}.
  The reverse inequality ``$\ge$'' follows by taking the supremum over $T$ in \eqref{eq:choi-lower}.

\end{proof}

\begin{lemma}\label{lem:theta-norm}
  For a normal contraction $\theta:L_\infty(\Omega;\bM_d)\to\bM_n$ with Choi matrices $z=(z_\omega)_\omega\in\bM_n(L_1(\Omega;S_1^d))$,
  \begin{equation}
    \norm{\theta}_{\cb} = \norm{z}_{\bM_n(L_1(\Omega;S_1^d))} = \sup_{\norm{a}_2,\norm{b}_2 \leq 1, a,b \in \bM_n} \int_\Omega \norm{(a\otimes \1) z_\omega (b\otimes 1)}_{S_1^{nd}}\,d\mu(\omega)
  \end{equation}
\end{lemma}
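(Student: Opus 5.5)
The first equality, $\norm{\theta}_{\cb}=\norm{z}_{\bM_n(L_1(\Omega;S_1^d))}$, comes straight from \Cref{lem:cb-duality}. The normal completely bounded maps $L_\infty(\Omega;\bM_d)\to\bM_n$ are identified completely isometrically with $\bM_n(L_1(\Omega;S_1^d))$. Under the scalar pairing \eqref{eq:scalar-pairing}, this identification sends $\theta$ to the field of its Choi matrices $z=(z_\omega)_\omega$. I would check this on the dense set of simple functions $f=\sum_l \1_{E_l}\otimes e_l$. Integrating $\theta_\omega$ against $f$ gives exactly the pairing of $z$ with $f$, and normality extends the identity to all of $L_\infty$. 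This is why the scalar pairing, and not the trace pairing, must be used here, as stressed in \Cref{app:pairings}.

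For the second equality, I would write the $\bM_n$-level norm of the vector-valued trace class using \Cref{thm:pisier-1.5} and its dual form. By \eqref{eq:L1-S1}, $L_1(\Omega;S_1^d)\cong S_1^d(L_1(\Omega))$. Since $L_1(\Omega;S_1^d)$ carries the predual structure, an element of $\bM_n(L_1(\Omega;S_1^d))=\CB^\sigma(L_\infty(\Omega;\bM_d),\bM_n)$ has norm equal to its norm as a map. By Smith's lemma this is the $n$-th amplification norm $\norm{\id_{\bM_n}\otimes\theta}$.

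I would then evaluate that norm by testing against unit vectors $\xi=(a\otimes\1)\ket{\chi_n}$ and $\eta=(b^*\otimes\1)\ket{\chi_n}$ with $\norm a_2,\norm b_2\le1$, as in \eqref{eq:vector-pairing}. This gives
\[
\norm{\theta}_{\cb}=\sup_{a,b}\,\sup_{\norm{F}\le1}\abs{\bra\eta(\id\otimes\theta)(F)\ket\xi},
\]
where $F$ ranges over the unit ball of $\bM_n(L_\infty(\Omega;\bM_d))=L_\infty(\Omega;\bM_{nd})$. Rewriting the inner quantity as $\int_\Omega\Tr[(a\otimes\1)z_\omega(b\otimes\1)\,G(\omega)]\,d\mu$, with $G$ a transpose-conjugate of $F$ of the same norm, reduces the inner supremum to $L_1$--$L_\infty$ duality for $\bM_{nd}$-valued functions. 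That duality yields $\int_\Omega\norm{(a\otimes\1)z_\omega(b\otimes\1)}_{S_1^{nd}}\,d\mu$. The measurable choice of a pointwise norming $G(\omega)$, namely the phase of the polar decomposition, is routine.

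The step I expect to be the main obstacle is the transpose bookkeeping. I need the correct placement of $a$, $b$ and transposes so that the sandwich acts on the $\bM_n$ leg of the Choi matrix $z_\omega$, and not on a partially transposed version. Partial transposition is not a complete isometry, so an error there would change the norm. I would settle this with the vector identity $\bra{\chi_n}(X\otimes Y)\ket{\chi_n}=\Tr[X^TY]$, and by noting that full transposes and the substitution $a\mapsto\bar a$ preserve $\norm{\cdot}_2$ and $\norm{\cdot}_{S_1^{nd}}$.
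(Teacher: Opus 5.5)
Your proposal is correct and follows essentially the same route as the paper. Both identify $\theta$ with $z$ via \Cref{lem:cb-duality} under the scalar pairing, pass to $\norm{\theta}_{(n)}$ by Smith's lemma, write every unit vector of $\bC^n\otimes\bC^n$ as $(a\otimes\1)\ket{\chi_n}$ with $\norm{a}_2\le 1$, and conclude with the $L_1(\Omega;S_1^{nd})$--$L_\infty(\Omega;\bM_{nd})$ duality.

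The detour through \Cref{thm:pisier-1.5} and \eqref{eq:L1-S1} is not needed. The transpose bookkeeping you flag also comes out harmlessly. With $z_\omega=\sum_{s,t}\theta_\omega(E_{st})\otimes E_{st}$ and your choice of $\xi,\eta$, the inner quantity is $\int_\Omega\Tr[F(\omega)^T(b^T\otimes\1)z_\omega(a^T\otimes\1)]\,d\mu(\omega)$. This involves only full transposes and the substitutions $a\mapsto a^T$, $b\mapsto b^T$, all of which preserve the relevant unit balls.
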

\begin{proof}
  For any operator space $E$, $\CB(E, \bM_n) = \bM_n(E^*)$.
  By the operator space duality \Cref{lem:cb-duality}, the normal map $\theta$ is expressed as $\theta = [\theta_{pq}]_{p,q = 1}^n$ with predual elements $\theta_{pq} = (\theta_{pq,\omega})_{\omega\in\Omega} \in L_1(\Omega;S_1^d)$, defined by
  \begin{equation}
    \Tr [\theta_{pq,\omega}^{T}\,y] = \bra{p}\theta_\omega(y)\ket{q}, \q y \in \bM_d\ (\mu\text{-a.e. }\omega).
  \end{equation}
  Here $L_1(\Omega;S_1^d)$ carries the structure given by the pairing~\eqref{eq:scalar-pairing}, which is the pairing in this definition.
  Then,
  \begin{equation}
    \norm{z}_{\bM_n(L_1(\Omega;S_1^d))}
    = \norm{[\theta_{pq}]_{p,q=1}^n}_{\bM_n(L_1(\Omega;S_1^d))}
    = \norm{\theta}_{\cb}.
  \end{equation}
  By the Smith's lemma, $\norm{\theta}_{\cb} = \norm{\theta}_{(n)}$.
  \begin{equation}
    \norm{\theta}_{(n)}
    = \sup_{\esssup_\omega \norm{Y_\omega} \leq 1} \norm{\int_\Omega (\id_n \otimes \theta_\omega)(Y_\omega)\, d\mu(\omega)}_{\bM_{n^2}}, \q Y=(Y_\omega)_\omega \in L_\infty(\Omega;\bM_{nd}).
  \end{equation}
  For any unit vectors $\ket{\xi}, \ket{\eta} \in \bC^n \otimes \bC^n$, there exist $a,b\in\bM_n$ with $\norm{a}_2 = \norm{\ket{\xi}}$, $\norm{b}_2 = \norm{\ket{\eta}}$ satisfying
  \begin{equation}
    \ket{\xi} = (a \otimes \1_n)\ket{\phi^+}, \q \ket{\eta} = (b \otimes \1_n)\ket{\phi^+},
  \end{equation}
  where $\ket{\phi^+}$ is the unnormalized maximally entangled state.
  Unwinding the Choi--Jamio\l kowski identification of $\theta_\omega$ with its Choi matrix $z_\omega\in\bM_n\otimes\bM_d$ and using the ricochet property gives
  \begin{equation}
    \bra{\xi} \int_\Omega (\id_n \otimes \theta_\omega)(Y_\omega)\, d\mu \ket{\eta}
    = \int_\Omega \Tr[Y_\omega^T(\bar a \otimes \1_d)z_\omega(b^T \otimes \1_d)]\, d\mu(\omega).
  \end{equation}
  The maps $Y\mapsto Y^T$, $a\mapsto\bar a$ and $b\mapsto b^T$ preserve the respective unit balls.
  Since the family $(Y_\omega)_\omega$ ranges over the unit ball of $L_\infty(\Omega;\bM_{nd})$, by the duality between $L_\infty(\Omega;\bM_{nd})$ and $L_1(\Omega;S_1^{nd})$,
  \begin{equation}
    \sup_{\esssup_\omega \norm{Y_\omega}\leq 1} \norm{\int_\Omega (\id_n \otimes \theta_\omega)(Y_\omega)\, d\mu}_{\bM_{n^2}}
    = \sup_{\norm{a}_2, \norm{b}_2\leq 1}
    \int_\Omega \; \norm{(a\otimes \1_d)z_\omega(b \otimes \1_d)}_{S_1^{nd}}\,d\mu(\omega).
  \end{equation}
\end{proof}

\subsection{Proof of \Cref{prop:cp-gamma}}
\label{app:cp-gamma-proof}
\begin{proposition*}[restated]
  \CPGamma
\end{proposition*}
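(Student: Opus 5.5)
The plan is to mirror \Cref{prop:cb-unit-ball}, using the attainment of the infimum in \eqref{eq:gamma-cp} recorded just before the statement, and then to add a separate step that upgrades a contractive c.p.\ factorization to a unital one when $\Phi$ is unital.

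For the first claim, the forward direction is immediate: a factorization by c.p.\ contractions has $\norm{\theta}_{\cb}\sup_x\norm{\Psi_x}_{\cb}\le1$, so $\gamma_{d,\cp}(\Phi)\le1$. For the converse, assume $\gamma_{d,\cp}(\Phi)\le1$. The remark preceding the proposition says that c.p.\ factorizations with $\theta(\1)\le\1$ and $\Psi_x(\1)\le\1$ are stable under operator-space ultraproducts. That remark runs the closure argument of \Cref{thm:closure-instruments}, with sub-unital order inequalities in place of unitality, on a minimizing sequence. Hence the infimum is attained by some c.p.\ factorization $\Phi_x=\theta\circ\Psi_x$ with $\norm{\theta}_{\cb}\sup_x\norm{\Psi_x}_{\cb}=\gamma_{d,\cp}(\Phi)\le1$. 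If $\Phi=0$, take $\theta=0$. Otherwise rescale as in \Cref{prop:cb-unit-ball}, setting $\theta'=\norm{\theta}_{\cb}^{-1}\theta$ and $\Psi'_x=\norm{\theta}_{\cb}\Psi_x$. Both maps stay c.p., and both are now contractions. Complete positivity is preserved by positive scalars, and for c.p.\ maps $\norm{\cdot}_{\cb}=\norm{(\cdot)(\1)}$ by \cite[Prop.~3.6]{Paulsen_2003}.

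For the second claim, the forward direction uses only that u.c.p.\ maps have c.b.\ norm one, so $\gamma_{d,\cp}\le1$. For the converse, assume $\Phi$ is unital and $\gamma_{d,\cp}(\Phi)\le1$. Take the contractive c.p.\ factorization from the first part, so $\theta(\1)\le\1$ and $\Psi_x(\1)\le\1$. Then
\[
\1=\Phi_x(\1)=\theta(\Psi_x(\1))\le\theta(\1)\le\1,
\]
which forces $\theta(\1)=\1$ and $\theta(\1-\Psi_x(\1))=0$ for every $x$. So $\theta$ is already unital, and the defect lies only in the $\Psi_x$. Repair it as in \eqref{eq:ucp-completions} with $K=1$. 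Choose a normal state $\tau_x$ on $\bM_m$ and set
\[
\widetilde\Psi_x(z):=\Psi_x(z)+\tau_x(z)\bigl(\1-\Psi_x(\1)\bigr).
\]
This map is u.c.p., being a c.p.\ map plus a positive-operator-valued state. Moreover $\theta\circ\widetilde\Psi_x=\Phi_x+\tau_x(\cdot)\,\theta(\1-\Psi_x(\1))=\Phi_x$. Thus $(\theta,\widetilde\Psi_x)$ is a unital c.p.\ factorization through $L_\infty(\Omega;\bM_d)$ with $\theta$ normal. In the Schr\"odinger picture this gives a parent instrument and post-processing channels, so $\Phi$ is $d$-compatible.

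The main obstacle is the attainment step. One has to check that the ultraproduct of contractive c.p.\ factorizations can again be realized through a $\sigma$-finite $L_\infty(\Omega;\bM_d)$ with normal $\theta$. For this I would pass to preduals and reuse \Cref{lem:cp-ultra-os}, \Cref{lem:pullout-finite} and \Cref{prop:L1-maharam} exactly as in the proof of \Cref{thm:closure-instruments}. The trace-preservation equality is replaced there by the non-strict inequality $\Tr\circ\theta_*\le\Tr$, which survives ultralimits by the trace identity of \Cref{L:up-L1-S1-order}. The separable range of the finite-dimensional $S_1^n$ then yields a countably generated subalgebra and a normal embedding into $L_1[0,1]$.
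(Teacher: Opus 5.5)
Your proposal is correct and follows the paper's proof essentially step by step. You use attainment of the infimum via the ultraproduct remark, rescale to contractions using $\norm{u}_{\cb}=\norm{u(\1)}$, use the squeeze $\1=\theta(\Psi_x(\1))\le\theta(\1)\le\1$ to force $\theta$ unital, and repair the defect by $\Psi_x+\tau_x(\cdot)(\1-\Psi_x(\1))$. Your added sketch of why the ultraproduct limit is again realized through a $\sigma$-finite $L_\infty(\Omega;\bM_d)$ with normal $\theta$ spells out a step that the paper only asserts in the paragraph preceding the proposition.
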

\begin{proof}
We only show the implication $(\Leftarrow)$ since the converse is trivial.
If $\Phi=0$, take $\theta=0$ and $\Psi_x=0$, so assume $\Phi\neq0$, which forces $\theta\neq0$ in every factorization.

Suppose $\gamma_{d,\cp}(\Phi)\le1$.
Since the infimum is attained, there is a completely positive factorization $\Phi_x=\theta\circ\Psi_x$ through some $L_\infty(\Omega;\bM_d)$ with $\norm{\theta}_{\cb}\cdot\sup_x\norm{\Psi_x}_{\cb}=\gamma_{d,\cp}(\Phi)\le1$.
Rescaling $\theta\mapsto\norm{\theta}_{\cb}^{-1}\theta$ and $\Psi_x\mapsto\norm{\theta}_{\cb}\Psi_x$ multiplies by a positive scalar, which preserves complete positivity and leaves each $\Phi_x=\theta\circ\Psi_x$ intact, so we may assume $\norm{\theta}_{\cb}=1$ and $\sup_x\norm{\Psi_x}_{\cb}\le1$.
Using $\norm{u}_{\cb} = \norm{u(\1)}$ for c.p. maps (\cite[Prop.~3.6]{Paulsen_2003}), $\theta(\1)\le\1$ and $\Psi_x(\1)\le\1$ (a positive element of norm $\leq 1$ in a $C^*$-algebra is $\leq \1$, and conversely).

When each $\Phi_x$ is unital,
\begin{equation}
\1=\Phi_x(\1)=\theta(\Psi_x(\1))\le\theta(\1)\le\1,
\end{equation}
which forces $\theta$ to be u.c.p.

Define $\Delta_x:=\1-\Psi_x(\1)\ge0$.
Then $\theta(\Delta_x)=0$.
Fix any state (unital positive linear functional) $\rho$ on $\bM_m$ and set $\Psi_x'(z):=\Psi_x(z)+\rho(z)\cdot\Delta_x$.
Then $\Psi_x'$ is completely positive as a sum of the completely positive maps $\Psi_x$ and $z\mapsto \rho(z)\Delta_x$, it is unital since $\Psi_x'(\1)=\Psi_x(\1)+\Delta_x=\1$.
Then,
\begin{equation}
\theta\circ\Psi_x' = \theta\circ\Psi_x + \rho(\cdot)\,\theta(\Delta_x) = \theta\circ\Psi_x = \Phi_x.
\end{equation}
and $\Phi_x=\theta\circ\Psi_x'$ is a u.c.p.\ factorization through $L_\infty(\Omega;\bM_d)$, meaning $\Phi$ is $d$-compatible.
\end{proof}

\subsection{Proof of \Cref{thm:cp-criterion}}
\label{app:cp-criterion}

\begin{theorem*}[restated]
  \CPCriterion{}
\end{theorem*}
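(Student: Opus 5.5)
The plan is to prove the c.p.\ criterion as a bipolar duality, using the same template as the proof of \Cref{thm:cb-criterion}, but with cones in place of balanced sets. Work in the finite-dimensional real space of ${*}$-preserving families $\Phi=(\Phi_x)_x$, paired with ${*}$-preserving witnesses through the Choi pairing
\begin{equation*}
\ip{T}{\Phi}:=\Tr\Bigl[\sum_x(T_x\otimes\Phi_x)(\chi_m)\,\rho\Bigr],
\end{equation*}
with $\rho$ ranging over the subnormalized states, so that the supremum over $\rho$ gives $\norm{(\cdot)_+}$ by \eqref{eq:plus-part}. Let $\hat\cF_d$ be the set \eqref{eq:C-set} of c.p.\ families with a factorization $\Phi_x=\theta\circ\Psi_x$ by contractive c.p.\ maps through some $L_\infty(\Omega;\bM_d)$. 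By the rescaling argument of \Cref{prop:cp-gamma}, $\gamma_{d,\cp}(\Phi)\le K$ holds if and only if $\Phi\in K\hat\cF_d$. Convexity of $\hat\cF_d$ comes from the direct-sum construction of \Cref{lem:F-convex}, which preserves complete positivity. Closedness comes from the ultraproduct remark preceding \Cref{prop:cp-gamma}. Since $0\in\hat\cF_d$, \Cref{lem:gauge-polar} gives $\gamma_{d,\cp}(\Phi)=\sup\{\ip{S}{\Phi}\mid S\in\hat\cF_d^\circ\}$. This is \eqref{eq:gamma-polar}, and it reduces the theorem to identifying the polar.

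Next I would identify the free value. For a witness $T$ and $\sigma\in\hat\cF_d$ factored as $\theta\circ\Psi_x$, I would compute
\begin{equation*}
\sup_{\sigma\in\hat\cF_d}\Bigl\|\Bigl(\sum_x(T_x\otimes\sigma_x)(\chi_m)\Bigr)_+\Bigr\|=\norm{T}_{(d)+}.
\end{equation*}
For ``$\ge$'', take $\Omega$ a point and $\theta$ a compression $\bM_d\to\bM_n$; a unital c.p.\ $\theta$ of this kind requires $d\ge n$, so in general $\theta$ is only a contractive c.p.\ map. I would then match the supremum over c.p.\ $u_x:\bM_m\to\bM_d$ in \eqref{eq:dplus}, using a Stinespring/Choi rewriting in which the positive matrix $\rho$ on $\bC^n\otimes\bC^n$ is absorbed through $\theta$. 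This gives $\Tr[(\sum_x T_x\otimes\theta\Psi_x)(\chi_m)\rho]=\Tr[(\sum_x T_x\otimes\Psi_x)(\chi_m)\rho']$ with $\rho'=(\id\otimes\theta_*)(\rho)$ positive on $\bC^n\otimes\bC^d$ and of trace at most $\Tr\rho$. For ``$\le$'', I would disintegrate $\theta$ over $\omega$ as in Step~1 of the c.b.\ proof. The positive functional then becomes an integral of positive functionals on $\bC^n\otimes\bC^d$ against the maps $\Psi_{x,\omega}$, which are c.p.\ and subunital. Each integrand is bounded by $\norm{T}_{(d)+}$ times its weight, and the weights integrate to $\Tr[\theta_*(\rho)]\le\Tr\rho\le1$.

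With this identity in hand, both directions follow. The ``only if'' direction is immediate: if $\Phi\in K\hat\cF_d$, the identity bounds the left side of \eqref{eq:cpcriterion} by $K\norm{T}_{(d)+}$. For the ``if'' direction, assume \eqref{eq:cpcriterion} and take $S\in\hat\cF_d^\circ$. I would show that the functional $\ip{S}{\cdot}$ can be realized as $\Tr[\sum_x(T_x\otimes\cdot)(\chi_m)\rho]$ for a witness $T$ with $\norm{T}_{(d)+}\le1$ and a single subnormalized $\rho$. Then \eqref{eq:cpcriterion} gives $\ip{S}{\Phi}\le K$, and \eqref{eq:gamma-polar} yields $\gamma_{d,\cp}(\Phi)\le K$.

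The main obstacle is this last representation step, a positive analogue of Steps~2--4 of the c.b.\ proof. A general self-adjoint tuple $S_x\in\bM_n\otimes\bM_m$ must be written as an evaluation of $T_x\otimes\Phi_x$ against one positive $\rho=\pure{\xi}$ via \eqref{eq:vector-pairing}. The Pietsch step must then produce a single positive weight $a$, in place of the pair $a,b$, using that all relevant functionals are positive, and invertibility of $a$ must be handled by the approximate identity. I would try $\ket\xi=(a\otimes\1)\ket{\chi_n}$ with $a\ge0$ and define $T_x$ from $S_x$ by conjugating with $a^{-1}$. I would then check that $\norm{T}_{(d)+}\le1$ is exactly the polar condition tested on factorizations through $\bM_d$ with $\theta(y)=ayb$-type compressions.
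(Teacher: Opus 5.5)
Your bipolar setup and your ``only if'' direction are correct and agree with the paper. The polar form is \Cref{lem:cp-polar-form}, and your disintegration bound is the predual version of the paper's operator inequality $\sum_x(T_x\otimes\Psi_x)(\chi_m)\le\norm{T}_{(d)+}\1$ followed by the positive map $\id\otimes\theta$. The ``$\ge$'' half of your free-value identity is never needed.

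The gap is the ``if'' direction, which is the actual content of the theorem. You must show that every $R\in\hat\cF_d^\circ$ has the form $R^a$ for a \emph{single} weight $a$ and a witness with $\norm{T}_{(d)+}\le1$. You leave the production of that weight as an unresolved obstacle.

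The route you point to, testing the polar condition on factorizations through one copy of $\bM_d$ with compressions $\theta(y)=ayb$, cannot supply it. A single block only yields, for each positive $z$ separately, $\sum_x\norm{R_x(z)}_{S_1^d(\bM_m)_+}\le\norm{\Tr_{\bM_d}z}$, with an optimal state that depends on $z$. In fact the single-block polar condition is strictly weaker than membership in $\hat\cF_d^\circ$. Take $m=n$ and $\sigma=\theta\circ\Psi$ through one copy of $\bM_d$ with contractive c.p.\ maps. Then $\sum_r\bra{r}\sigma(E_{rr})\ket{r}=\sum_r\Tr[\Psi(E_{rr})\theta_*(E_{rr})]\le\Tr\Psi(\1)\le d$, because $\theta_*(E_{rr})\ge0$ has trace at most one. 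But the pinching onto the computational basis lies in $\hat\cF_1\subseteq\hat\cF_d$ and gives the value $n$. So this functional divided by $d$ passes every single-block test, yet is not in $\hat\cF_d^\circ$ when $d<n$. If single-block tests produced $R=R^a$ with $\norm{T}_{(d)+}\le1$, your own ``only if'' computation would place it in $\hat\cF_d^\circ$, a contradiction.

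The missing idea is to use the classical register, as in the paper's Steps~2--4.

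First, test the polar condition on parents $\theta((f_j)_j)=\sum_{j=1}^N\theta_j(f_j)$ on $\ell_\infty^N(\bM_d)$. Their c.b.\ norm is $\norm{\sum_j\theta_j(\1)}$ rather than $\sum_j\norm{\theta_j(\1)}$. Optimize the components $\Psi_{x,j}$ independently, and let $z_j\ge0$ be the suitably transposed Choi matrices of the $\theta_j$; these range over all finite families of positive matrices in $\bM_n\otimes\bM_d$. The polar condition then becomes the summing inequality
\[
\sum_{x,j}\norm{R_x(z_j)}_{S_1^d(\bM_m)_+}\le\sup_{\rho\ge0,\ \Tr\rho\le1}\sum_j\Tr[z_j(\rho\otimes\1)].
\]

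Second, apply a Hahn--Banach separation in $C(K)$, with $K$ the compact convex set of subnormalized states. This produces one $\rho$ with $\sum_x\norm{R_x(z)}_{S_1^d(\bM_m)_+}\le\Tr[z(\rho\otimes\1)]$ for all $z\ge0$. The map $z\mapsto\sum_x\norm{R_x(z)}_{S_1^d(\bM_m)_+}$ is positively homogeneous. Hence the functions $\rho\mapsto\Tr[z(\rho\otimes\1)]-\sum_x\norm{R_x(z)}_{S_1^d(\bM_m)_+}$ form a convex cone disjoint from the uniformly negative functions. The desired $\rho$ is the barycenter of the probability measure representing the separating functional (\Cref{lem:HB-RMK}).

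Only then does your rescaling go through. Replace $\rho$ by $(1-\varepsilon)\rho+\varepsilon\1/n$ to make it invertible, at the cost of a factor $(1-\varepsilon)^{-1}$. Put $a=\rho^{1/2}$, $S_x(y):=R_x(a^{-1}ya^{-1})$ and $T:=S^*$. Then $R=R^a$ and $\norm{T}_{(d)+}=\sup\{\sum_x\norm{S_x(z)}_{S_1^d(\bM_m)_+}:z\ge0,\ \norm{z}_{S_1^{nd}}\le1\}\le(1-\varepsilon)^{-1}$. Your final line then gives $\ip{R}{\Phi}\le K$ after letting $\varepsilon\to0$.
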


The proof follows the proof of \Cref{thm:cb-criterion} in \Cref{app:cb-criterion}, with the modifications below.
We work in the real vector space of $*$-preserving maps (Step~1).
The polar condition becomes an inequality for positive Choi matrices with the quantity \eqref{eq:S1plus} (Step~2).
The separation argument gives a single positive $\rho$ (Step~3).
Rescaling by $a=\rho^{1/2}$ gives a witness $T$ with $\norm{T}_{(d)+}\le1$ (Step~4).
The vector states of the c.b.\ case then identify the polar with the criterion (Step~5).
Unlike in the c.b.\ case, operator norms are replaced by norms of positive parts, because $\hat\cF_d$ is not balanced.

Let $\hat\cF_d$ be the set of contractive c.p. maps that are factorizable via c.p. contractions $\Psi:\ell_1(\mathsf{X},\bM_m) \to L_\infty(\Omega;\bM_d)$ and $\theta:L_\infty(\Omega;\bM_d)\to \bM_n$ for some $\sigma$-finite $(\Omega,\Sigma,\mu)$, the set of \eqref{eq:C-set}.

\textbf{Step 1 -- $*$-preserving maps and the polar.}

Let us denote $\CB(\bM_m, \bM_n)_{\sa}$ the locally convex real vector space of $*$-preserving maps, which is spanned by completely positive maps.
We write $\ip{R}{\Phi}$ for the pairing \eqref{eq:dual-pairing} of \Cref{app:pairings}, which is real on ${*}$-preserving families, and all polars below are taken in this pairing.

The polar of $\hat\cF_d$ in this pairing is
\begin{equation}\label{eq:cp-polar}
  \hat\cF_d^\circ = \{ R = (R_x:\bM_n \to \bM_m)_x \text{ $*$-preserving } \mid \Re \sum_x \ip{R_x}{\theta \circ \Psi_x} \leq \norm{\theta}_{\cb}\sup_x \norm{\Psi_x}_{\cb} \},
\end{equation}
where the inner condition is quantified over completely positive $\theta$ and tuples $(\Psi_x)$ of completely positive maps.

\begin{lemma}[Polar form of $\gamma_{d,\cp}$]\label{lem:cp-polar-form}
The set $\hat\cF_d$ is convex and closed, and for every completely positive family $\Phi$
\begin{equation}\label{eq:cp-polar-form}
  \gamma_{d,\cp}(\Phi)=\sup_{R\in\hat\cF_d^\circ}\Re\ip{R}{\Phi}.
\end{equation}
\end{lemma}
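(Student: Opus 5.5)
We will show that $\gamma_{d,\cp}$ is the Minkowski functional of $\hat\cF_d$ on the cone $\cC_{\cp}$, and then apply \Cref{lem:gauge-polar}. We work in the finite-dimensional real vector space of $*$-preserving families with the real pairing \eqref{eq:dual-pairing}.

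\emph{Convexity.} We adapt \Cref{lem:F-convex}: given c.p.\ contractive factorizations through $L_\infty(\Omega_1;\bM_d)$ and $L_\infty(\Omega_2;\bM_d)$, we pass to the disjoint union $\Omega_1\sqcup\Omega_2$. There we set $\Psi_x=\Psi^{(1)}_x\oplus\Psi^{(2)}_x$ and $\theta=\lambda\theta^{(1)}+(1-\lambda)\theta^{(2)}$ on the two summands. Complete positivity is clear. For contractivity we use $\norm{u}_{\cb}=\norm{u(\1)}$ for c.p.\ maps \cite[Prop.~3.6]{Paulsen_2003}, which gives $\norm{\theta(\1)}\le\lambda\norm{\theta^{(1)}(\1)}+(1-\lambda)\norm{\theta^{(2)}(\1)}\le1$ and $\norm{\Psi_x(\1)}\le1$.

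\emph{Closedness.} This is the ultraproduct statement recorded before \Cref{prop:cp-gamma}. Contractive c.p.\ factorizations are stable under ultraproducts, and \Cref{lem:cp-ultra-os}, \Cref{lem:pullout-finite} and \Cref{prop:L1-maharam} return a factorization through some $L_\infty(\Omega;\bM_d)$ with $\sigma$-finite $\Omega$ and normal $\theta$. Concretely, for a convergent sequence in $\hat\cF_d$ we run the proof of \Cref{thm:closure-instruments}, with unitality replaced by the order inequalities $\theta(\1)\le\1$ and $\Psi_x(\1)\le\1$. The same argument shows that the infimum in \eqref{eq:gamma-cp} is attained.

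\emph{Identification of the gauge.} Next we check that $\gamma_{d,\cp}(\Phi)=\inf\{t>0\mid\Phi\in t\hat\cF_d\}$.

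If $\Phi=\theta\circ\Psi_x$ is a c.p.\ factorization with cost $c=\norm{\theta}_{\cb}\sup_x\norm{\Psi_x}_{\cb}$, we rescale as in the proof of \Cref{prop:cp-gamma}. Both maps then have c.b.\ norm at most $\sqrt c$ after splitting the scalar, so $\Phi/c\in\hat\cF_d$. Conversely, $\Phi\in t\hat\cF_d$ gives a factorization of cost at most $t$ by absorbing $t$ into $\theta$. The case $c=0$ only occurs for $\Phi=0$, where both sides vanish. Since $0\in\hat\cF_d$ and $\hat\cF_d$ is closed and convex, \Cref{lem:gauge-polar} yields $\gamma_{d,\cp}(\Phi)=\sup\{\ip{R}{\Phi}\mid R\in\hat\cF_d^\circ\}$, with both sides $+\infty$ when $\Phi$ has no c.p.\ factorization.

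\emph{Matching the polar with \eqref{eq:cp-polar}.} By homogeneity, as in \eqref{eq:HB-witness}, the condition $\ip{R}{X}\le1$ for all $X\in\hat\cF_d$ is equivalent to the homogeneous inequality $\Re\ip{R}{\theta\circ\Psi}\le\norm{\theta}_{\cb}\sup_x\norm{\Psi_x}_{\cb}$ over all c.p.\ pairs. The real part is harmless because the pairing is real on $*$-preserving families.

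\emph{Main obstacle.} The delicate point is that \Cref{lem:gauge-polar} is applied in the ambient space of all $*$-preserving families, while $\hat\cF_d$ sits inside the cone $\cC_{\cp}$. We need the gauge to be $+\infty$ exactly off the set where c.p.\ factorizations exist, which holds automatically because $t\hat\cF_d\subseteq\cC_{\cp}$. We also need the bipolar theorem in a real finite-dimensional space, and this is ensured by the closedness step. That closedness step is where the real work lies: it depends on the ultraproduct machinery and on the sub-unital order inequalities surviving the ultralimit.
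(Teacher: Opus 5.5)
Your proposal is correct and follows essentially the same route as the paper. Both arguments establish convexity via the block (disjoint-union) construction of \Cref{lem:F-convex}, closedness via the ultraproduct argument with the sub-unital inequalities $\theta(\1)\le\1$, $\Psi_x(\1)\le\1$, identify $\gamma_{d,\cp}$ as the Minkowski functional of $\hat\cF_d$ on the completely positive cone, and then apply the bipolar theorem (\Cref{lem:gauge-polar}). Your explicit checks of the gauge identity and of the homogeneous form \eqref{eq:cp-polar} of the polar simply spell out what the paper cites as ``homogeneity and \Cref{prop:cp-gamma}''.
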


\begin{proof}
$\hat\cF_d$ is a convex set contained in this ambient space.
Indeed, the block construction of \Cref{lem:F-convex}, $\theta(y_1 \oplus y_2) = t\theta^{(1)}(y_1) + (1-t)\theta^{(2)}(y_2)$ and $\Psi_x = \Psi_x^{(1)} \oplus \Psi_x^{(2)}$, preserves complete positivity, since it only adds and rescales positively.
It is closed by the ultraproduct argument following \eqref{eq:gamma-cp}, and it contains $0$.

However, $\hat\cF_d$ is not balanced: $-\Phi\not\in\hat\cF_d$ for $0\neq\Phi\in\hat\cF_d$, since $-\Phi$ is not completely positive.
This does not obstruct the bipolar theorem, which needs only that $\hat\cF_d$ be convex, weak-$*$ closed and contain $0$, and therefore still gives $\hat\cF_d^{\circ\circ}=\hat\cF_d$ exactly.
For a completely positive family $\Phi$ and $t>0$, $\Phi\in t\hat\cF_d$ if and only if $\gamma_{d,\cp}(\Phi)\le t$, by homogeneity and \Cref{prop:cp-gamma}, so $\gamma_{d,\cp}$ is the Minkowski functional of $\hat\cF_d$ on the completely positive cone.
As in \eqref{eq:polar}, $\hat\cF_d^{\circ\circ}=\hat\cF_d$ therefore gives \eqref{eq:cp-polar-form}.
\end{proof}

What the missing balance changes is the shape of the polar condition: it is one-sided, $\Re\ip{R}{\Phi}\leq 1$ rather than $\abs{\ip{R}{\Phi}}\leq 1$, and constrains $R$ only along the completely positive cone.
This difference is what calls for the definition of $\norm{\cdot}_{(d)+}$ and what distinguishes the proof from the c.b. case.
By \Cref{lem:cp-polar-form} it remains to show
\begin{equation}
\sup_{R \in \hat{\cF}^\circ_d} \Re \ip{R}{\Phi} = \sup\{\norm{(\sum_x(T_x\otimes\Phi_x)(\chi_m))_+}\mid T\ {*}\text{-preserving},\ \norm{T}_{(d)+}\le1\}.
\end{equation}

Now let $R\in\hat\cF_d^\circ$.
Steps~2--4 produce $a\ge0$ with $\norm{a}_2\le1$ and a ${*}$-preserving $T$ with $\norm{T}_{(d)+}\le(1-\varepsilon)^{-1}$, for any $\varepsilon\in(0,1)$, such that $R_x(y)=(T_x)_*(a\,y\,a)$ for all $x$, and Step~5 concludes.

\textbf{Step 2 -- The polar condition for positive Choi matrices.}

We use \eqref{eq:cp-polar} only for $\theta$ on $\ell_\infty^N(\bM_d)$ with $N$ finite, so normality is not needed.
Such a $\theta$ is $\theta((f_j)_j)=\sum_j\theta_j(f_j)$ with c.p.\ maps $\theta_j:\bM_d\to\bM_n$ and we set $z_j:=\sum_{k,l}\theta_j(E_{lk})^T\otimes E_{kl}\in\bM_n\otimes\bM_d$, the Choi matrix of $\vartheta\circ\theta_j\circ\vartheta$ with its tensor factors exchanged.
Then $z_j\ge0$, $\Tr_{\bM_d}z_j=\theta_j(\1)^T$ and $\norm{\theta}_{\cb}=\norm{\theta(\1)}=\norm{\sum_j\Tr_{\bM_d}z_j}$, and as $\theta$ ranges over the c.p.\ maps on $\ell_\infty^N(\bM_d)$, the family $(z_j)_j$ ranges over all families of $N$ positive matrices.
For $v\in(\bM_m\otimes\bM_d)_{\sa}$ we set
\begin{equation}\label{eq:S1plus}
  \norm{v}_{S_1^d(\bM_m)_+}:=\sup\{\Tr[v\,J(u)]\mid u:\bM_m\to\bM_d\ \text{completely positive},\ u(\1)\le\1\},
\end{equation}
and we write $R_x(z):=(R_x\otimes\id_{\bM_d})(z)$.
This quantity is positively homogeneous and subadditive, but it is not a norm.
Since $\Tr[\vartheta\circ R_x\circ\vartheta\circ\theta_j\circ\Psi]=\Tr[R_x(z_j)\,J(\Psi)]$ for every $\Psi:\bM_m\to\bM_d$, the supremum over the components $\Psi_{x,j}$ of $\Psi_x$ with $\Psi_{x,j}(\1)\le\1$ turns \eqref{eq:cp-polar} into the necessary condition
\begin{equation}\label{eq:cp-polar-explicit}
  \sum_{x,j} \norm{R_x(z_j)}_{S_1^d(\bM_m)_+} \leq \sup_{\norm{a}_2\leq 1} \sum_j \norm{(a^* \otimes \1)z_j(a \otimes \1)}_{S_1^{nd}}
  = \sup_{\rho\geq 0, \Tr\rho \leq 1} \sum_j \Tr[z_j (\rho \otimes \1_{\bM_d})]
\end{equation}
for all finite families $(z_j)$ of positive Choi matrices.

\textbf{Step 3 -- Separation.}

Using Hahn-Banach theorem and Riesz-Markov-Kakutani representation theorem, there exists $\rho$ such that
\begin{equation} \label{eq:cp-rho}
  \sum_x \norm{R_x(z)}_{S_1^d(\bM_m)_+} \leq \Tr[z(\rho \otimes \1_{\bM_d})] \qquad (z\geq0),
\end{equation}
which is proved in \Cref{lem:HB-RMK}.

\textbf{Step 4 -- Rescaling and the witness.}

Analogous to the c.b. case, there is no loss in assuming $\rho$ to be invertible.

Put $a := \rho^{1/2} \geq 0$, so that $\norm{a}_2^2 = \Tr \rho \leq 1$, which yields
\begin{equation} \label{eq:cp-sandwich}
\sum_x \norm{R_x(z)}_{S_1^d(\bM_m)_+} \leq \Tr[z(\rho\otimes\1)]=\norm{(a\otimes \1)z(a \otimes \1)}_{S_1^{nd}} \q (z\ge0).
\end{equation}
Define $S_x$ as
\begin{equation} \label{eq:cp-S}
  S_x(y) := R_x(a^{-1}ya^{-1}).
\end{equation}
Because $a = a^*$, the conjugation $y \mapsto a^{-1}ya^{-1}$ is both $*$-preserving and positivity preserving, each $S_x$ is again $*$-preserving.
Applying \eqref{eq:cp-rho} to the positive element $(a^{-1}\otimes\1)z(a^{-1}\otimes\1)$ turns it into
\begin{equation} \label{eq:cp-Sbound}
  \sum_x \norm{S_x(z)}_{S_1^d(\bM_m)_+} \leq \norm{z}_{S_1^{nd}} \qquad (z \geq 0).
\end{equation}

Let $T = S^*$, that is, $(T_x)_*=S_x$, and let $w = (w_x)_x$ be given by $w_x=J(u_x)$ with $u_x:\bM_m\to\bM_d$ completely positive and $u_x(\1)\le\1$, so that $(\id_{\bM_d}\otimes T)(w)=\sum_x(T_x\otimes u_x)(\chi_m)$.
Because $T$ is $*$-preserving and $w \geq 0$, $(\id_{\bM_d}\otimes T)(w)$ is self-adjoint and
\begin{equation} \label{eq:sa-norm-positive}
  \norm{((\id_{\bM_d} \otimes T)(w))_+} = \sup\{\, \Tr[(\id_{\bM_d}\otimes T)(w)z] \;\mid\; z \geq 0, \norm{z}_1 \leq 1 \,\}.
\end{equation}
Note that $\Tr[(\id_{\bM_d}\otimes T)(w)\,z] = \sum_x$ $\Tr[w_x\,S_x(z)]$, and taking the supremum first over admissible $w$, whose coordinates $w_x$ are independent, and then over positive $z$ in the unit ball, we obtain by \eqref{eq:S1plus} and \eqref{eq:cp-Sbound}
\begin{equation} \label{eq:cp-Tdplus}
  \norm{T}_{(d)+} = \sup \{ \sum_x \norm{S_x(z)}_{S_1^d(\bM_m)_+} \mid z \geq 0, \norm{z}_{S_1^{nd}} \leq 1 \} \leq 1.
\end{equation}

\textbf{Step 5 -- Conclusion.}

Since the theorem only concerns $T$ and $\Phi$ that are $*$-preserving, $\sum_x(T_x \otimes \Phi_x)(\chi_m)$ is self-adjoint.
The supremum in \eqref{eq:plus-part} is attained at an extreme point $\rho=\ketbra{\xi}{\xi}$ with $\norm{\xi}\le1$, so
\begin{equation}
  \norm{(\sum_x(T_x\otimes \Phi_x)(\chi_m))_+}
  = \sup_{\norm{\xi} \leq 1} \sum_x \bra{\xi}(T_x\otimes \Phi_x)(\chi_m)\ket{\xi}.
\end{equation}

Consequently, for $a\in\bM_n$ with $\norm{a}_2 \leq 1$, $\ket{\xi} = (a \otimes \1)\ket{\chi_n}$ and $R_x^a(y) = (T_x)_*(a\,y\,a^*)$, where $(T_x)_*:\bM_n\to\bM_m$ is defined by $\Tr[(T_x)_*(y)\,e]=\Tr[y\,T_x(e)]$,
\begin{equation}
  \bra{\xi}(T_x \otimes \Phi_x)(\chi_m) \ket{\xi} = \Tr[\vartheta\circ R_x^a\circ\vartheta\circ\Phi_x],
\end{equation}
which is the $x$-th term of \eqref{eq:dual-pairing},
and
\begin{equation}\label{eq:cp-Tphi-pairing}
  \norm{(\sum_x(T_x\otimes\Phi_x)(\chi_m))_+} = \sup_{\norm{a}_2 \leq 1} \ip{R^a}{\Phi}
\end{equation}
for all $*$-preserving $T$ and $\Phi$.
The first identity is \eqref{eq:vector-pairing} with $b=a$.
Every $\xi\in\bC^n\otimes\bC^n$ is of the form $(a\otimes\1)\ket{\chi_n}$ with $\norm{\xi}=\norm{a}_2$.

In Step~4, $R_x(y)=S_x(a\,y\,a)=(T_x)_*(a\,y\,a^*)$ since $a=a^*$, so $R=R^a$ with $\norm{a}_2\le1$.
Now suppose that \eqref{eq:cpcriterion} holds for every ${*}$-preserving $T$.
By \eqref{eq:cp-Tphi-pairing}, \eqref{eq:cpcriterion} and \eqref{eq:cp-Tdplus}, $\ip{R}{\Phi}=\ip{R^a}{\Phi}\le\norm{(\sum_x(T_x\otimes\Phi_x)(\chi_m))_+}\le K\norm{T}_{(d)+}\le K$.
Taking the supremum over $R\in\hat\cF_d^\circ$, \Cref{lem:cp-polar-form} gives $\gamma_{d,\cp}(\Phi)\le K$.

For necessity, suppose that $\gamma_{d,\cp}(\Phi)\le K$.
The infimum in \eqref{eq:gamma-cp} is attained, so after rescaling there is a c.p.\ factorization $\Phi_x=\theta\circ\Psi_x$ through some $L_\infty(\Omega;\bM_d)$ with $\norm{\theta}_{\cb}\le K$ and $\Psi_x(\1)\le\1$.
For almost every $\omega$ the maps $y\mapsto\Psi_x(y)(\omega)$ are completely positive with $\Psi_x(\1)(\omega)\le\1$.
So \eqref{eq:dplus} gives $\sum_x(T_x\otimes\Psi_x)(\chi_m)\le\norm{T}_{(d)+}\1$ in $\bM_n\otimes L_\infty(\Omega;\bM_d)$ for every ${*}$-preserving $T$.
Applying the positive map $\id_{\bM_n}\otimes\theta$ gives
\begin{equation}
  \sum_x(T_x\otimes\Phi_x)(\chi_m)\le\norm{T}_{(d)+}\,\1_n\otimes\theta(\1)\le K\norm{T}_{(d)+}\,\1,
\end{equation}
which is \eqref{eq:cpcriterion}.
This completes the proof of \Cref{thm:cp-criterion}.

\subsection{Lemma for the proof of \Cref{thm:cp-criterion}}

\begin{lemma} \label{lem:HB-RMK}
  Let $R_x:\bM_n\to\bM_m$ be $*$-preserving.
  Then the following are equivalent.
  \begin{enumerate}[(i)]
    \item For every finite family $(z_j)$ of positive elements of $\bM_n \otimes \bM_d$,
        \begin{equation}
      \sum_{x,j} \norm{R_x(z_j)}_{S_1^d(\bM_m)_+} \leq \sup_{\rho \geq 0,\, \Tr\rho \leq 1} \sum_j \Tr[z_j(\rho \otimes \1_{\bM_d})] .
    \end{equation}
    \item There exists a single $\rho \geq 0$ with $\Tr \rho \leq 1$ such that, for every positive $z \in \bM_n \otimes \bM_d$,
        \begin{equation}
      \sum_{x} \norm{R_x(z)}_{S_1^d(\bM_m)_+} \leq \Tr[z(\rho \otimes \1_{\bM_d})] .
    \end{equation}
  \end{enumerate}
\end{lemma}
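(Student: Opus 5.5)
The direction (ii)$\Rightarrow$(i) is immediate, and the real work is the converse, which I will prove by a minimax separation in the style of Pietsch's factorization theorem.

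For (ii)$\Rightarrow$(i), sum the single-$z$ inequality over a finite family $(z_j)_j$. The right-hand side becomes $\sum_j\Tr[z_j(\rho\otimes\1_{\bM_d})]$ for the fixed $\rho$, which is at most the supremum over all admissible $\rho$.

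For (i)$\Rightarrow$(ii), I will work on the compact convex set
\begin{equation*}
K:=\{\rho\in\bM_n\mid\rho\ge0,\ \Tr\rho\le1\}.
\end{equation*}
To each finite family $\mathbf z=(z_j)_j$ of positive elements of $\bM_n\otimes\bM_d$ I associate the continuous real function on $K$
\begin{equation*}
f_{\mathbf z}(\rho):=\sum_j\Tr[z_j(\rho\otimes\1_{\bM_d})]-\sum_{x,j}\norm{R_x(z_j)}_{S_1^d(\bM_m)_+}.
\end{equation*}
This function is affine in $\rho$. The collection $\mathcal C:=\{f_{\mathbf z}\}$ is a convex cone in $C(K,\bR)$: concatenating two families adds the functions, and for $c>0$ rescaling each $z_j$ by $c$ multiplies $f_{\mathbf z}$ by $c$, using the positive homogeneity of $\norm{\cdot}_{S_1^d(\bM_m)_+}$. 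Hypothesis (i) says exactly that $\sup_K f\ge0$ for every $f\in\mathcal C$. Since $K$ is compact and each $f$ is continuous, this supremum is attained.

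The separation runs as follows. Let $A:=\{g\in C(K,\bR)\mid\sup_Kg<0\}$, which is open and convex. By (i), $A\cap\mathcal C=\emptyset$. Applying \Cref{thm:HB-open} gives a nonzero continuous functional $\psi$ and a scalar $\alpha$ with $\psi(g)\le\alpha\le\psi(f)$ for all $g\in A$ and $f\in\mathcal C$.
\begin{itemize}
\item Since $\mathcal C$ is a cone containing $0$ (the empty family), we get $\alpha\le0$ and $\psi\ge0$ on $\mathcal C$.
\item Since $A$ contains all negative constants and is stable under adding arbitrary $g\le0$, $\psi$ is a positive functional with $\alpha\ge\psi(-\epsilon)\to0$.
\item Normalizing, $\psi(1)=1$.
\end{itemize}
By \Cref{thm:RMK}, $\psi$ is integration against a Borel probability measure $\mu$ on $K$, and $\int f_{\mathbf z}\,d\mu\ge0$ for all $\mathbf z$.

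Because each $f_{\mathbf z}$ is affine and continuous, it commutes with taking barycenters. Setting $\rho:=\int_K\sigma\,d\mu(\sigma)\in K$ (here $K$ is convex and compact), we get $f_{\mathbf z}(\rho)=\int f_{\mathbf z}\,d\mu\ge0$. Taking $\mathbf z$ to be a single positive $z$ gives (ii).

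The main obstacle I expect is verifying that $f_{\mathbf z}$ is genuinely affine and continuous on $K$ and that $\mathcal C$ is convex despite the nonlinear term. This is why the quantity enters additively over independent summands: the term $\sum_{x,j}\norm{R_x(z_j)}_{S_1^d(\bM_m)_+}$ does not depend on $\rho$ at all, and the cone property only needs this sum to be additive over concatenated families and positively homogeneous, never subadditivity within a single $z$. The remaining care point is the sign bookkeeping in the separation, to ensure the functional is positive rather than merely nonzero.
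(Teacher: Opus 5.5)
Your proof is correct and follows essentially the same route as the paper. Both arguments separate the open convex set of uniformly negative functions in $C(K,\bR)$ from the convex set generated by the $f_z$, obtain a positive functional, represent it by a probability measure via Riesz--Markov--Kakutani, and take the barycenter $\rho=\int_K\sigma\,d\mu(\sigma)$. Your cone over finite families plays the role of the paper's $\conv\{f_z\}$; both rest on the positive homogeneity of $\norm{\cdot}_{S_1^d(\bM_m)_+}$. The one step you leave implicit, $\psi(1)>0$ before normalizing, follows from positivity and $\psi\neq0$, as the paper notes.
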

\begin{proof}
  $(ii) \Rightarrow (i)$ is trivial and we only show $(i) \Rightarrow (ii)$.
  First, we define $\varphi$ as
    \begin{equation}
    \varphi(z) := \sum_x \norm{R_x(z)}_{S_1^d(\bM_m)_+}
  \end{equation}
  and $f_z$ for $z \geq 0$ as
  \begin{equation}
    f_z(\rho) := \Tr[z(\rho \otimes \1)] - \varphi(z).
  \end{equation}
  Let $K = \{ \rho \in \bM_n \mid \rho \geq 0,\ \Tr \rho \leq 1 \}$, a convex and compact subset of $\bM_n$.
  The condition $(i)$ then becomes
  \begin{equation}\label{eq:HB-hyp}
    \sup_{\rho \in K} \sum_j f_{z_j}(\rho) \geq 0 \q \forall z_j \geq 0,
  \end{equation}
  and we want to show that there exists $\rho \in K$ such that $f_z(\rho) \geq 0$ for any positive $z \in \bM_n \otimes \bM_d$.

  Define a convex subset $S$ and a convex, open subset $U$ of $C(K)$, the real-valued continuous functions on $K$:
  \begin{equation}
    S = \conv \{f_z \in C(K) \mid z \geq 0 \}, \q\q U = \{g \in C(K) \mid \sup_\rho g(\rho) < 0 \}
  \end{equation}
  Here $U$ is open because $K$ is compact, so $\sup_\rho g(\rho)<0$ means that $g$ is uniformly negative, and $U\cap S=\varnothing$ is precisely \eqref{eq:HB-hyp}.
For this we use that $\varphi(tz)=t\varphi(z)$ for $t\ge0$ by \eqref{eq:S1plus}, so a convex combination $\sum_jt_jf_{z_j}$ equals $\sum_jf_{t_jz_j}$.
  Note also that $S$ is convex by construction.
  By \Cref{thm:HB-open}~\cite[Theorem~3.4(a)]{rudin1991functional}, there exists a continuous linear functional $\Lambda \neq 0$ and $\alpha \in \bR$ such that
  \begin{equation}
    \Lambda(u) \leq \alpha \leq \Lambda(s)
  \end{equation}
  for any $u \in U$ and $s \in S$.

  We claim that $\Lambda$ and $\alpha$ are both positive.
  To show $\Lambda$ is positive, we show that for any $u$ that is non-positive pointwise we have $\Lambda(u) \leq 0$.
  Indeed, if $u \leq 0$ pointwise, $-1 + \lambda u \in U$ for any $\lambda \geq 0$.
  Then, $\Lambda(-1) + \lambda \Lambda(u) \leq \alpha$ for any $\lambda \geq 0$, and hence $\Lambda(u) \leq 0$.
  Writing $1$ for the constant function on $K$, positivity and $\Lambda \neq 0$ give $\Lambda(1) > 0$, since $-\norm{g}\,1 \leq g \leq \norm{g}\,1$ forces $\Lambda=0$ otherwise.
  Moreover $-\epsilon\,1 \in U$ for any $\epsilon > 0$, so $-\epsilon \Lambda(1) \leq \alpha$ and hence $\alpha \geq 0$.
  This shows $\Lambda(f_z) \geq \alpha \geq 0$ for any $f_z \in S$, and rescaling by the positive constant $\Lambda(1)^{-1}$ we may assume $\Lambda(1) = 1$.

  By \Cref{thm:RMK}~\cite[Theorem~2.14]{rudin1987real}, applied to the positive functional $\Lambda$ on $C(K,\bR)$ with $\Lambda(1)=1$, there exists a unique probability measure $\mu$ on $K$ such that
  \begin{equation}
    \Lambda(f_z) = \int_K f_z(\sigma) \; d\mu(\sigma)
  \end{equation}
for any $z \geq 0$.
Set $\rho := \int_K \sigma \;d\mu(\sigma) \in K$.
Then $\bra{\xi}\rho\ket{\xi} = \int_K \bra{\xi}\sigma\ket{\xi}\,d\mu(\sigma) \geq 0$ and $\Tr\rho = \int_K \Tr\sigma \, d\mu(\sigma) \leq 1$.
Hence, $f_z(\rho) = \Lambda(f_z) \geq 0$ for every $z \geq 0$, which is $(ii)$.
\end{proof}

\section{Proofs for \texorpdfstring{\Cref{sec:eb-mub}}{the MUB section}}\label{app:mub}

The estimates below use the vector-valued Schatten classes and Pisier's results recalled in \Cref{app:pisier}, and Tomiyama's bound of \Cref{app:cb-facts}.

\begin{lemma}
\label{lem:fourtwo}
If $\norm{a}_4 \leq 1$ then $\norm{a^{*}a}_2 = \norm{aa^*}_2 \leq 1$.
\end{lemma}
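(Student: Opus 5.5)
The plan is to pass through the positive operator $a^*a$ and use the functional calculus. The operator $a^*a$ is self-adjoint and positive, so $\norm{a^*a}_2^2 = \Tr[(a^*a)^2]$. Polar decomposition gives $a = u\abs{a}$ with $\abs{a} = (a^*a)^{1/2}$. Hence
\begin{equation}
  \Tr[(a^*a)^2] = \Tr[\abs{a}^4] = \norm{a}_4^4 .
\end{equation}
Taking square roots yields $\norm{a^*a}_2 = \norm{a}_4^2$, which is at most $1$ whenever $\norm{a}_4\le1$.

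For the equality $\norm{a^*a}_2 = \norm{aa^*}_2$ I would use cyclicity of the trace:
\begin{equation}
  \Tr[(aa^*)^2] = \Tr[a\,a^*a\,a^*] = \Tr[a^*a\,a^*a] = \Tr[(a^*a)^2].
\end{equation}
Equivalently, $a^*a$ and $aa^*$ have the same nonzero eigenvalues, so they have the same Schatten norms. In the matrix setting of the appendix the trace is finite and cyclicity applies without any further justification.

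No step here is a real obstacle; the only point needing care is the identity $\norm{a}_4^4 = \Tr[\abs{a}^4]$. If $a$ is allowed to lie in a general $S_4(H)$ rather than in $\bM_n$, then $a^*a \in S_2$ by Hölder's inequality. Cyclicity $\Tr[xy] = \Tr[yx]$ then holds for $x = a$ and $y = a^*aa^*$, since both products are trace class. The argument therefore goes through unchanged.
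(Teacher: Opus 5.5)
Your proof is correct and follows essentially the same route as the paper, which writes the one-line chain $\norm{a^*a}_2^2=\Tr[(a^*a)^2]=\Tr[(aa^*)^2]=\Tr[\abs{a}^4]=\norm{a}_4^4\le 1$. The polar decomposition and the infinite-dimensional remark are harmless additions. The argument itself only needs $\abs{a}^4=(a^*a)^2$ and cyclicity of the trace.
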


\begin{proof}
$\norm{a^{*}a}_2^{2} = \Tr[(a^{*}a)^{2}] = \Tr[(aa^*)^{2}] = \Tr[|a|^{4}] = \norm{a}_4^{4} \leq 1$.
\end{proof}

\begin{lemma}
\label{lem:vstarv}
Let $X$ be a Banach space, $H$ a Hilbert space, and $v:X\to H$ linear and bounded, with adjoint $v^{*}:H\to \bar X^{*}$.
Then
\begin{equation}
\norm{v^{*}v} \;=\; \norm{v}^{2} .
\end{equation}
\end{lemma}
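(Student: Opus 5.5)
The plan is the standard $T^*T$ argument, run twice: bound $\norm{v^*v}$ above by submultiplicativity, and below by testing $v^*v$ against vectors of the form $x$. First fix conventions. Here $\bar X^*$ is the space of bounded conjugate-linear functionals on $X$, equivalently bounded linear functionals on the conjugate space $\bar X$. Define the adjoint by $(v^*h)(x)=\langle h, vx\rangle_H$, with the inner product conjugate-linear in the first slot. With this convention $v^*h$ is conjugate-linear in $x$ exactly when the inner product is linear in the first slot, so I will pick the convention that makes $v^*h$ land in $\bar X^*$. Then $v^*$ is linear, and $v^*v:X\to\bar X^*$ is given by $(v^*vx)(y)=\langle vy,vx\rangle$ (or its conjugate).

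For the upper bound, first show $\norm{v^*}=\norm{v}$. Indeed,
\[
\norm{v^*h}=\sup_{\norm{x}\le1}\abs{\langle h,vx\rangle}\le\norm{h}\,\norm{v},
\]
and equality holds by choosing $h=vx/\norm{vx}$ for $x$ nearly norming $v$. Submultiplicativity then gives $\norm{v^*v}\le\norm{v^*}\norm{v}=\norm{v}^2$.

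For the lower bound, take $x$ with $\norm{x}\le1$ and evaluate the functional $v^*vx$ at $x$ itself:
\[
\norm{v^*v}\ \ge\ \norm{v^*vx}_{\bar X^*}\ \ge\ \abs{(v^*vx)(x)}=\norm{vx}^2 .
\]
Taking the supremum over $x$ in the unit ball yields $\norm{v^*v}\ge\norm{v}^2$, and combining with the upper bound gives the claimed equality $\norm{v^*v}=\norm{v}^2$.

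The only step that needs real care is the bookkeeping of conjugate-linearity. One has to confirm that the pairing between $\bar X^*$ and $X$ used in the evaluation $(v^*vx)(x)$ is well defined with unit norm, meaning $\abs{\phi(x)}\le\norm{\phi}\norm{x}$ for $\phi\in\bar X^*$. One also has to confirm that no complex conjugation spoils the identity $(v^*vx)(x)=\langle vx,vx\rangle$. Once the convention is fixed consistently, both bounds are one-line estimates.
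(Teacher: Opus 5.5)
Your proposal is correct and follows the paper's argument. The lower bound comes from testing the functional $v^*vx$ on $x$ itself, which is the paper's diagonal choice $y=z$ in $\langle vz,vy\rangle$. The upper bound is Cauchy--Schwarz: you package it as $\norm{v^*}\le\norm{v}$ plus submultiplicativity, whereas the paper applies it directly to $\langle vz,vy\rangle$.

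Only the inequality $\norm{v^*}\le\norm{v}$ is actually used, so the equality case is unnecessary. The convention that makes $v^*$ linear into $\bar X^*$ is $(v^*h)(x)=\langle vx,h\rangle$, with the inner product conjugate-linear in the first slot.
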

\begin{proof}
Unwinding the definition of the adjoint,
\begin{equation}
  \norm{v^*v} = \sup_{\norm{z},\norm{y} \leq 1} \abs{\ip{v^*vz}{y}} = \sup_{\norm{z},\norm{y} \leq 1} \abs{\ip{vz}{vy}} \geq \sup_{\norm{z}\leq 1} \abs{\ip{vz}{vz}} = \norm{v}^2.
\end{equation}
By Cauchy-Schwarz,
\begin{equation}
  \sup_{\norm{z},\norm{y} \leq 1} \ip{vz}{vy} 
  \leq \sup_{\norm{z},\norm{y} \leq 1} \ip{vz}{vz}^{1/2} \ip{vy}{vy}^{1/2}
  = (\sup_{\norm{z}\leq 1} \norm{vz})^2 = \norm{v}^2,
\end{equation}
giving $\norm{v^*v} = \norm{v}^2$.
\end{proof}

\begin{lemma}
Let $E$ be a operator space with its predual operator space $F$, i.e. $F^* = E$.
Then for every $u \in \bM_d(E)$ and every $1\leq p < \infty$,
\begin{equation}
  \norm{u}_{\bM_d(E)} = \sup \{ \norm{(a\otimes \1)u(b \otimes \1)}_{S_p^d(E)} \mid \norm{a}_{S_{2p}^d}, \norm{b}_{S_{2p}^d} \leq 1 \}.
\end{equation}
\end{lemma}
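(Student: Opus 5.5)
The plan is to obtain the identity from Pisier's description of $S_p^d(E)$ in \Cref{thm:pisier-1.5}, applied to the amplified matrix $(a\otimes\1)u(b\otimes\1)$, together with a matching lower bound obtained by duality with the predual $F$. Throughout, $u=(u_{ij})\in\bM_d(E)$ and $a,b$ range over the unit ball of $S_{2p}^d$.

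\emph{Upper bound ($\ge$ direction of the supremum).} For fixed $a,b$, the matrix $(a\otimes\1)u(b\otimes\1)$ is exactly $a\,u\,b$ in the notation of \Cref{thm:pisier-1.5}, with $v=u\in\bM_d(E)$. Taking this particular factorization in the infimum of \Cref{thm:pisier-1.5} gives
\begin{equation*}
\norm{(a\otimes\1)u(b\otimes\1)}_{S_p^d(E)}\le\norm{a}_{S_{2p}^d}\norm{u}_{\bM_d(E)}\norm{b}_{S_{2p}^d}\le\norm{u}_{\bM_d(E)} .
\end{equation*}
The supremum is therefore at most $\norm{u}_{\bM_d(E)}$.

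\emph{Lower bound.} Here I use $E=F^*$, so that $\bM_d(E)=\CB(F,\bM_d)$. By Smith's lemma (\Cref{lem:cb-duality}), the norm $\norm{u}_{\bM_d(E)}$ is then attained at level $d$. Concretely, there are $\varphi\in\bM_d(F)$ of norm at most $1$ and unit vectors $\xi,\eta\in\bC^d\otimes\bC^d$ such that $\abs{\bra\eta(u\otimes\id)(\varphi)\ket\xi}$ is close to $\norm{u}$. Writing $\xi=(\alpha\otimes\1)\ket{\chi_d}$ and $\eta=(\beta\otimes\1)\ket{\chi_d}$ with $\norm{\alpha}_2=\norm{\beta}_2=1$ turns this quantity into the scalar pairing $\langle\langle \beta^* u\alpha,\varphi\rangle\rangle$ between $S_1^d(E)$ and $\bM_d(F)$. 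This pairing is contractive, since $S_1^d(E)$ and $\bM_d(F)$ are in duality in Pisier's theory, i.e.\ $S_1^d(E)^*\supseteq\bM_d(F)$ contractively. This settles the case $p=1$ with $a=\beta^*$ and $b=\alpha$.

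For general $p$, I split $\alpha=\alpha_1\alpha_2$ and $\beta^*=\beta_2\beta_1$ via polar decomposition and powers, with $\norm{\alpha_1}_{2p}=\norm{\beta_1}_{2p}=1$ and $\alpha_2,\beta_2$ in the unit ball of $S_{2p'}^d$, where $1/p+1/p'=1$. (Hölder: $1/2=1/(2p)+1/(2p')$.) Then the product $\beta_2\beta_1 u\alpha_1\alpha_2$ is obtained from $w=\beta_1 u\alpha_1\in S_p^d(E)$ by the multiplication map $w\mapsto\beta_2 w\alpha_2$. This map sends $S_p^d(E)$ contractively into $S_1^d(E)$, again by \Cref{thm:pisier-1.5}: a factorization of $w$ as $a'vb'$ with $a',b'\in S_{2p}^d$ yields one with $\beta_2a',b'\alpha_2\in S_2^d$, by Hölder. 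Hence
\begin{equation*}
\norm{u}_{\bM_d(E)}-\varepsilon\le\norm{\beta_1u\alpha_1}_{S_p^d(E)} ,
\end{equation*}
and the supremum is at least $\norm{u}_{\bM_d(E)}$.

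The main obstacle is the $p=1$ duality step: identifying $\bM_d(F)$ isometrically as the norming dual of $S_1^d(E)$ (Pisier's $S_1^d(E)^*=\bM_d(E^*)$). The subtlety is that in our situation $E$ is itself the dual space, so one must check that the pairing with the predual $F$ suffices to norm $u$, rather than the pairing with $E^*$. I would first try to derive this from the factorization formula of \Cref{thm:pisier-1.5} for $p=1$ and the fact that the matrix norms of $E=F^*$ are computed against $\bM_k(F)$. This is where the predual hypothesis enters.
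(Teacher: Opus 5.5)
Your proposal is correct and follows the paper's own proof. The upper bound comes from the factorization formula of \Cref{thm:pisier-1.5}. The case $p=1$ is handled by pairing against $\bM_d(F)$, and general $p$ by splitting the Hilbert--Schmidt factors via polar decomposition into an inner $S_{2p}^d$ part and an outer $S_{2p'}^d$ part and applying H\"older.

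The only difference is how you norm $u$ at level $d$: you use $\bM_d(F^*)=\CB(F,\bM_d)$, Smith's lemma and vector states, whereas the paper uses $S_1^d(F)^*=\bM_d(E)$ and a factorization of the norming element. Your route already settles the worry in your last paragraph, because the norming by $\bM_d(F)$ is supplied by Smith's lemma. The remaining contractivity of the pairing between $S_1^d(E)$ and $\bM_d(F)$ then follows directly from \Cref{thm:pisier-1.5} at $p=1$.
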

This is \cite[Lemma~1.7]{Pisier1998}, which holds for every operator space $E$. We include the short proof for the case used here.
\begin{proof}
``$\geq$'' inequality is a direct consequence of \Cref{thm:pisier-1.5}~\cite[Theorem~1.5]{Pisier1998}:
\begin{equation}
  \norm{(a\otimes \1)u(b \otimes \1)}_{S_p^d(E)} \leq \norm{a}_{2p} \norm{u}_{\bM_d(E)} \norm{b}_{2p} \leq \norm{u}_{\bM_d(E)}
\end{equation}
for any $a, b$ such that $\norm{a}_{2p}, \norm{b}_{2p} \leq 1$.

Consider $p = 1$.
Note that $(S_1^d(F))^* = \bM_d(E)$ completely isometrically, meaning that
\begin{equation}
  \norm{u}_{\bM_d(E)} = \sup_{\norm{w}_{S_1^d(F)}\leq 1} \abs{\ip{w}{u}}.
\end{equation}
In other words, for any $\epsilon > 0$, there exists $w$ in the unit ball of $S_1^d(F)$ satisfying
\begin{equation}
  \norm{u}_{\bM_d(E)} - \epsilon \leq \abs{\ip{w}{u}}.
\end{equation}
By \Cref{thm:pisier-1.5}~\cite[Theorem~1.5]{Pisier1998}, there exists a factorization $w = (a \otimes \1) z(b\otimes \1)$ with $\norm{a}_{2}, \norm{b}_{2} \leq 1$ and $\norm{z}_{\bM_d(F)} \leq 1 + \epsilon$, since $\norm{w}_{S_1^d(F)}\le1$ and the norm in \cite[Theorem~1.5]{Pisier1998} is an infimum over such factorizations.
Since $\ip{u}{(a\otimes \1)z(b\otimes \1)} = \ip{(a^T \otimes \1)u(b^T \otimes \1)}{z}$, we get the chain of inequality
\begin{equation}
  \norm{u}_{\bM_d(E)} - \epsilon \leq \ip{a^Tu b^T}{z} \leq \norm{a^T u b^T }_{S_1^d(E)}\norm{z}_{\bM_d(F)} \leq \norm{a^Tub^T}_{S_1^d(E)}(1+\epsilon).
\end{equation}
We get the ``$\leq$'' inequality by letting $\epsilon \to 0$.

For general $p$, we use polar decomposition with H\"{o}lder inequality.
Given $\norm{a}_2 \leq 1$ with polar decomposition $a = c|a|$, split
\begin{equation}
  a = a_1a_2, \q a_1 = c|a|^{1/p'}, \q a_2 = |a|^{1/p},
\end{equation}
where $p$ and $p'$ are H\"{o}lder conjugate, and likewise $b = b_2b_1$.

Let $w = a_2ub_2$ and $\epsilon > 0$ be arbitrary.
By \Cref{thm:pisier-1.5}~\cite[Theorem~1.5]{Pisier1998} there exist $\alpha,\beta \in S_{2p}^d$, and $v \in \bM_d(E)$ with $w = \alpha v \beta$ such that
\begin{equation}
  \norm{\alpha}_{2p}\norm{v}_{\bM_d(E)}\norm{\beta} \leq (1+\epsilon) \norm{w}_{S_p^d(E)}.
\end{equation}
Then applying \Cref{thm:pisier-1.5}~\cite[Theorem~1.5]{Pisier1998} again on $(a_1\alpha)v(\beta b_1)$ and by H\"{o}lder inequality $\norm{zy}_{2} \leq \norm{z}_{2p'}\norm{y}_{2p}$, we have
\begin{align}
  \norm{a_1wb_1}_{S_1^d(E)}
  & \leq \norm{a_1\alpha}_2 \norm{v}_{\bM_d(E)} \norm{\beta b_1}_2\\
  & \leq \norm{a_1}_{2p'}\norm{\alpha}_{2p} \norm{v}_{\bM_d(E)} \norm{\beta}_{2p} \norm{b_1}_{2p'}\\
  & \leq (1+\epsilon) \norm{a_1}_{2p'} \norm{a_2 ub_2}_{S_p^d(E)} \norm{b_1}_{2p'}.
\end{align}
Taking $\epsilon \to 0$ yields
\begin{equation}
  \norm{aub}_{S_1^d(E)} \leq \norm{a_1}_{2p'} \norm{a_2ub_2}_{S_p^d(E)} \norm{b_1}_{2p'}.
\end{equation}

Functional calculus gives
\begin{equation}
  \norm{|z|^s}_q = (\Tr |z|^{sq})^{1/q} = \norm{z}_{sq}^s,
\end{equation}
and direct calculations show that $\norm{a_1}_{2p'}, \norm{b_1}_{2p'} \leq 1$ and $\norm{a_2}_{2p}, \norm{b_2}_{2p} \leq 1$.

Combining everything gives
\begin{align}
  \norm{aub}_{S_1^d(E)}
  \leq \norm{a_1}_{2p'} \norm{a_2ub_2}_{S_p^d(E)}\norm{b_1}_{2p'}
  \leq \norm{a_2ub_2}_{S_p^d(E)}
  \leq \sup_{\norm{\alpha}_{2p}, \norm{\beta}_{2p} \leq 1} \norm{\alpha u\beta}_{S_p^d(E)}.
\end{align}
Taking sup over $a,b$, we get
\begin{equation}
  \norm{u}_{\bM_d(E)} = \sup_{\norm{a}_2,\norm{b}_2
  \leq 1} \norm{aub}_{S_1^d(E)}
  \leq \sup_{\norm{a}_{2p},\norm{b}_{2p} \leq 1} \norm{aub}_{S_p^d(E)}
\end{equation}
\end{proof}
 
\begin{corollary}\label{cor:Dp}
Let $X$ be an operator space, let $Y$ be an operator space admitting an operator-space predual $E$, i.e.\ $Y=E^{*}$ completely isometrically, and let $1\le p<\infty$ with conjugate exponent $p'$.
For any bounded $T:X\to Y$,
\begin{equation}
\norm{T}_{(d)} \;=\; \sup_{\norm{a}_{2p},\,\norm{b}_{2p}\le1}\norm{ M_{a,b}\otimes T:\bM_d(X)\to S_p^d(Y)} .
\label{eq:Dp}
\end{equation}

\end{corollary}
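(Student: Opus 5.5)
The plan is to apply the preceding lemma (Pisier's \cite[Lemma~1.7]{Pisier1998}) at level $d$ to the operator space $Y=E^*$, and then take a supremum over the unit ball of $\bM_d(X)$. Here $M_{a,b}\otimes T$ denotes the map $u\mapsto(a\otimes\1)(\id_{\bM_d}\otimes T)(u)(b\otimes\1)$ from $\bM_d(X)$ to $S_p^d(Y)$, with $a,b\in S_{2p}^d$.

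First I unpack the definition. By \eqref{eq:d-norm},
\begin{equation*}
\norm{T}_{(d)}=\sup\{\norm{(\id_{\bM_d}\otimes T)(u)}_{\bM_d(Y)}\mid \norm{u}_{\bM_d(X)}\le1\}.
\end{equation*}
For each fixed $u$, the element $v:=(\id_{\bM_d}\otimes T)(u)$ lies in $\bM_d(Y)$ with $Y=E^*$. The preceding lemma applied to $v$ gives
\begin{equation*}
\norm{v}_{\bM_d(Y)}=\sup_{\norm{a}_{2p},\norm{b}_{2p}\le1}\norm{(a\otimes\1)v(b\otimes\1)}_{S_p^d(Y)}.
\end{equation*}
The hypothesis that $Y$ has a predual is used exactly here, because the lemma was proved through the duality $S_1^d(E)^*=\bM_d(Y)$.

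Substituting this identity and exchanging the two suprema, over $u$ and over the pair $(a,b)$, gives
\begin{equation*}
\norm{T}_{(d)}=\sup_{a,b}\,\sup_{\norm{u}\le1}\norm{(a\otimes\1)(\id\otimes T)(u)(b\otimes\1)}_{S_p^d(Y)}.
\end{equation*}
The inner supremum is by definition the operator norm of $M_{a,b}\otimes T:\bM_d(X)\to S_p^d(Y)$, which proves \eqref{eq:Dp}. The exchange of suprema is valid for nonnegative quantities without further hypotheses.

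As a consistency check, the "$\ge$" direction also follows directly from \Cref{thm:pisier-1.5}: $\norm{(a\otimes\1)v(b\otimes\1)}_{S_p^d(Y)}\le\norm{a}_{2p}\norm{v}_{\bM_d(Y)}\norm{b}_{2p}$. This confirms that the right-hand side of \eqref{eq:Dp} never exceeds $\norm{T}_{(d)}$.

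The only point needing care is notational: making sure $M_{a,b}\otimes T$ means left and right multiplication on the $\bM_d$ factor, so that it matches the sandwich in the lemma. The general-$p$ case of the lemma already covers the required Hölder splitting, so I expect no genuine obstacle.
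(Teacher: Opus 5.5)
Your proof is correct and matches the paper's derivation. The paper gives no separate argument and treats \eqref{eq:Dp} as an immediate consequence of the preceding lemma (Pisier's Lemma~1.7): apply it to $(\id_{\bM_d}\otimes T)(u)$ for $u$ in the unit ball of $\bM_d(X)$, then interchange the suprema over $u$ and over $(a,b)$, exactly as you do.
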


The two cases used in this paper are:
\begin{enumerate}[(i)]
\item $p=2$, for a map $T:X \to S_2$: \label{it:D1}
\begin{equation}
\norm{T}_{(d)} \;=\; \sup_{\norm{a}_4,\norm{b}_4\le1}\norm{ M_{a,b}\otimes T:\bM_d(X)\to S_2^d(S_2)} .
\label{eq:D1}
\end{equation}
\item $p=1$, for a map $u:X\to \cB(H)$: \label{it:D2}
\begin{equation}
\norm{u}_{(d)} \;=\; \sup_{\norm{a}_2,\norm{b}_2\le1}\norm{ M_{a,b}\otimes u:\bM_d(X)\to S_1^d(\cB(H))} .
\label{eq:D2}
\end{equation}
\end{enumerate}

\begin{proposition}
\label{prop:sqrtd}
$\norm{\id:S_1^n\to S_2^n}_{(d)}\le\sqrt d$.
\end{proposition}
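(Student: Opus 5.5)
We aim to show that for every $w\in\bM_d(S_1^n)$ one has $\norm{w}_{S_2^d(S_2^n)}$-type control after sandwiching, using \Cref{cor:Dp} in the case \eqref{eq:D1} with $X=S_1^n$ and $T=\id:S_1^n\to S_2^n$. Since $S_2^n$ carries the $OH$ structure and is its own conjugate dual, it admits an operator-space predual, so \eqref{eq:D1} applies:
\begin{equation*}
\norm{\id}_{(d)}=\sup_{\norm{a}_4,\norm{b}_4\le1}\norm{M_{a,b}\otimes\id:\bM_d(S_1^n)\to S_2^d(S_2^n)}.
\end{equation*}
By property (iv) of $OH$ spaces, $S_2^d(S_2^n)=S_2^{dn}$ isometrically, so the target norm is just the Hilbert--Schmidt norm on $\bM_{dn}$. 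The task reduces to bounding $\norm{(a\otimes\1)w(b\otimes\1)}_{S_2^{dn}}\le\sqrt d\,\norm{w}_{\bM_d(S_1^n)}$ for $\norm{a}_4,\norm{b}_4\le1$.

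For that bound, factor through the trace class: write $\norm{(a\otimes\1)w(b\otimes\1)}_{S_2^{dn}}\le\norm{(a\otimes\1)w(b\otimes\1)}_{S_1^{dn}}^{1/2}\norm{(a\otimes\1)w(b\otimes\1)}_{\bM_{dn}}^{1/2}$ is one option, but the cleaner route I would try first is to split $a=a_1a_2$, $b=b_2b_1$ with $a_i,b_i\in S_4$ of norm at most one is not needed; instead use \Cref{thm:pisier-1.5} for $p=1$ at level $d$ with $E=S_1^n$: $S_1^d(S_1^n)=S_1^{dn}$ and $\norm{(\alpha\otimes\1)w(\beta\otimes\1)}_{S_1^{dn}}\le\norm{\alpha}_2\norm{w}_{\bM_d(S_1^n)}\norm{\beta}_2$. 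Then with $\alpha=a^*a$-type choices via \Cref{lem:fourtwo} (giving $\norm{a^*a}_2\le1$ from $\norm{a}_4\le1$), I would compute $\norm{(a\otimes\1)w(b\otimes\1)}_{2}^2=\Tr[(b^*\otimes\1)w^*(a^*a\otimes\1)w(bb^*\otimes\1)]$-style and bound it by the trace-class estimate for $(a^*a\otimes\1)w(bb^*\otimes\1)$ paired against $w^*$ in operator norm, using the $\ell_\infty$-$\ell_1$ duality between $\bM_d(\bM_n)$ and $S_1^d(S_1^n)$. The remaining factor is $\norm{w}_{\bM_d(\bM_n)}$, which is dominated by $\norm{w}_{\bM_d(S_1^n)}$ since $\id:S_1^n\to\bM_n$ is a complete contraction.

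This gives $\norm{(a\otimes\1)w(b\otimes\1)}_2^2\le\norm{w}_{\bM_d(S_1^n)}^2$ without the factor $d$, which is suspicious; the honest source of $\sqrt d$ is more likely that $\1\in\bM_d$ has $S_2^d$-norm $\sqrt d$ and enters when comparing $\bM_d(\bM_n)$ to the sandwich. So the main obstacle is the bookkeeping of which $a,b$ live on the $\bM_d$ factor versus $S_1^n$ factor. If the above fails, the fallback is interpolation: $\norm{\id:S_1^n\to\bM_n}_{\cb}=1$ and $\norm{\id:S_1^n\to S_1^n}_{(d)}=1$, while $\norm{\cdot}_{S_2}$ is the midpoint, and the loss is absorbed by Tomiyama-type estimate \Cref{lem:dbound} giving at worst $d^{1/2}$ after interpolating between the bound $1$ (into $\bM_n$) and $d$ (trace-class versus operator structures at level $d$). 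I expect this interpolation/level-$d$ comparison to be the delicate step.
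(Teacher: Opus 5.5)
Your reduction is the right one and matches the paper's argument in concrete form. Via \eqref{eq:D1} and $S_2^d(S_2^n)=S_2^{dn}$ you need $\norm{X}_2\le\sqrt d\,\norm{w}_{\bM_d(S_1^n)}$ for $X=(a\otimes\1)w(b\otimes\1)$. Writing $\norm{X}_2^2=\Tr[w^*Y]$ with $Y=(a^*a\otimes\1)w(bb^*\otimes\1)$, \Cref{thm:pisier-1.5} and \Cref{lem:fourtwo} give $\norm{Y}_{S_1^{dn}}\le\norm{w}_{\bM_d(S_1^n)}$. Hence $\norm{X}_2^2\le\norm{w}_{\bM_d(\bM_n)}\norm{w}_{\bM_d(S_1^n)}$. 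The paper packages the same computation as $\norm{v}^2=\norm{v^*v}$ with $v^*v=M_{a^*a,bb^*}\otimes\id_{S_1^n\to\bM_n}$.

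The step that fails is your claim that $\id:S_1^n\to\bM_n$ is a complete contraction. Consider the structure on $S_1^n$ in which $S_1^d(S_1^n)=S_1^{dn}$ holds isometrically, namely the one given by the scalar pairing \eqref{eq:scalar-pairing}. You need it for the estimate on $Y$, and the paper uses it. In that structure the claim is false: $\chi_n\in\bM_n(S_1^n)$ represents $\id_{\bM_n}$ and has norm $1$, while $\norm{\chi_n}_{\bM_{n^2}}=n$. Under the trace pairing the identity is completely contractive, but then $S_1^d(S_1^n)\neq S_1^{dn}$, so your two steps cannot both hold. Your bound $1$ is therefore false, not just suspicious. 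Taking $d=n$, $w=\chi_n$, $a=b=n^{-1/4}\1$ gives $\norm{X}_2=\sqrt n$, so $\sqrt d$ is sharp. The factor also does not come from the $S_2^d$-norm of $\1_d$, as you guessed.

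The missing idea is \Cref{lem:dbound}: $\norm{\id:S_1^n\to\bM_n}_{(d)}\le d\,\norm{\id:S_1^n\to\bM_n}=d$. So $\norm{w}_{\bM_d(\bM_n)}\le d\,\norm{w}_{\bM_d(S_1^n)}$, whence $\norm{X}_2^2\le d\,\norm{w}^2_{\bM_d(S_1^n)}$, and the supremum in \eqref{eq:D1} gives $\sqrt d$. With this one correction your argument is essentially the paper's proof; the factor $d$ just sits on the other side of the pairing.

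Your interpolation fallback can also be made into a valid, slicker proof, but only with the endpoints set correctly:
\begin{itemize}
\item $\norm{\id:S_1^n\to\bM_n}_{(d)}\le d$ by Tomiyama;
\item $\norm{\id:S_1^n\to S_1^n}_{(d)}=1$;
\item $\bM_d(S_2^n)$ with the $OH$ structure is the complex-interpolation midpoint of $\bM_d(\bM_n)$ and $\bM_d(S_1^n)$.
\end{itemize}
The geometric mean of the two endpoint bounds is then $\sqrt d$. As written, your fallback again assumes $\norm{\id:S_1^n\to\bM_n}_{\cb}=1$ and puts the factor $d$ at the wrong endpoint.
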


\begin{proof}
Fix $a, b$ such that $\norm{a}_4, \norm{b}_4 \leq 1$ and define
\begin{equation}
  v = M_{a,b} \otimes \id_{S_1^n \to S_2^n}: \bM_d(S_1^n) \to S_2^d(S_2^n) =: H.
\end{equation}
Its adjoint
\begin{equation}
v^*=M_{a^*,b^*} \otimes \id_{S_2^n \to \bM_n}:H^* \to \bM_d(S_1^n)^*
\end{equation}
is given by the cyclicity of trace:
\begin{equation}
  \ip{v^*(w)}{z} = \ip{w}{v(z)}_{\HS}
  = \Tr[w^*(a \otimes \1)z(b\otimes \1)]
  = \Tr[(b\otimes \1)w^*(a\otimes \1)z]
  = \ip{(a^* \otimes \1)w(b^* \otimes\1)}{z}.
\end{equation}
Here we are implicitly identifying $H \iso H^*$ by Riesz representation.
The composition $v^*v$ is then
\begin{equation}
  v^*v = M_{a^*a,bb^*} \otimes \id_{S_1^n \to \bM_n}.
\end{equation}
Here $v^*v$ takes values in $\bM_d(S_1^n)^*$, and $\bM_d(S_1^n)^*=S_1^d(\bM_n)$ isometrically.
This follows from $S_1^d(\bM_n)^*=\bM_d(S_1^n)$ \cite[Thm.~4.1 and p.~142]{Pisier_2003}, since the spaces are finite-dimensional.
The pairing above is sesquilinear.
Passing to the bilinear duality conjugates entrywise, which is isometric, and depending on the pairing it may replace the identity map below by the transpose $\vartheta:S_1^n\to\bM_n$.
Since $\norm{\vartheta:S_1^n\to\bM_n}=1$ as well, the estimates below are unchanged.
At the outer level the duality is the parallel pairing $\sum_{i,j}\ip{w_{ij}}{z_{ij}}$, which is the one dual to the factorization norm of \cite[Theorem~1.5]{Pisier1998}, so no transpose occurs there. A transpose at the outer level would not be isometric.
By \Cref{lem:fourtwo}, $\norm{a^*a}_2, \norm{bb^*}_2 \leq 1$ and
\begin{equation}
  \norm{v^*v} = \norm{M_{a^*a,bb^*} \otimes \id_{S_1^n \to \bM_n}} \leq \sup_{\norm{a}_2, \norm{b}_2\leq 1} \norm{M_{a,b} \otimes \id_{S_1^n \to \bM_n}} = \norm{\id_{S_1^n \to \bM_n}}_{(d)},
\end{equation}

and by \Cref{lem:dbound}
\begin{equation}
  \norm{v^*v} \leq \norm{\id:{S_1^n \to \bM_n}}_{(d)} \leq d \norm{\id:{S_1^n \to \bM_n}} = d.
\end{equation}
Since $\norm{v^*v} = \norm{v}^2$ by \Cref{lem:vstarv}, we have
\begin{equation}
  \norm{v} = \norm{M_{a,b} \otimes \id} \leq \sqrt{d}
\end{equation}
for any $a,b$ with $\norm{a}_4, \norm{b}_4 \leq 1$, and by \eqref{eq:D1} taking the supremum over all such $a,b$ gives
\begin{equation}
  \norm{\id:{S_1^n \to S_2^n}}_{(d)} \leq \sqrt{d}.
\end{equation}
\end{proof}

We first convert the estimate into a bound on the $d$-th amplification norm of the witness.

Recall the witness $T :S_1^n \to \ell_1^k(\bM_n)$ is a tuple of Hilbert-Schmidt orthogonal projection $T_x$ onto the trace-zero subspace $P_n^{(x)}$
\begin{equation}
  T(z) = (T_1(z), \cdots, T_k(z)).
\end{equation}

Throughout, $S_2^n$ and its subspaces carry the self-dual operator-space structure of Pisier's operator Hilbert space $OH$ \cite[Thm.~1.1]{Pisier96}, recalled in \Cref{app:OH}.
In the proof, the $OH$ spaces are $S_2^n$ by \ref{it:OH-S2}, which is also the structure used in \Cref{prop:sqrtd}, the subspaces $P_n^{(x)}$ by \ref{it:OH-sub}, and the Hilbert direct sum $H:=\ell_2^k(P_n^{(x)})=\bigoplus_xP_n^{(x)}$.
We give $H$ the $OH$ structure of its Hilbert-space norm; since the $P_n^{(x)}$ are mutually orthogonal, $H$ is then completely isometric to the subspace $\bigoplus_xP_n^{(x)}$ of $S_2^n$.
The spaces $S_1^n$, $\bM_n$, $\ell_1^k(P_n^{(x)})$ and $\ell_1^k(\bM_n)$ are not $OH$.

\begin{proposition}\label{prop:Td-norm}
$\norm{T}_{(d)}\le\sqrt k\,\sqrt d$.
\end{proposition}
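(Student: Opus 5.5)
We factor the witness as
\[
T=\iota\circ P\circ \id_{S_1^n\to S_2^n},
\]
where $\id:S_1^n\to S_2^n$ is the formal identity and $P:S_2^n\to H=\ell_2^k(P_n^{(x)})$, $P(z)=(T_x z)_x$, collects the orthogonal projections. The map $\iota:H\to\ell_1^k(\bM_n)$ is the inclusion that places each component in the corresponding summand and views $P_n^{(x)}\subseteq\bM_n$ (possibly composed with entrywise conjugation to land in $\bar P_n^{(x)}$, which does not change any of the norms below). By submultiplicativity of amplification norms,
\[
\norm{T}_{(d)}\le \norm{\iota}_{(d)}\,\norm{P}_{(d)}\,\norm{\id:S_1^n\to S_2^n}_{(d)}.
\]
\Cref{prop:sqrtd} already bounds the last factor by $\sqrt d$, so it remains to show $\norm{P}_{\cb}\le1$ and $\norm{\iota}_{\cb}\le\sqrt k$.

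For $P$ we use the $OH$ facts of \Cref{app:OH}. $S_2^n$ is $OH$ by \ref{it:OH-S2}, and $H$ is $OH$ by construction. Since the $P_n^{(x)}$ are mutually orthogonal, $P$ is the orthogonal projection onto $\bigoplus_x P_n^{(x)}$, viewed as a map into $H$, and is therefore a Hilbert-space contraction. By \ref{it:OH-hom}, $\norm{P}_{\cb}=\norm{P}\le1$.

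For $\iota$ we split it as $H\to \ell_2^k(S_2^n)\to \ell_1^k(\bM_n)$. The first inclusion is a complete isometry onto an $OH$ subspace by \ref{it:OH-sub}. For the second step the plan is to write $\iota=\sum_x e_x\otimes(\pi_x)$, where $\pi_x:H\to P_n^{(x)}\subseteq \bM_n$ is the coordinate map. Each $\pi_x$ is a complete contraction into $S_2^n$ by \ref{it:OH-hom}, and then into $\bM_n$ because the formal identity $S_2^n\to\bM_n$ is a complete contraction: interpolation between $\bM_n$ and $S_1^n$ gives a cb map into $\bM_n$ of norm $1$. Summing gives only $k$, so to get $\sqrt k$ we instead use that the $\ell_1^k$ sum of maps out of an $OH$ space is controlled by the $\ell_2^k$ sum. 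Concretely, we factor $\iota$ through $\ell_2^k$ in the $OH$ structure, using that the identity $OH_k\to \ell_1^k$ (operator-space structure as the predual of $\ell_\infty^k$) has cb norm $\sqrt k$; this equals its Banach norm by the Cauchy--Schwarz step already noted after the definition of $T$ for $d=1$.

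The main obstacle is exactly this last step. It requires justifying that the cb norm of $\iota$ into the $\ell_1$-direct sum is at most $\sqrt k$, rather than the trivial bound $k$, and doing so compatibly with the matrix-valued components in $\bM_n$. If the $OH$ factorization is awkward, the fallback is to bound only the level-$d$ norm directly. For $w\in\bM_d(H)$ we would estimate $\norm{(\pi_x\otimes\id)w}_{\bM_d(\bM_n)}$ by $\norm{(\pi_x\otimes \id)w}_{\bM_d(OH)}$, and then apply Cauchy--Schwarz over $x$ using the $OH$ norm formula \eqref{eq:OH-norm}, in which the squared sum over orthogonal blocks is additive. This gives $\sum_x\le\sqrt k\,(\sum_x\norm{\cdot}^2)^{1/2}\le\sqrt k\norm{w}$. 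Combining the three factors yields $\norm{T}_{(d)}\le\sqrt k\sqrt d$.
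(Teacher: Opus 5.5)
Your factorization is the paper's. There, your $\iota$ is split as $\iota_2:\ell_2^k(P_n^{(x)})\to\ell_1^k(P_n^{(x)})$ followed by the blockwise inclusion $\jmath$ into $\ell_1^k(\bM_n)$. Your bounds on $\id:S_1^n\to S_2^n$ and on $P$ are fine, but the proof breaks in both halves of the estimate for $\iota$.

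First, the formal identity $S_2^n\to\bM_n$ is not a complete contraction for the $OH$ structure. By \eqref{eq:OH-norm} the element $\sum_{p,q}e_{pq}\otimes e_{pq}\in\bM_n(S_2^n)$ has norm $\sqrt n$, while its image in $\bM_n(\bM_n)$ has norm $n$. Interpolation only gives $\sqrt n$, because under the scalar pairing the formal identity $S_1^n\to\bM_n$ has c.b.\ norm $n$; it is the transpose that is completely contractive.

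What you need, that $P_n^{(x)}\hookrightarrow\bM_n$ is a complete contraction, is true, but for a reason your argument never uses. The space $P_n^{(x)}$ lies in the abelian algebra $A_n^{(x)}\cong\ell_\infty^n$, a minimal operator space. So the c.b.\ norm of the inclusion equals its norm, which is at most $1$ because the operator norm is dominated by the Hilbert--Schmidt norm. This is the only place the maximal abelian structure enters, and it cannot be bypassed. Take $k=1$ and the projection onto all trace-zero matrices: then $\norm{\chi_n}_{\bM_n(S_1^n)}=1$ but $\norm{\chi_n-\frac1n\1\otimes\1}=n-\frac1n>\sqrt n$, contradicting the bound your argument would give. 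Your first split through $\ell_2^k(S_2^n)$ discards exactly this structure.

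Second, the factor $\sqrt k$. Write $w=\sum_{x,i}w_{x,i}\otimes e_{x,i}$ in an orthonormal basis adapted to the blocks, with $w_x=\sum_iw_{x,i}\otimes e_{x,i}$. By \eqref{eq:OH-norm}, $\norm{w}^2=\norm{\sum_xB_x}$ and $\norm{w_x}^2=\norm{B_x}$, where $B_x=\sum_iw_{x,i}\otimes\overline{w_{x,i}}$. The triangle inequality therefore runs the wrong way for your primal estimate.

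Your fallback rests on $\sum_x\norm{w_x}^2_{\bM_d(OH)}\le\norm{w}^2_{\bM_d(H)}$, which fails as soon as $d,k\ge2$. For unit vectors $f_x\in P_n^{(x)}$ and $w=\sum_{x=1}^{r}e_{xx}\otimes f_x\in\bM_d(H)$ with $r=\min(d,k)$, one gets $\norm{w}=1$ but $\sum_x\norm{w_x}^2=r$. Even the first step of your chain is too crude. If $d\ge k$ and each $f_x$ has operator norm $\sqrt{1-1/n}$, then $\sum_x\norm{(\pi_x\otimes\id)w}_{\bM_d(\bM_n)}=k\sqrt{1-1/n}$, which exceeds $\sqrt k\norm{w}$ whenever $k(1-1/n)>1$.

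Your main route has the same hole. Cauchy--Schwarz only controls the Banach norm of $OH_k\to\ell_1^k$. Passing from that scalar map to the block map $\ell_2^k(P_n^{(x)})\to\ell_1^k(P_n^{(x)})$ is precisely the nontrivial step.

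The remedy, as in the paper, is to dualize. The adjoint $\iota_2^*:\ell_\infty^k(\overline{P_n^{(x)}})\to\overline H$ is $(w_x)_x\mapsto\sum_xw_x$. Now \eqref{eq:OH-norm} and the triangle inequality give $\norm{(\id\otimes\iota_2^*)(w)}^2\le\sum_x\norm{w_x}^2\le k\max_x\norm{w_x}^2$, hence $\norm{\iota_2}_{\cb}=\norm{\iota_2^*}_{\cb}\le\sqrt k$. With these two repairs your outline becomes the paper's proof.
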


\begin{proof}
Factor $T$ as the following
\begin{equation}
T:\; S_1^n \xrightarrow{\ \iota_1\ } S_2^n \xrightarrow{\ T'\ } H:=\ell_2^k(P_n^{(x)}) \xrightarrow{\ \iota_2\ } \ell_1^k(P_n^{(x)}) \xrightarrow{\ \jmath = (\jmath_x) \ } \ell_1^k(\bM_n),
\end{equation}
where $\iota_1,\iota_2$ is the formal identity, $T':S_2^n \to H:=\bigoplus_{x=1}^k P_n^{(x)}$ (a Hilbert-Schmidt direct sum) is $T'\xi = (T_1\xi, \dots, T_k\xi)$, and $\jmath_x:P_n^{(x)} \to \bM_n$ is the diagonal embedding on each component.

Then
\begin{equation}
  \norm{T}_{(d)} = \norm{\id_{\bM_d} \otimes T}
  \leq \norm{\jmath}_{\cb}\norm{\iota_2}_{\cb} \norm{T'}_{\cb} \norm{\iota_1}_{(d)} .
\end{equation}

By \Cref{prop:sqrtd}, $\norm{\iota_1:S_1^n \to S_2^n}_{(d)}\le\sqrt d$.

$\underline{\norm{T':S_2^n \to \ell_2^k(P_n^{(x)})}_{\cb} \leq 1}$:
The subspaces $P_n^{(x)}$ are mutually orthogonal in $S_2^n$, so $\sum_x\norm{T_x\xi}_2^2\le\norm{\xi}_2^2$ and $T'$ is a Hilbert-space contraction.
Since every bounded map between $OH$ spaces satisfies $\norm{v}_{\cb}=\norm{v}$ \cite[Prop.~7.2(iii)]{Pisier_2003}, and subspaces of $OH$ are again $OH$ \cite[Prop.~1.5(i)]{Pisier96}, we have $\norm{T'}_{\cb}=\norm{T'} \le 1$.

$\underline{\norm{\jmath = (\jmath_x:P_n^{(x)} \to \bM_n)}_{\cb} \leq 1}$:
Each $\jmath_x$ has range in the commutative $C^*$-algebra $A_n^{(x)} \cong \ell_\infty^n$, which is a minimal operator space.
Then $\norm{\jmath_x}_{\cb} = \norm{\jmath_x}$ \cite[Prop.~1.10(ii)]{Pisier_2003}, and because the operator norm is dominated by the Hilbert-Schmidt norm on diagonal matrices, we get $\norm{\jmath_x} \leq 1$.  
The statement follows since $\ell_1$-sum of complete contractions is a complete contraction \cite[\S 2.6]{Pisier_2003}.

$\underline{\norm{\iota_2:\ell_2^k(P_n^{(x)}) \to \ell_1^k(P_n^{(x)})}_{\cb} \leq \sqrt{k}}$: 
We bound the adjoint $\iota_2^*$ at every matrix level and then use $\norm{\iota_2}_{\cb}=\norm{\iota_2^*}_{\cb}$ \cite[Prop.~3.2.2]{Effros2000}.
Since $\ell_1^k(P_n^{(x)})^*=\ell_\infty^k((P_n^{(x)})^*)$ \cite[\S 2.6]{Pisier_2003} and $(P_n^{(x)})^*=\overline{P_n^{(x)}}$, $H^*=\overline H$ by \ref{it:OH-dual} and \ref{it:OH-sub} of \Cref{app:OH}, the adjoint $\iota_2^*:\ell_\infty^k(\overline{P_n^{(x)}})\to\overline{H}$ is the formal identity $(w_x)_x\mapsto\sum_xw_x$.
Fix orthonormal bases $(e_{x,i})_i$ of the $P_n^{(x)}$; together they form an orthonormal basis of $H$.
Let $r\in\bN$ and $w=(w_x)_x\in\bM_r(\ell_\infty^k(\overline{P_n^{(x)}}))$ with $w_x=\sum_iw_{x,i}\otimes\overline{e_{x,i}}$ and $w_{x,i}\in\bM_r$.
Since $\bM_r(\ell_\infty^k(E))=\ell_\infty^k(\bM_r(E))$ \cite[\S 2.6]{Pisier_2003}, $\norm{w}=\max_x\norm{w_x}_{\bM_r(\overline{P_n^{(x)}})}$.
By \ref{it:OH-dual} of \Cref{app:OH}, \eqref{eq:OH-norm} with $\overline{e_{x,i}}$ in place of $e_i$ computes the norms in $\bM_r(\overline{P_n^{(x)}})$ and in $\bM_r(\overline H)$, so
\begin{equation}
  \norm{(\id_{\bM_r}\otimes\iota_2^*)(w)}^2=\norm{\sum_x\sum_iw_{x,i}\otimes\overline{w_{x,i}}}\le\sum_x\norm{\sum_iw_{x,i}\otimes\overline{w_{x,i}}}=\sum_x\norm{w_x}^2\le k\norm{w}^2.
\end{equation}
Hence $\norm{\id_{\bM_r}\otimes\iota_2^*}\le\sqrt k$ for every $r$, that is, $\norm{\iota_2^*}_{\cb}\le\sqrt k$, and therefore $\norm{\iota_2}_{\cb}=\norm{\iota_2^*}_{\cb}\le\sqrt k$.

\end{proof}

\subsection{Proof of \texorpdfstring{\Cref{prop:mult-domain}}{Proposition~\ref{prop:mult-domain}}}\label{app:mult-domain-proof}

\begin{proposition*}[restated]
  \MultDomainStatement
\end{proposition*}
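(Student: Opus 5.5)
The plan is a multiplicative-domain argument followed by a polynomial-identity obstruction. Work in the Heisenberg picture, where $\Phi_1,\Phi_2:\bM_n\to\bM_n$ are the (self-dual) pinchings, and suppose $\Phi_x=\theta\circ\Psi_x$ with u.c.p.\ maps $\Psi_x:\bM_n\to L_\infty(\Omega;\bM_d)$ and $\theta:L_\infty(\Omega;\bM_d)\to\bM_n$. The goal is to show that $\theta$ must restrict to a surjective $*$-homomorphism from a $C^*$-subalgebra of $L_\infty(\Omega;\bM_d)$ onto $\bM_n$. That is impossible for $d<n$.

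\emph{Step 1: the images of $Z$ and $X$ lie in the multiplicative domain of $\theta$.} Since $\Phi_1$ fixes $A_n$ pointwise, $\Phi_1(Z)=Z$ is unitary. Put $u:=\Psi_1(Z)$. The Kadison--Schwarz inequality for the u.c.p.\ maps $\Psi_1$ and $\theta$ gives
\begin{equation}
  \1=\Phi_1(Z^*Z)=\theta(\Psi_1(Z^*Z))\ \ge\ \theta(u^*u)\ \ge\ \theta(u)^*\theta(u)=Z^*Z=\1 ,
\end{equation}
so $\theta(u^*u)=\theta(u)^*\theta(u)$. The same computation with $ZZ^*$ gives $\theta(uu^*)=\theta(u)\theta(u)^*$. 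By Choi's multiplicative domain theorem, $u$ then lies in the multiplicative domain $\mathcal D_\theta$ of $\theta$. Likewise $v:=\Psi_2(X)\in\mathcal D_\theta$ with $\theta(v)=X$, since $\Phi_2$ fixes $B_n$. The set $\mathcal D_\theta$ is a $C^*$-subalgebra of $L_\infty(\Omega;\bM_d)$, and $\theta|_{\mathcal D_\theta}$ is a $*$-homomorphism.

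\emph{Step 2: surjectivity onto $\bM_n$.} Let $\mathcal C:=C^*(u,v)\subseteq\mathcal D_\theta$. Then $\theta(\mathcal C)$ is the $C^*$-algebra generated by $Z$ and $X$. This is all of $\bM_n$, because the products $Z^aX^b$ span $\bM_n$; equivalently, $A_n$ and $B_n$ generate $\bM_n$. Hence $\theta|_{\mathcal C}:\mathcal C\to\bM_n$ is a surjective $*$-homomorphism, and $\bM_n$ is a quotient of $\mathcal C$.

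\emph{Step 3: obstruction via the Amitsur--Levitzki identity.} By the Amitsur--Levitzki theorem, the standard polynomial $s_{2d}$ vanishes identically on $\bM_d$. It therefore vanishes on $L_\infty(\Omega;\bM_d)$, since it is evaluated pointwise in $\omega$ almost everywhere. Hence it vanishes on the subalgebra $\mathcal C$ and on every quotient of $\mathcal C$, in particular on $\bM_n$. On the other hand, $s_{2d}$ does not vanish on $\bM_n$ when $n>d$. To see this, take the staircase of matrix units $E_{11},E_{12},E_{22},\dots,E_{dd},E_{d,d+1}$, which are $2d$ elements of $\bM_{d+1}\subseteq\bM_n$. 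Their product in this order is $E_{1,d+1}$, and every other ordering of them gives product zero. This contradiction proves the proposition. No normality of $\theta$ is needed, and the measure space $\Omega$ is arbitrary.

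The main point to get right is Step 1, specifically the two-sided multiplicative domain condition. One must check that both $u^*u$ and $uu^*$ satisfy equality, and that the inequality $\Psi_1(Z^*Z)\ge\Psi_1(Z)^*\Psi_1(Z)$ is correctly combined with the positivity of $\theta$. After that, Step 3 is a standard fact. As an alternative, one could argue that every irreducible representation of a $C^*$-subalgebra of $L_\infty(\Omega;\bM_d)$ has dimension at most $d$, but the polynomial identity avoids any representation theory.
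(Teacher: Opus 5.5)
Your proof is correct. Its first half is the paper's argument. The paper also squeezes the Kadison--Schwarz chain to place $\Psi_1(A_n)$ and $\Psi_2(B_n)$ in the multiplicative domain $m_\theta$, and concludes that $\theta|_{m_\theta}$ is a $*$-homomorphism onto $\bM_n$. It runs the chain for every $z\in A_n$ rather than only for the unitaries $Z$ and $X$, which makes no difference.

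The two routes diverge at the final obstruction. The paper identifies $L_\infty(\Omega;\bM_d)\cong C(K;\bM_d)$ via the Gelfand representation and bounds the maps $[b_{ij}]\mapsto[b_{ji}]$ on $\bM_k(\,\cdot\,)$ by $d$ there. It then transports this bound to the subalgebra $m_\theta$, to the quotient $m_\theta/I$ (using $\bM_k(m_\theta/I)=\bM_k(m_\theta)/\bM_k(I)$), and finally to $\bM_n$, contradicting $\norm{\vartheta_n}_{\cb}=n$. You replace this with the Amitsur--Levitzki identity $s_{2d}=0$ on $\bM_d(L_\infty(\Omega))$, which passes to subalgebras and homomorphic images for purely algebraic reasons. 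Your staircase of matrix units then shows $s_{2d}\neq0$ on $\bM_{d+1}\subseteq\bM_n$.

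What your version buys: it dispenses with the Gelfand representation, closed ideals and matrix norms on quotients, and it applies verbatim to any intermediate algebra satisfying $s_{2d}$.

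What it costs: it relies on a deep combinatorial theorem, and that depth is genuinely needed. You need an identity of $\bM_d$ of degree below $2(d+1)$. The elementary dimension-count identity $s_{d^2+1}=0$ does not suffice, since it already holds on $\bM_{d+1}$ once $d\ge3$.

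The paper's route stays inside the operator-space toolkit used throughout. It needs only the elementary fact that partial transposition on $\bM_k\otimes\bM_d$ has norm at most $d$.

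Both arguments carry over unchanged to the finite-group setting of \Cref{thm:finite-gp}.
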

\begin{proof}
Suppose there exists such a factorization $\Phi_x = \theta \circ \Psi_x$ through $\cM$ for $d \lneqq n$.
Given a completely positive map $\theta:\cM \to \bM_n$, its \emph{multiplicative domain} is
\begin{equation}
  m_\theta := \{ z \in \cM \mid \theta(z)\theta(w) = \theta(zw) \text{ and } \theta(w)\theta(z) = \theta(wz) \; \forall w \in \cM \}.
\end{equation}
By Choi's theorem \cite{Choi74}, \cite[Thm.~3.18]{Paulsen_2003} for u.c.p. maps such as $\theta$,
\begin{equation}
  m_\theta = \{ z \in \cM \mid \theta(z)^*\theta(z) = \theta(z^*z) \text{ and } \theta(z)\theta(z)^* = \theta(zz^*) \}.
\end{equation}
Then, $m_\theta$ is a $C^*$-subalgebra of $\cM$ and $\theta|_{m_\theta}$ is a $*$-homomorphism.

Since $\Phi_1$ is a pinching map, it acts as an identity on $A_n$.
Let $z \in A_n$, and  $y=\Psi_1(z)$.
Then by Kadison-Schwarz inequality for unital 2-positive map \cite[Prop.~3.3]{Paulsen_2003}, and in particular for u.c.p. map 
\begin{equation}
  \theta(y)^* \theta(y) \leq \theta(y^* y) = \theta(\Psi_1(z)^*\Psi_1(z))\leq \theta(\Psi_1(z^* z)) = z^* z = \theta(y)^* \theta(y),
\end{equation}
showing that $y \in m_\theta$, after applying the same chain also to $z^*\in A_n$, for which $\Psi_1(z^*)=y^*$.
The same holds for any $y'$ in the range $\Psi_2(B_n)$, i.e. $\Psi_2(B_n) \subseteq m_{\theta}$.
Let $G := \theta(m_\theta)$.
Since $\theta|_{m_\theta}$ is a $*$-homomorphism, $G$ is a $*$-subalgebra of $\bM_n$ .

Since $\theta \Psi_1 = \Phi_1$ is identity on $A_n$, we have $A_n = \theta(\Psi_1(A_n)) \subseteq G$ and likewise $B_n \subseteq G$.
Because $m_\theta$ is an algebra containing $\Psi_1(A_n)$ and $\Psi_2(B_n)$ and $\theta|_{m_\theta}$ is multiplicative, $G$ contains all products of elements of $A_n$ and $B_n$, which leads to $G = \bM_n$.

Since $L_\infty(\Omega)$ is a commutative unital $C^*$-algebra, $\cM=\bM_d(L_\infty(\Omega))\cong C(K;\bM_d)$ for a compact Hausdorff space $K$ by the Gelfand representation \cite[p.~13]{Pisier_2003}, so $\cM$ is subhomogeneous of degree $d$, that is, its irreducible representations have dimension at most $d$.
Since $\theta|_{m_\theta}$ is a $*$-homomorphism of $m_\theta$ onto $G=\bM_n$, it induces a $*$-isomorphism $\bM_n\cong m_\theta/I$ for the closed ideal $I=\ker(\theta|_{m_\theta})$ \cite[Cor.~I.8.2]{Takesaki_1979}, where $m_\theta$ is a $C^*$-subalgebra of $\cM\cong C(K;\bM_d)$.
For a $C^*$-algebra $\cN$ let $\id_{\cN}\otimes\vartheta_k$ be the map $[b_{ij}]\mapsto[b_{ji}]$ on $\bM_k(\cN)$, where $\vartheta_k$ is the transpose on $\bM_k$.
For $\cN=C(K;\bM_d)$ this map is the transpose of each entry $b_{ij}\in C(K;\bM_d)$ followed by the pointwise transpose of $\bM_k(\bM_d)=\bM_{kd}$, which is isometric, so its norm is at most $d$ \cite[Exercise~3.8]{Paulsen_2003}.
The bound passes to the $C^*$-subalgebra $m_\theta$ by restriction, then to $m_\theta/I$ because $\bM_k(m_\theta/I)=\bM_k(m_\theta)/\bM_k(I)$ \cite[Cor.~I.8.2]{Takesaki_1979} and the map preserves $\bM_k(I)$, and finally to $\bM_n$ because $*$-isomorphisms of $C^*$-algebras are completely isometric \cite[p.~92]{Paulsen_2003}.
Hence
\begin{equation}
  \sup_k \norm{\id_{\bM_n}\otimes\vartheta_k} \leq d.
\end{equation}
But by the same factorization with $\bM_n$ in place of $\bM_d$, this supremum is the c.b.\ norm of the transpose on $\bM_n$, which is $n$ \cite[Exercise~3.8]{Paulsen_2003}, a contradiction.

\end{proof}

\subsection{Proof of \texorpdfstring{\Cref{thm:dtwo}}{Theorem~\ref{thm:dtwo}}}\label{app:dtwo-proof}

\begin{theorem*}[restated]
  \DTwoStatement{}
\end{theorem*}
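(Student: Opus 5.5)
The plan is to combine the c.b.\ criterion of \Cref{thm:cb-criterion} with the specific witness $T=(T_x)_x$ and test elements $z_x$ constructed in \Cref{sec:m-MUB}. By \Cref{thm:cb-criterion}, for any nonzero witness $T$ we have the ratio bound
\begin{equation}
  \gamma_{d,\cb}(\Phi)\ \ge\ \frac{\norm{T\otimes\Phi}}{\norm{T}_{(d)}},
\end{equation}
obtained by rescaling $T$ to $T/\norm{T}_{(d)}$, which is admissible in the supremum. So the proof reduces to a lower bound on the numerator and an upper bound on the denominator.

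For the numerator, I use the admissibility of the tuple $(z_1,\dots,z_k)$ with $\max_x\norm{z_x}\le1$ as a test element for $T\otimes\Phi$, as established in \Cref{sec:m-MUB}. This gives $\norm{T\otimes\Phi}\ge\norm{\sum_x y_x}_{\bM_n\minten\bM_n}$ with $y_x=(\Phi_x\otimes T_x)(z_x)$; the flip of tensor factors relative to \eqref{eq:T} is a complete isometry and changes nothing. Then I invoke the eigenvector computation \eqref{eq:mub-eig} and the estimate \eqref{eq:mbound}: realizing $\bM_n\minten\bM_n$ isometrically through the left--right action $\sigma$ on $S_2^n$ and testing on $\hat{\1}$ yields $\norm{\sum_xy_x}\ge k(1-\frac1n)$. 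For the denominator, \Cref{prop:Td-norm} gives $\norm{T}_{(d)}\le\sqrt k\sqrt d$. One must check that the map entering \eqref{eq:cb-criterion} is the Banach adjoint of the map of \Cref{prop:Td-norm} (for the conjugate bases) under the scalar pairing, and that $\norm{u^*}_{(d)}=\norm{u}_{(d)}$ by Smith's lemma; this was noted in \Cref{sec:m-MUB}. Dividing gives $\gamma_{d,\cb}(\Phi)\ge k(1-\frac1n)/(\sqrt{kd})=\sqrt{k/d}\,(1-\frac1n)$, which is \eqref{eq:gammad}.

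For the consequences: a $d$-compatible family is a u.c.p.\ factorization, hence a c.b.\ factorization by complete contractions, so $\gamma_{d,\cb}\le\gamma_{d,\cp}\le1$ by \Cref{prop:cb-unit-ball,prop:cp-gamma}. Thus $\sqrt{k/d}(1-\frac1n)>1$, i.e.\ $d<k(1-\frac1n)^2$, excludes $d$-compatibility. Specializing, $d=1$ gives $k>(n/(n-1))^2$. For $k=n+1$ we get $(n+1)(1-\frac1n)^2=n-1-\frac1n+\frac1{n^2}>n-2$ for $n\ge2$, so every $d\le n-2$ is excluded.

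The main obstacle is the bookkeeping of pairings in the denominator: ensuring that the operator-space structure on $\ell_\infty^k(S_1^n)$ used in \Cref{thm:cb-criterion}, namely the scalar pairing, matches the one for which \Cref{prop:Td-norm} is proved, including conjugations and transposes. I would handle this by noting that entrywise conjugation and full transposes are isometric at each matrix level, so $\norm{\cdot}_{(d)}$ is unaffected.
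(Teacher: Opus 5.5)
Your proposal is correct and follows the paper's proof essentially step for step. You use the ratio bound $\gamma_{d,\cb}(\Phi)\ge\norm{T\otimes\Phi}/\norm{T}_{(d)}$ from \Cref{thm:cb-criterion}, the numerator $k(1-\frac1n)$ from the admissible tuple $(z_x)_x$ and the eigenvector $\hat{\1}$ in \eqref{eq:mbound}, the denominator $\sqrt{kd}$ from \Cref{prop:Td-norm}, and the same arithmetic for the consequences. Your route through $\gamma_{d,\cp}$ in the exclusion step is harmless, since the paper simply notes that a u.c.p.\ factorization is completely contractive and applies \Cref{prop:cb-unit-ball}.
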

\begin{proof}
Since the right-hand side of \eqref{eq:cb-criterion} is a supremum over $\norm{T}_{(d)}\le1$ and both sides scale linearly in $T$, applying \Cref{thm:cb-criterion} to $T/\norm{T}_{(d)}$ gives, for every nonzero witness,
\begin{equation}\label{eq:gamma-ratio}
\gamma_{d,\cb}(\Phi)\;\ge\;\frac{\norm{T\otimes\Phi}}{\norm{T}_{(d)}} .
\end{equation}
The numerator does not depend on $d$.
The tuple $(z_1,\dots,z_k)$ is an admissible test element for $T\otimes\Phi=\sum_x T_x\otimes\Phi_x$, and \eqref{eq:mbound} yields
\begin{equation}
\norm{T\otimes\Phi}\;\ge\;\norm{\sum_x(\Phi_x\otimes T_x)(z_x)}\;\ge\;k(1-\frac{1}{n}) .
\end{equation}
By \Cref{prop:Td-norm}, $\norm{T}_{(d)}\le\sqrt k\,\sqrt d$.
Substituting both estimates into \eqref{eq:gamma-ratio} gives \eqref{eq:gammad}.
If $d<k(1-\frac{1}{n})^2$, then \eqref{eq:gammad} gives $\gamma_{d,\cb}(\Phi)>1$, and \Cref{prop:cb-unit-ball} excludes $d$-compatibility.
At $d=1$ the condition reads $\sqrt k(1-\frac{1}{n})>1$, that is $k>(\frac{n}{n-1})^2$.
For $k=n+1$,
\begin{equation}
(n+1)(1-\frac{1}{n})^2=\frac{(n+1)(n-1)^2}{n^2}=n-1-\frac{1}{n}+\frac{1}{n^2}\;>\;n-2,
\end{equation}
so every integer $d\le n-2$ satisfies the hypothesis.
\end{proof}

\subsection{Proof of \texorpdfstring{\Cref{thm:finite-gp}}{Theorem~\ref{thm:finite-gp}}}\label{app:finite-gp-proof}

\begin{theorem*}[restated]
  \FiniteGPStatement{}
\end{theorem*}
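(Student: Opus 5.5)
Items (1) and (2) have already been argued in \Cref{sec:finite-gp}, so the proof records them briefly. For (1), $\Phi_A(z)=\sum_g\bra{g}z\ket{g}\,\ketbra{g}{g}$ is a measure-and-prepare channel. It factors through $\ell_\infty(G)=L_\infty(G;\bM_1)$ via $\Psi(z)=(\bra g z\ket g)_g$ and $\theta(f)=\sum_g f(g)\ketbra gg$, both u.c.p., so it is $1$-factorizable. For (2), Wedderburn gives $B\cong\bigoplus_\pi\bM_{n_\pi}$ with $n_\pi\le\floor{\sqrt N}$. Embed $B$ unitally into $\ell_\infty(\{\pi\};\bM_d)$ by padding each block, with $d\ge\floor{\sqrt N}$. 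Take $\Psi=\iota\circ\Phi_B$ and $\theta$ the compression back onto the blocks. This $\theta$ is a conditional expectation $\ell_\infty(\bM_d)\to B$, u.c.p., composed with the inclusion $B\subseteq\bM_N$, and $\theta\circ\Psi=\Phi_B$.

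The real content is (3). The plan is to repeat the multiplicative-domain argument of \Cref{prop:mult-domain} verbatim, with $A_n,B_n$ replaced by $A,B$. Suppose $\Phi_A=\theta\circ\Psi_A$ and $\Phi_B=\theta\circ\Psi_B$ with u.c.p.\ maps through $\cM=L_\infty(\Omega;\bM_d)$ and $d\le N-1$.

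\textbf{Step 1.} Since $\Phi_A|_A=\id$ and $\Phi_B|_B=\id$, the Kadison--Schwarz chain $\theta(y)^*\theta(y)\le\theta(y^*y)\le\theta(\Psi(z^*z))=z^*z$ shows $\Psi_A(A)\subseteq m_\theta$ and $\Psi_B(B)\subseteq m_\theta$. The inequality $\Psi(z)^*\Psi(z)\le\Psi(z^*z)$ used there holds because $\Psi_A,\Psi_B$ are u.c.p. The chain is closed because $z^*z\in A$ (resp.\ $B$) is fixed by the channel, and the same applies to $z^*$.

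\textbf{Step 2.} Because $\theta|_{m_\theta}$ is a $*$-homomorphism, $G:=\theta(m_\theta)$ is a $*$-subalgebra of $\bM_N$ containing $A$ and $B$, hence the algebra they generate. Here the group structure enters: $m_{\delta_g}\lambda_{gh^{-1}}=\ketbra gh$, so the products $AB$ already span $\bM_N$, and $G=\bM_N$. This is the only point where the finite-group construction is used. The commuting-square relation $\tau(ab)=\tau(a)\tau(b)$ of \Cref{lem:gp-orth} is not needed for this qualitative statement, only the generation property.

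\textbf{Step 3.} This is the main obstacle: one must rule out a surjective $*$-homomorphism from a $C^*$-subalgebra $m_\theta$ of the $d$-subhomogeneous algebra $C(K;\bM_d)$ onto $\bM_N$ with $N>d$. I will use the transpose-norm argument of \Cref{prop:mult-domain}. The entrywise transpose $[b_{ij}]\mapsto[b_{ji}]$ on $\bM_k(C(K;\bM_d))$ has norm at most $d$. This bound passes to the subalgebra $m_\theta$, to its quotient by $\ker\theta|_{m_\theta}$, and through the induced $*$-isomorphism to $\bM_N$. It therefore forces $\norm{\vartheta_N}_{\cb}\le d$, contradicting $\norm{\vartheta_N}_{\cb}=N$.

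An alternative for Step 3 is that every irreducible representation of a subalgebra of $C(K;\bM_d)$ extends, up to subrepresentation, to a point evaluation of dimension $\le d$, which again excludes $\bM_N$. The only care needed, as in the earlier proof, is that $m_\theta$ is a norm-closed $C^*$-subalgebra and that quotients of $C^*$-algebras are compatible with matrix amplification. Both are standard and were already invoked in \Cref{prop:mult-domain}.
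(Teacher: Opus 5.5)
Your proof is correct and follows the paper's argument essentially verbatim: items (1)--(2) come from the discussion in \Cref{sec:finite-gp}, and for (3) you run the multiplicative-domain argument of \Cref{prop:mult-domain} to get $\Psi_A(A),\Psi_B(B)\subseteq m_\theta$ and $\theta(m_\theta)=\bM_N$, then use the transpose-norm bound $\norm{\vartheta_N}_{\cb}\le d$, inherited by a quotient of a $C^*$-subalgebra of $C(K;\bM_d)$, to force $d\ge N$. Your observation that only the generation property $m_{\delta_g}\lambda_{gh^{-1}}=\ketbra{g}{h}$ is used, and not the commuting-square relation, is also accurate, and your alternative via the dimension bound on irreducible representations of subhomogeneous algebras is equally valid.
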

\begin{proof}
Items (1) and (2) are shown before the theorem in \Cref{sec:finite-gp}, so it remains to show \cref{it:finite-3}.
Suppose $\Phi_A = \theta \Psi_A$ and $\Phi_B = \theta \Psi_B$ with u.c.p. maps through $\cM=L_\infty(\Omega;\bM_d)$.
The multiplicative domain argument in \Cref{prop:mult-domain} applies almost verbatim since the conditional expectation acts as an identity on the subalgebras.
Following the same argument, we can show that  $\Psi_A(A),\Psi_B(B)\subseteq m_\theta\subseteq\cM$, and $m_\theta$ is a $C^*$-subalgebra of $\cM$.
Moreover, $\theta|_{m_\theta}$ is a $*$-homomorphism and $A,B \subset \theta(m_\theta)$, meaning $\bM_N \subset \theta(m_\theta)$.
The end of the proof of \Cref{prop:mult-domain}, with $\bM_N=\theta(m_\theta)$ in place of $\bM_n$, then gives $d\geq N$.
\end{proof}

\subsection{Lemma for \texorpdfstring{\Cref{sec:finite-gp}}{the finite-group section}}\label{app:gp-orth}

We use the notation of \Cref{sec:finite-gp}.

\begin{lemma}\label{lem:gp-orth}
For all $f: G \to \bC$ and $b \in B$, $\tau$ satisfy
\begin{equation}
  \tau(m_f b) = \tau(m_f) \tau(b).
\end{equation}
\end{lemma}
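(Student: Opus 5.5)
The plan is to prove the identity by reducing both sides to linear functionals in $b$. Fix $f:G\to\bC$. Both sides of $\tau(m_fb)=\tau(m_f)\tau(b)$ are linear in $b$, and $B=\Span\{\lambda_h\mid h\in G\}$. It therefore suffices to check the identity on the generators $b=\lambda_h$, for every $h\in G$.

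First compute the two normalized traces on the right-hand side. Since $m_f\ket g=f(g)\ket g$, we get $\tau(m_f)=\frac1N\sum_{g\in G}f(g)$. Since $\lambda_h\ket g=\ket{hg}$, the diagonal entry $\bra g\lambda_h\ket g=\braket{g}{hg}$ equals $1$ exactly when $hg=g$, that is, when $h=e$. Hence $\tau(\lambda_h)=\delta_{h,e}$.

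Next compute the left-hand side. We have $\bra g m_f\lambda_h\ket g=f(g)\braket{g}{hg}=f(g)\delta_{h,e}$, so $\tau(m_f\lambda_h)=\frac1N\sum_g f(g)\,\delta_{h,e}=\tau(m_f)\,\delta_{h,e}=\tau(m_f)\tau(\lambda_h)$. Extending linearly in $b$ gives the lemma for all $b\in B$.

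No step here is a real obstacle. The only point needing care is the observation that the left regular representation acts without fixed points for $h\neq e$, which is what makes the diagonal of $\lambda_h$, and of $m_f\lambda_h$, vanish.
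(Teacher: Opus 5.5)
Your proof is correct and follows the paper's route: reduce by linearity to $b=\lambda_h$, compute the diagonal entries of $m_f\lambda_h$, and use that $hg=g$ forces $h=e$. This gives $\Tr[m_f\lambda_h]=\delta_{h,e}\Tr[m_f]$ and $\Tr[\lambda_h]=N\delta_{h,e}$, which is the claimed identity.
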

\begin{proof}
  By linearity it suffices to take $b = \lambda_g$.
  The diagonal entries of $m_f\lambda_g$ are
  \begin{equation}
    \bra{h}m_f\lambda_g\ket{h} = \bra{h}m_f\ket{gh} = \braket{h}{gh}\cdot f(gh) = f(h) \delta_{gh,h},
  \end{equation}
  where $\delta$ is the indicator function.
  Since $gh = h$ for all $h$ if and only if $g = e$,
  \begin{equation}
    \Tr[m_f \lambda_g] = \delta_{g,e}\sum_h f(h) = \delta_{g,e}\Tr[m_f],
  \end{equation}
  and $\Tr[\lambda_g] = N \delta_{g,e}$.
  Combining altogether we get
  \begin{equation}
    \tau[m_f \lambda_g] = \delta_{g,e} \tau(m_f) = \tau(m_f) \tau(\lambda_g).
  \end{equation}
\end{proof}

\section{Proofs for \texorpdfstring{\Cref{sec:kPEB}}{the Choi-state section}}\label{app:kPEB}

\subsection{Proof of \texorpdfstring{\Cref{thm:choi-assemblage}}{Theorem~\ref{thm:choi-assemblage}}}\label{app:choi-assemblage-proof}

\begin{theorem*}[restated]
\ChoiAssemblageStatement{}
\end{theorem*}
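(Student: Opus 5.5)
I will prove the cycle $(2)\Rightarrow(1)$, $(1)\Rightarrow(3)$, $(3)\Rightarrow(2)$, and separately $(3)\Leftrightarrow(4)$.

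The implication $(2)\Rightarrow(1)$ is immediate, since $\ell_1^m(S_1^d)=S_1^d\otimes^\wedge L_1(\Omega)$ for $\Omega$ a finite set with counting measure.

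For $(3)\Rightarrow(2)$, I take the parent $\theta(\rho)=\sum_jA_j\rho A_j^\dagger\otimes\ketbra jj$ into $\ell_1^{\mathsf J}(S_1^d)$. It is CPTP because $\sum_jA_j^\dagger A_j=\1_A$. The post-processing is $\Psi_x(\eta\otimes\ketbra jj)=\sum_a\bigl(\sum_rB_{a|x,j,r}\eta B_{a|x,j,r}^\dagger\bigr)\otimes\ketbra aa$, which is an instrument for each $j$ by the normalization of the $B$'s. Composing the two reproduces the common Kraus form \eqref{eq:common-kraus}.

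The implication $(1)\Rightarrow(3)$ is the main step, because the parent in (1) may have a continuous classical register $L_1(\Omega)$. I first reduce to a finite register. Since $H_A$, $H_B$, $\mathsf A$ and $\mathsf X$ are finite, I pass to the Choi picture and write the Choi operator of $\Phi_{a|x}$ as an integral over $\Omega$ of positive operators on $H_A\otimes H_B$ that are built from $\theta_\omega$ and $\Psi_{x,\omega}$.

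I then apply the Carathéodory argument of \Cref{rem:finite-suffices}. Concretely, I consider the convex set of tuples $\bigl((\Psi_{a|x,\omega}\circ\theta_\omega)_{a,x}\bigr)$ with the normalization $\theta_\omega^*(\1)$, realized as $\omega$ ranges over the register. This set lives in a finite-dimensional space, so the integral against the probability density can be replaced by a finite convex (conic) combination. That gives finitely many parent components $\theta_j:S_1(H_A)\to S_1^d$ with $\sum_j\theta_j$ trace preserving, together with instruments $\Psi_{x,j}$, such that $\Phi_{a|x}=\sum_j\Psi_{a|x,j}\circ\theta_j$.

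The delicate point is to do the reduction on the pairs $(\theta_\omega,\Psi_{\cdot|\cdot,\omega})$ jointly, weighting by $\theta_\omega$, so that the post-processings remain instruments. If conditional expectations are needed, I would instead invoke a measurable selection of Kraus decompositions, as in \cite{Pellonpaa2013}.

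Once the register is finite, I expand each $\theta_j$ in Kraus operators $A_{j,l}:H_A\to\bC^d$ and each $\Psi_{a|x,j}$ in Kraus operators $B_{a|x,j,r}:\bC^d\to H_B$. I then relabel the pair $(j,l)$ as a single parent index. If the $\Psi_{a|x,j}$ have differing numbers of Kraus operators, I pad with zeros to a common finite set $\mathsf R$. This gives (3).

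For $(3)\Rightarrow(4)$, I define the source
\[
\sigma:=\sum_j(\1\otimes A_j)\pure{\chi_A}(\1\otimes A_j^\dagger)\otimes\ketbra jj
\]
on $H_A\otimes(\bC^d\otimes\ell_2^{\mathsf J})$. Each summand has Schmidt rank at most $d$, since $\1\otimes A_j$ maps $\ket{\chi_A}$ into $H_A\otimes\bC^d$, so $\SN(\sigma)\le d$. The instrument $\cE_x$ reads the index $j$ and applies the $B$'s, which yields the Choi states $J(\Phi_{a|x})$.

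For $(4)\Rightarrow(3)$, I start from a source $\sigma$ with $\SN(\sigma)\le d$ and $\Tr_K\sigma=\1_A$, as shown before the theorem. I decompose it as $\sigma=\sum_k\pure{\psi_k}$ with $\SR(\psi_k)\le d$ and write $\ket{\psi_k}=(\1\otimes V_kA_k)\ket{\chi_A}$, where $A_k:H_A\to\bC^d$ and $V_k:\bC^d\to K$ is an isometry. The condition $\Tr_K\sigma=\1$ gives $\sum_kA_k^\dagger A_k=\1$ after a transpose bookkeeping.

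The Kraus operators of $\cE_{a|x}\circ V_k$ then give the $B_{a|x,k,r}$. Their normalization $\sum_{a,r}B^\dagger B=V_k^\dagger V_k=\1_d$ follows from trace preservation of $\cE_x$. The identity \eqref{eq:common-kraus} then follows from Choi injectivity.
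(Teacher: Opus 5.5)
Your argument is correct and is essentially the paper's proof. It uses the same Carath\'eodory reduction on the jointly normalized tuples $(\Psi_{x,\omega}\circ\theta_\omega)_x$ (normalized by the scalar $\Tr\theta_\omega(\1_A)$ rather than the operator $\theta_\omega^*(\1)$), the same block source $\sum_j J(\theta_j)\otimes\pure{j}$, and the same pure-state decomposition of the source with isometries $V_k$; it is only routed as $(2)\Rightarrow(3)\Rightarrow(4)$ by Kraus-decomposing each parent component, whereas the paper goes $(2)\Rightarrow(4)\Rightarrow(3)$ and reads off the rank-one $A_j$ from the source decomposition. The one detail to add is that isometries $V_k:\bC^d\to K$ exist only when $\dim K\ge d$. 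You can arrange this, as the paper does, by replacing $K$, $\sigma$, $\cE_{a|x}$ with $K\otimes\bC^d$, $\sigma\otimes\pure{0}$, $\cE_{a|x}\circ\Tr_{\bC^d}$.
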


We prove the equivalence of the conditions (1)--(4) of \Cref{thm:choi-assemblage}.

\begin{proof}
$(1)\Leftrightarrow(2)$\q
$(1)\Rightarrow(2)$ follows from Carath\'eodory's theorem, as in \Cref{rem:finite-suffices}.
Each $\Phi_x$ is an integral of compositions $\Psi_{x,\omega}\circ\theta_\omega$ of completely positive maps $\theta_\omega:S_1(H_A)\to S_1^d$ with instruments $\Psi_{x,\omega}:S_1^d\to\ell_1^{\mathsf A}(S_1(H_B))$.
After normalizing $\Tr\theta_\omega(\1_A)=1$, the tuples $(\Psi_{x,\omega}\circ\theta_\omega)_x$ lie in a compact subset of a real vector space of dimension $N=|\mathsf X||\mathsf A|(\dim H_A)^2(\dim H_B)^2$, so $(\Phi_x)_x$ is a nonnegative combination of at most $N+1$ of them, which is (2).
Conversely, $(2)\Rightarrow(1)$ holds because $\ell_1^m(S_1^d)=S_1^d\otimes^\wedge L_1(\Omega)$ for $\Omega=\{1,\dots,m\}$ with counting measure. 

$(2)\Rightarrow(4)$\q
Condition (2) gives $\Phi_x=\sum_{j=1}^m\Psi_{x,j}\circ\theta_j$ with an instrument $(\theta_j)_{j=1}^m:S_1(H_A)\to\ell_1^m(S_1^d)$ and instruments $\Psi_{x,j}:S_1^d\to\ell_1^{\mathsf A}(S_1(H_B))$.
Put $K = \bC^d \otimes \bC^m$, and define $\sigma \in S_1(H_A) \otimes S_1(K)$ and $\cE_x:S_1(K) \to \ell_1^{\mathsf{A}}(S_1(H_B))$ by
\begin{align}
  \sigma := \sum_{j=1}^m J(\theta_j)\otimes\pure{j}, \q\q
  \cE_{a|x}(\rho) := \sum_{j=1}^m \Psi_{a|x,j}((\1\otimes\bra{j})\rho(\1\otimes\ket{j})).
\end{align}

Each $J(\theta_j)\otimes\pure{j}$ is a positive operator on $H_A\otimes(\bC^d\otimes\bC\ket{j})$ and therefore a sum of pure states of Schmidt rank at most $d$ with respect to $(H_A: K)$, so $\SN(\sigma)\le d$.
The maps $\cE_{a|x}$ are the components of an instrument $\cE_x$, and $(\id\otimes\cE_{a|x})(\sigma)=\sum_j(\id\otimes \Psi_{a|x,j})(J(\theta_j))=J(\Phi_{a|x})$.

$(4)\Rightarrow(3)$\q
Replacing $K$, $\sigma$ and $\cE_{a|x}$ by $K\otimes\bC^d$, $\sigma\otimes\pure{0}$ and $\cE_{a|x}\circ\Tr_{\bC^d}$, we may assume $\dim K\ge d$.
Write $\sigma=\sum_{j=1}^r\pure{\psi_j}$ with $\SR(\psi_j)\le d$, and choose isometries $V_j:\bC^d\to K$ and vectors $\hat\psi_j\in H_A\otimes\bC^d$ with $\ket{\psi_j}=(\1\otimes V_j)\ket{\hat\psi_j}$.
Set $\sigma_j:=\pure{\hat\psi_j}$ and $\Psi_{a|x,j}(\eta):=\cE_{a|x}(V_j\eta V_j^\dagger)$, the components of an instrument $\Psi_{x,j}:S_1^d\to\ell_1^{\mathsf A}(S_1(H_B))$.
Taking the partial trace over $H_B$ in \eqref{eq:d-preparable} with $\sigma_{a|x}=J(\Phi_{a|x})$ and summing over $a$ gives $\Tr_K\sigma=\sum_{a\in\mathsf A}\Tr_BJ(\Phi_{a|x})=\1_A$, because $\cE_x$ and $\Phi_x$ are trace preserving.
Hence $\sum_j\Tr_{\bC^d}\sigma_j=\Tr_K\sigma=\1_A$, so the $\sigma_j$ are the Choi operators of the components $\theta_j$ of an instrument $S_1(H_A)\to\ell_1^r(S_1^d)$.
We have
\begin{equation}
  J(\Phi_{a|x})=\sum_j(\id\otimes\cE_{a|x})((\1\otimes V_j)\sigma_j(\1\otimes V_j)^\dagger)=\sum_j(\id\otimes \Psi_{a|x,j})(\sigma_j)=J(\sum_j\Psi_{a|x,j}\circ\theta_j),
\end{equation}
so $\Phi_{a|x}=\sum_j\Psi_{a|x,j}\circ\theta_j$ for all $a$ and $x$, that is, $\Phi_x=\sum_j\Psi_{x,j}\circ\theta_j$.
Since $\sigma_j$ has rank one, $\theta_j(\rho)=A_j\rho A_j^\dagger$ for a single operator $A_j:H_A\to\bC^d$, and $\sum_jA_j^\dagger A_j=\1_A$ because $\sum_j\theta_j$ is trace preserving.
Choosing Kraus operators $B_{a|x,j,r}$, $r\in\mathsf R$, of the components $\Psi_{a|x,j}$ of the instruments $\Psi_{x,j}$, which satisfy $\sum_{a,r}B^\dagger_{a|x,j,r}B_{a|x,j,r}=\1_d$ because $\Psi_{x,j}$ is trace preserving, we obtain \eqref{eq:common-kraus}.

$(3)\Rightarrow(2)$\q
Since $\sum_jA_j^\dagger A_j=\1_A$, the maps $\theta_j(\rho):=A_j\rho A_j^\dagger$ form an instrument $S_1(H_A)\to\ell_1^{\mathsf J}(S_1^d)$, and since $\sum_{a,r}B^\dagger_{a|x,j,r}B_{a|x,j,r}=\1_d$, the maps $\Psi_{a|x,j}(\eta):=\sum_rB_{a|x,j,r}\eta B^\dagger_{a|x,j,r}$ form instruments $\Psi_{x,j}:S_1^d\to\ell_1^{\mathsf A}(S_1(H_B))$.
Now \eqref{eq:common-kraus} reads $\Phi_x=\sum_j\Psi_{x,j}\circ\theta_j$.
\end{proof}

\subsection{Proof of \texorpdfstring{\Cref{cor:k-PEB}}{Corollary~\ref{cor:k-PEB}}}\label{app:k-PEB-proof}

\begin{corollary*}[restated]
\KPEBStatement
\end{corollary*}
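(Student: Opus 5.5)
The plan is to obtain \Cref{cor:k-PEB} as the singleton case $|\mathsf X|=1$ of \Cref{thm:choi-assemblage}, together with a short translation between the joint Choi matrix \eqref{eq:joint-choi} and the Choi assemblage $(J(\Phi_a))_a$. For a single instrument, $d$-compatibility is $d$-factorizability in the sense of \Cref{sec:compatible-inst}. So condition (1) of the theorem is $d$-factorizability through some $S_1^d\otimes^\wedge L_1(\Omega)$, and condition (2) is factorization through $\ell_1^m(S_1^d)$, which is exactly (iii). The equivalence (1)$\Leftrightarrow$(2) is the Carath\'eodory reduction already proved, so it gives (iii) in both the finite-register and general-register forms.

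\textbf{(iii)$\Rightarrow$(ii).} Condition (3) of the theorem, specialized to one setting, writes $\Phi_a(\rho)=\sum_{j,r}B_{a|j,r}A_j\rho A_j^\dagger B_{a|j,r}^\dagger$ with $A_j:H_A\to\bC^d$. Set $K_{a,(j,r)}:=B_{a|j,r}A_j$. Each such operator factors through $\bC^d$, so $\rank K_{a,(j,r)}\le d$. The normalizations in (3) give
\[
\sum_{a,j,r}K^\dagger K=\sum_j A_j^\dagger\Big(\sum_{a,r}B^\dagger B\Big)A_j=\sum_jA_j^\dagger A_j=\1_A .
\]
Relabeling the pairs $(j,r)$ by $1,\dots,m$, and padding with zero operators where needed, gives (ii).

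\textbf{(ii)$\Rightarrow$(iii).} Conversely, given (ii), write each $K_{a,j}=V_{a,j}W_{a,j}$ with $W_{a,j}:H_A\to\bC^d$ and $V_{a,j}:\bC^d\to H_B$ an isometry onto a space containing the range of $K_{a,j}$. This is possible by polar decomposition, embedding the range into $\bC^d$. Take the parent instrument with components $\theta_{(a,j)}(\rho)=W_{a,j}\rho W_{a,j}^\dagger$; it is trace preserving because $W^\dagger W=K^\dagger K$. Take the post-processing $\Psi_{(a,j)}(\eta)=V_{a,j}\eta V_{a,j}^\dagger\otimes\ketbra{a}{a}$, which is a channel into $\ell_1^{\mathsf A}(S_1(H_B))$ since $V$ is an isometry. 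This yields (iii) directly.

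\textbf{(i)$\Leftrightarrow$(ii).} This is the one step not handed to us by the theorem. For (ii)$\Rightarrow$(i), expand the joint Choi matrix as
\[
J(\Phi)=\sum_{a,j}\pure{\kappa_{a,j}}, \qquad \ket{\kappa_{a,j}}:=\ket{a}\otimes\kket{K_{a,j}},
\]
where $\kket{\cdot}$ is vectorization. The Schmidt rank of $\ket{\kappa_{a,j}}$ across $H_A:\ell_2^{\mathsf A}\otimes H_B$ equals $\rank K_{a,j}\le d$, since tensoring with a fixed $\ket a$ on the output side does not change it. Hence $\SN(J(\Phi))\le d$.

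For (i)$\Rightarrow$(ii), take a decomposition $J(\Phi)=\sum_k\pure{\psi_k}$ with $\SR(\psi_k)\le d$. Each $\psi_k$ lies in the support of the block-diagonal operator $J(\Phi)$, but it need not lie in a single block. This is the main obstacle. The first thing to try is to regard $\Phi$ as a channel into $S_1(\ell_2^{\mathsf A}\otimes H_B)$ with block-diagonal range, as in \Cref{rem:criterion-instruments}. The channel-level result of \cite{Huang2006,Chruscinski2006} then gives Kraus operators $\tilde K_k$ of rank at most $d$ for this channel. Composing with the block pinching $E$ does not change the channel, since it already has block-diagonal range. This gives Kraus operators $(\bra a\otimes\1)\tilde K_k$, and since $\rank\big((\bra a\otimes\1)\tilde K_k\big)\le\rank\tilde K_k\le d$, these are exactly the operators $K_{a,k}$ required by (ii).

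Alternatively, one can avoid citing that result and argue self-containedly. Apply the block projections to get $J(\Phi)=\sum_a(\pure a\otimes\1)J(\Phi)(\pure a\otimes\1)$. Then $J(\Phi)=\sum_{a,k}\pure{P_a\psi_k}$ with $P_a:=\pure a\otimes\1$, and each $P_a\psi_k$ still has Schmidt rank at most $d$, because $P_a$ acts on one side of the cut. Devectorizing each vector $P_a\psi_k$ yields an operator $H_A\to H_B$ of rank at most $d$, and these operators give the decomposition required in (ii).
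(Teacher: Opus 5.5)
Your proof is correct in substance, but it is organized differently from the paper's. The paper closes the cycle (i)$\Rightarrow$(iii)$\Rightarrow$(ii)$\Rightarrow$(i):
\begin{itemize}
\item For (i)$\Rightarrow$(iii) it observes that condition (4) of \Cref{thm:choi-assemblage} holds with $\sigma:=J(\Phi)$, $K:=\ell_2^{\mathsf A}\otimes H_B$ and the register-reading instrument $\cE_a(\eta)=(\bra{a}\otimes\1_B)\eta(\ket{a}\otimes\1_B)$, and then invokes (4)$\Rightarrow$(2).
\item (iii)$\Rightarrow$(ii) is read off from condition (3).
\item (ii)$\Rightarrow$(i) is the same vectorization you give.
\end{itemize}
You instead prove (i)$\Rightarrow$(ii) directly: you compress each Schmidt-rank-$\le d$ vector with the block projections, which act locally on the output side of the cut, and then devectorize. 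This argument is elementary and needs neither the theorem nor the channel-level result you cite first. In fact it is the one-line special case of the theorem's step (4)$\Rightarrow$(3) with $\cE$ the register readout. The price is that your cycle no longer passes through (iii), so you need a separate (ii)$\Rightarrow$(iii), which the paper gets for free.

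That extra step contains a small slip. An isometry $V_{a,j}:\bC^d\to H_B$ exists only when $\dim H_B\ge d$.
\begin{itemize}
\item One repair, for $\dim H_B<d$: take $V_{a,j}$ to be a partial isometry whose initial space contains the range of $W_{a,j}$ (for example from the polar decomposition of $K_{a,j}$). Complete the post-processing to the channel $\eta\mapsto V_{a,j}\eta V_{a,j}^\dagger+\Tr[(\1_d-V_{a,j}^\dagger V_{a,j})\eta]\,\tau$ for a fixed state $\tau$ on $H_B$, tensored with $\pure{a}$. The added term vanishes on the range of $\theta_{(a,j)}$, so the factorization is unaffected.
\item The other option is to drop (ii)$\Rightarrow$(iii) and route (i)$\Rightarrow$(iii) through condition (4) as the paper does. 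The theorem's proof avoids the dimension issue by padding the auxiliary space $K$ rather than the output $H_B$.
\end{itemize}
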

We prove the equivalence of the conditions (i)--(iii) of \Cref{cor:k-PEB}.

\begin{proof}
Apply \Cref{thm:choi-assemblage} to the singleton family $\{\Phi\}$, for which (iii) is condition (2).
If (i) holds, then condition (4) holds with $K:=\ell_2^{\mathsf A}\otimes H_B$, $\sigma:=J(\Phi)$ and the instrument $\cE$ that reads the register, $\cE_a(\eta):=(\bra{a}\otimes\1_B)\eta(\ket{a}\otimes\1_B)$, since $(\id\otimes\cE_a)(J(\Phi))$ is the diagonal block $J(\Phi_a)$ of the block diagonal operator $J(\Phi)$.
Hence (i) implies (iii).
If (iii) holds, then \eqref{eq:common-kraus} holds, and its operators $K_{a,(j,r)}:=B_{a|x,j,r}A_j$ factor through $\bC^d$ and have rank at most $d$, which is (ii).
Finally, (ii) implies (i), because $J(\Phi)=\sum_{a,j}\pure{a}\otimes\kket{K_{a,j}}\bbra{K_{a,j}}$ with $\kket{K}:=\sum_l\ket{l}\otimes K\ket{l}=(\1\otimes K)\ket{\chi_A}$, and $\ket{a}\otimes\kket{K}$ has Schmidt rank $\rank(K)$ with respect to $H_A:\ell_2^{\mathsf A}\otimes H_B$.
\end{proof}

\end{document}